\newtheorem{vor}{Assumption}[section]
\newtheorem{theorem}[vor]{Theorem}
\newtheorem{lem}[vor]{Lemma}
\newtheorem{prop}[vor]{Proposition}
\theoremstyle{definition}
\newtheorem{note}[vor]{Remark}
\numberwithin{equation}{section}
\newcommand{\Pek}{\mathrm{Pek}}
\renewcommand{\P}{\mathrm{P}}
\newcommand{\mb}{\mathbb}
\newcommand{\ms}{\mathscr}
\newcommand{\mf}{\mathfrak}
\newcommand{\Ls}{\Big\langle}
\newcommand{\Rs}{\Big\rangle}
\newcommand{\ls}{\langle}
\newcommand{\rs}{\rangle}
\newcommand{\n}{{(n)}}
\newcommand{\m}{{(m)}}
\newcommand{\eps}{\varepsilon}
\newcommand{\peps}{\pmb{\varepsilon}}
\newcommand{\eB}{\mathsf}
\renewcommand{\d}{\text{d}}
\begin{document}

\title{Asymptotic series for low-energy excitations of the Fr\"ohlich Polaron at strong coupling
}
\author{Morris Brooks\thanks{Institut für Mathematik,~Universit\"at Z\"urich,~Winterthurerstrasse~190, CH-8057 Z\"urich, Switzerland. E-mail: \texttt{morris.brooks@math.uzh.ch}}\and David Mitrouskas\thanks{Institute of Science and Technology Auastria (ISTA), Am Campus 1, 3400 Klosterneuburg, Austria. E-mail: \texttt{mitrouskas@ist.ac.at}}}

\date{\today}
\maketitle

\frenchspacing

\begin{spacing}{1.1}

\begin{abstract} 
\noindent \textsc{Abstract}. 
We consider the confined Fr\"ohlich polaron and establish an asymptotic series for the low-energy eigenvalues in negative powers of the coupling constant. The coefficients of the series are derived through a two-fold perturbation approach, involving expansions around the electron Pekar minimizer and the excitations of the quantum field.\medskip

\noindent \textbf{Keywords:} polaron, strong coupling, eigenvalue expansion

\noindent \textbf{2020 MSC:} Primary -- 81V70; Secondary -- 81Q10.

\end{abstract}

\tableofcontents

\allowdisplaybreaks

\section{Introduction and main results}

For an open bounded region $\Omega \subset \mb R^3$, let $\Delta_\Omega$ be the Dirichlet--Laplacian and set $v_x(y) = (-\Delta_\Omega)^{-1/2}(x,y)$. The Fr\"ohlich Hamiltonian of the confined polaron, acting on the Hilbert space $L^2(\Omega) \otimes \mathcal F$ with $\mathcal F$ the bosonic Fock space over $L^2(\Omega)$, is given by
\begin{align}\label{eq:Froehlich:Hamiltonian}
\mf H_\alpha = -\Delta_\Omega + \frac{1}{\alpha} ( a^*(v_x) + a(v_x) ) + \frac{1}{\alpha^2} \mathcal N
\end{align}
with $\alpha>0$ a dimensionless coupling constant.  The creation and annihilation operators satisfy the usual bosonic commutation relations $[a(f),a^*(g)]=\ls f,g \rs$ and the field energy is described by the number operator $\mathcal N = \sum_{j=0}^\infty a^*(\varphi_j) a(\varphi_j)$ with $\{ \varphi_j \}$ an orthonormal basis of $L^2(\Omega)$.  The definition \eqref{eq:Froehlich:Hamiltonian} is slightly formal since $v_x $ is not square-integrable. However, it is well known that $\mathfrak H_\alpha$ defines a semi-bounded self-adjoint operator, as follows for instance by considering the semi-bounded quadratic form associated with the above expression  \cite{GriesemerW18,LiebY58,LiebT97}. 


The polaron model was introduced by Herbert Fr\"ohlich \cite{Froehlich37} as a model of an electron interacting with the quantized optical phonons of a polarizable crystal. Typically, the model is defined on $\mb R^3$ with $v_x(y) = (2\pi^2)^{-1} |x-y|^{-2}$ and features translation invariance. It provides a simple and well-studied model of non-relativistic quantum field theory and we refer to \cite{DonskerV83,LiebT97,LiebY58,
Seiringer21,BS1,BS2,LMM23,
MMS23,Moeller06,Mukherjee,BetzP22,BetzP23,DybalskiS2020,
Polzer22,Spohn1988,Selke} for properties, results and further references. The confined version of the  polaron was brought to our attention by a recent work of Frank and Seiringer \cite{FrankS21}. While the  confinement breaks the translational symmetry, the ultraviolet behavior of the interaction remains unaffected. In \cite{FrankS21} the authors derive a two-term expansion of the ground state energy of $\mf H_\alpha$ for large $\alpha$ in terms of the semi-classical Pekar energy and a next-order correction of order $\alpha^{-2}$ that describes the quantum fluctuations around the classical field. In the present work, we study the complete low-energy spectrum of $\mf H_\alpha$ and derive an asymptotic series for each low-energy eigenvalue in negative powers of the coupling constant. 

In the case of the translation invariant model, low-energy bound states exist for the fiber Hamiltonians, that is, after fixing the value of the total momentum \cite{MitS2022}. We expect that these eigenvalues admit an asymptotic series expansion, similarly to the eigenvalues of the confined polaron. For the ground state energy, this was predicted already in the 1950s by Bogoliubov and Tjablikov \cite{Tjablikow54,Bog50}. 
\begin{note}
By scaling all lengths by a factor of $\alpha$, the operator $\alpha^2 \mathfrak H_\alpha$ becomes unitarily equivalent to the operator
\begin{align}
- \Delta_{\Omega/\alpha} + \sqrt{\alpha} ( a^*(\tilde v_x) + a(\tilde v_x) ) + \mathcal N
\end{align} 
with $\tilde v_x(y) = (-\Delta_{\Omega/\alpha})^{-1/2}(x,y)$. This Hamiltonian describes a polaron confined to the region $\Omega/\alpha$, which corresponds to the natural length scale for large $\alpha$. This choice of units explains why $\alpha \to \infty$ is called the strong-coupling limit. For this work, we find it more convenient to use \eqref{eq:Froehlich:Hamiltonian}.
\end{note}

We are interested in the eigenvalues of $\mathfrak H_\alpha$,
\begin{align}\label{eq:eigenvalues:H}
\ms E^{(1)}(\alpha) \le \ms E^{(2)}(\alpha) \le \ms E^{(3)}(\alpha) \le \ldots \le \ms E^{(n)}(\alpha) \le \ldots,
\end{align}
where we follow the convention to count degenerate eigenvalues according to their multiplicity, starting from the lowest eigenvalue $\ms E^{(1)}(\alpha)$. Our main result is an asymptotic series for each such eigenvalue in inverse powers of $\alpha$, 
\begin{align}\label{eq:asympt:expansion}
 \ms E^{(n)}(\alpha) =  e^\Pek + \alpha^{-2} E_0 +   \alpha^{-3} E_1  +  \alpha^{-4} E_2 + \ldots 
\end{align}
with $n$-independent coefficient $e^\Pek<0$ and $n$-dependent coefficients $E_0$, $E_1, E_2, \ldots$ The crucial point is that the $\alpha$-dependence is exclusively contained in the prefactors. Before stating our result in precise form, let us explain the coefficients in more detail and formulate the  assumptions regarding the region $\Omega$.

The first term is given by the semi-classical approximation called the Pekar energy. For the ground state energy, the first-order expansion $\ms E^{(1)}(\alpha) =  e^{\Pek} + o(1)$ as $\alpha \to \infty$ is a long-known result \cite{DonskerV83,LiebT97}. To define the semi-classical energy, we replace the creation and annihilation operators in \eqref{eq:Froehlich:Hamiltonian} by a real-valued field $\alpha \varphi \in L^2(\Omega)$ and take the expectation value in a normalized electron state $\psi  \in H^1_0(\Omega)$. This leads to
\begin{align}
\mathcal E( \psi ,\varphi ) = \int\limits_\Omega dx | \nabla \psi (x)|^2 + 2 \iint\limits_{\Omega\times \Omega} dxdy\, |\psi(x) |^2 ( - \Delta_\Omega)^{-1/2}(x,y) \varphi (y) + \int\limits_\Omega dy \, \varphi (y)^2.
\end{align}
It is straightforward to find the optimal field $\varphi(y) = - \langle \psi, v_\cdot(y) \psi\rangle $ for given $\psi$, which gives the Pekar energy functional
\begin{align}
\mathcal E^\P (\psi) = \inf_\varphi \mathcal E(\psi, \varphi) = \int\limits _\Omega dx\, |\nabla \psi(x)|^2 -  \iint\limits_{\Omega\times \Omega} dxdy\, |\psi(x) |^2 ( - \Delta_\Omega)^{-1}(x,y) |\psi(y)|^2 .
\end{align}
The Pekar energy is defined as 
\begin{align}
\label{eq:def:Pekar:energy}
e^\Pek : = \inf_\psi \mathcal E^\P(\psi) 
\end{align} 
where the infimum is taken over all $\psi \in H_0^1(\Omega)$ satisfying $\|\psi \|_2 =1$. We shall work under the assumption that a unique positive minimizer $\psi^\P$ exists and that $\mathcal E^{\textrm P}(\psi) - e^{\Pek}$ obeys a lower bound that is quadratic in the distance $\psi -\psi^\P$.\medskip

\noindent \textbf{\hypertarget{1}{Assumption 1}.} The Pekar functional $\mathcal E^\P$ has a unique $L^2$-normalized minimizer $\psi^\P$ satisfying $\psi^\P >0$. \medskip

\noindent \textbf{\hypertarget{2}{Assumption 2}.} There exists a constant $\tau > 0$ such that 
\begin{align}
\mathcal E^\P(\psi) \ge \mathcal E^\P(\psi^\P)   + \tau \| \psi  - \psi^\P \|_{H^1(\Omega)}^2 \quad \forall\ \psi \in H_0^1(\Omega),\ \|\psi \|_2 =1.
\end{align}
Both assumptions are satisfied if $\Omega$ is a ball \cite{Feliciangeli20}. For the translational-invariant polaron, unique minimizers exist only up to translations \cite{Lenzmann09,Lieb77}. 
\medskip

The Euler--Lagrange equation for $\psi^\P$ reads  $ -\Delta_\Omega \psi^\P - ( 2 (-\Delta_\Omega)^{-1}|\psi^\P|^2 + \mu^\Pek )\psi^\P =0 $, where $\mu^\Pek =  e^\Pek - \| \varphi^\P \|_2^2$ with $\varphi^\P(y) = - \langle \psi^\P , v_\cdot (y) \psi^\P \rangle $ the optimal classical field. In Lemma \ref{lem:X:phiY:bound} we will show that $ (-\Delta_\Omega)^{-1}|\psi^\P|^2\in L^\infty(\Omega)$. By positivity of $\psi^\P$, it follows from standard arguments for Schr\"odinger operators that $\psi^\P$ is the unique ground state with eigenvalue zero of the Pekar Hamiltonian
\begin{align}\label{eq:def:pekar:hamiltonian}
H_0 = -\Delta_\Omega - 2 (-\Delta_\Omega)^{-1}|\psi^\P|^2  - \mu^\Pek
\end{align}
and that $H_0$ has a positive spectral gap above zero. Thus, the reduced resolvent
\begin{align} \label{eq:def:R}
 R := - Q \frac{1}{Q H_0 Q} Q
\end{align}
with $Q=\mathds {1} -P$ and $P=| \psi^\P \rs \ls \psi^\P |$ defines a bounded negative operator on $L^2(\Omega)$.

Let us next focus on the coefficient $E_0$ in \eqref{eq:asympt:expansion}. It corresponds to the $nt h$ eigenvalue of the quadratic Bogoliubov type Hamiltonian
\begin{align}\label{eq:Bog:Hamiltonian}
\mathbb H_0 = \mathcal N + \ls \psi^\P , \phi(v_\cdot) R \phi(v_\cdot) \psi^\P \rs_{L^2}
\end{align}
where $\phi( f  ) : = a^*( f  ) +a( f )$. This operator acts  on $\mathcal F$ and describes the fluctuations of the quantum field around its classical value $\varphi^\P$. Note that in the second term, we take the partial inner product with respect to $\psi^\P \in L^2(\Omega)$, hence this is a quadratic operator on $\mathcal F$. Using $ \mathcal W_{\alpha \varphi^{\textrm P}}^{-1} (\mathfrak H_\alpha -e^{\Pek} - \alpha^{-2} E_0) \mathcal W_{\alpha \varphi^{\textrm P}} = H_0 + V$ with $V=\alpha^{-1}\phi(v_x+\varphi^\P) + \alpha^{-2} (\mathcal N -E_0 )   $, where $\mathcal W_{\alpha \varphi^\P }$ is the unitary Weyl operator associated with $ \alpha \varphi^\P \in L^2(\Omega) $, we can see that the Bogoliubov Hamiltonian emerges naturally from second-order perturbation theory. Concretely, the ansatz $\Psi_\xi =  (1 +  RV +    RV RV  )  \psi^\P \otimes \xi$ for $\xi \in \mathcal F$
satisfies
\begin{subequations}\label{eq:heuristics:Bog}
\begin{align}
P (H_0+V)   \Psi_\xi & = \psi^\P \otimes \alpha^{-2} ( \mb  H_0 -E_0) \xi  + O(\alpha^{-3}) ,\label{eq:heuristics:Bog:a}\\
Q(H_0+V) \Psi_\xi & = O(\alpha^{-3}).
\end{align}
\end{subequations}
Unlike in usual perturbation theory, the second- and higher-order terms in the eigenspace $\text{Ran}P \otimes \mathcal F$ of the unperturbed Hamiltonian still involve non-trivial operators. This is a consequence of the fact that $H_0$ acts trivially on $\mathcal F$ and, as a result, the Fock space factor $\xi $ is determined  only at order $\alpha^{-2}$ by the operator $\mb H_0$. Being a quadratic operator, it follows that  $\mb H_0$ is unitarily diagonalizable and that its low-energy spectrum consists of a sequence of discrete eigenvalues $\eB E^{(1)}< \eB E^{(2)} \le \eB E^{(3)} \le  \ldots$ (See Section \ref{sec:Bog:Hamiltonian} for details.) This reveals that $E_0 = \eB E^\n$ by choosing $\xi = \Gamma^{(n)}$ as the $nth$ eigenstate of $\mb H_0$.

The remaining terms in \eqref{eq:asympt:expansion} are obtained through a two-fold perturbation approach. The first expansion is around the ground state of $H_0$ and extends \eqref{eq:heuristics:Bog} to higher orders. The second expansion revolves around the low-energy eigenstates $\Gamma^{(n)}$ of $\mb H_0$. Here, we treat the higher-order terms in \eqref{eq:heuristics:Bog:a} that result from the first expansion as a perturbation of the operator $\alpha^{-2}\mb H_0$. For a detailed explanation, we refer to the sketch of the proof of Proposition \ref{prop:asym:expansion:HG:1}. The precise formulas for the coefficients $(E_\ell)_{\ell\ge 1}$ are summarized in Theorems \ref{thm:non-degnerate-formula} and \ref{thm:recursive:formula:generalcase} below.

The unique characteristic of this perturbation series is the presence of two natural scales, which sets it qualitatively apart from the conventional Rayleigh--Schr\"odinger perturbation expansion. In \cite{Tjablikow54,Bog50}, this is referred to as adiabatic perturbation theory.  The origin of this feature can be attributed to the small prefactor in the field energy term of \eqref{eq:Froehlich:Hamiltonian}. As a result, our approach may be of interest also for other quantum models exhibiting a similar adiabatic (or semi-classical) decoupling between two subsystems.

Our first result is a precise version of the asymptotic series  \eqref{eq:asympt:expansion}.

\begin{theorem}\label{thm:asymptotic:expansion:1} For $n\in \mb N$ let $\ms E^{(n)}(\alpha)$ denote the $n th$ eigenvalue of the Fr\"ohlich Hamiltonian $\mathfrak H_\alpha$ and $\mathsf E^\n$ the $n th$ eigenvalue of the Bogoliubov Hamiltonian $\mathbb H_0$. There exists a sequence $(E_\ell )_{\ell \in \mb N_0 }$ with $E_0 = \mathsf E^\n$ so that for every $b\in \mathbb N_0$ there are constants $C(b),\alpha(b)>0$ such that
\begin{align}
\Bigg| \, \alpha^2  \ms E^{(n)}(\alpha) - \alpha^2 e^\Pek -  \sum_{\ell=0}^{b} \alpha^{-\ell } E_\ell   \, \Bigg| \le \frac{C(b)}{ \alpha^{b+1} }
\end{align}
for all $\alpha \ge \alpha(b)$.
\end{theorem}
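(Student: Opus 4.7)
The plan is to work not with $\mathfrak H_\alpha$ directly but with its Weyl conjugate $H_0 + V$ on $L^2(\Omega)\otimes\mathcal F$ (translated by the constant $e^{\Pek} + \alpha^{-2}\|\alpha\varphi^\P\|^2$-type term and where $V=\alpha^{-1}\phi(v_x+\varphi^\P)+\alpha^{-2}\mathcal N$). Since $\mathcal W_{\alpha\varphi^\P}$ is unitary, the spectrum of $\mathfrak H_\alpha - e^{\Pek}$ coincides with that of $H_0 + V$, and the electronic part $H_0$ has a simple ground state $\psi^\P$ separated from the rest of its spectrum by a positive gap, so the reduced resolvent $R$ is bounded. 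Throughout I would use trial vectors of the Lippmann--Schwinger form $\Phi_\xi^{[b]} := \sum_{k=0}^{b} (RV)^k (\psi^\P\otimes \xi)$, whose $Q$-component of $(H_0+V)\Phi_\xi^{[b]}-\lambda\Phi_\xi^{[b]}$ is automatically $O(\alpha^{-(b+1)})$ once $\lambda$ is of size $O(\alpha^{-2})$, and whose $P$-component reduces the problem to a Fock-space equation.

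The first (outer) perturbation scale is the iteration of \eqref{eq:heuristics:Bog}: projecting $(H_0+V)\Phi_\xi^{[b]}$ onto $\mathrm{Ran}\,P\otimes\mathcal F$ produces an identity of the form $P(H_0+V)\Phi_\xi^{[b]} = \psi^\P \otimes \mathbb K_\alpha^{[b]}\xi + O(\alpha^{-(b+3)})$, where $\mathbb K_\alpha^{[b]}$ is a Fock-space operator with a polynomial expansion $\alpha^{-2}\mathbb H_0+\alpha^{-3}\mathbb H_1+\alpha^{-4}\mathbb H_2+\cdots$; this is the content of Proposition \ref{prop:asym:expansion:HG:1}. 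The second (inner) scale is then ordinary Rayleigh--Schr\"odinger perturbation theory for the self-adjoint quadratic operator $\mathbb H_0$ around its discrete eigenvalue $\mathsf E^\n$ with eigenstate $\Gamma^{(n)}$, perturbed by $\sum_{k\ge 1}\alpha^{-k}\mathbb H_k$. Matching residuals order-by-order defines the coefficients $E_\ell$ (with $E_0=\mathsf E^\n$) and the corrections $\xi_\ell\in\mathcal F$ recursively, as formalised in Theorems \ref{thm:non-degnerate-formula} and \ref{thm:recursive:formula:generalcase}; this yields, for every $b$, an approximate eigenstate $\Phi^{(n,b)}$ and an approximate eigenvalue $\lambda^{(n,b)}=\sum_{\ell=0}^{b}\alpha^{-2-\ell}E_\ell$ with $\|(H_0+V-\lambda^{(n,b)})\Phi^{(n,b)}\|\le C\alpha^{-(b+3)}$.

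The upper bound $\mathscr E^{(n)}(\alpha)-e^{\Pek}\le \lambda^{(n,b)}+C\alpha^{-(b+3)}$ then follows from the min-max principle applied to the span of $\Phi^{(1,b)},\dots,\Phi^{(n,b)}$, whose leading parts $\psi^\P\otimes\Gamma^{(n')}$ are orthogonal so that a Gram-matrix orthonormalisation costs only $O(\alpha^{-1})$. For the matching lower bound I would use a Feshbach--Schur type argument: pick $N\ge n$, let $\Pi_N$ be the orthogonal projection onto the span of $\{\Phi^{(n',b)}\}_{n'\le N}$, and show (a) that the off-diagonal block $\Pi_N^\perp(H_0+V)\Pi_N$ is $O(\alpha^{-(b+3)})$ in norm, and (b) that $\Pi_N^\perp(H_0+V)\Pi_N^\perp \ge \alpha^{-2}\mathsf E^{(N+1)}-C\alpha^{-2}$ on the complementary subspace. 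Given (a) and (b), the first $N$ eigenvalues of $H_0+V$ equal the eigenvalues of the $N\times N$ matrix $\Pi_N(H_0+V)\Pi_N$ up to $O(\alpha^{-(b+3)})$, and those coincide with the $\lambda^{(n',b)}$ to the same accuracy.

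The routine part is checking that the operators $(RV)^k$, which involve compositions of $R$ with the ultraviolet-singular field operator $\phi(v_x)$, act boundedly on the states generated by the ansatz: this relies on the Pekar gap, on the bound $(-\Delta_\Omega)^{-1}|\psi^\P|^2\in L^\infty(\Omega)$ announced after \eqref{eq:def:pekar:hamiltonian}, and on $\mathcal N$-moment bounds for the low-lying Bogoliubov eigenstates $\Gamma^{(n)}$. The genuine obstacle is the Feshbach gap estimate (b): because $\mathbb H_0$ has infinitely many discrete eigenvalues, the parameter $N$ must be chosen $\alpha$-dependent and one must rule out, at the level of $\mathfrak H_\alpha$, that low-lying eigenvectors can have large Fock-number components. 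I would address this by combining an IMS-type localisation in the photon number $\mathcal N$ with the a priori lower bound on $\mathfrak H_\alpha$ of Frank--Seiringer \cite{FrankS21}, which together force any low-energy state to satisfy $\langle\mathcal N\rangle\le C\alpha^2$ and hence to be well approximated within $\mathrm{Ran}\,\Pi_N$ for sufficiently large but $\alpha$-independent $N$, closing the argument.
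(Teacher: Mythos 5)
Your overall architecture (Lippmann--Schwinger trial vectors, the two-scale perturbation matching that defines the $E_\ell$, upper bound by min--max, lower bound by comparison with $\mb H_0$ after photon-number localisation) is the one the paper follows. However, two steps that you treat as automatic or routine are in fact where the real work lies, and as stated they do not go through.

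First, the claim that the $Q$-component of $(H_0+V)\Phi_\xi^{[b]}-\lambda\Phi_\xi^{[b]}$ is ``automatically $O(\alpha^{-(b+1)})$'' is false for the untransformed interaction: this residual equals $QV(RV)^{b}\,\psi^\P\otimes\xi$, and the leftmost $\phi(v_x)$ carries no regularising factor. The Lieb--Yamazaki commutator bound (Lemma \ref{lem:X:phiY:bound}) controls $X\phi(v_\cdot)Y$ only when \emph{both} $X$ and $Y$ supply a gradient or a resolvent; with $X=\mathds 1$ the operator $\phi(v_\cdot)Y(\mathcal N+1)^{-1/2}$ is not bounded, since $v_x\notin L^2(\Omega)$. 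Consequently the norm $\|(H_0+V-\lambda)\Phi_\xi^{[b]}\|$ you need for the approximate-eigenvalue conclusion is not finite (and your trial vectors need not lie in the operator domain of $\mf H_\alpha$). The paper resolves this by passing to the Gross-transformed Hamiltonian with a $b$-dependent ultraviolet cutoff $\Lambda=\alpha^{2b+2}$, which makes the leftmost interaction bounded at the price of a factor $\sqrt\Lambda$ (Lemma \ref{lem:estimates:approx:groundstate:3x}, case $i=1$), and compensates that factor by extending the outer expansion to order $(RK)^{2b+3}$ rather than $(RK)^{b}$; cf.\ \eqref{eq:Phi:n:m:1} and the discussion after it. This changes the form of the trial state and cannot be skipped.

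Second, your lower-bound scheme has a gap in step (b) and in its justification. The bound $\Pi_N^\perp(H_0+V)\Pi_N^\perp\ge\alpha^{-2}\eB E^{(N+1)}-C\alpha^{-2}$ is useless as written: the eigenvalue gaps you are trying to resolve are themselves of size $O(\alpha^{-2})$, so the error in the complement block must be $o(\alpha^{-2})$, which is exactly the content of Proposition \ref{prop:two:term:expansion}. More seriously, the proposed route to (b) --- condensation giving $\langle\mathcal N\rangle\lesssim\alpha^{2}$, hence low-energy states are ``well approximated within $\mathrm{Ran}\,\Pi_N$'' --- is a non sequitur: a bound on the expected particle number does not place a state near the span of finitely many low-lying eigenvectors of $\mb H_0$, and attempting to derive that localisation from a spectral comparison with $\mb H_0$ is circular. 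The paper avoids any finite-$N$ truncation: it performs the Feshbach--Schur reduction with respect to the rank-one electron projection $P$ (Lemma \ref{Lemma-Algebra}), uses the IMS localisation and condensation estimates (Lemmas \ref{Lemma-Weak_BEC}, \ref{Lemma-Strong_BEC}) to compare the resulting Fock-space operator with the \emph{full} operator $\alpha^{-2}\mb H_0$ (Lemma \ref{Lemma-Bogoliubov}), and only then invokes the min--max principle for $\mb H_0$ on all of $\mathcal F$. You would need to either reproduce that chain or find a genuinely different argument for the a priori two-term localisation; without it, the approximate eigenstates only show that \emph{some} eigenvalue sits near each $\lambda^{(n,b)}$, not that it is the $n$th one.
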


Under the assumption that $\mathsf E^\n$ is a non-degenerate eigenvalue, we can improve the statement by providing explicit estimates of the coefficients and the remainder of the series. In Remark \ref{remark:Borel} we comment on the potential relevance of these estimates. While we focus on the non-degenerate case in the proof, we expect the result to hold for degenerate eigenvalues as well. Note that the theorem applies, for example, to the ground state energy $\ms E^{(1)}(\alpha)$.

\begin{theorem}\label{thm:asymptotic:expansion:2} Under the additional assumption that $\mathsf E^\n$ is a non-degenerate eigenvalue of $\mb H_0$, there exists a constant $C>0$ such that $|E_\ell | \le C^\ell \sqrt{\ell !}$ for all $\ell \ge 0$. Moreover, there is a constant $\alpha_0> 0$ such that
\begin{align}\label{eq:mainbound}
\Bigg| \, \alpha^2  \ms E^{(n)}(\alpha) - \alpha^2 e^\Pek -  \sum_{\ell=0}^{b} \alpha^{-\ell } E_\ell   \, \Bigg| \le \frac{ C^{b+1} (b+1)! }{ \alpha^{b+1} }
\end{align}
for all $\alpha \ge \alpha_0$ and $b\in \mb N_0$.
\end{theorem}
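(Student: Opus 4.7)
The plan is to make the two-fold perturbation construction underlying Theorem~\ref{thm:asymptotic:expansion:1} quantitative, taking care to track every combinatorial factor. After the Weyl shift $\mathcal W_{\alpha\varphi^\P}$, the problem reduces to the eigenvalue analysis of $H_0+V$ on $L^2(\Omega)\otimes\mathcal F$ with $V=\alpha^{-1}\phi(v_x+\varphi^\P)+\alpha^{-2}(\mathcal N-E_0)$. For each truncation parameter $b$, I would construct an approximate eigenvector
\[
\Psi^{(b)}=\Big(\mathds 1+\sum_{k=1}^{K}(RV)^k\Big)\,\psi^\P\otimes\xi^{(b)},\qquad K=K(b)\sim b,
\]
and, projecting onto $P\otimes\mathds 1$, extract an effective Fock-space operator $\mb H_\alpha^{(b)}=\alpha^{-2}\mb H_0+\sum_{j=3}^{K+2}\alpha^{-j}\mb H_j$ plus a controlled remainder. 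The second, inner expansion then determines $\xi^{(b)}$ as a truncated Rayleigh--Schr\"odinger series for $\mb H_\alpha^{(b)}$ near the eigenvalue $\eB E^\n$ of $\mb H_0$, which is legitimate thanks to the non-degeneracy assumption supplying a uniform Bogoliubov spectral gap. Reading off the scalar piece order-by-order produces the sequence $(E_\ell)_{\ell\ge 0}$ of Theorem~\ref{thm:asymptotic:expansion:1}.

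The bound $|E_\ell|\le C^\ell\sqrt{\ell!}$ I would establish by induction on $\ell$ using three ingredients: (i) pointwise field bounds $\|V\xi\|\le C\alpha^{-1}\|(\mathcal N+1)^{1/2}\xi\|+C\alpha^{-2}\|(\mathcal N+1)\xi\|$, so that $k$ factors of $V$ can be majorized by a number-operator weight of order $k/2$; (ii) the Bogoliubov eigenvector $\Gamma^\n$ is quasi-Gaussian and, because $\mb H_0$ is quadratic and diagonalized by a Bogoliubov transformation, satisfies $\|(\mathcal N+1)^j\Gamma^\n\|\le C^j\sqrt{(2j)!}$; and (iii) the Pekar reduced resolvent $R$ and the inner resolvent $(\mb H_0-\eB E^\n)^{-1}Q_{\Gamma^\n}$ are bounded. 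Counting the diagrams contributing at order $\alpha^{-\ell}$ by at most $C^\ell$ and combining them with the above number-operator bounds yields the claimed Gevrey-$\tfrac{1}{2}$ growth $C^\ell\sqrt{\ell!}$.

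For the remainder, I would substitute $\Psi^{(b)}$ with the constructed $\xi^{(b)}$ into the eigenvalue equation for the Weyl-shifted operator and estimate the residual. The dominant error stems from the truncation tail $(RV)^{K+1}$, whose norm grows like $C^{b+1}(b+1)!\,\alpha^{-b-1}$: the explicit $\alpha$-power is read off from the factors of $V$, the prefactor $C^{b+1}$ bounds the number of operator contractions, and the full factorial $(b+1)!$ (rather than only $\sqrt{(b+1)!}$) arises because the residual must be estimated in operator norm via a Cauchy--Schwarz step that effectively doubles the number-operator exponent. Combined with a Feshbach--Schur or Temple-type inequality, which exploits both the Pekar gap encoded in $R$ and the Bogoliubov gap at the non-degenerate eigenvalue $\eB E^\n$, this converts the approximate-eigenvector bound into the eigenvalue estimate \eqref{eq:mainbound}. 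The main obstacle I anticipate is precisely this last step: propagating number-operator weights through the iterated resolvents $R$ without losing optimal constants, which amounts to a careful commutator analysis of $[\mathcal N,R]$ and $[\mathcal N,V]$ and is what prevents a degradation of the bound from $(b+1)!$ to, say, $((b+1)!)^2$.
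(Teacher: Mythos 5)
Your overall architecture (an outer expansion in powers of $RV$, an inner Rayleigh--Schr\"odinger expansion for the effective Fock-space operator near the non-degenerate eigenvalue $\eB E^\n$, and Gevrey-$\tfrac12$ growth extracted from number-operator weights evaluated on a quasi-free state) is the same as the paper's, but two steps as you describe them would fail. First, you work directly with $H_0+V$, $V=\alpha^{-1}\phi(v_x+\varphi^\P)+\alpha^{-2}(\mathcal N-E_0)$, with no ultraviolet regularization. The residual $(H_0+V-E)\Psi^{(b)}$ then contains the tail $QV(RV)^{K}\psi^\P\otimes\xi^{(b)}$, whose creation part $a^*(v_x)(RV)^{K}\psi^\P\otimes\xi^{(b)}$ is not square-integrable, since $v_x\notin L^2(\Omega)$ and there is no resolvent to the \emph{left} of $\phi(v_x)$ with which to run the Lieb--Yamazaki commutator argument. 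The paper avoids this by passing to the Gross-transformed Hamiltonian $\mf H^{\textnormal G}_{\alpha,\Lambda}$ of Section \ref{sect:Gross Hamiltonian}, accepting a factor $(1+\sqrt\Lambda)$ in the residual bound of Proposition \ref{prop:asym:expansion:HG} and then fixing $\Lambda=2$; without some such step your approximate eigenstates do not have a finite residual. Second, the inductive scheme (i)--(iii) for $|E_\ell|\le C^\ell\sqrt{\ell!}$ is not yet a proof: if one commutes the weights $(\mathcal N+1)^{1/2}$ through a product of $k$ factors of $V$ via bounds of the form $\|(\mathcal N+1)^{j/2}\phi(f)(\mathcal N+1)^{-(j+1)/2}\|\le C_j$, the constants satisfy $C_j\sim 2^{j/2}$ and their product over the $k$ factors is of order $2^{k^2/4}$, which destroys the claimed $C^\ell\sqrt{\ell!}$. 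What actually produces the square-root factorial is that every state appearing in the expansion lies in $\mb U^*\mathcal F_{\le r}$ with $r$ growing only linearly in the order, so that the restricted norms telescope exactly to $\sqrt{(s+1)\cdots(s+\ell)}\approx C^\ell\sqrt{\ell!}$; this particle-number bookkeeping (Lemmas \ref{lem:fock_space_results} and \ref{lem:estimates:approx:groundstate:2_Auxiliary}) is the technical core of the improved estimate and cannot be replaced by weighted operator norms alone.

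Two further points. The theorem asserts \eqref{eq:mainbound} for \emph{all} $b\in\mb N_0$ with $\alpha_0$ independent of $b$, but the approximate-eigenstate argument identifies the $n$th eigenvalue only when the residual $C^{b+1}(b+1)!\,\alpha^{-b-3}$ is small compared with the $O(\alpha^{-2})$ gap to the neighbouring eigenvalues, i.e.\ only for $\alpha\gtrsim b$. In the complementary regime $\alpha\lesssim b$ the bound must be obtained differently -- in the paper, trivially from Proposition \ref{prop:two:term:expansion} together with $|E_\ell|\le C^\ell\sqrt{\ell!}$, since there the right-hand side of \eqref{eq:mainbound} dominates the left. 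Your proposal does not address this regime. Finally, your attribution of the full factorial $(b+1)!$ to a Cauchy--Schwarz step is not where it comes from: in the paper the tail of the expansion contains terms of total order up to $2b+4$, and $\sqrt{(2b+4)!}\le C^{b+1}(b+1)!$ by Stirling; the doubling occurs in the order of the surviving tail terms, not in the number-operator exponent.
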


\begin{note} \label{remark:Borel}  The bounds on $E_\ell$ imply the existence of the Borel transform  $B(\alpha) := \sum_{\ell=0}^\infty \alpha^{-\ell} E_\ell  / \ell!$. Using $ \int_0^\infty t^\ell e^{-t \alpha} \d t = \alpha^{-\ell-1} \ell! $ together with an unjustified interchange of summation and integration, we would thus obtain the convergent result
$\ms E^\n(\alpha) - e^\Pek = \alpha^{- 1 }\int_0^\infty  B(1/t) e^{-t \alpha} \d t$. If this identity holds, the series is called Borel summable. By the Watson-Nevanlinna theorem \cite{Sokal80}, Borel summability of $\ms E^\n(\alpha)$ is rigorously justified under the following condition: There exists a function $f:\mb C \to \mb C$ that is analytic on a disk $D\subseteq \{ z \in \mb C : \Re (z) > 0 \}$ with $0\in \partial D$, such that (i) $f(z) = \ms E^\n(1/z) $ for all $z\in D \cap \mb R_0^+$ and (ii) $f$ has an asymptotic series with remainder estimates of the form \eqref{eq:mainbound} uniformly in $z\in D$. While we do not address the question of analytic continuation of $\ms E^\n(\alpha)$ in this work, we believe that it poses an interesting mathematical problem, in particular since it evades some techniques from singular perturbation theory. The usual  approach, as deployed for instance in \cite{Auberson82,Simon70,Simon71,
Simon82,GGS70}, to infer the required analytic structure of an eigenvalue of an operator $H(z)$ relies on the non-degeneracy of the eigenvalue of $H(0)$ together with the convergence property $\lim_{z\to 0:z\in D} H(z) = H(0)$ in the norm resolvent sense. This approach can not be employed for eigenvalues of $\mf H_\alpha$ because the eigenvalue of the unperturbed Hamiltonian $H_0 = H_0 \otimes \mathds{1}$ has  infinite degeneracy. 
\end{note}

Having stated our results about the existence of an asymptotic series for $\ms E^\n(\alpha)$, our next two theorems present formulas for the coefficients of this series. This requires some additional preparation. For fixed eigenvalue $\ms E^\n(\alpha)$, let $(E_\ell)_{\ell \in \mb N_0}$ denote the coefficients from Theorem \ref{thm:asymptotic:expansion:1} and set
\begin{align}\label{eq:def:V_1}
V_1  & := \phi(v_x + \varphi^\P), \\[0.5mm]
V_2 & := \mathcal N - E_0  \\[0.5mm]
V_{\ell} & : = -E_{\ell-2}\quad  \text{for}\quad \ell \ge 3. \label{eq:def:V_3}
\end{align}
Recalling $R$ from \eqref{eq:def:R} and $P = |\psi^\P \rangle \langle \psi^\P|$, we introduce for $\ell \ge 1$
\begin{align}
\mb V_\ell = \mb V_\ell ( E_0, \ldots , E_\ell) & :=   \sum_{i=1}^{\ell + 2} \sum_{\substack{ \peps  \in \mathbb N^i \\ | \peps | = \ell +2 } }  P V_{\varepsilon_1}  (R V_{\varepsilon_2} )\ldots  (R V_{\varepsilon_i}) P \label{eq:def:mbV:a}
\end{align}
where $\peps = (\eps_1,\ldots , \eps_i)$ and $| \peps | = \eps_1 + \ldots + \eps_i$. Note that $\mathbb V_\ell$ acts on $\text{Ran} P \otimes \mathcal F \cong \mathcal F$, so we can view it also as an operator on $\mathcal F$. In fact, we shall employ the identification of $\text{Ran} P \otimes \mathcal F $ with $\mathcal F$ throughout the entire paper, without mentioning it at each instance. We further define $ \widetilde { \mb V }_\ell := \mb V_\ell + E_{\ell } $ such that
\begin{align}
\widetilde{ \mb V}_\ell =\widetilde{ \mb V}_\ell (E_0, \ldots , E_{\ell-2})  =   \sum_{i=2}^{\ell + 2 }   \sum_{\substack{ \peps  \in \mathbb N^i \\  | \peps |= \ell +2 } }  P V_{\varepsilon_1}  (R V_{\varepsilon_2} )\ldots  (R V_{\varepsilon_i}) P. \label{eq:def:mbV:b}
\end{align}
Note that the coefficient $ E_{\ell} = - V_{\ell +2}$ does not appear on the right-hand side, while $E_{\ell-1} = - V_{\ell +1}$ only appears as $ P V_{\ell+1} R V_1 P + \text{h.c.}$, which vanishes for $\ell \ge 1$ since $PR=0$. This explains why $\widetilde {\mb V}_\ell$ does not depend on these coefficients.

Let $\mb P$ be the orthogonal projection onto the finite-dimensional eigenspace of $\mb H_0$ associated with the eigenvalue $\eB E^\n$ and define for $\mb Q = \mathds{1}-\mb P $ the reduced resolvent
\begin{align}
\mathbb R :=  -  \mathbb Q  \frac{1}{\mb Q( \mb H_0 - \eB E^\n )\mb Q  }\,  \mathbb Q  .
\end{align}
By Lemma \ref{lem:diagonalization of HBog} below, this is a bounded operator for every $n\in \mathbb N$. 

The next statement provides an iterative formula for the coefficients $E_\ell$ in case $\mathsf E^\n$ is non-degenerate.

\begin{theorem}\label{thm:non-degnerate-formula} Assume $\eB E^\n$ to be a non-degenerate eigenvalue of $ \mb H_0$ with normalized eigenfunction $\Gamma\in \mathcal F$. The corresponding sequence from Theorem \ref{thm:asymptotic:expansion:1} satisfies $E_0 = \eB E^\n$, $E_\ell = 0$ for all $\ell$ odd and the iterative relation
\begin{align}
E_{\ell} & =  \bigg\langle \psi^\P \otimes \Gamma  ,\bigg[ \widetilde {\mb V}_{\ell}  + \sum_{i=2}^{\ell}   \sum_{ \substack{ \peps \in \mathbb N^i \\ |\peps| = \ell } }    \mb V_{ \varepsilon_1} (\mb R \mb V_{ \varepsilon_2} ) \ldots (\mb R \mb V_{ \varepsilon_i } ) \bigg] \psi^\P \otimes \Gamma  \bigg\rangle  \label{eq:energy:order:s}
\end{align}
for all $\ell \ge 2 $ even.
\end{theorem}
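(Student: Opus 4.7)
The plan is to perform a Feshbach–Schur reduction of the eigenvalue equation for $\mf H_\alpha$ and then extract the coefficients from a formal Rayleigh–Schr\"odinger expansion on Fock space. With $\mathcal W = \mathcal W_{\alpha\varphi^\P}$ and the convention $V_{\ell+2}:=-E_\ell$ for $\ell\ge 1$, the ansatz $\ms E^\n(\alpha) = e^\Pek + \sum_{\ell\ge 0}\alpha^{-\ell-2}E_\ell$ formally turns the eigenvalue equation $\mf H_\alpha \Psi = \ms E^\n(\alpha)\Psi$ into the homogeneous equation $(H_0 + W_\alpha)\mathcal W^{-1}\Psi = 0$ with $W_\alpha = \sum_{\ell\ge 1}\alpha^{-\ell}V_\ell$. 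Splitting $\mathcal W^{-1}\Psi = \Phi_P + \Phi_Q$ along $P\otimes 1_{\mathcal F}$ and $Q\otimes 1_{\mathcal F}$, the $Q$-equation inverts to $\Phi_Q = (1 - RW_\alpha)^{-1}RW_\alpha\Phi_P$, and substituting back yields the effective Feshbach equation $K_\alpha\Phi_P = 0$ with $K_\alpha = PW_\alpha(1 - RW_\alpha)^{-1}P$. Expanding the Neumann series and regrouping by powers of $\alpha^{-1}$ gives
\begin{align*}
K_\alpha = \sum_{\ell\ge 0}\alpha^{-\ell-2}\mb V_\ell,
\end{align*}
with $\mb V_\ell$ exactly as in \eqref{eq:def:mbV:a}; the vanishing of the would-be $\alpha^{-1}$ term is $PV_1P = 0$, which encodes the Pekar Euler--Lagrange relation $\varphi^\P(y) = -\ls\psi^\P, v_\cdot(y)\psi^\P\rs$.

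Next, write $\Phi_P = \psi^\P\otimes\xi(\alpha)$ with $\xi(\alpha) = \sum_{k\ge 0}\alpha^{-k}\xi_k$ and fix the normalization $\xi_0 = \Gamma$, $\mb P\xi_k = 0$ for $k\ge 1$. The equation $K_\alpha\Phi_P = 0$ at order $\alpha^{-m-2}$ reads $\sum_{k=0}^m \mb V_{m-k}\xi_k = 0$; using $\mb V_0 = \mb H_0 - \mathsf E^\n$ and the decomposition $\mb V_m = \widetilde{\mb V}_m - E_m$, the projections onto $\Gamma$ and onto $\mb Q$ give respectively
\begin{align*}
E_m = \ls\Gamma, \widetilde{\mb V}_m\Gamma\rs + \sum_{k=1}^{m-1}\ls\Gamma, \mb V_{m-k}\xi_k\rs, \quad \xi_m = \mb R\sum_{k=0}^{m-1}\mb V_{m-k}\xi_k.
\end{align*}
Iterating the second identity yields the closed form
\begin{align*}
\xi_k = \sum_{i=1}^{k}\sum_{\substack{\peps\in\mb N^i\\|\peps|=k}}\mb R\mb V_{\eps_1}\mb R\mb V_{\eps_2}\cdots\mb R\mb V_{\eps_i}\Gamma,
\end{align*}
and substituting this into the formula for $E_m$, reading the prefactor $\mb V_{m-k}$ as a new first entry $\eps_1 = m-k$ of an extended string of length $i+1$, reproduces exactly \eqref{eq:energy:order:s}.

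Finally, $E_\ell = 0$ for odd $\ell$ follows by induction and Fock parity. Since $\mb H_0$ is quadratic in $(a, a^*)$, it commutes with $(-1)^{\mathcal N}$, so under the non-degeneracy assumption $\Gamma$ has a definite parity. Among the building blocks only $V_1 = \phi(v_x+\varphi^\P)$ is odd, while $V_2 = \mathcal N - E_0$ preserves parity and $V_k = -E_{k-2}$ for $k\ge 3$ is scalar. Assuming inductively $E_{\ell'} = 0$ for all odd $\ell' < \ell$, one has $V_k = 0$ for every odd $k$ with $3 \le k \le \ell$; hence every $\eps_j$ in a non-vanishing term of the formula for $E_\ell$ lies in $\{1, 2, 4, 6, \ldots\}$, and since $|\peps|\in\{\ell, \ell+2\}$ is odd, the number of $V_1$-factors must be odd, so the operator is odd in $(a, a^*)$ and its $\Gamma$-expectation vanishes. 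The main obstacle in this plan is to identify the formal sequence $(E_\ell)$ just constructed with the abstract sequence from Theorem \ref{thm:asymptotic:expansion:1}; this is achieved by using the truncations $\psi^\P\otimes\sum_{k=0}^N\alpha^{-k}\xi_k$ together with their $\Phi_Q$-completion as approximate eigenvectors with residual of order $\alpha^{-N-3}$, as in Proposition \ref{prop:asym:expansion:HG:1}, and invoking the uniqueness of asymptotic expansions. The unboundedness of $V_1$ and $\mathcal N$ is controlled via the smoothing properties of $R$ together with Lemma \ref{lem:X:phiY:bound}.
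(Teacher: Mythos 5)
Your proposal is correct and is essentially the paper's own argument in different packaging: the Neumann series $PW_\alpha(1-RW_\alpha)^{-1}P$ regrouped by powers of $\alpha^{-1}$ is precisely how the operators $\mb V_\ell$ arise, your Rayleigh--Schr\"odinger series for $\xi(\alpha)$ is (up to truncation) the state $\xi_b$ entering the approximate eigenstates of Proposition \ref{prop:asym:expansion:HG:1}, and the identification step you describe — truncate, control the residual, and combine with the two-term localization of Proposition \ref{prop:two:term:expansion} and uniqueness of asymptotic expansions — is exactly how the paper obtains Theorem \ref{thm:non-degnerate-formula} as the $\textnormal{d}=1$ case of Theorem \ref{thm:recursive:formula:generalcase}. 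One small correction: the parity bookkeeping must be done on the indices of the inner factors $V_k$ rather than on the outer indices $\eps_j$ of the $\mb V_{\eps_j}$ (the relevant invariant is that the total weight of all $V_k$'s occurring in any term of \eqref{eq:energy:order:s} equals $\ell+2$ or $\ell+2i$, hence is congruent to $\ell$ modulo $2$); with that adjustment your induction does yield $E_\ell=0$ for odd $\ell$, a part of the statement the paper leaves implicit.
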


Note that the right-hand side of \eqref{eq:energy:order:s} depends only on $E_0,\ldots ,E_{\ell-2}$. Besides providing a tool for the approximate computation of $\ms E^\n(\alpha)$, the theorem confirms the prediction from the physics literature \cite{Gross76,Miyake76} that the ground state energy $\ms E^{(1)}(\alpha)$, and indeed all eigenvalues that are non-degenerate in the limit, admits an expansion in powers of $\alpha^{-2}$. Before we continue with the discussion for the general case, let us illustrate the result by computing the first terms of the ground state energy $
\alpha^2 \ms E^{(1)}(\alpha) = \alpha^2 e^\Pek +  E_0 + \alpha^{-2} E_2 +  \alpha^{-4} E_4  + O(\alpha^{-6})$.
An explicit formula for $E_0 = \eB E^{(1)}$ is given in Lemma \ref{lem:diagonalization of HBog}. Using the shorthand notation $\phi  = \phi (v_x + \varphi^\P) $, one finds
\begin{align}
E_2 
& = \Big\langle \psi^\P \otimes \Gamma   , \big(   \phi R  \mathcal N  R \phi + \phi R \phi R \phi R \phi \big) \psi^\P \otimes \Gamma \Big\rangle  + \eB E^{(1)} \Big\langle \psi^\P \otimes \Gamma  ,  ( \phi R^2 \phi   ) \psi^\P \otimes \Gamma   \Big\rangle  \notag\\[1mm]
& \quad +  \Big\langle  \psi^\P \otimes \Gamma  , ( \phi R \phi R \phi )  (P\otimes \mb R ) ( \phi R \phi R \phi  ) \psi^\P \otimes \Gamma  \Big\rangle, \label{eq:ground:state:E2} \\[1mm]
E_4 & =  \Big\langle \psi^\P \otimes \Gamma  , \Big(  \widetilde {\mb V}_{4} +  \widetilde {\mb V}_{2}  \mb R \widetilde { \mb V}_{2} + \widetilde { \mb V}_1 \mb R \mb V_2  \mb R \widetilde { \mb V}_1 + \widetilde { \mb V}_{1} \mb R \mb V_{1}  \mb R \mb V_{1}  \mb R  \widetilde { \mb V}_{1} \Big) \psi^\P \otimes \Gamma  \Rs\notag\\
& \quad + 2\Re   \Big\langle  \psi^\P \otimes \Gamma  , \Big(  \widetilde { \mb V}_3  \mb R \widetilde { \mb V}_1 +   \widetilde { \mb V}_1 \mb R \mb V_{1}  \mb R \widetilde { \mb V}_2 \Big)  \psi^\P \otimes \Gamma  \Big\rangle. \label{eq:ground:state:E4}
\end{align}
Inserting $\widetilde{ \mb V }_\ell$ and  $\mb V_\ell$ into $E_4$ already leads to lengthy expressions, so we omit the details. Considering that $\Gamma$ is a quasi-free state, that is, $\Gamma = \mb U^* \Omega$ with $\Omega$ the Fock space vacuum and $\mb U$ a unitary Bogoliubov transformation (see Section \ref{sec:Bog:Hamiltonian}), the coefficients $E_\ell$ can  be explicitly computed. As a result, the computation of $\ms E^\n(\alpha)$ with non-degenerate $\eB E^\n$ effectively simplifies to determining the spectral properties of the one-particle operators $H_0$ and  $\mf h$, with $\mf h$ introduced in \eqref{eq:Hessian:kernel}. The operator $\mf h$  determines the spectrum of $\mb H_0$ as will be explained in the next section.
 
Turning to the general case, let $\d: =\text{dim(Ran} \mb P) \ge 1$ be the degeneracy of the eigenvalue $\eB E^\n$ of $\mb H_0$. For $\d \ge 2$, we enumerate the degenerate eigenvalues as $\eB  E^{(n_1)} = \ldots = \eB E^{(n_\d)} (=\eB E^\n)$ and select an orthonormal basis $\{ \Gamma^{(n_1)}, \ldots , \Gamma^{(n_\d)}\}\subseteq \text{Ran} \mb P$. According to Theorem \ref{thm:asymptotic:expansion:1}, each eigenvalue $\ms E^{(n_s)}(\alpha)$, $s\in \{ 1,\ldots, \d\}$, has an asymptotic expansion in terms of a sequence $(E_\ell)_{\ell \in \mb N_0}$ with $E_0 = \eB E^\n$. Fixing $s\in \{1, \ldots , \d\}$ we define the hermitian matrix $M^{(\ell)} : = \sum_{k=1}^\ell \alpha^{- k } M_k \in \mb C^{\d\times \d}$ for $\ell \ge 1$  with coefficients
\begin{align}\label{eq:matrix:definitions}
(M_k)_{rt} := \bigg\langle \psi^\P \otimes \Gamma^{(n_r)} , \bigg[ \widetilde {\mb V}_k +  \sum_{j=2}^{k} \sum_{\substack{ \peps \in \mb N^j \\ |\peps|=k } } \,  \mb V_{ \varepsilon_1 } (  \mb R  \mb V_{ \varepsilon_2} ) \ldots ( \mb R \mb V_{ \varepsilon_j})  \bigg] \psi^\P \otimes \Gamma^{(n_t)}\bigg\rangle.
\end{align} 
Note that through $\mb V_\ell$ and $\widetilde {\mb V}_\ell$, the matrix $M^{(\ell)}$ depends on the coefficients $E_0, \ldots, E_{\ell-2}$ associated with the eigenvalue $\ms E^{(n_s)}(\alpha)$. Our next result shows that the coefficients $E_{\ell}$ are determined by the $sth$ eigenvalue of $M^{(\ell)}$.

\begin{theorem}\label{thm:recursive:formula:generalcase} Choose $\ms E^{(n_s)}(\alpha)$, $s\in \{1,\ldots, \textnormal{d}\}$ with $\textnormal{d} \ge 1$, such that $\mb H_0$ has a $\textnormal{d}$-fold degenerate eigenvalue group $\eB E^{(n_1)} = \ldots = \eB E^{(n_\textnormal{d})}$. Moreover, let $ M^{(\ell)}\in \mb C^{\textnormal{d} \times \textnormal{d}}$ be defined as explained above. The coefficients from Theorem \ref{thm:asymptotic:expansion:1} associated with the eigenvalue $\ms E^{(n_s)}(\alpha)$ satisfy the recursive relation $E_0 = \eB E^{(n_s)}$ and 
\begin{align}\label{eq:limit:formula}
E_\ell  = \lim_{\alpha \to \infty} \alpha^\ell \bigg(  \mu^{(s)} ( M^{(\ell)}  ) - \sum^{\ell-1}_{k=1 }\alpha^{-k} E_{k} \bigg) 
\end{align}
where $\mu^{(s)}(M^{(\ell)} )$ denotes the $sth$ eigenvalue of $M^{(\ell)}$ counted in increasing order, i.e., $\mu^{(1)}( M^{(\ell)} ) \le \ldots \le \mu^{(\textnormal{\d})}( M^{(\ell)} )$.
\end{theorem}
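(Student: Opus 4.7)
The plan is to extend the non-degenerate argument of Theorem \ref{thm:non-degnerate-formula} to the $\textnormal{d}$-dimensional degenerate setting via a block-diagonalization of $\mf H_\alpha$ inside the quasi-degenerate eigenspace. For each basis vector $\Gamma^{(n_r)}$, $r = 1,\ldots,\textnormal{d}$, and each truncation order $b \ge 0$, I would form the trial state
\begin{align}
\Psi_r^{(b)} := \sum_{k=0}^{b} (RV)^k \, \psi^\P \otimes \Gamma^{(n_r)} ,
\end{align}
where $V = \alpha^{-1} V_1 + \alpha^{-2} V_2 + \sum_{k \ge 3} \alpha^{-k} V_k$ with the $V_k$ from \eqref{eq:def:V_1}--\eqref{eq:def:V_3}. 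The subspace $\mathcal V^{(b)}_\alpha := \text{span}\{ \Psi_1^{(b)},\ldots,\Psi_\textnormal{d}^{(b)} \}$ serves as an approximately invariant subspace for the spectral cluster.

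First I would verify, by the same computation that underlies \eqref{eq:heuristics:Bog} and its higher-order extension employed for Theorem \ref{thm:non-degnerate-formula}, that the $\textnormal{d} \times \textnormal{d}$ matrix representation of $\alpha^2 (\mf H_\alpha - e^\Pek)$ on $\mathcal V^{(b)}_\alpha$ coincides with $\eB E^\n \mathds{1}_\textnormal{d} + M^{(b)}$ up to an error of order $\alpha^{-b-1}$, after correcting for the approximate orthonormality of the trial states (whose Gram matrix differs from the identity by $O(\alpha^{-2})$). The diagonal entries recover the non-degenerate formulas of Theorem \ref{thm:non-degnerate-formula}, while the off-diagonal entries are exactly \eqref{eq:matrix:definitions}. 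The construction is self-consistent because, as observed below \eqref{eq:def:mbV:b}, the operators $\widetilde{\mb V}_k$ and $\mb V_k$ depend only on $E_0,\ldots,E_{k-2}$.

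Second I would combine this effective-matrix identity with the Courant--Fischer min-max principle. The upper bound $\ms E^{(n_s)}(\alpha) \le e^\Pek + \alpha^{-2} \mu^{(s)} \bigl( \eB E^\n \mathds{1}_\textnormal{d} + M^{(b)} \bigr) + O(\alpha^{-b-3})$ follows directly by testing on $\mathcal V^{(b)}_\alpha$. For the matching lower bound I would use the spectral gap of $\mb H_0$ between $\eB E^\n$ and the rest of its spectrum, together with Theorem \ref{thm:asymptotic:expansion:1}, to argue that the span of the true eigenfunctions of $\mf H_\alpha$ associated with the cluster $\{ \ms E^{(n_1)}(\alpha),\ldots,\ms E^{(n_\textnormal{d})}(\alpha) \}$ lies $O(\alpha^{-1})$-close to $\mathcal V^{(b)}_\alpha$, and then invoke min-max in the opposite direction. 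This yields
\begin{align}
\alpha^2 \bigl( \ms E^{(n_s)}(\alpha) - e^\Pek \bigr) = \eB E^\n + \mu^{(s)} \bigl( M^{(b)} \bigr) + O(\alpha^{-b-1}).
\end{align}

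Finally, formula \eqref{eq:limit:formula} would drop out by induction on $\ell$. The case $\ell = 0$ is the definition $E_0 = \eB E^{(n_s)}$. Assuming $E_0,\ldots,E_{\ell-1}$ have been correctly identified, I would set $b = \ell$ in the two-sided bound, multiply by $\alpha^\ell$, and subtract $\sum_{k=1}^{\ell-1} \alpha^{-k} E_k$; Theorem \ref{thm:asymptotic:expansion:1} guarantees that the left-hand side converges to $E_\ell$, while the bound above identifies this limit with $\lim_{\alpha \to \infty} \alpha^\ell \bigl( \mu^{(s)}(M^{(\ell)}) - \sum_{k < \ell} \alpha^{-k} E_k \bigr)$. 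The main obstacle will be the lower bound in step two: ruling out contamination of $\mathcal V^{(b)}_\alpha$ by other eigenstates of $\mf H_\alpha$ demands a quantitative spectral-gap estimate lifted from $\mb H_0$ via uniform-in-$\alpha$ resolvent bounds, analogous to those developed for the existence statement of Theorem \ref{thm:asymptotic:expansion:1}. A secondary subtlety concerns further splitting inside the cluster: if eigenvalues of $M^{(\ell)}$ coincide at some order but split at a later one, the induction must correctly feed the already-determined $E_k$ into the matrix entries $(M_k)_{rt}$ at each level so that $\mu^{(s)}(M^{(\ell)})$ is well defined and selects the correct branch.
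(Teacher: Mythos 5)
Your overall architecture (approximate trial subspace $+$ a priori localization $+$ induction on $\ell$) is the right one, and your final induction step matches the paper's Remark \ref{rem:eigenvalue:M}. But there is a genuine gap in step one: the trial states $\Psi_r^{(b)} = \sum_{k\le b}(RV)^k\,\psi^\P\otimes\Gamma^{(n_r)}$ carry only the \emph{outer} perturbation expansion (in the electron variable, via $R$), not the \emph{inner} one (in the Fock-space variable, via $\mb R$). Every matrix element $\langle\Psi_r^{(b)},\widetilde{\mf H}_\alpha\Psi_t^{(b)}\rangle$ reduces, using $H_0R=-Q$ and $(RV)^*=VR$, to matrix elements of operators of the form $PV_{\eps_1}(RV_{\eps_2})\cdots(RV_{\eps_i})P$, i.e.\ of the $\mb V_\ell$ alone. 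No such computation can produce the contributions $\mb V_{\eps_1}(\mb R\mb V_{\eps_2})\cdots(\mb R\mb V_{\eps_j})$ that enter $(M_k)_{rt}$ for $k\ge 2$ in \eqref{eq:matrix:definitions} — compare the term $(\phi R\phi R\phi)(P\otimes\mb R)(\phi R\phi R\phi)$ in \eqref{eq:ground:state:E2}, which is a second-order correction \emph{within} $\mathcal F$. So the claimed identity ``Rayleigh-quotient matrix $=\eB E^{(n)}\mathds 1_{\textnormal{d}}+M^{(b)}+O(\alpha^{-b-1})$'' is false for $b\ge 2$, and your scheme would return wrong coefficients starting at $E_2$. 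The paper's ansatz \eqref{eq:Phi:n:m:1} fixes this by taking $\xi_b=\sum_{j\le b}(\mb R\mb V)^j\ms U\Gamma^{(n_s)}$, with $\ms U$ a unitary on $\mathrm{Ran}\,\mb P$ chosen to diagonalize $M^{(b)}$; the residual $(P\otimes\mb P)\mb V\,\psi^\P\otimes\xi_b$ then collapses to $|\mu^{(s)}(M^{(b)})-\sum_\ell\alpha^{-\ell}E_\ell|$, which is where \eqref{eq:limit:formula} actually enters.

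Two further points. First, your lower bound asks for $O(\alpha^{-1})$-closeness of the true eigenfunctions of $\mf H_\alpha$ to $\mathcal V_\alpha^{(b)}$; this is precisely the kind of eigenprojection control the paper explains it does \emph{not} have a priori (the perturbation of $P\otimes\mb H_0$ is only $O(\alpha^{2-\eta})$ on general states). The paper circumvents eigenfunctions entirely: the residual bound $\|(\mf H_\alpha-\ms E_b^{(n)})\Psi_b\|\le C(b)\alpha^{-b-3}$ guarantees \emph{existence} of spectrum in a window of that width, and Proposition \ref{prop:two:term:expansion} shows the only eigenvalues that can sit there are $\ms E^{(n_1)}(\alpha),\ldots,\ms E^{(n_\textnormal{d})}(\alpha)$; a counting argument (with separate treatment of the cases where branches do or do not split) then identifies them. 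Second, working with $V_1=\phi(v_x+\varphi^\P)$ directly is not viable: the residual term $QV(RV)^b\psi^\P\otimes\xi$ contains $\phi(v_x)$ with nothing on its left to run the Lieb--Yamazaki commutator argument, and $\|K_1Y(\mathcal N+1)^{-1/2}\|$ only admits the bound $C(1+\sqrt\Lambda)$. This is why the paper passes to the Gross-transformed Hamiltonian with cutoff $\Lambda=\alpha^{2b+2}$ and expands the outer series to order $2b+3$ rather than $b$.
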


This immediately implies Theorem \ref{thm:non-degnerate-formula} when $\d=1$. Hence it is sufficient to prove the general statement.

\begin{note} \label{rem:eigenvalue:M}The  theorem is based on the observation that the spectrum of $M^{(\ell)}$ consists of $\d$ analytic branches in the variable $\alpha^{-1}$ that do not intersect when restricted to a suitable interval $(0,1/\alpha_0)$. This property follows from \cite[Theorem 1.10 and Chapter 2 §6]{Kato66}. Consequently, the ordered eigenvalues admit a power series expansion, that is, for every $\ell \ge 1$ and $s\in \{1,\ldots,\d \}$ there exists a sequence $(\mu^{(s)}_k )_{k \in \mb N}$ such that
\begin{align}\label{eq:powerseries:ev}
\mu^{(s)}( M^{(\ell)} ) = \sum_{k =1}^\infty \alpha^{- k } \mu_k^{(s)} 
\end{align}
on $[\alpha_0,\infty)$. From this, we can verify that the limit in \eqref{eq:limit:formula} indeed exists by an iterative argument. For this purpose, assume that $E_{\ell-1} = \lim_{\alpha \to \infty } \alpha^{\ell-1} (\mu^{(s)}( M^{(\ell-1)} )  - \sum_{ k = 1}^{\ell-2} \alpha^{-k} E_k )$ exists. Note that the expansion of eigenvalues for the matrices $M^{(\ell)}$ and $M^{(\ell-1)}$ share the same coefficients $\mu^{(s)}_k$ for $k \in \{ 1,\ldots, \ell-1\}$, and therefore $\mu^{(s)}_k = E_k $ for all such $k$. Consequently, $\mu^{(s)}(M^{(\ell)}) - \sum_{k=1}^{\ell-1} \alpha^{-k} E_k = \sum_{k=\ell}^\infty \alpha^{-k}\mu^{(s)}_k$, which concludes the argument. For completeness, let us mention that the definition of $E_\ell$ is independent of the choice of the basis $\{\Gamma^{(n_1)}, \ldots, \Gamma^{(n_\d)} \}$.
\end{note}

As a simple yet illustrative example, consider the case $\d = 2$, i.e., assume that $\eB E^{(n-1)} < \eB E^{(n)} = \eB E^{(n+1)}  < \eB E^{(n+2)}$ for some $n\in \mb N$. Theorem \ref{thm:recursive:formula:generalcase} states that $\alpha^2 \ms E^{(n+j)}(\alpha) - \alpha^2 e^\Pek = \eB E^\n + \alpha^{-1} \mu^{(j+1)}(M_1)+O(\alpha^{-2})$ for $j\in \{ 0,1 \} $ with $M_1 \in \mb C^{2 \times 2}$ defined by \eqref{eq:matrix:definitions}. Concretely, one finds $(M_1)_{11}=(M_1)_{22}=0$ and 
\begin{align}
(M_1)_{12} = \Ls  \psi^\P \otimes  \Gamma^{(n)}, P \phi(v_\cdot) R \phi(v_\cdot + \varphi^\P ) R   \phi(v_\cdot) P \psi^\P \otimes \Gamma^{(n+1)} \Rs 
\end{align}
and thus the eigenvalues of $M_1$ are $\pm |(M_1)_{12}|$. If $M_1 $ has two identical eigenvalues, the coefficient $E_1=0$ will be the same for $n$ and $n+1$ and thus the degeneracy remains preserved at this order. On the other hand, if $(M_1)_{12} \neq 0$, then $E_1 = - |(M_1)_{12}|$ for $n$ and $E_1 = |(M_1)_{12} | $ for $n+1$. In this case, the eigenvalues  $\ms E^{(n+j)}(\alpha)$, $j\in \{0,1\}$, split at this particular order. The extensions to higher orders and $\d \geq 3$, where partial lifting of the degeneracy becomes a possibility, are straightforward.

\subsection{Outline of the proof}
\label{sect:outline}

Our proof consists of two distinct parts. The key idea, which is made precise in the second part, is to consider approximate eigenstates of $\mathfrak H_\alpha$ with corresponding approximate eigenvalues $\ms E^\n_b (\alpha)= e^\Pek +  \sum_{\ell=0}^{b} \alpha^{-2-\ell} E_\ell$. We refer to $\Psi_b \in L^2(\Omega) \otimes \mathcal F$ as an approximate eigenstate of order $b$ and $\ms E_b^\n(\alpha) $ the corresponding approximate eigenvalue if they satisfy an inequality of the form
\begin{align}\label{eq:quasi:eigenstate}
\big\| \big( \mathfrak H_\alpha -\ms E_b^\n \big)  \Psi_b \big\| \le C(b) \alpha^{-b-3}
\end{align}
for some constant $C(b)>0$ and large $\alpha$. An immediate consequence of the existence of an approximate eigenstate is that the spectrum $\sigma (\mathfrak H_\alpha)$ contains eigenvalues $\ms E(\alpha)$ that satisfy $|\ms E(\alpha) - \ms E_b^\n(\alpha)| \le C(b) \alpha^{-b-3}$. In Sections \ref{sec:Asymptotic:expansion} and \ref{sec:improved:remainder:estimates} we will construct approximate eigenstates for any $b\in \mathbb N_0$. Section \ref{sec:Asymptotic:expansion} covers the general case, while Section \ref{sec:improved:remainder:estimates} focuses on the non-degenerate case with optimized remainder estimates. It is important to note, however, that the existence of approximate eigenstates does not exclude the existence of other eigenvalues of $\mf H_\alpha$. 

This brings us to the first part of our proof, where we establish an a priori localization of the low-energy eigenvalues in the vicinity of the eigenvalues $\eB E^\n$ of $\mb H_0$. For this purpose, we derive a two-term expansion of the form
\begin{align}\label{eq:second-order-expansion}
\ms E^\n(\alpha) = e^\Pek + \alpha^{-2} \eB E^\n + o(\alpha^{-2}).
\end{align} 
Based on this two-term localization, it follows that the approximate eigenvalues $\ms E^\n_b(\alpha)$ obtained from \eqref{eq:quasi:eigenstate} approximate indeed all eigenvalues of $\mathfrak{H}_\alpha$. Combining the two findings implies that every eigenvalue admits an asymptotic series. To demonstrate \eqref{eq:second-order-expansion}, we first establish suitable condensation properties for low-energy states of $\mathfrak{H}_\alpha$, following the strategy deployed in \cite{BS1,BS2}. Subsequently, we utilize these properties to compare $\mf {H}_\alpha$ with the Bogoliubov Hamiltonian $\mb {H}_0$ through a Feshbach--Schur decomposition. This approach can be regarded as a rigorous second-order perturbation argument. The details of this proof are presented in Section \ref{sec:two:term:expansion}. Under an additional regularity assumption on $\Omega$, a two-term expansion of the form \eqref{eq:second-order-expansion} for the ground state energy $\ms {E}^{(1)}(\alpha)$ was previously obtained by Frank and Seiringer \cite{FrankS21}. Our proof takes a distinct approach since the one employed in \cite{FrankS21} does not directly extend to excited eigenvalues and, in particular, does not provide a spectral gap above the ground state energy. 

\begin{note} Let us compare our work with \cite{BPS2021} about the asymptotic expansion for the low-energy eigenvalues of $N$ bosons in the mean-field limit in inverse powers of the coupling constant $\lambda_N = (N-1)^{-1}$. Although the models and the considered limits are obviously different,  the results may appear somewhat similar. Therefore, we believe it would be helpful to highlight some interesting differences that are not immediately evident. The eigenvalue expansion in \cite{BPS2021} requires a power series expansion of the Hamiltonian itself, more specifically, of the $N$-dependent excitation Hamiltonian $\mb H$, see \cite[Def. 2.3 and Eq. (1.8)]{BPS2021}. This type of expansion is not needed in our case since \eqref{eq:Froehlich:Hamiltonian} and \eqref{eq:Gross:transformed:Froehlich} have already the desired form of a polynomial in the coupling constant $\alpha^{-1}$. On the other hand, the interaction in the polaron model is more singular than the one for bosons in the mean-field limit, the implications of which are discussed at the beginning of Section \ref{sec:Asymptotic:expansion}. An important property inherent in the strong coupling limit of the polaron pertains to the adiabatic decoupling between the electron and the quantum field, i.e., the small prefactor of $\mathcal N$ in \eqref{eq:eigenvalues:H}. As explained earlier, this gives rise to a two-scale character of the expansion, along with the fact that the leading-order eigenvalue equation exhibits infinite degeneracy. For the Bose gas, the  leading-order problem is at most finitely degenerate. Perhaps the most interesting difference concerns the method of proof. As outlined above, our approach is based on the construction of approximate eigenstates. This differs from the approach used in \cite[cf. Sect. 3.2]{BPS2021}, which utilizes an expansion of the difference of two Riesz projections 
\begin{align}
\mb P^{(n)} - \mb P^{(n)}_0 = \frac{1}{2\pi i} \oint_{\gamma^{(n)}} \bigg[ \frac{1}{z - \mb H } -   \frac{1}{z - \mb H_0 } \bigg] \, \textnormal{d}z
\end{align}
where $\mb H_0$ is an explicitly $N$-independent Bogoliubov type operator and $\gamma^{(n)} \subset \mb C$ an $N$-independent contour that encompasses both, the $n$th eigenvalue of $\mb H$ and $\mb H_0$, while it excludes all other eigenvalues   (for simplicity, we assume that the $n$th eigenvalue of $\mb H_0$ is non-degnerate), see \cite[Def. 3.10]{BPS2021}. That such a contour exists follows from Bogoliubov theory for the Bose gas in the mean-field limit \cite{LNSS,Seiringer11,GrechS13}. In the next step, the authors expand the difference of the resolvents $ ( z - \mb H )^{-1} - ( z - \mb H_0 )^{-1}$ in a power series of $\lambda_N$ with operator-valued coefficients that are of order one when integrated along $\gamma^{(n)}$ and tested on low-energy eigenstates of $\mb H$. This leads to an expansion for $\mb P^{(n)}$ which is then used to derive a series expansion for the low-energy eigenvalues. We now explain why this approach does not work for the Fr\"ohlich Hamiltonian. The analogous starting point for the expansion of the  Riesz projection would be
\begin{align}\label{eq:Riesz:projection}
\mb P^{(n)} - \mb P^{(n)}_0 =  \frac{1}{2\pi i} \oint_{\gamma^{(n)}} \bigg[ \frac{1}{z -  \mathcal W_{\alpha \varphi^{\textrm P}}^{-1}  \alpha^2  ( \mathfrak H_\alpha - e^\Pek )  \mathcal W_{\alpha \varphi^{\textrm P}}  } -   \frac{1}{z - P\otimes \mb H_0 } \bigg] \, \textnormal{d}z
\end{align}
with $\mb H_0$ defined by \eqref{eq:Bog:Hamiltonian}. Here, $\gamma^{(n)}$ is an $\alpha$-independent contour that encompasses the $n$th eigenvalues of $\alpha^2 ( \mathfrak H_\alpha - e^\Pek )$ and $\mb H_0$ and excludes all other eigenvalues of these operators. If the considered eigenvalue of $\mb H_0$ is non-degenerate, the existence of such a contour is ensured by Proposition \ref{prop:two:term:expansion}. However, we do not have an expansion of the difference of the two resolvents in terms of operators that are small compared to one when integrated along $\gamma^{(n)}$ and tested on low-energy eigenstates. By closer inspection, Lemmas \ref{Lemma-Weak_BEC} and \ref{Lemma-Algebra} show that apriori, we only have $\alpha^2 (\mf H_\alpha - e^{\Pek} ) \cong P\otimes \mb H_0 + O(\alpha^{2-\eta})$ for some small $\eta$ and thus, the perturbation would be much larger than the distance of the eigenvalue of $\mb H_0$ to the contour $\gamma^{(n)}$. Consequently, a direct attempt to expand \eqref{eq:Riesz:projection} is deemed to fail. The method of approximate eigenstates, on the other hand, avoids this difficulty as the perturbation is much smaller when applied to the correct states, cf. \eqref{eq:heuristics:Bog}. Conversely, the method of approximate eigenstates can be directly applied to the Bose gas in the mean-field limit as well. Another crucial advantage of this method is that it requires only estimates for the approximate eigenstates, which are usually much easier to establish compared to estimates for the true eigenstates needed in the expansion of the Riesz projection. While our approach does not directly give an expansion of the spectral projections, let us stress that aposteriori a series expansion for the projections is almost trivial. In the non-degenerate case, for instance, this is an immediate consequence of Theorem \ref{thm:asymptotic:expansion:2} and Proposition \ref{prop:asym:expansion:HG}.  A final aspect worth noting is that we derive the asymptotic expansion for each low-energy eigenvalue, whereas \cite[Theorem 2]{BPS2021} considers the averaged sum of eigenvalues for those eigenvalues that become degenerate in the limit.
\end{note}

\subsection{Spectrum of the Bogoliubov Hamiltonian}

\label{sec:Bog:Hamiltonian}

To diagonalize the Hamiltonian \eqref{eq:Bog:Hamiltonian}, we introduce the operator $\mf h :L^2(\Omega) \to L^2(\Omega)$ with integral kernel
\begin{align}\label{eq:Hessian:kernel}
\mf h(x,y) = \delta(x-y) + 4 \Re \langle \psi^{\P},  v_{\cdot}(x)  R  v_{\cdot}(y)  \psi^\P \rangle_{L^2}.
\end{align}
This operator arises as the Hessian of the Pekar functional $ \mathcal E(\psi^{\textrm P}, \varphi ) = \inf \sigma ( -\Delta_\Omega - 2  \langle v_x , \varphi \rangle) + \| \varphi \|_2^2 $ for real-valued $\varphi \in L^2(\Omega )$, evaluated at its minimizer $\varphi^{\P}$. For a detailed discussion, we refer to \cite[Sec. 2.2]{FrankS21}. Anticipating that $\mf h$ is positive  and has a bounded inverse, we can define the operator
\begin{align}
\mf  B \, = \, \tfrac{1}{2} \big( \mf h^{-1/4} - \mf h^{1/4} \big)   . \label{eq: A and B on Pi0}
\end{align}
\begin{lem}\label{lem:Hessian:properties} There exist constants $\tau \in (0,1)$ and $C>0$ such that $\tau  \le \mf h \le 1$, $ {\mathrm{Tr}}_{L^2}(1- \mf h) \le C $ and $\mf B^2 \le C( 1- \mf h)$. Moreover, the discrete part of the spectrum satisfies $  \sigma_{\textnormal{dsic}}(\mf h)  = \sigma ( \mf h ) \setminus \{1\} $ and $| \sigma_{\textnormal{disc}} (\mf h) | = \infty$.
\end{lem}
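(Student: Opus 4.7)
My approach would be to express $\mf h=\mathds{1}+\mf k$ with $\mf k$ a compact, strictly negative operator, and derive all the stated properties from this representation. To identify $\mf k$, I would collapse $\int v_z(x) f(x)\,dx = ((-\Delta_\Omega)^{-1/2} f)(z)$ on both arguments of $R$ in the kernel \eqref{eq:Hessian:kernel} to obtain $\mf k = 4 B^* R B$, where $B:L^2(\Omega)\to L^2(\Omega)$ is defined by $Bf := \psi^\P \cdot ((-\Delta_\Omega)^{-1/2} f)$. Since $\psi^\P$ satisfies $H_0 \psi^\P = 0$ with bounded potential (via Lemma \ref{lem:X:phiY:bound}), elliptic regularity gives $\psi^\P\in L^\infty(\Omega)$, so $B$ is bounded; as the composition of multiplication by a bounded function with the compact operator $(-\Delta_\Omega)^{-1/2}$ on the bounded domain $\Omega$, $B$ is also compact. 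Since $R\le 0$, this yields $\mf k\le 0$ and $\mf h\le \mathds{1}$, while compactness of $\mf k$ gives $\sigma_{\text{ess}}(\mf h)=\{1\}$ and hence $\sigma_{\text{disc}}(\mf h) = \sigma(\mf h)\setminus\{1\}$.

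For the strict lower bound I would show $\ker \mf k = \{0\}$. Since $R$ is strictly negative on $\text{Ran}\,Q$, the identity $\langle f,\mf k f\rangle = 0$ forces $Bf\in\text{span}(\psi^\P)$; together with $\psi^\P>0$ on $\Omega$ this requires $(-\Delta_\Omega)^{-1/2}f$ to be a constant on $\Omega$. But $\text{Ran}(-\Delta_\Omega)^{-1/2}\subseteq H^1_0(\Omega)$ contains no nonzero constants, so $f=0$. Combined with compactness of $\mf k$ and $\sigma_{\text{ess}}(\mf h)=\{1\}$, this rules out spectrum of $\mf h$ at $0$, so $\tau := \inf\sigma(\mf h)\in(0,1)$. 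The same injectivity shows $\mf k$ has infinite rank (a finite-rank operator on an infinite-dimensional space has a nontrivial kernel); compactness and strict negativity then yield infinitely many negative eigenvalues accumulating at $0$, proving $|\sigma_{\text{disc}}(\mf h)|=\infty$.

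For the trace bound I would use Hölder's inequality in Schatten norms,
\begin{align*}
\text{Tr}(\mathds{1}-\mf h) \,=\, 4\,\text{Tr}\!\big((-R)\,BB^*\big)\,\le\, 4\,\|{-R}\|_{\text{HS}}\,\|BB^*\|_{\text{HS}},
\end{align*}
reducing matters to showing that both $-R$ and $BB^*$ are Hilbert--Schmidt in three dimensions. Since $H_0\psi^\P=0$ implies $(-R)H_0 = Q$, and since $V := -2(-\Delta_\Omega)^{-1}|\psi^\P|^2-\mu^\Pek$ is bounded (Lemma \ref{lem:X:phiY:bound}), one has $(-R)(-\Delta_\Omega) = Q - (-R)V$ with bounded right-hand side. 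Writing $-R = [(-R)(-\Delta_\Omega)](-\Delta_\Omega)^{-1}$ then yields $\|{-R}\|_{\text{HS}}^2 \le C\,\text{Tr}\!\big((-\Delta_\Omega)^{-2}\big)$, which is finite by Weyl's law $\lambda_n(-\Delta_\Omega)\sim n^{2/3}$, giving $\sum_n n^{-4/3}<\infty$. The analogous bound $\|BB^*\|_{\text{HS}} \le \|\psi^\P\|_\infty^2\,\text{Tr}\!\big((-\Delta_\Omega)^{-2}\big)^{1/2}$ for $BB^* = M_{\psi^\P}(-\Delta_\Omega)^{-1}M_{\psi^\P}$ is immediate. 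The main technical point here is precisely that in dimension three, neither $-R$ nor $BB^*$ is individually trace class; trace-class-ness of the product rests crucially on the Hilbert--Schmidt factorization.

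Finally, for $\mf B^2 \le C(\mathds{1}-\mf h)$ I would apply functional calculus to
\begin{align*}
\mf B^2 \,=\, \tfrac{1}{4}\big(\mf h^{-1/4}-\mf h^{1/4}\big)^2 \,=\, \tfrac{1}{4}\,\mf h^{-1/2}\big(\mathds{1}-\mf h^{1/2}\big)^2.
\end{align*}
On $\sigma(\mf h)\subseteq[\tau,1]$, the factorization $x^{-1/2}(1-x^{1/2})^2 = (1-x)\cdot(1-x^{1/2})/\!\big(x^{1/2}(1+x^{1/2})\big)$ combined with $(1-x^{1/2})/(1+x^{1/2})\le 1$ and $x^{-1/2}\le \tau^{-1/2}$ gives $x^{-1/2}(1-x^{1/2})^2 \le \tau^{-1/2}(1-x)$, and functional calculus delivers $\mf B^2 \le \tfrac{1}{4}\tau^{-1/2}(\mathds{1}-\mf h)$, completing the proof.
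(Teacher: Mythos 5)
Most of your proposal is sound, and where the paper simply cites \cite{FrankS21,MMS23} (for $\tau\le\mf h\le 1$ and the trace bound) you supply genuinely self-contained arguments: the factorization $\mathds 1-\mf h=4B^*(-R)B$ with $B=M_{\psi^\P}(-\Delta_\Omega)^{-1/2}$, the Hilbert--Schmidt argument for $\mathrm{Tr}(1-\mf h)<\infty$, the functional-calculus bound for $\mf B^2$ (the paper uses a slightly different elementary inequality to the same effect), and the compactness-plus-injectivity argument for $\sigma_{\textnormal{disc}}(\mf h)=\sigma(\mf h)\setminus\{1\}$ and $|\sigma_{\textnormal{disc}}(\mf h)|=\infty$ (the paper tests $1-\mf h$ on the Dirichlet eigenfunctions, which is the same idea). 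These parts I consider correct.

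The proof of the central lower bound $\mf h\ge\tau>0$, however, has a genuine gap. What you establish is $\ker\mf k=\{0\}$ for $\mf k:=4B^*RB$, i.e.\ that $1$ is not an eigenvalue of $\mf h=\mathds 1+\mf k$; this says nothing about the spectrum of $\mf h$ near $0$. A compact, self-adjoint, injective, strictly negative operator can perfectly well have eigenvalues at or below $-1$ (take $\mf k=-2|e_1\rangle\langle e_1|-\sum_{j\ge 2}j^{-1}|e_j\rangle\langle e_j|$), in which case $\inf\sigma(\mf h)\le -1$. The sentence ``this rules out spectrum of $\mf h$ at $0$'' is therefore a non sequitur: ruling out spectrum at and below $0$ requires a lower bound on $\mf k$, not injectivity. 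The missing ingredient is the positivity $\mf h\ge 0$, which does not follow from the algebraic form of $\mf k$ alone but from the variational structure: $\mf h$ is the Hessian at $\varphi^\P$ of $\varphi\mapsto\mathcal E(\psi^\P,\varphi)=\inf\sigma(-\Delta_\Omega-2\langle v_x,\varphi\rangle)+\|\varphi\|_2^2$ (second-order perturbation theory produces exactly the $4B^*RB$ term), and $\varphi^\P$ minimizes this functional; the quantitative gap $\tau>0$ moreover rests on Assumption \hyperlink{2}{2}. This is precisely the content of \cite[Sec.~3.1]{FrankS21} and \cite[Lem.~1.1 and 2.3]{MMS23}, which the paper's proof invokes and which your argument would need to reproduce or cite. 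Note also that even granting $\mf h\ge 0$, excluding the eigenvalue $0$ requires $\ker\mf h=\{0\}$, which is a different kernel from the one you compute.
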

\begin{proof} For the first two properties, see \cite[Sec. 3.1]{FrankS21} or alternatively \cite[Lem. 1.1 and 2.3]{MMS23}. The bound on $\mf B^2$ follows by using $\mf h  \le 1$ and the inequality $(1-x)^{-1/2} \le 1+\lambda^{-3/2} x$ for $x\in (0,1-\lambda)$. Since $1 -\mf h$ is trace-class, it is compact and thus $\sigma_{\textnormal{dsic}}(\mf h)  = \sigma ( \mf h ) \setminus \{1\}$. Using \eqref{eq:Hessian:kernel}, it is easy to see that $ \varphi \notin \text{Ker}(1-\mf h)$ if $(-\Delta_\Omega)^{-1/2}\varphi \neq \text{constant}$ in $L^2(\Omega)$. Choosing $\varphi = w_j$ with $(w_j)_{j\in \mathbb N}$ the set of eigenfunctions of $-\Delta_\Omega$, this proves the last statement of the lemma.
\end{proof} 
 
We enumerate the eigenvalues of $\mf h $ by $ 0< \tau_1  \le \tau_2 \le \ldots < 1$ and the corresponding normalized eigenfunctions by $\mathfrak u_n \in L^2(\Omega)$. Note that $\tau_j\to 1$ as $j\to \infty$. 

Since $\mf B$ is Hilbert--Schmidt, there exists a unitary map $\mathbb U$ on $\mathcal F$, called a Bogoliubov transformation, satisfying
\begin{align} \label{eq: def of U}
\mathbb U a^*(f) \mathbb U^* \,  & = \,  a^*(    ( 1 + \mf B^2 )^{1/2}  f ) + a (  \mf B \overline{ f} ) 
\end{align}
for all $f\in L^2(\Omega)$.
The operator $\mathbb H_0$ defined in \eqref{eq:Bog:Hamiltonian} is diagonalized by $\mathbb U $ in the sense of the following lemma, whose proof follows from a direct computation, see \cite[Lemma 2.5]{MMS23}. We note that by Lemma \ref{lem:Hessian:properties} the value of the trace is finite and negative. 

\begin{lem}\label{lem:diagonalization of HBog} We have
\begin{subequations}
\begin{align}\label{eq: diagonalization of HBog} 
\mathbb U  \mathbb H_0 \mathbb U^*  \, & = \, \textnormal{d} \Gamma ( \mf h ^{1/2} )  +   \tfrac{1}{2}\textnormal{Tr}_{ L^2  } \big(  \mf h^{1/2} - 1 \big)  \\
\label{rem: Bog ground state energy} 
\inf \sigma( \mathbb H_0) \, &  = \,   \tfrac{1}{2}\textnormal{Tr}_{ L^2  } \big(  \mf h^{1/2} - 1 \big) . 
\end{align}
\end{subequations}
\end{lem}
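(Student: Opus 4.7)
The plan is to establish the diagonalization identity by direct algebraic computation; the spectral formula for $\inf \sigma(\mathbb H_0)$ then follows immediately, since $\textnormal{d}\Gamma(\mathfrak h^{1/2}) \ge 0$ attains its minimum on the Fock vacuum and $\mathbb U$ is unitary.

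The first step is to rewrite $\mathbb H_0$ in canonical quadratic form. Using $\phi(v_x) = a^*(v_x) + a(v_x)$ and the canonical commutation $[a(v_x), a^*(v_y)] = (-\Delta_\Omega)^{-1}(x,y)$, I would expand the integrand, Wick-order the single $a(v_x) a^*(v_y)$ contribution, and contract against the $\psi^\P$-weighted kernel of $R$. Using the symmetry $R(x,y) = R(y,x)$ (self-adjointness and reality of $R$) and comparing with the definition \eqref{eq:Hessian:kernel} of $\mathfrak h$, this yields
\begin{align*}
\mathbb H_0 = \textnormal{d}\Gamma\bigl(\tfrac12(\mathfrak h + 1)\bigr) + \tfrac14 \iint (\mathfrak h - 1)(z,w)\bigl(a_z^* a_w^* + a_z a_w\bigr)\, dz\, dw + c_0,
\end{align*}
where $c_0 = \iint \psi^\P(x) R(x,y) \psi^\P(y)(-\Delta_\Omega)^{-1}(x,y)\, dx\, dy$ is the scalar produced by normal ordering. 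This is the canonical shape of a Bogoliubov-diagonalizable Hamiltonian.

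The second step is to conjugate by $\mathbb U$. A short calculation from the definition of $\mathfrak B$ shows $(1+\mathfrak B^2)^{1/2} \pm \mathfrak B = \mathfrak h^{\mp 1/4}$, where both expressions are well defined thanks to the strict positivity $\mathfrak h \ge \tau > 0$ from Lemma \ref{lem:Hessian:properties}. Substituting the action of $\mathbb U$ on $a^*$ and $a$ into each quadratic term of the canonical form, a direct computation verifies that the off-diagonal contributions $a^*a^* + aa$ cancel exactly, while the number-preserving part consolidates into $\textnormal{d}\Gamma(\mathfrak h^{1/2})$. The total scalar contribution—$c_0$ together with the normal-ordering constants generated by the conjugation—equals $\tfrac12\textnormal{Tr}_{L^2}(\mathfrak h^{1/2} - 1)$; the cleanest way to fix this value is via $\langle\Omega, \mathbb U \mathbb H_0 \mathbb U^*\Omega\rangle = \langle \mathbb U^*\Omega, \mathbb H_0 \mathbb U^*\Omega\rangle$, the right-hand side being the quasi-free expectation of $\mathbb H_0$ on the Bogoliubov-transformed vacuum $\mathbb U^*\Omega$, evaluable by Wick's theorem.

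The main technical subtlety, beyond the algebraic bookkeeping, is convergence of this trace. By Lemma \ref{lem:Hessian:properties} one has $\textnormal{Tr}_{L^2}(1-\mathfrak h) \le C$ and $\mathfrak h \le 1$; operator monotonicity of $x \mapsto \sqrt{x}$ on $[0,1]$ then gives $0 \le 1 - \mathfrak h^{1/2} \le 1 - \mathfrak h$, so $\textnormal{Tr}_{L^2}(\mathfrak h^{1/2} - 1)$ is absolutely convergent and non-positive, as asserted. Beyond this convergence point, the argument is purely algebraic and parallels the standard diagonalization of quasi-free quadratic Hamiltonians, as detailed in \cite[Lemma 2.5]{MMS23}; no fundamentally new ideas are needed.
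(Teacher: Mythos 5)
Your proposal is correct and follows essentially the same route as the paper, which simply refers to the direct computation in \cite[Lemma 2.5]{MMS23}: Wick-ordering $\langle \psi^\P,\phi(v_\cdot)R\phi(v_\cdot)\psi^\P\rangle$ into the canonical quadratic form with $A=\tfrac12(\mf h+1)$, $B=\tfrac12(\mf h-1)$ and constant $c_0=\tfrac14\mathrm{Tr}(\mf h-1)$, then conjugating with $\mathbb U$ using $(1+\mf B^2)^{1/2}\pm\mf B=\mf h^{\mp 1/4}$. All the specific identities you assert (cancellation of the off-diagonal part, consolidation of the diagonal part into $\textnormal{d}\Gamma(\mf h^{1/2})$, the value of the total scalar, and the trace-class property of $1-\mf h^{1/2}$) check out.
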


We are interested in the excitation spectrum of $\mb H_0$ below $\inf \sigma  ( \mb H_0 )+1$.  This part of the spectrum contains only discrete eigenvalues of finite multiplicity, as it corresponds to the free bosonic excitation spectrum with one-particle energies determined by $\mathfrak h^{1/2}$. Denoting the lowest eigenvalue by $\eB E^{(1)} = \inf \sigma (\mathbb H_0)$, the spectrum below $\eB E^{(1)} + 1$ can be enumerated as $
\eB E^{(1)}  < \eB E^{(2)} \le \eB E^{(3)} \le  \ldots <  \eB E^{(1)} + 1 $. More specifically, for any $n \in \mathbb N$ there exists some $k\in \mb N$ and a tuple $(\nu_1,\ldots,\nu_k)\in \mb N^k_0$ such that
\begin{align}
  \label{eq:Lambda:eigenvalue}
\eB E^\n = \eB E^{(1)} + \nu_1 \tau_1^{1/2} + \ldots + \nu_{k} \tau_k^{1/2}.
\end{align}
Since $\tau_1 > 0$, the value of $k$ is uniformly bounded in $n$. In fact, there exist $j,n_0\in \mb N_0$ so that for all $n\ge n_0$, the eigenvalues are of the form $\eB E^{(n)}= \eB E^{(1)} + \tau_{n+j}^{1/2}$, and thus $\eB E^{(n)} \to \eB  E^{(1)} +1 $ as $n\to \infty$. For later purposes, let us also introduce the corresponding normalized eigenfunctions of $\mb H_0$, given by
\begin{align}\label{eq:eigenfunctions:H0}
\Gamma^\n = \mathbb U^* \gamma^\n \quad \text{with}  \quad \gamma^\n = \frac{ a^*(\mathfrak u_{\nu_1})^{\nu_1} }{\sqrt{\nu_1 !} } \cdots \frac{a^*(\mathfrak u_{\nu_k})^{\nu_k}}{\sqrt{\nu_k !} } \Omega ,
\end{align}
where $\Omega$ is the vacuum state in $\mathcal F$.

\section{Two-term expansion of low-energy eigenvalues}\label{sec:two:term:expansion}

The following statement provides a two-term localization of the low-energy spectrum of $\mf H_\alpha$ around the eigenvalues of $\mb H_0$.

\begin{prop} \label{prop:two:term:expansion}For $n\in \mathbb N$ let $\ms E^\n(\alpha)$ and $\eB E^\n$ be the $nth$ eigenvalues of $\mf H_\alpha$ and $\mb H_0$. There exist constants $C,\delta,\alpha_0>0$ such that
\begin{align}\label{eq:2term:bound:minmax}
\Big| \ms E^\n(\alpha) - e^\Pek - \alpha^{-2} \eB E^\n \Big| \le C \alpha^{-2-\delta} 
\end{align}
for all $\alpha \ge \alpha_0$.
\end{prop}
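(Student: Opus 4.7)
The strategy is to combine the min-max principle with a rigorous second-order perturbative analysis, obtaining the upper bound via Rayleigh-Ritz and the lower bound via a Feshbach-Schur decomposition backed by a priori condensation estimates. Working in the Weyl-transformed frame, set $\widetilde{\mathfrak H}_\alpha := \mathcal W_{\alpha\varphi^{\textrm P}}^{-1}(\mathfrak H_\alpha - e^{\Pek})\mathcal W_{\alpha\varphi^{\textrm P}} = H_0 + \alpha^{-1}\phi(v_x+\varphi^{\textrm P}) + \alpha^{-2}\mathcal N$. Since $\mathcal W_{\alpha\varphi^{\textrm P}}$ is unitary, the claim reduces to locating the $n$th min-max value of $\widetilde{\mathfrak H}_\alpha$ within $O(\alpha^{-2-\delta})$ of $\alpha^{-2}\mathsf E^\n$.

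\textbf{Upper bound.} For $k=1,\dots,n$, take the trial vectors $\Phi_k := \mathcal W_{\alpha\varphi^{\textrm P}}(\mathds{1} + \alpha^{-1}R\,\phi(v_x+\varphi^{\textrm P}))(\psi^{\textrm P}\otimes \Gamma^{(k)})$. Since $R$ maps into $\mathrm{Ran}\,Q$, the correction is orthogonal to $\mathrm{Ran}\,P\otimes\mathcal F$, yielding linear independence for large $\alpha$. A direct computation following the heuristic in \eqref{eq:heuristics:Bog} shows that the $n\times n$ Gram matrix is $\delta_{jk}+O(\alpha^{-2})$ while the energy matrix $[\langle\Phi_j,\mathfrak H_\alpha\Phi_k\rangle]$ equals $\mathrm{diag}(e^{\Pek}+\alpha^{-2}\mathsf E^{(k)})+O(\alpha^{-2-\delta})$. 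The min-max principle then delivers $\ms E^\n(\alpha)\le e^{\Pek}+\alpha^{-2}\mathsf E^\n+C\alpha^{-2-\delta}$.

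\textbf{Lower bound via Feshbach-Schur.} With respect to the decomposition $(P\otimes\mathds{1},Q\otimes\mathds{1})$ write $\widetilde{\mathfrak H}_\alpha = \bigl( \begin{smallmatrix} A & B^* \\ B & D \end{smallmatrix}\bigr)$. Using $H_0\psi^{\textrm P}=0$ and $\langle\psi^{\textrm P},v_\cdot\psi^{\textrm P}\rangle+\varphi^{\textrm P}=0$ one finds $A = \alpha^{-2}\mathcal N$ on $\mathrm{Ran}\,P\otimes\mathcal F$, $B = \alpha^{-1}Q\phi(v_x+\varphi^{\textrm P})P$, and $D = QH_0Q + \alpha^{-1}Q\phi(v_x+\varphi^{\textrm P})Q + \alpha^{-2}Q\otimes\mathcal N$. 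Assumption 2 yields the spectral gap $QH_0Q \ge \tau Q$, so after absorbing the $\alpha^{-1}$ term one has $D-E\ge\tau/2$ on $\mathrm{Ran}(Q\otimes\mathds{1})$ for any $E=O(\alpha^{-2})$ and large $\alpha$. Using $(QH_0Q)^{-1}|_{\mathrm{Ran}\,Q}=-R$, the Schur complement expands as
\begin{equation*}
A - B^*(D-E)^{-1}B \;=\; \alpha^{-2}(P\otimes\mathbb H_0) + \mathrm{error},
\end{equation*}
where the cross terms involving $\phi(\varphi^{\textrm P})$ drop out thanks to $PR=RP=0$, recovering precisely the Bogoliubov Hamiltonian of \eqref{eq:Bog:Hamiltonian}. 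The Feshbach-Schur principle matches eigenvalues of $\widetilde{\mathfrak H}_\alpha$ below $\tau/4$ with those of this effective operator, provided the error can be controlled.

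\textbf{Condensation and the main obstacle.} The decisive and most delicate step is estimating the error in the Schur complement on states of energy $\le C\alpha^{-2}$. The cross term $B$ involves the ultraviolet-singular field operators $a^{\#}(v_x)$, which are unbounded even on states of bounded particle number, and a naive $\alpha^{-1}$ size for $B$ only yields $\widetilde{\mathfrak H}_\alpha \cong P\otimes\alpha^{-2}\mathbb H_0 + O(\alpha^{-2-\eta})$ after establishing that low-energy states concentrate near $\psi^{\textrm P}\otimes\mathcal F$. Following \cite{BS1,BS2}, the plan is to prove a priori condensation estimates: any normalized $\Psi$ with $\langle\Psi,\widetilde{\mathfrak H}_\alpha\Psi\rangle \le C\alpha^{-2}$ satisfies $\|(Q\otimes\mathds{1})\Psi\|\le C\alpha^{-\eta}$ and $\langle\Psi,(\mathds{1}\otimes\mathcal N)\Psi\rangle\le C$ for some $\eta>0$. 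Electron condensation follows by applying the quadratic stability of Assumption 2 to the electronic reduced density of $\Psi$, using positivity of the field energy. The phonon number bound comes from commutator identities with $\mathcal N$ and a Gross-type renormalization regularizing the singular interaction. Inserting these bounds into the Schur complement analysis yields $\widetilde{\mathfrak H}_\alpha \ge \alpha^{-2}(P\otimes\mathbb H_0) - C\alpha^{-2-\delta}$ on a subspace of codimension at most $n-1$, and min-max then closes the lower bound. The bookkeeping required to propagate the singular-interaction estimates through every order of the Feshbach-Schur expansion is expected to be the main technical hurdle.
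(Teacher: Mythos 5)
Your overall architecture coincides with the paper's: upper bound by first-order-corrected trial states plus min--max (your computation matches Section \ref{sect:upper:bounds}, and your sign $+\alpha^{-1}R\phi$ is the correct one for second-order perturbation theory), and lower bound by a Feshbach--Schur reduction onto $\mathrm{Ran}P\otimes\mathcal F$ supported by a priori condensation. The paper uses the quadratic-form inequality $\mathfrak H_\alpha-e^{\Pek}\ge\mathfrak F_P$ (Lemma \ref{Lemma-Algebra}) rather than the exact energy-dependent eigenvalue correspondence you invoke; this is a minor difference that only simplifies the bookkeeping.

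There is, however, a genuine gap in your lower bound: the a priori estimate $\langle\Psi,\mathcal N\Psi\rangle\le C$ (excitation number of order one relative to the coherent state) for every normalized $\Psi$ with $\langle\Psi,\widetilde{\mathfrak H}_\alpha\Psi\rangle\le C\alpha^{-2}$ is not obtainable by the methods you cite, and the rest of your plan leans on it. The coherent-state/Lieb--Thomas argument behind Lemma \ref{Lemma-Weak_BEC} only yields $\widetilde{\mathfrak H}_\alpha\gtrsim \tau Q+\tau'\alpha^{-2}\mathcal N-C\alpha^{-2/7}\ln\alpha$ (the $\alpha^{-2/7}$ coming from balancing the number of modes $N\sim\Lambda^3$ against the ultraviolet error $\sqrt{\Lambda^{-1}\ln\Lambda}$ at $\Lambda=\alpha^{4/7}$), hence only $\langle\mathcal W_{\alpha\varphi^{\P}}^{-1}\mathcal N\mathcal W_{\alpha\varphi^{\P}}\rangle\lesssim\alpha^{2-2/7}\ln\alpha$; an $O(1)$ bound would require already knowing the energy to precision $o(\alpha^{-2})$, which is circular. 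The missing ingredient is the IMS localization in the excitation number: the smooth cutoffs $\pi,\pi_\perp$ of \eqref{eq:def:pi} restrict to $\mathcal N\le\alpha^{2-\kappa}$ at the cost of a localization error $O(\alpha^{-(2+\frac27-\kappa)}\ln\alpha)$ (Lemmas \ref{Lemma-Strong_BEC} and \ref{Lemma-Min_Max}), and it is only on $\mathrm{Ran}\,\pi$ that the Schur-complement error, which is of the form $\alpha^{-4+\delta}(\mathcal N+1)^3+\alpha^{-\delta}(\mathcal N+1)$ after the Lieb--Yamazaki commutator bounds (Lemma \ref{Lemma-Bogoliubov}), becomes $o(\alpha^{-2})$ relative to $\alpha^{-2}\mathbb H_0$. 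Relatedly, your one-line derivation of electron condensation from Assumption \hyperlink{2}{2} ``using positivity of the field energy'' skips the passage from the quantum field to a classical one: discarding $\alpha^{-2}\mathcal N$ leaves an interaction unbounded from below, so one must instead write the cutoff Hamiltonian as an integral over coherent-state projections, which is precisely where the mode-counting term $N/\alpha^2$ and the resulting exponent $2/7$ enter. Without these two devices the ``main technical hurdle'' you correctly identify cannot be closed as described.
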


To prove the proposition, we derive upper and lower bounds for the min-max values of the Fr\"ohlich Hamiltonian, 
\begin{align}\label{def:minmax:values}
\ms E^\n( \alpha ) := \inf_{ \mathcal  V \subset L^2(\Omega) \otimes \mathcal F} \, \sup \big\{ \langle \Psi ,  \mf H_\alpha \Psi \rangle  \, | \, \Psi \in \mathcal V ,\ \| \Psi  \| = 1 \big\}, \quad n \in \mathbb N,
\end{align}
where the infimum is taken over all subspaces $\mathcal V \subset   Q(\mf H_\alpha)$ with $\text{dim} \mathcal  V = n $. Here, $ Q(\mf H_\alpha) = Q(-\Delta_\Omega +\mathcal N)$ is the form domain of $ \mf H_\alpha$ \cite{FrankS21,GriesemerW18}. 


For convenience of the reader, we recall the definition of the unitary Weyl operator $ \mathcal W_\varphi  := e^{a(\varphi) - a^*(\varphi)}$ for $\varphi \in L^2(\Omega)$ and the shift relations $ \mathcal W_{\varphi}  a(f)  \mathcal W_{\varphi}^{-1} = a(f) + \langle f,\varphi \rangle$ and
\begin{align}
\mathcal W^{-1}_\varphi \mathcal N 
\mathcal W_\varphi & = \mathcal N + \phi(\varphi) + \| \varphi \|_2^2\\[1mm]
 \mathcal W^{-1}_\varphi \phi(f) \mathcal W_{\varphi} & = \phi(f) + 2\Re \langle f ,\varphi \rangle
\end{align}
where $\phi(f) = a^*(f) + a(f)$. We denote the  coherent state associated with $\varphi \in L^2(\Omega)$ by $\Omega_\varphi  \in \mathcal F$, that is, $\Omega_\varphi = \mathcal W^{-1}_\varphi \Omega$ where $\Omega$ denotes the vacuum state in $\mathcal F$.

\subsection{Preliminary estimates}

In this section, we state and prove some preliminary estimates. The letter $C$ is used to denote generic constants, whose value may change from one line to the next.

\begin{lem}
\label{Lemma-Regularity_w}
Let $ g_x(y) = (-\Delta_\Omega)^{-3/2}(x,y)$ and $g^\Lambda_x = (\Pi_\Lambda  -1)g_x$ with the orthogonal projection $\Pi_\Lambda=\mathds{1}(-\Delta_\Omega \leq \Lambda^2)$ and recall that $v_x(y) = (-\Delta_\Omega)^{-1/2}(x,y)$. There exists a constant $C>0$ such that for all $x\in \Omega$ and $\Lambda\ge 0$,
\begin{align}
\| g^\Lambda_x\|_2  & \le C (1+\Lambda)^{-1}\\[1mm]
 \|(-\Delta_\Omega)^{1/2} g_x^\Lambda \|_2 & \leq C \sqrt{ (1+\Lambda)^{-1} \ln  (2 + \Lambda ) } \\[1mm]
 \| \Pi_\Lambda v_x \|_2 & \le C \Lambda^{1/2}.
\end{align}
\end{lem}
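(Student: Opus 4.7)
The plan is to reduce each of the three estimates to a spectral sum over the Dirichlet eigenvalues of $-\Delta_\Omega$ and then exploit a pointwise local Weyl bound derived from the heat kernel. Let $\{w_j\}_{j\ge 1}$ denote an $L^2$-orthonormal eigenbasis of $-\Delta_\Omega$ with eigenvalues $0<\lambda_1\le \lambda_2 \le \ldots$; then $v_x = \sum_j \lambda_j^{-1/2} w_j(x) w_j$ and $g_x = \sum_j \lambda_j^{-3/2} w_j(x) w_j$, so that the three norms to be estimated become weighted sums of $|w_j(x)|^2$ over $\lambda_j$ above or below $\Lambda^2$.

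The central ingredient I would establish first is the pointwise local Weyl bound
\[
F_x(\lambda) \,:=\, \sum_{\lambda_j \le \lambda} |w_j(x)|^2 \,\le\, C\, \lambda^{3/2}, \qquad x\in \Omega,\ \lambda\ge 0.
\]
To prove it, I would use the classical domination of the Dirichlet heat kernel by the free one, which gives $e^{t\Delta_\Omega}(x,x) \le (4\pi t)^{-3/2}$ for all $t>0$ (this domination holds for any bounded open $\Omega$, e.g.\ via Feynman--Kac, and requires no regularity of $\partial\Omega$); then applying $\mathds{1}(\lambda_j \le \lambda) \le e^{t(\lambda-\lambda_j)}$ and optimizing in $t$ at $t = 3/(2\lambda)$ yields the claim.

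Bound (i) I would handle operator-theoretically, bypassing the Weyl input. Writing $g_x = (-\Delta_\Omega)^{-1/2} G(x,\cdot)$ with $G(x,\cdot) := (-\Delta_\Omega)^{-1}(x,\cdot) \in L^2(\Omega)$ (the Dirichlet Green's function, which satisfies $\|G(x,\cdot)\|_2 \le C$ uniformly in $x$ thanks to the pointwise comparison $|G(x,y)| \le C|x-y|^{-1}$ with the free Green's function), one has
\[
\|g_x^\Lambda\|_2^2 \,=\, \langle G(x,\cdot), (-\Delta_\Omega)^{-1} (1-\Pi_\Lambda) G(x,\cdot)\rangle \,\le\, \Lambda^{-2}\|G(x,\cdot)\|_2^2 \,\le\, C\,\Lambda^{-2},
\]
where the middle step uses the operator inequality $(-\Delta_\Omega)^{-1}(1-\Pi_\Lambda) \le \Lambda^{-2}$. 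Together with the trivial estimate $\|g_x^\Lambda\|_2 \le \|g_x\|_2 \le C$ for $\Lambda$ bounded, this gives (i).

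For (ii) and (iii) I would express the relevant $L^2$-norms as Stieltjes integrals against $F_x$ and integrate by parts using the Weyl bound. For (ii),
\[
\|(-\Delta_\Omega)^{1/2} g_x^\Lambda\|_2^2 \,=\, \int_{\Lambda^2}^\infty \lambda^{-2}\, dF_x(\lambda) \,\le\, 2\int_{\Lambda^2}^\infty \lambda^{-3} F_x(\lambda)\, d\lambda \,\le\, C\, \Lambda^{-1},
\]
which is in fact stronger than, and hence implies, the claimed bound with the logarithmic factor. For (iii),
\[
\|\Pi_\Lambda v_x\|_2^2 \,=\, \int_0^{\Lambda^2} \lambda^{-1}\, dF_x(\lambda) \,\le\, \Lambda^{-2} F_x(\Lambda^2) + \int_{\lambda_1}^{\Lambda^2} \lambda^{-2} F_x(\lambda)\, d\lambda \,\le\, C\, \Lambda,
\]
using that $F_x$ vanishes on $[0,\lambda_1)$. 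The main technical obstacle is the pointwise Weyl bound itself; once it is in hand, both remaining estimates reduce to an elementary integration by parts, and the first bound follows already from a soft operator inequality.
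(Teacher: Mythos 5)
Your proposal is correct, but it is organized differently from the paper's argument. The paper's proof rests on the kernel domination $f(-\Delta_\Omega)(x,y)\le f(-\Delta_{\mb R^3})(x,y)$ for completely monotone $f$ (quoted from Frank--Seiringer), applied directly to diagonal values such as $(1-\Delta_\Omega)^{-2}(x,x)$ after inserting the crude spectral bound $(1+\Lambda^2)\|g_x^\Lambda\|_2^2\le\|(1-\Delta_\Omega)^{1/2}g_x\|_2^2$; the remaining two estimates are then declared to follow ``in analogy.'' You instead isolate a single pointwise local Weyl bound $F_x(\lambda)\le C\lambda^{3/2}$ from heat-kernel domination and reduce everything to Abel summation. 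Since $e^{-ts}$ is the generating completely monotone function, your key input is really the same domination principle in a repackaged form, but the subsequent bookkeeping is cleaner and more systematic: all three estimates become one-line integrations by parts against $F_x$, and your operator-theoretic treatment of (i) via $(-\Delta_\Omega)^{-1}(1-\Pi_\Lambda)\le\Lambda^{-2}$ is softer than the paper's. Two remarks. First, your bound for (ii) is genuinely stronger than the stated one: $\sum_{\lambda_j>\Lambda^2}\lambda_j^{-2}|w_j(x)|^2\le C\Lambda^{-1}$ holds without the logarithm (one can also see this directly from $\lambda^{-2}\le 4(\Lambda^2+\lambda)^{-2}$ and $(\Lambda^2-\Delta_\Omega)^{-2}(x,x)\le C\Lambda^{-1}$), and since $\ln(2+\Lambda)\ge\ln 2$ this implies the lemma; the paper's logarithm is an artifact of its rougher splitting. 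Second, make the small-$\Lambda$ regime explicit also for (ii) and (iii): the bounds $C\Lambda^{-1}$ and $C\Lambda$ produced by the integration by parts must be reconciled with the stated $(1+\Lambda)$-dependence near $\Lambda=0$, which works because $F_x$ vanishes on $[0,\lambda_1)$ with $\lambda_1>0$ (spectral gap), so that, e.g., $\int_{(0,\infty)}\lambda^{-2}\,dF_x(\lambda)\le 2\int_{\lambda_1}^\infty\lambda^{-3}F_x(\lambda)\,d\lambda\le C\lambda_1^{-1/2}$ uniformly; you invoke this only for (iii), but the same remark is needed for (i) and (ii), exactly as you note with the ``trivial estimate'' for bounded $\Lambda$ in (i).
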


\begin{proof} We recall that a function $f$ on $[0,\infty)$ is called completely monotone if it is continuous, infinitely differentiable on $(0,\infty)$ and satisfies $(-1)^n \tfrac{d^n}{dt^n}f(t) \ge 0$ for all $n \in \mb N_0$ and $t\in (0,\infty)$. As explained in \cite[Appendix~C.1]{FrankS21}, such a function $f$ satisfies
\begin{align}\label{eq:totally:monotone:bound}
f(-\Delta_\Omega)(x,y) \le  
f(-\Delta_{\mb R^3}) (x,y) = \int_{\mb R^3} f(k^2) e^{ik(x-y)} \frac{dk}{(2\pi)^3}
\end{align}
for all $x,y\in \Omega$. We shall use this bound for the function $t\mapsto (1+t)^{-\delta}$ for $\delta>0$.

According to the definition of  $\Pi_\Lambda$ and $g_x = (-\Delta_\Omega)^{-3/2} \delta_x$, we can bound
\begin{align}
(1+\Lambda^2) \| g_x^\Lambda \|^2_2 \leq  \|  (1-\Delta_\Omega )^{1 / 2} \, g_x \|^2_2 
& \le  C (1-\Delta_\Omega)^{-2}(x,x) 
\end{align}
for suitable constant $C>0$, where we used that $-\Delta_\Omega$ has a spectral gap above zero. Applying \eqref{eq:totally:monotone:bound} we can estimate
\begin{align}
(1-\Delta_\Omega )^{-2}(x,x) &  \leq C (2\pi)^{-3}\int_{\mathbb{R}^3} (1+|k|^2)^{-2} \mathrm{d}k < \infty.
\end{align}
This proves the first bound and the other estimates are obtained in analogy.
\end{proof}

In the next lemma, we bound the Fr\"ohlich Hamiltonian in terms of the Fr\"ohlich Hamiltonian with cutoff  $\Lambda>0$ in the interaction, given by
\begin{align}\label{eq:Fr:Ham:cutoff}
\mathfrak H_\alpha^\Lambda = - \Delta_\Omega + \alpha^{-1} \phi (\Pi_\Lambda v_x) + \alpha^{-2} \mathcal{N}.
\end{align}
Also recall $\varphi^\P(y) = - \langle \psi^\P, v_\cdot(y) \psi^\P\rangle$.
\begin{lem}
\label{Lemma-Relative_Bound}
There exists a constant $C$ such that for all $\Lambda \ge 2$
\begin{align}
\mathfrak H_\alpha & \geq \mathfrak H_\alpha^\Lambda- C \sqrt{\frac{\ln \Lambda }{\Lambda}} (\mathfrak H_\alpha +1 ), \label{Lemma-Relative_Bound_1} \\[1mm]
-\Delta_\Omega + \alpha^{-2}\mathcal{N} & \leq C ( \mathfrak H_\alpha + 1). \label{Lemma-Relative_Bound_2}
\end{align}
Furthermore, we have $\|(1-\Pi_\Lambda )\varphi^\mathrm{P}\|^2_2  \leq  C\sqrt{ \Lambda^{-1} \ln \Lambda  }$.
\end{lem}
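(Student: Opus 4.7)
The three bounds are proved in order (iii), (ii), (i), each building on the scalar estimates of Lemma~\ref{Lemma-Regularity_w}.

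Statement (iii) is immediate: since $\varphi^\P(y)=-\int|\psi^\P(x)|^2 v_x(y)\,\d x=-\bigl((-\Delta_\Omega)^{-1/2}|\psi^\P|^2\bigr)(y)$, we get
\[
\|(\mathds{1}-\Pi_\Lambda)\varphi^\P\|_2^2 \;=\; \bigl\langle |\psi^\P|^2,\,(\mathds{1}-\Pi_\Lambda)(-\Delta_\Omega)^{-1}(\mathds{1}-\Pi_\Lambda)|\psi^\P|^2\bigr\rangle \;\le\; \Lambda^{-2}\,\||\psi^\P|^2\|_2^2,
\]
and $\psi^\P\in L^\infty(\Omega)$ by Lemma~\ref{lem:X:phiY:bound} shows that the right-hand side is $O(\Lambda^{-2})$, dominating the stated bound.

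For (i) and (ii), the naive Cauchy--Schwarz $\pm\phi(f)\le\delta\mathcal N+\delta^{-1}\|f\|_2^2$ fails because $\|v_x\|_2=\infty$. The strategy is to trade $v_x^\Lambda$ for a smoother form factor via the kernel identity $v_x^\Lambda = -\Delta_x\,g_x^\Lambda$, which holds by symmetry of $(-\Delta_\Omega)^{-3/2}(x,y)$ and commutativity with $-\Delta_\Omega$. This yields the Lieb--Yamazaki type representation
\[
\phi(v_x^\Lambda)\;=\;-\sum_{i=1}^{3}\bigl[\partial_{x_i},\,\phi(\partial_{x_i}g_x^\Lambda)\bigr]
\]
(and the analogue for $\phi(v_x)$). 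Pairing with a state $\Psi$ in the form domain of $\mathfrak H_\alpha$ and using $\partial_{x_i}^*=-\partial_{x_i}$ on $H^1_0(\Omega)$ gives $\langle\Psi,\phi(v_x^\Lambda)\Psi\rangle = 2\,\Re\sum_i \langle \partial_{x_i}\Psi,\phi(\partial_{x_i}g_x^\Lambda)\Psi\rangle$, whence Cauchy--Schwarz together with the standard bound $\|\phi(f)\Xi\|_{\mathcal F}\le 2\|f\|_2\|(\mathcal N+1)^{1/2}\Xi\|_{\mathcal F}$ (applied fibrewise in $x$) produces
\[
|\langle\Psi,\phi(v_x^\Lambda)\Psi\rangle|\;\le\;\delta\|\nabla_x\Psi\|^2 + 4\delta^{-1}\,\Bigl(\sup_{x\in\Omega}\sum_i \|\partial_{x_i}g_x^\Lambda\|_{L^2_y}^2\Bigr)\,\langle\Psi,(\mathcal N+1)\Psi\rangle
\]
for any $\delta>0$. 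The key scalar input is $\sup_x\sum_i\|\partial_{x_i}g_x^\Lambda\|^2_{L^2_y}\le C\ln\Lambda/\Lambda$, which follows from Lemma~\ref{Lemma-Regularity_w}: spectrally expanding $g_x^\Lambda$ in the Dirichlet eigenbasis of $-\Delta_\Omega$ and integrating the $|\nabla\varphi_j(x)|^2$-weights by parts against $\varphi_j$ (the boundary terms vanishing by the Dirichlet condition) dominates this quantity by $\|(-\Delta_\Omega)^{1/2}g_x^\Lambda\|_2^2$ up to a bounded factor. Optimising $\delta=\alpha\sqrt{\ln\Lambda/\Lambda}$ then yields
\[
\pm\,\alpha^{-1}\phi(v_x^\Lambda)\;\le\;C\sqrt{\tfrac{\ln\Lambda}{\Lambda}}\bigl(-\Delta_\Omega+\alpha^{-2}\mathcal N+1\bigr).
\]

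The same argument applied to $v_x$ (where $\sup_x\sum_i\|\partial_{x_i}g_x\|_{L^2_y}^2$ is uniformly finite on the bounded domain $\Omega$) gives $\pm\alpha^{-1}\phi(v_x)\le \tfrac12(-\Delta_\Omega+\alpha^{-2}\mathcal N)+C$, which after rearrangement is exactly (ii). Inserting (ii) into the previous display replaces $-\Delta_\Omega+\alpha^{-2}\mathcal N+1$ by a multiple of $\mathfrak H_\alpha+1$, yielding (i). The main technical hurdle sits in the pointwise-in-$x$ scalar bound above, which is non-trivial since naive Sobolev estimates on $\|\nabla\varphi_j\|_\infty$ in three dimensions lead to divergent eigenvalue sums; the Pohozaev-type integration by parts using the Dirichlet vanishing of $\varphi_j$ is what rescues the argument.
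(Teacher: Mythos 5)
Your argument for \eqref{Lemma-Relative_Bound_1} takes the same route as the paper (Lieb--Yamazaki commutator representation plus weighted Cauchy--Schwarz), while your argument for the third bound is genuinely different and in fact stronger: writing $\varphi^\P=-(-\Delta_\Omega)^{-1/2}|\psi^\P|^2$ and using $(-\Delta_\Omega)^{-1}\le \Lambda^{-2}$ on $\mathrm{Ran}(1-\Pi_\Lambda)$ gives $O(\Lambda^{-2})$ directly (here $|\psi^\P|^2\in L^2(\Omega)$ by Sobolev; Lemma \ref{lem:X:phiY:bound} does not assert $\psi^\P\in L^\infty$, so the citation is off even though the fact you need is true). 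The paper instead obtains only the stated $O(\sqrt{\Lambda^{-1}\ln\Lambda})$ by a variational argument, comparing $e^\Pek=\mathcal E(\psi^\P,\varphi^\P)$ with $\mathcal E(\psi^\P,\Pi_\Lambda\varphi^\P)+\|(1-\Pi_\Lambda)\varphi^\P\|_2^2$ via \eqref{Lemma-Relative_Bound_1}; your direct spectral computation is cleaner.

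The genuine gap is in your proof of \eqref{Lemma-Relative_Bound_2}. Applying Lieb--Yamazaki to the full interaction gives, for every $\delta>0$, $\pm\alpha^{-1}\phi(v_x)\le \delta(-\Delta_\Omega)+4\delta^{-1}M\,\alpha^{-2}(\mathcal N+1)$ with $M=\sup_x\sum_i\|\partial_{x_i}g_x\|_2^2$. To deduce \eqref{Lemma-Relative_Bound_2} you must keep \emph{both} coefficients strictly below $1$, i.e.\ $4M<\delta<1$, which is possible only if $M<1/4$ --- and there is no reason for that. So the claim that ``the same argument applied to $v_x$ gives $\pm\alpha^{-1}\phi(v_x)\le\tfrac12(-\Delta_\Omega+\alpha^{-2}\mathcal N)+C$'' does not follow: the resulting coefficient of $\alpha^{-2}\mathcal N$ in the lower bound for $\mf H_\alpha$ may be negative, and then nothing controls $\mathcal N$. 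The paper's fix is a frequency splitting: fix $\Lambda_0$ so that the prefactor $C\sqrt{\ln\Lambda_0/\Lambda_0}$ of the high-frequency error is at most $\tfrac13$, and bound the low-frequency part by the plain Cauchy--Schwarz estimate $\alpha^{-1}\phi(\Pi_{\Lambda_0}v_x)\ge -\tfrac{1}{3\alpha^2}\mathcal N-3\sup_x\|\Pi_{\Lambda_0}v_x\|_2^2$, available because $\Pi_{\Lambda_0}v_x$ is square integrable; combining the two leaves positive coefficients in front of both $-\Delta_\Omega$ and $\alpha^{-2}\mathcal N$. A second, smaller issue: your justification of the scalar input $\sup_x\sum_i\|\partial_{x_i}g_x^\Lambda\|_{L^2_y}^2\le C\Lambda^{-1}\ln\Lambda$ by ``integrating the $|\nabla w_j(x)|^2$ weights by parts'' is not a valid pointwise argument --- integration by parts only yields the integrated identity $\int|\nabla w_j|^2=\lambda_j\int|w_j|^2$, not a pointwise domination of $\sum_j m_j^2|\nabla w_j(x)|^2$ by $\sum_j m_j^2\lambda_j|w_j(x)|^2$. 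The pointwise-in-$x$ bound comes from the completely monotone kernel comparison $f(-\Delta_\Omega)(x,y)\le f(-\Delta_{\mb R^3})(x,y)$ underlying Lemma \ref{Lemma-Regularity_w}, whose second estimate is precisely the quantity you need and which you are entitled to quote.
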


\begin{proof} Writing $(1-\Pi_\Lambda) v_x= - ( \nabla_\Omega \cdot \nabla_\Omega ) g^\Lambda_x = - \sum_{i=1}^3 [\nabla_\Omega^{(i)} ,  \nabla_\Omega^{(i)} g_x^\Lambda]$, with $g^\Lambda_x$ defined in Lemma \ref{Lemma-Regularity_w}, we obtain using the Cauchy--Schwarz inequality
\begin{align}
\label{eq:Relative_Bound_Tool}
\mf H_\alpha &= \mf H_\alpha^\Lambda - \frac{1}{\alpha}\left[\nabla_{\Omega} , \phi ( \nabla_\Omega g_x^\Lambda )\right]  \geq \mf H_\alpha^\Lambda-2\| (-\Delta_\Omega)^{1/2} g^\Lambda_x\|_2 \left(-\Delta_\Omega +\frac{1}{\alpha^2}\mathcal{N}+\frac{1}{2}\right) \notag\\
& \geq \mf H_\alpha^\Lambda - 2C \sqrt{ \frac{\ln \Lambda  }{\Lambda }} \left(-\Delta_\Omega +\frac{1}{\alpha^2}\mathcal{N}+\frac{1}{2}\right),
\end{align}
where we have also used Lemma \ref{Lemma-Regularity_w}. Choosing $\Lambda_0>0$ such that $2C \frac{\ln  \Lambda_0 }{\Lambda_0} \le \frac{1}{3}$ and making use of the fact $\mf H_\alpha^\Lambda\geq -\Delta_\Omega +\frac{2}{3\alpha^2}\mathcal{N}-3 \sup_{x\in \Omega }\|\Pi_{\Lambda}v_x\|^2_2$ yields
\begin{align}
\frac{1}{3}\bigg(-\Delta_\Omega + \frac{1}{\alpha^2}\mathcal{N} \bigg) \leq \mf H_\alpha + \frac{1}{6}+3\sup_{x\in \Omega } \|\Pi_{\Lambda_0}v_x\|^2_2
\end{align}
and therefore concludes the proof of \eqref{Lemma-Relative_Bound_2} since $\|\Pi_{\Lambda_0}v_x\|_2<\infty$. Combining Eq.~\eqref{eq:Relative_Bound_Tool} with \eqref{Lemma-Relative_Bound_2} concludes the proof of \eqref{Lemma-Relative_Bound_1} as well. Regarding the estimate on the norm $\|(1-\Pi)\varphi^\mathrm{P}\|^2_2$, note that by definition of the Pekar energy and Eq.~\eqref{Lemma-Relative_Bound_1}
\begin{align}
e^\Pek &=\mathcal{E}(\psi^\P,\varphi^\P) = \braket{\psi^\P \otimes \Omega_{\alpha\varphi^\P}  , \mf H_\alpha \psi^\P  \Omega_{\alpha\varphi^\P}   } \notag \\
& \geq \braket{\psi^\P \otimes \Omega_{\alpha\varphi^\P}, \mf H_\alpha^\Lambda \psi^\P \otimes \Omega_{\alpha\varphi^\P}  } -C \sqrt{\tfrac{\ln \Lambda }{\Lambda}}(e^\mathrm{Pek}+C) \notag  \\
&=\mathcal{E}(\psi^\P,\Pi_\Lambda \varphi^\P)+\|(1-\Pi_\Lambda )\varphi^\P \|^2-C \sqrt{\tfrac{\ln  \Lambda }{\Lambda}}(e^\mathrm{Pek}+C ).
\end{align}
Since $\mathcal{E}(\psi^\P,\Pi_\Lambda \varphi^\P )\geq e^\Pek$, this yields  $\|(1-\Pi _\Lambda )\varphi^\P\|^2_2 \leq C \sqrt{\frac{\ln \Lambda}{\Lambda}}(e^\Pek+C)$.
\end{proof}

We recall the definition \eqref{eq:def:R}, $P=|\psi^{\P}\rangle \langle \psi^{\P}| $ and let
\begin{align}\label{eq:full:resolvent}
\mathfrak R_\alpha : = - Q \frac{1}{Q( \mathfrak H_\alpha - e^{\Pek} )Q  } Q.
\end{align}
Anticipating Lemma \ref{Lemma-Weak_BEC} below, $\mf R_\alpha$ is a negative bounded operator on $L^2(\Omega) \otimes \mathcal F$ satisfying $\sup_{\alpha>0} \| \mf R_\alpha \| < \infty$. The next lemma is a simple variant of the commutator method by Lieb and Yamazaki \cite{LiebT97,LiebY58}.

\begin{lem}\label{lem:X:phiY:bound}  Let $X,Y \in \{ P,(-R)^{1/2},(-\mf  R_\alpha)^{1/2} \}$. There is a constant $C>0$ such that
\begin{align}
\sup_{\alpha>0} \| X \phi(v_\cdot ) Y (\mathcal N+1)^{-1/2}\| \le C.
\end{align}
\end{lem}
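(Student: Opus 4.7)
The plan is to adapt the Lieb--Yamazaki commutator method. The key observation is that although $v_x \notin L^2(\Omega)$, it can be written as the commutator of gradients with the smoother function $g_x = (-\Delta_\Omega)^{-3/2}\delta_x$. Specifically, evaluating the identity from the proof of Lemma \ref{Lemma-Relative_Bound} at $\Lambda = 0$ gives
\[
v_x \,=\, -\sum_{i=1}^3 \bigl[\nabla_\Omega^{(i)},\, \nabla_\Omega^{(i)} g_x\bigr],
\]
where $\sup_{x,i}\|\nabla_\Omega^{(i)} g_x\|_2 \le C$ by Lemma \ref{Lemma-Regularity_w}. Promoting this to the operator identity $\phi(v_x) = \sum_i[\phi(\nabla_\Omega^{(i)} g_x)\nabla_\Omega^{(i)} - \nabla_\Omega^{(i)}\phi(\nabla_\Omega^{(i)} g_x)]$ on a suitable dense domain and sandwiching with $X$ and $Y$ reduces the task to bounding finitely many products of the form $(X\nabla^{(i)})\phi(\nabla_\Omega^{(i)} g_x) Y(\mathcal N+1)^{-1/2}$ and $X\phi(\nabla_\Omega^{(i)} g_x)(\nabla^{(i)} Y)(\mathcal N+1)^{-1/2}$, in which the singular $\phi(v_x)$ has been replaced by the regular $\phi(\nabla_\Omega^{(i)} g_x)$ at the cost of one extra gradient landing on $X$ or $Y$.

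First I would establish the uniform-in-$\alpha$ gradient bounds $\sup_\alpha \bigl(\|X\nabla_\Omega^{(i)}\| + \|\nabla_\Omega^{(i)} Y\|\bigr) < \infty$. For $P$ this is immediate from $\psi^\P \in H_0^1(\Omega)$. For $(-R)^{1/2}$, a short separate elliptic-regularity argument (essentially the $L^\infty$ claim for $(-\Delta_\Omega)^{-1}|\psi^\P|^2$ anticipated before the statement of the lemma) shows that the potential $W = -2(-\Delta_\Omega)^{-1}|\psi^\P|^2 - \mu^\Pek$ is bounded; combining $H_0 \ge -\Delta_\Omega - \|W\|_\infty$ with the spectral gap $QH_0Q \ge \tau Q$ via a convex combination yields $QH_0Q \ge c\,Q(-\Delta_\Omega + 1)Q$ on $\operatorname{Ran}Q$, and operator monotonicity of the inverse then gives $(-R)^{1/2}(-\Delta_\Omega + 1)(-R)^{1/2} \le CQ$, equivalent to $\|\nabla_\Omega^{(i)}(-R)^{1/2}\| \le C$. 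For $(-\mf R_\alpha)^{1/2}$ the same scheme works uniformly in $\alpha$ by invoking the relative bound $-\Delta_\Omega + 1 \le C(\mf H_\alpha + 1)$ from \eqref{Lemma-Relative_Bound_2} together with the uniform gap $Q(\mf H_\alpha - e^\Pek)Q \ge \delta Q$ anticipated from Lemma \ref{Lemma-Weak_BEC}, so that $\sup_\alpha\|\nabla_\Omega^{(i)}(-\mf R_\alpha)^{1/2}\| < \infty$.

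Second, I would combine these with the standard bosonic estimate $\|\phi(f)(\mathcal N+1)^{-1/2}\| \le 2\|f\|_2$. Using the CCR identities $(\mathcal N+1)^{-1/2} a^*(f) = a^*(f)(\mathcal N+2)^{-1/2}$ and $(\mathcal N+1)^{-1/2} a(f) = a(f)\mathcal N^{-1/2}$, the weight $(\mathcal N+1)^{-1/2}$ can be brought next to $\phi(\nabla_\Omega^{(i)} g_x)$ at the price of a harmless shift of the exponent; the intervening bounded factors, consisting of gradients and the self-adjoint $X$ or $Y$, were controlled in the previous step.

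The step I expect to be the main obstacle is the case $X$ or $Y = (-\mf R_\alpha)^{1/2}$, since this operator does not commute with $\mathcal N$, and the naive uniform bound $\|(\mathcal N+1)^{1/2}(-\mf R_\alpha)^{1/2}(\mathcal N+1)^{-1/2}\| \le C$ is not available directly from the gap estimate (the bound $\mathcal N \le \alpha^2(\mf H_\alpha + C)$ alone would cost a factor $\alpha$). I would circumvent this by arranging the Lieb--Yamazaki decomposition so that the factor $(\mathcal N+1)^{-1/2}$ always sits adjacent to $\phi$, never needing to be commuted through $(-\mf R_\alpha)^{1/2}$; where the number operator does meet the resolvent, I would exploit the explicit commutator $[\mathcal N, \mf H_\alpha] = \alpha^{-1}(a^*(v_x) - a(v_x))$ and iterate the Lieb--Yamazaki identity on the resulting terms, keeping all estimates uniform in $\alpha$ via the uniform gap of $\mf H_\alpha$ above $e^\Pek$ on $\operatorname{Ran}Q$.
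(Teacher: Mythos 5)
Your argument is essentially the paper's proof: the same Lieb--Yamazaki rewriting $\phi(v_x)=\sum_i[\nabla_\Omega^{(i)},\phi(\nabla_\Omega^{(i)}g_x)]$ combined with the gradient bounds $\|P\nabla_\Omega\|+\|(-R)^{1/2}\nabla_\Omega\|+\sup_{\alpha}\|(-\mf R_\alpha)^{1/2}\nabla_\Omega\|<\infty$, which the paper obtains exactly as you propose (boundedness of $(-\Delta_\Omega)^{-1}|\psi^\P|^2$ for $R$, and a form bound on $\phi(v_x)$ relative to $-\Delta_\Omega+\alpha^{-2}\mathcal N$ together with the uniform gap of $Q(\mf H_\alpha-e^\Pek)Q$ for $\mf R_\alpha$). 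The commutation of $(\mathcal N+1)^{1/2}$ past $(-\mf R_\alpha)^{1/2}$ that you single out is indeed the one delicate point; the paper subsumes it under ``the usual bounds for creation and annihilation operators,'' and your suggestion to control it via the commutator $[\mathcal N,\mf H_\alpha]=\alpha^{-1}(a^*(v_x)-a(v_x))$ is a legitimate way to make that step explicit.
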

\begin{proof} Similarly as in the proof of Lemma \ref{Lemma-Relative_Bound}, write $\phi (v_x) =  \sum_{i=1}^3 [\nabla_\Omega^{(i)}, \phi( \nabla_\Omega^{(i)} g_x ) ]$ with $g_x$ defined in Lemma \ref{Lemma-Regularity_w}. With the usual bounds for creation and annihilation operators, this gives
\begin{align}
\| X \phi(v_\cdot) Y \Psi \| & \le \sup_{x\in\Omega} \| (-\Delta_\Omega)^{1/2} g_x  \|_2  ( \| X \nabla_\Omega \|\, + \| \nabla_\Omega Y \|) \|(\mathcal N+1)^{1/2}\Psi\| \notag\\
& \le C\| (\mathcal N+1)^{1/2}\Psi\|,
\end{align}
where the second step follows from  Lemma \ref{Lemma-Regularity_w} and
\begin{align}\label{Rnabla:bounds}
\| P\nabla_\Omega  \| +  \| (-R)^{1/2} \nabla_\Omega \| + \sup_{\alpha>0} \| (-\mf R_\alpha)^{1/2}\nabla_\Omega  \|   < \infty.
\end{align}
The bound involving $R$ is a consequence of $ (-\Delta_\Omega)^{-1}|\psi^\P|^2 \in L^\infty(\Omega)$, which follows from H\"older's inequality using that $ |\psi^\P|^2 \in L^2(\Omega)$ by Sobolev's inequality and that $(-\Delta_\Omega)^{-1}(x,\cdot)\in L^2(\Omega)$ for every $x\in \Omega$, as can be inferred from \eqref{eq:totally:monotone:bound}. The bound involving $\mf R_\alpha$ is obtained by using $\pm \phi(v_x) \le \tfrac12 (-\Delta_\Omega ) + C( \tfrac{1}{\alpha^2} \mathcal N + 1)$.
\end{proof}

The next lemma provides bounds for higher moments of the particle number operator after conjugation with the unitary defined in \eqref{eq:2term:bound:minmax}. For a proof, see \cite[Lemma 4.4]{Bossmann2019}.

\begin{lem}\label{lem:number:op:bounds} For every $b\in \mathbb N$ we have
\begin{align}
\mathbb U(\mathcal N+1)^b\, \mathbb U^*  \le (b\, \mathfrak c)^b (\mathcal N+1)^b , \quad  \mathbb U^* (\mathcal N+1)^b\, \mathbb U \le  (b\, \mathfrak c)^b (\mathcal N+1)^b
\end{align}
with $\mathfrak c = 3 \| \mathfrak B \|_{\mf S^2} +3 $, where $\| \cdot \|_{\mf S^2}$ denotes the Hilbert--Schmidt norm.
\end{lem}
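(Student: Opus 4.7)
My proof plan is an induction on $b$ that hinges on the explicit action \eqref{eq: def of U} of $\mathbb U$ on creation and annihilation operators. The base case $b=1$ is a direct computation: writing $\mathcal N = \sum_j a^*(\varphi_j) a(\varphi_j)$ and conjugating with $\mathbb U$ via \eqref{eq: def of U}, the result splits into a ``diagonal'' part $\mathrm{d}\Gamma(1+\mathfrak B^2)$, a pair-creation term with kernel built from $(1+\mathfrak B^2)^{1/2}$ and $\mathfrak B$, its adjoint pair-annihilation term, and the scalar $\mathrm{Tr}(\mathfrak B^2) = \|\mathfrak B\|_{\mathfrak S^2}^2$, which is finite by Lemma~\ref{lem:Hessian:properties}. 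The pair terms have Hilbert--Schmidt kernels with norm controlled by $\|\mathfrak B\|_{\mathfrak S^2}$, so the standard estimates $\|a(K)\Psi\|\le \|K\|_{\mathfrak S^2}\|\mathcal N^{1/2}\Psi\|$ and the analogue for pair operators yield, after a Cauchy--Schwarz, $\pm(\text{pair terms}) \le \|\mathfrak B\|_{\mathfrak S^2}(\mathcal N+1)$. Collecting all contributions gives $\mathbb U(\mathcal N+1)\mathbb U^* \le \mathfrak c (\mathcal N+1)$ with $\mathfrak c = 3\|\mathfrak B\|_{\mathfrak S^2}+3$, which is the $b=1$ statement.

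For the induction step the natural starting point is the identity $\mathbb U(\mathcal N+1)^b\mathbb U^* = \bigl(\mathbb U(\mathcal N+1)\mathbb U^*\bigr)^b$, which follows from $\mathbb U^*\mathbb U = \mathds 1$. The main obstacle is precisely that $\mathbb U(\mathcal N+1)\mathbb U^*$ and $\mathcal N+1$ do not commute, so the operator inequality $\mathbb U(\mathcal N+1)\mathbb U^* \le \mathfrak c(\mathcal N+1)$ from the base case does not lift to the $b$th power by a purely algebraic manipulation, and a naive $\|A^bB^{-b}\|$ argument fails since $\mathcal N+1$ is unbounded.

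To overcome this I would use a Grönwall-type argument on the flow generated by $\mathbb U$. Writing $\mathbb U = e^K$ with the anti-self-adjoint quadratic generator $K$ implementing the action \eqref{eq: def of U}, set
\begin{align*}
F_b(t) := \bigl\langle \Psi, e^{-tK}(\mathcal N+1)^b e^{tK}\Psi\bigr\rangle, \qquad t\in[0,1],
\end{align*}
for a dense set of $\Psi$ with finite moments. Differentiating yields $F_b'(t) = \langle \Psi, e^{-tK}[(\mathcal N+1)^b, K]e^{tK}\Psi\rangle$. Expanding $[(\mathcal N+1)^b, K]$ telescopically as a sum of $b$ factors involving the single commutator $[\mathcal N+1,K]$ (which is itself quadratic and controlled in terms of $\|\mathfrak B\|_{\mathfrak S^2}$), and applying the same Hilbert--Schmidt estimates as in the base case, produces a pointwise bound $|F_b'(t)|\le b\,\mathfrak c\, F_b(t)$. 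Integrating and exponentiating over $[0,1]$ yields $F_b(1)\le e^{b\mathfrak c}F_b(0)\le (b\mathfrak c)^b F_b(0)$ for the stated $\mathfrak c$, which is the first claim. The second claim for $\mathbb U^*(\mathcal N+1)^b\mathbb U$ is obtained by the identical argument since $\mathbb U^*$ is the Bogoliubov transformation associated with $-K$, whose off-diagonal Hilbert--Schmidt block has the same norm $\|\mathfrak B\|_{\mathfrak S^2}$. The combinatorial prefactor $b^b$ in the constant $(b\mathfrak c)^b$ is generous: it absorbs both the $b$ telescoping commutators and the crude step $e^{b\mathfrak c}\le (b\mathfrak c)^b$, which is harmless for the polynomial remainder estimates used later in the paper.
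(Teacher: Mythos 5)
The paper does not prove this lemma itself; it cites \cite[Lemma 4.4]{Bossmann2019}. Your attempt at a self-contained proof is therefore welcome in principle, but the core step does not hold as written.

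The genuine gap is the differential inequality $|F_b'(t)|\le b\,\mathfrak c\,F_b(t)$. Writing $K=\tfrac12(\mathcal A^*-\mathcal A)$ with $\mathcal A=\int \overline{k}(x,y)a_xa_y$, the telescoping gives $[(\mathcal N+1)^b,K]=\sum_{j=0}^{b-1}(\mathcal N+1)^j(\mathcal A^*+\mathcal A)(\mathcal N+1)^{b-1-j}$. You cannot insert the operator inequality $\pm(\mathcal A^*+\mathcal A)\le \|k\|_{\mf S^2}(2\mathcal N+2)$ between the two unequal weights $(\mathcal N+1)^j$ and $(\mathcal N+1)^{b-1-j}$ (that would require a sandwich of the form $CAC^*$), so one must use Cauchy--Schwarz; this produces terms like $\|k\|_{\mf S^2}\,\|(\mathcal N+1)^{j+1}\Psi\|\,\|(\mathcal N+1)^{b-1-j}\Psi\|$, i.e.\ moments of order up to $2b$. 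By log-convexity of the moments $m_p=\langle\Psi,(\mathcal N+1)^p\Psi\rangle$ one has $m_p^{1/2}m_q^{1/2}\ge m_{(p+q)/2}$, which is the \emph{wrong} direction: these cross terms are not controlled by $F_b(t)$, so the Gr\"onwall inequality does not close at order $b$. The standard repair -- symmetrizing so the pair operator sits between two $(\mathcal N+1)^{(b-1)/2}$ weights -- forces you to commute half-powers of $\mathcal N+1$ through $\mathcal A^\#$, producing the shift $\mathcal N\mapsto\mathcal N\pm2$ inside a $b$-th power; the resulting ratio $((\mathcal N+3)/(\mathcal N+1))^{b/2}$ is of size $3^{b/2}$ on low particle-number sectors, and after integrating over $[0,1]$ and exponentiating you obtain a constant of order $e^{C3^{b/2}}$, vastly larger than $(b\mathfrak c)^b$. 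Getting the factorial-type constant $(b\mathfrak c)^b$ (which is what Theorem \ref{thm:asymptotic:expansion:2} ultimately feeds on) requires treating the low sectors non-perturbatively, e.g.\ via the explicit action of $\mathbb U^*$ on $n$-particle vectors as in \cite{Bossmann2019}, not via a scalar Gr\"onwall bound.

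Two secondary points. First, your base-case bookkeeping does not deliver the stated $\mathfrak c$: the diagonal part of $\mathbb U(\mathcal N+1)\mathbb U^*$ is $\d\Gamma(1+2\mf B^2)+\|\mf B\|_{\mf S^2}^2+1$, which already contributes a term quadratic in $\|\mf B\|_{\mf S^2}$ and is not dominated by $(3\|\mf B\|_{\mf S^2}+3)(\mathcal N+1)$ when $\|\mf B\|_{\mf S^2}$ is large; you should either track this honestly or note that only the structure $\mathfrak c=\mathfrak c(\|\mf B\|_{\mf S^2})$ matters for the paper. Second, defining $F_b(t)$ on a dense set of $\Psi$ with finite moments is circular without a regularization (e.g.\ $(\mathcal N+1)^b(1+\eps\mathcal N)^{-b}$), since finiteness of $F_b(t)$ for $t>0$ is part of what is being proved.
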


\subsection{Upper bounds}\label{sect:upper:bounds} We consider the Weyl-transformed Fr\"ohlich Hamiltonian
\begin{align}
\widetilde {\mathfrak H}_\alpha  := \mathcal W_{\alpha \varphi^{\P}}^{-1}   ( \mathfrak H_\alpha - e^{\Pek} ) \mathcal W_{\alpha \varphi^{\P}}  = H_0 + \alpha^{-1} \phi( v_x + \varphi^\P) + \alpha^{-2} \mathcal N
\end{align}
and for $n\in \mb N$ the trial state
\begin{align}
\Psi^\n = (1 - \tfrac{1}{\alpha} R  \phi(v_\cdot) ) \psi^\P  \otimes \mathbb U^* \gamma^\n  =: \Upsilon^\n_0  - \tfrac{1}{\alpha} \Upsilon^\n_1
\end{align}
where $\mb U^*\gamma^\n$ are the normalized eigenfunctions of $\mathbb H_0$, see \eqref{eq:eigenfunctions:H0}.  
A straightforward computation, using $H_0 \psi^\P = 0$, $P\phi(v_x+ \varphi^\P) P =0$, $PR=0$ and Lemma \ref{lem:diagonalization of HBog}, leads to
\begin{align}
 \langle \Psi^\n ,  \widetilde{ \mf H}_\alpha  \Psi^\m \rangle  - & \alpha^{-2} \delta_{nm} \eB E^\n \notag \\[1mm]
& = \alpha^{-3}  \langle \psi^{\P}  \otimes \mathbb U^* \gamma^\n,  \phi( v_x  ) R   \phi( v_x+\varphi^\P  )  R  \phi( v_x   )   \psi^{\P} \otimes \mathbb U^* \gamma^\m \rangle \notag\\[1mm]
& \quad - 2 \alpha^{-3}  \Re \langle \psi^{\P}  \otimes \mathbb U^* \gamma^\n, \phi( v_x ) R  \mathcal N   \psi^{\P} \otimes \mathbb U^* \gamma^\m \rangle \notag\\[1mm]
&\quad + \alpha^{-4} \langle \psi^{\P}  \otimes \mathbb U^* \gamma^\n, \phi( v_\cdot) R \mathcal N R \phi ( v_\cdot ) \psi^{\P} \otimes \mathbb U^* \gamma^\m \rangle.
\end{align}
Using Lemmas \ref{lem:X:phiY:bound}, \ref{lem:number:op:bounds}, one verifies that the modulus of the right-hand side is bounded uniformly in $n,m$, by $C \| (\mathcal N+1)^2 \mb U^*\gamma^\n\| \alpha^{-3} \le C \alpha^{-3}$. Similarly, one shows that the trial states are almost orthonormal, i.e. that there is a constant $C$ so that $| \langle \Psi^\n, \Psi^{(m)} \rangle - \delta_{nm}| \le C \alpha^{-1} $ for all  $n,m \in \mathbb N$ and $\alpha>0$. Now let $\mathcal V = \text{Span}(\Psi^{(1)}, \ldots, \Psi^\n)$. There exists $\alpha(n)>0$ such that for all $\alpha \ge \alpha(n)$, $\textnormal{dim} \mathcal V = n$. Thus, we can use $\mathcal V $ as a test subspace in \eqref{def:minmax:values}. This implies the desired upper bound for the $nth$ min-max value $\ms E^\n(\alpha)  \le e^\Pek + \alpha^{-2} \eB E^\n +C \alpha^{-3}$.

\subsection{Lower bounds}\label{sec:lower bounds}

To prove the lower bound on the min-max values, we first estimate $\mf H_\alpha$ in terms of its Feshbach--Schur complement associated with the projection $P$. In the second step, we estimate the min-max values of the Feshbach--Schur complement in terms of the eigenvalues of the Bogoliubov Hamiltonian. For the second step, we need to establish suitable condensation properties for low-energy states of $\mf H_\alpha$.

Let $f:\mathbb R \to [0,1]$ be a smooth function satisfying $f|_{ [0,\frac{1}{2} ]}=1$ and $f|_{ [1,\infty )}=0$, such that $f_\perp:=\sqrt{1-f^2}$ is smooth as well. We use this to define the operators 
\begin{align}\label{eq:def:pi}
\pi:=f \Big (\alpha^{\kappa-2} \mathcal{W}_{\alpha \varphi^{\mathrm{P}}}^{-1}\mathcal{N}\mathcal{W}_{\alpha \varphi^{\mathrm{P}}} \Big   ) ,\quad \pi_\perp:=f_\perp \Big  (\alpha^{\kappa -2}  \mathcal{W}_{\alpha \varphi^{\mathrm{P}}}^{-1}\mathcal{N}\mathcal{W}_{\alpha \varphi^{\mathrm{P}}} \Big   )
\end{align}
with $\kappa \in (0,2)$.

\begin{lem}
\label{Lemma-Weak_BEC} Let $Q=\mathrm {1} - |\psi^\P \rangle \langle \psi^\P |$. There exist constants $\delta,\alpha_0>0$, such that $Q(\mathfrak H_ \alpha -e^\mathrm{Pek})Q\geq \delta Q$ for all $\alpha\geq \alpha_0$. Further, let  $\pi$ be defined as above for $\kappa \in (0,\tfrac{2}{7})$. Then, there exists a constant $C> 0 $ so that
\begin{align*}
\|Q \Psi\|^2+1-\|\pi\Psi\|^2\leq C \alpha^{\kappa-\frac{2}{7}}\ln \alpha
\end{align*}
for all states $\Psi\in L^2 (\Omega,\mathcal{F}\big(L^2 (\Omega ) ) )$ satisfying $\braket{\Psi, \mathfrak H_\alpha \Psi}\leq e^\mathrm{Pek}+\alpha^{-\frac{2}{7}}\ln \alpha$. 
\end{lem}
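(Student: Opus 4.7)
My plan is to reduce both assertions to a single operator inequality of the form
\begin{align*}
\widetilde{\mathfrak H}_\alpha \,\geq\, \delta_1\,(Q\otimes \mathds 1_{\mathcal F})\ +\ \delta_2\,\alpha^{-2}\mathcal N\ -\ C\,\alpha^{-2/7}\ln\alpha
\end{align*}
with positive constants $\delta_1, \delta_2$, where $\widetilde{\mathfrak H}_\alpha = \mathcal W^{-1}_{\alpha\varphi^\P}(\mathfrak H_\alpha - e^\Pek)\mathcal W_{\alpha\varphi^\P} = H_0 + \alpha^{-1}\phi(v_x + \varphi^\P) + \alpha^{-2}\mathcal N$. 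After undoing the Weyl conjugation, and noting that $Q\otimes \mathds 1$ commutes with $\mathcal W_{\alpha\varphi^\P}$, the inequality is equivalent to a bound on $\mathfrak H_\alpha - e^\Pek$ involving $\alpha^{-2}\mathcal W^{-1}_{\alpha\varphi^\P}\mathcal N\mathcal W_{\alpha\varphi^\P}$. The gap claim $Q(\mathfrak H_\alpha - e^\Pek)Q \geq \delta Q$ then follows on testing against $\Psi \in \mathrm{Ran}(Q\otimes \mathds 1)$ and taking $\alpha$ large enough that $C\alpha^{-2/7}\ln\alpha \leq \delta_1/2$. The condensation follows from the energy hypothesis via Chebyshev's inequality: the $\delta_1 Q$ term yields $\|Q\Psi\|^2 \lesssim \alpha^{-2/7}\ln\alpha$, while the $\delta_2\alpha^{-2}\mathcal W^{-1}\mathcal N\mathcal W$ term yields $\langle \Psi, \mathcal W^{-1}\mathcal N\mathcal W\Psi\rangle \lesssim \alpha^{2-2/7}\ln\alpha$, which combined with the fact that $\pi_\perp$ is supported on $\{\alpha^{\kappa-2}\mathcal W^{-1}\mathcal N\mathcal W \geq 1/2\}$ gives $1-\|\pi\Psi\|^2 \lesssim \alpha^{\kappa-2/7}\ln\alpha$.

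The structural inputs for the main inequality are: (i) the operator bound $H_0 \geq \delta_0 Q$ on $L^2(\Omega)$, a consequence of Assumption~\hyperlink{2}{2} together with the fact that $\psi^\P > 0$ is the unique ground state of $H_0$ with positive spectral gap; (ii) the cancellation identity $P\phi(v_x + \varphi^\P)P = 0$, which follows directly from $\varphi^\P(y) = -\langle\psi^\P, v_\cdot(y)\psi^\P\rangle$ and forces the interaction to couple only the $P$-- and $Q$--sectors; and (iii) the Lieb--Yamazaki type commutator estimates of Lemmas~\ref{lem:X:phiY:bound} and~\ref{Lemma-Relative_Bound}.

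Using (ii), decompose $\phi(v_x + \varphi^\P) = P\phi Q + Q\phi P + Q\phi Q$. The $PQ$--cross terms are controlled by Young's inequality combined with variants of Lemma~\ref{lem:X:phiY:bound}: writing $Q = (-R)^{1/2}(-R)^{-1/2}Q$ and using $\|P\phi(v_\cdot)(-R)^{1/2}(\mathcal N+1)^{-1/2}\| \leq C$, the leftover factor $(-R)^{-1/2}Q \sim (QH_0Q)^{1/2}Q$ is absorbed into a small fraction of $H_0$, while the $(\mathcal N+1)^{1/2}$--factor is absorbed into a small fraction of $\alpha^{-2}\mathcal N$. The $QQ$--term is controlled via the ultraviolet splitting of Lemma~\ref{Lemma-Relative_Bound}: the high modes $\phi((1-\Pi_\Lambda)(v_x+\varphi^\P))$ produce an error of order $\sqrt{\ln\Lambda/\Lambda}\,(\mathfrak H_\alpha + 1)$ by the commutator trick, while the low modes $\phi(\Pi_\Lambda(v_x+\varphi^\P))$ are bounded by Young's inequality using $\|\Pi_\Lambda v_x\|_2 \leq C\Lambda^{1/2}$. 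Choosing $\Lambda = \alpha^{4/7}$ balances these two contributions at the scale $\alpha^{-2/7}\ln\alpha$.

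The main obstacle is ensuring that, after all the absorbing manipulations in the cross and diagonal pieces, strictly positive coefficients $\delta_1, \delta_2 > 0$ on $Q\otimes \mathds 1$ and $\alpha^{-2}\mathcal N$ remain. This requires a delicate simultaneous tuning of the Young weights in the interaction estimates against the ultraviolet cutoff $\Lambda$; this trade-off is precisely what selects the Lieb--Thomas exponent $2/7$ for the Fr\"ohlich polaron. A final bookkeeping point: because the Weyl conjugation sends $\alpha^{-2}\mathcal N$ in the tilded frame to $\alpha^{-2}\mathcal W^{-1}\mathcal N\mathcal W$ in the original frame, the Chebyshev step in the condensation argument is performed exactly against the operator controlling the projection $\pi$, requiring no further transformation.
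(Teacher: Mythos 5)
Your reduction of both claims to the single operator inequality
\begin{align*}
(1+o(1))\,(\mathfrak H_\alpha - e^\Pek) \;\geq\; \delta_1\, Q \;+\; \delta_2\,\alpha^{-2}\mathcal W_{\alpha\varphi^\P}^{-1}\mathcal N\,\mathcal W_{\alpha\varphi^\P} \;-\; C\alpha^{-2/7}\ln\alpha,
\end{align*}
followed by a Chebyshev argument, is exactly the paper's structure (this is \eqref{Equation-LT_Estimate}). But the route you propose to that inequality has a fatal step: the low-mode part of the $Q\phi Q$ term. Young's inequality applied to $\alpha^{-1}\phi(\Pi_\Lambda (v_x+\varphi^\P))$ with $\sup_x\|\Pi_\Lambda v_x\|_2\sim\Lambda^{1/2}$ yields $\epsilon\,\alpha^{-2}\mathcal N+\epsilon^{-1}C\Lambda$; with $\Lambda=\alpha^{4/7}$ the second term is of order $\alpha^{4/7}$ unless $\epsilon\gtrsim\Lambda$, in which case the coefficient of $\alpha^{-2}\mathcal N$ becomes hugely negative. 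Completing the square pointwise in $x$ is equally lossy ($-\|\Pi_\Lambda v_x\|_2^2\sim-\Lambda$). No tuning of weights fixes this: one must complete the square at fixed electron \emph{wavefunction}, not fixed electron position. That is precisely what the paper's coherent-state resolution of the identity on $\mathcal F(\Pi_\Lambda L^2(\Omega))$ accomplishes, reducing $\mathfrak H^\Lambda_\alpha$ to an average of classical operators $H_\varphi$ at a cost $N/\alpha^2\lesssim\Lambda^3\alpha^{-2}$ by the Weyl law; balancing $\Lambda^3\alpha^{-2}$ against the ultraviolet error $\sqrt{\ln\Lambda/\Lambda}$ is what selects $\Lambda=\alpha^{4/7}$ and the exponent $2/7$ --- not a trade-off between Young weights and the cutoff.

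There is a second, independent gap in the cross terms. Absorbing $\alpha^{-1}(P\phi Q+\mathrm{h.c.})$ into $\epsilon\,QH_0Q+C\epsilon^{-1}\alpha^{-2}(\mathcal N+1)$ leaves a positive coefficient $\delta_2$ on $\alpha^{-2}\mathcal N$ only if $C\epsilon^{-1}<1$ while $\epsilon<1$, i.e.\ only if the Lieb--Yamazaki constant were less than one --- which it is not and cannot be arranged to be. The second-order operator $\langle\psi^\P,\phi(v_\cdot)R\,\phi(v_\cdot)\psi^\P\rangle$ is genuinely comparable to $-\mathcal N$, and the statement that $\mathcal N+\langle\psi^\P,\phi R\phi\,\psi^\P\rangle$ is bounded below with a positive coefficient on $\mathcal N$ is exactly the positivity of the Pekar Hessian $\mathfrak h\geq\tau$, i.e.\ coercivity of the Pekar functional in the \emph{field} direction. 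The paper imports this through $\braket{\psi,H_\varphi\psi}\geq e^\Pek+\tau_2\|\varphi-\varphi^\P\|_2^2$ (\cite[Proposition~3.2]{FrankS21}) inside the coherent-state decomposition. Your structural inputs --- $H_0\geq\delta_0 Q$, $P\phi(v_x+\varphi^\P)P=0$, and the commutator estimates --- control only the electron direction and therefore cannot produce $\delta_2>0$.
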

\begin{proof}For the porpuse of the proof, recall the Fr\"ohlich Hamiltonian with cutoff given in  \eqref{eq:Fr:Ham:cutoff}, where $\Pi_\Lambda $ denotes the orthogonal projection $ \mathds{1}(-\Delta_\Omega \leq \Lambda^2)$. Applying Lemma \ref{Lemma-Relative_Bound} for $\Lambda =\alpha^{\frac{4}{7}}$ yields $(1+ C \alpha^{-\frac{2}{7}} \ln \alpha ) \mathfrak H_\alpha \geq \mathfrak H_\alpha^\Lambda - C \alpha^{-\frac{2}{7}}\ln \alpha$ for a suitable constant $C$. Let us now identify $L^2 ( \Omega , \mathcal{F} (L^2 ( \Omega ) ) )\cong\mathcal{F} (\Pi_\Lambda L^2 (\Omega ) ) \otimes L^2 (\Omega,\mathcal{F} (\Pi_\Lambda L^2 (\Omega )^\perp ) )$ and let $\Theta_\varphi$ be the orthogonal projection onto elements of the form $\Omega_\varphi\otimes \widetilde{\Psi}$ with $\widetilde{\Psi}\in L^2 (\Omega,\mathcal{F} (\Pi_\Lambda L^2 (\Omega )^\perp ) )$, where $\Omega_\varphi$ is the coherent state defined by $a(f)\Omega_\varphi=\braket{f,\varphi}\Omega_\varphi$ for some $\varphi\in \Pi_\Lambda L^2 (\Omega )$.
Furthermore, let $\mathcal{N}_\perp$ be the particle number operator on $\mathcal{F} (\Pi_\Lambda L^2\left(\Omega\right)^\perp )$ and let $f_1,\dots ,f_N$ be an orthonormal basis of $\Pi_\Lambda L^2\left(\Omega\right)$. With this at hand, we can write 
\begin{align}
\mathfrak H_\alpha^\Lambda&=-\Delta_\Omega + \alpha^{-1} \phi (\Pi_\Lambda v_x) +\alpha^{-2}\sum_{j=1}^N a(f_j)a^*(f_j)-\frac{N}{\alpha^2}+\alpha^{-2}\mathcal{N}_\perp \notag \\
&=\frac{1}{\pi^N}\int_{\Pi_\Lambda L^2\left(\Omega\right)}\Theta_\varphi H_{\alpha^{-1}\varphi}\Theta_\varphi\mathrm{d}\varphi-\frac{N}{\alpha^2}+ \alpha^{-2} \mathcal{N}_\perp
\end{align}
with $H_\varphi:=-\Delta_\Omega - \braket{v_x,\varphi} -  \braket{\varphi,v_x}+\|\varphi\|_2^2$. As a consequence of Assumption \hyperlink{2}{2}, we obtain $\braket{\psi,H_\varphi \psi}\geq \mathcal{E}^\mathrm{P}(\psi)\geq e^\mathrm{Pek}+\tau_1 \|Q\psi\|^2_2$. Furthermore, $\braket{\psi,H_\varphi \psi}\geq e^\mathrm{Pek}+\tau_2 \|\varphi-\varphi^\mathrm{P}\|^2_2$ by \cite[Proposition 3.2]{FrankS21}, and therefore $H_\varphi\geq e^\mathrm{Pek}+\tau Q+\tau\|\varphi-\varphi^\mathrm{P}\|_2^2$ with $\tau:=\frac{1}{2}\min\{\tau_1,\tau_2\}$. (Note that the proof of \cite[Proposition 3.2]{FrankS21} does not depend on \cite[Assumption 3]{FrankS21}.) Using this inequality, we proceed with
\begin{align}
& \frac{1}{\pi^N}\int_{\Pi_\Lambda L^2\left(\Omega\right)}\! \! \! \Theta_\varphi H_{\alpha^{-1}\varphi}\Theta_\varphi\mathrm{d}\varphi \geq \frac{1}{\pi^N}\int_{\Pi_\Lambda L^2\left(\Omega\right)}\! \! \! \Theta_\varphi\big(e^\Pek +\tau Q+\tau \alpha^{-2}\|\varphi-\alpha \varphi^\P \|^2_2\big)\Theta_\varphi\mathrm{d}\varphi  \notag \\[1mm]
&\qquad =e^\mathrm{Pek}+\tau Q+\tau \alpha^{-2}\mathcal{W}_{\alpha \varphi^{\mathrm{P}}}^{-1}( \mathcal N - \mathcal N_\perp ) \mathcal{W}_{\alpha \varphi^{\mathrm{P}}}+ \tau \| (1-\Pi_\Lambda) \varphi^\P \|_2^2 + \frac{\tau N}{\alpha^2}.
\end{align}
Abbreviating $\varphi^{\P}_\perp = (1-\Pi_\Lambda) \varphi^{\P}$ and using the inequality $ \alpha^{-1}\phi(\varphi_\perp^\P) \le \tfrac{1}{2}\alpha^{-2}(\mathcal N_\perp+1) + 2 \|\varphi_\perp^\P\|_2^2$ together with Lemma \ref{Lemma-Relative_Bound} with $\Lambda = \alpha^{\frac{4}{7}}$, we further have
\begin{align}
\alpha^{-2} \mathcal N_\perp 
& = \mathcal{W}_{\alpha \varphi^{\mathrm{P}}}^{-1} \big( \alpha^{-2}\mathcal{N}_\perp  - \alpha^{-1} \phi( \varphi^{\P}_\perp ) +  \|  \varphi^{\P}_\perp \|^2_2  \big) \mathcal{W}_{\alpha \varphi^{\mathrm{P}}}  \notag\\
& \ge  \mathcal{W}_{\alpha \varphi^{\mathrm{P}}}^{-1} \tfrac{1}{2}\alpha^{-2}\mathcal{N}_\perp   \mathcal{W}_{\alpha \varphi^{\mathrm{P}}}  - 2  \|  \varphi^{\P}_\perp \|^2_2 - \tfrac{1}{2\alpha^2}\notag\\
& \ge  \mathcal{W}_{\alpha \varphi^{\mathrm{P}}}^{-1} \tfrac{1}{2}\alpha^{-2}\mathcal{N}_\perp   \mathcal{W}_{\alpha \varphi^{\mathrm{P}}}  -C\alpha^{-\frac{2}{7}}\ln \alpha.
\end{align}
Combining what we have so far and using the Weyl asymptotics $N\lesssim \Lambda^3=\alpha^{\frac{12}{7}}$, we arrive at  
\begin{align}
\label{Equation-LT_Estimate}
(1+ C \alpha^{-\frac{2}{7}}\ln \alpha ) \mathfrak H_\alpha \geq e^\mathrm{Pek}+\tau Q+ \tau'  \alpha^{-2}\mathcal{W}_{\alpha \varphi^{\mathrm{P}}}^{-1}\, \mathcal{N}\mathcal{W}_{\alpha \varphi^{\mathrm{P}}}-C \alpha^{-\frac{2}{7}}\ln \alpha
\end{align}
for some suitable constant $C $ and $\tau'=\min\{\tau,\tfrac{1}{2}\}$. As an immediate consequence, we obtain $Q(\mathfrak H_\alpha -e^\mathrm{Pek})Q\geq (\tau- C \alpha^{-\frac{2}{7}}\ln \alpha) Q$, which concludes the proof of the first statement of the lemma. 

Regarding the second statement, let  $\braket{\Psi, \mathfrak H_\alpha \Psi}\leq e^\mathrm{Pek}+\alpha^{-\frac{2}{7}}\ln \alpha$. Applying the inequality~(\ref{Equation-LT_Estimate}) yields 
\begin{align*}
&  \! \tau \|Q\Psi\|^2 \! + \!  \tau' \alpha^{-2} \braket{\Psi,\mathcal{W}_{\alpha \varphi^{\mathrm{P}}}^{-1}\, \mathcal{N}\mathcal{W}_{\alpha \varphi^{\mathrm{P}}}\Psi} \\[1mm]
 & \qquad \qquad \le  (1 \! + \! C \alpha^{-\frac{2}{7}} \! \ln \alpha )   \big(  e^\mathrm{Pek} \!  + \! \alpha^{-\frac{2}{7}} \! \ln \alpha  \big) \! - e^\mathrm{Pek} \! +  \! C \alpha^{-\frac{2}{7}}\ln \alpha.
\end{align*}
Since $1-\|\pi\Psi\|^2=\braket{\Psi,\pi_\perp^2\Psi}\leq 2 \braket{\Psi,\alpha^{\kappa -2}\mathcal{W}_{\alpha \varphi^{\mathrm{P}}}^{-1}\,\mathcal{N}\mathcal{W}_{\alpha \varphi^{\mathrm{P}}}\Psi}$, we conclude that\vspace{-1mm}
\begin{align}
\|Q\Psi\|^2+1-\|\pi\Psi\|^2\leq C \alpha^{\kappa -\frac{2}{7}}\ln \alpha.
\end{align}
This completes the proof of the lemma.
\end{proof}
The next result provides a lower bound for the energy evaluated in a normalized low-energy state $\Psi$ in terms of the energy evaluated in the state $ \pi \Psi / \| \pi \Psi \|$. Effectively, it shows that cutting off the number of excitations in an approximate ground state only comes at the cost of a subleading error.
\begin{lem}
\label{Lemma-Strong_BEC}
Let $D>0$ and consider a normalized state $\Psi\in L^2 (\Omega ) \otimes \mathcal{F} (L^2 (\Omega))$ satisfying $\braket{\Psi, \mathfrak H_\alpha \Psi}\leq \ms E^{(1)}(\alpha) + D \alpha^{-2}$ with $\ms E^{(1)} (\alpha) =\inf \sigma( \mathfrak H_\alpha )$. Then, there exists a constant $C>0$ such that $\braket{\pi \Psi, \mathfrak H_\alpha \pi \Psi } \| \pi \Psi \|^{-2} \leq \braket{\Psi, \mathfrak H_\alpha  \Psi} +C \alpha^{-(2+\frac{2}{7}-\kappa)}\ln \alpha$.
\end{lem}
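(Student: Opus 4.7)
The plan is to use an IMS-type localization formula to split $\mathfrak H_\alpha$ into $\pi$- and $\pi_\perp$-localized pieces plus a double-commutator remainder $\mathcal I$, and then to bound these pieces using (i) $\mathfrak H_\alpha \ge \ms E^{(1)}(\alpha)$ together with the hypothesis on $\Psi$, (ii) the weak condensation estimate of Lemma~\ref{Lemma-Weak_BEC}, and (iii) a Lieb--Yamazaki regularization of the singular coupling $v_x$ to control $\mathcal I$.

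Since $\pi^2 + \pi_\perp^2 = \mathds{1}$, the algebraic identity $[\pi,[\pi,\mathfrak H_\alpha]] = \pi^2\mathfrak H_\alpha - 2\pi\mathfrak H_\alpha\pi + \mathfrak H_\alpha\pi^2$ (and its counterpart for $\pi_\perp$) yields
\begin{align*}
\mathfrak H_\alpha \,=\, \pi\,\mathfrak H_\alpha\,\pi \,+\, \pi_\perp\,\mathfrak H_\alpha\,\pi_\perp \,+\, \mathcal I, \qquad \mathcal I \,:=\, \tfrac{1}{2}\big([\pi,[\pi,\mathfrak H_\alpha]] + [\pi_\perp,[\pi_\perp,\mathfrak H_\alpha]]\big).
\end{align*}
Taking the expectation in $\Psi$, using $\langle \pi_\perp\Psi,\mathfrak H_\alpha\pi_\perp\Psi\rangle \ge \ms E^{(1)}(\alpha)\|\pi_\perp\Psi\|^2$, and inserting the hypothesis $\langle \Psi,\mathfrak H_\alpha\Psi\rangle - \ms E^{(1)}(\alpha) \le D\alpha^{-2}$, one obtains
\begin{align*}
\langle \pi\Psi, \mathfrak H_\alpha \pi\Psi\rangle \,\le\, \|\pi\Psi\|^2\, \langle \Psi, \mathfrak H_\alpha \Psi\rangle \,+\, D\alpha^{-2}\|\pi_\perp\Psi\|^2 \,+\, |\langle \Psi, \mathcal I\Psi\rangle|.
\end{align*}
The upper bound $\ms E^{(1)}(\alpha) \le e^\Pek + C\alpha^{-2}$ from Section~\ref{sect:upper:bounds} forces $\langle \Psi,\mathfrak H_\alpha\Psi\rangle \le e^\Pek + \alpha^{-2/7}\ln\alpha$ for large $\alpha$, so Lemma~\ref{Lemma-Weak_BEC} applies and gives $\|\pi_\perp\Psi\|^2 \le C\alpha^{\kappa-2/7}\ln\alpha$ and $\|\pi\Psi\|^2 \ge 1/2$. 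After dividing by $\|\pi\Psi\|^2$, the second term on the right becomes $O(\alpha^{-(2+2/7-\kappa)}\ln\alpha)$, which is of the required size.

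It remains to estimate $|\langle \Psi,\mathcal I\Psi\rangle|$. Conjugation by $\mathcal W_{\alpha\varphi^\P}$ turns $\pi$ and $\pi_\perp$ into $f(\alpha^{\kappa-2}\mathcal N)$ and $f_\perp(\alpha^{\kappa-2}\mathcal N)$, respectively, and transforms $\mathfrak H_\alpha$ into a sum of $-\Delta_\Omega$, a bounded $x$-dependent multiplication operator, $\alpha^{-1}\phi(v_x \pm \varphi^\P)$, and $\alpha^{-2}\mathcal N$; since only the $\phi$-piece fails to commute with the functional calculus of $\mathcal N$, it alone contributes to the double commutators in $\mathcal I$. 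Writing $g(n) := f(\alpha^{\kappa-2}n)$, $h(n) := g(n+1) - g(n)$, and similarly $g_\perp, h_\perp$, a short algebraic calculation based on the shift rule $a^\sharp(F)\chi(\mathcal N) = \chi(\mathcal N \mp 1)\,a^\sharp(F)$ gives
\begin{align*}
\mathcal W_{\alpha\varphi^\P}\,\mathcal I\,\mathcal W_{\alpha\varphi^\P}^{-1} \,=\, \tfrac{1}{2\alpha}\sum_{\sharp\in\{\emptyset,\perp\}}\big(a^*(v_x \pm \varphi^\P)\,h_\sharp(\mathcal N)^2 \,+\, a(v_x \pm \varphi^\P)\,h_\sharp(\mathcal N - 1)^2\big).
\end{align*}
The mean value theorem yields $\|h_\sharp\|_\infty \le C\alpha^{\kappa-2}$, and each $h_\sharp$ vanishes outside $\mathcal N \lesssim \alpha^{2-\kappa}$. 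Setting $\widetilde\Psi := \mathcal W_{\alpha\varphi^\P}\Psi$, the singularity $v_x \notin L^2(\Omega)$ is handled by the Lieb--Yamazaki identity $\phi(v_x) = \sum_i [\nabla_\Omega^{(i)}, \phi(\nabla_\Omega^{(i)} g_x)]$ from the proof of Lemma~\ref{lem:X:phiY:bound}, integration by parts in the electron variable (Dirichlet boundary conditions), and the bound $\|a^\sharp(F)\Phi\| \le \|F\|_2\|(\mathcal N+1)^{1/2}\Phi\|$ applied with $F = \nabla_\Omega^{(i)} g_x$, whose $L^2$-norm is uniformly bounded in $x$ by Lemma~\ref{Lemma-Regularity_w}. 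Combined with $\|\nabla_\Omega\widetilde\Psi\| \le C$ inherited from Lemma~\ref{Lemma-Relative_Bound} and the operator bound $\|h_\sharp^2(\mathcal N+1)^{1/2}\|_{\mathrm{op}} \le C\alpha^{3\kappa/2-3}$ coming from the sup-norm and support of $h_\sharp$, this produces $|\langle \Psi,\mathcal I\Psi\rangle| \le C\alpha^{3\kappa/2-4}$; the $\phi(\varphi^\P)$-contribution is treated analogously and is easier since $\varphi^\P \in L^2(\Omega)$.

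For $\kappa \in (0,2/7)$, the bound $C\alpha^{3\kappa/2-4}$ is much smaller than $\alpha^{-(2+2/7-\kappa)}\ln\alpha$, so dividing the main inequality by $\|\pi\Psi\|^2 \ge 1/2$ concludes. The principal technical obstacle is the matrix-element estimate just described: the singularity $v_x \notin L^2(\Omega)$ rules out any direct Cauchy--Schwarz bound on $a(v_x)h_\sharp^2 \widetilde\Psi$, so the Lieb--Yamazaki trick has to be propagated through the full double commutator; the necessary powers of $\alpha^{-1}$ appear only via the simultaneous use of $\|h_\sharp\|_\infty \le C\alpha^{\kappa-2}$, the support constraint $\mathcal N \lesssim \alpha^{2-\kappa}$, and the $H^1$-regularity of $\widetilde\Psi$. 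The range $\kappa \in (0,2/7)$ dictated by Lemma~\ref{Lemma-Weak_BEC} is precisely what keeps both $D\alpha^{-2}\|\pi_\perp\Psi\|^2$ and $|\langle \Psi,\mathcal I\Psi\rangle|$ subleading relative to the target error $\alpha^{-(2+2/7-\kappa)}\ln\alpha$.
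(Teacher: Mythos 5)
Your proposal is correct and follows essentially the same route as the paper's proof: the IMS localization identity, the variational lower bound $\braket{\pi_\perp\Psi,\mathfrak H_\alpha\pi_\perp\Psi}\ge \ms E^{(1)}(\alpha)\|\pi_\perp\Psi\|^2$ combined with Lemma~\ref{Lemma-Weak_BEC} to control $\|\pi_\perp\Psi\|^2$ and $\|\pi\Psi\|^{-2}$, and the Lieb--Yamazaki regularization $v_x=[\nabla_\Omega,\nabla_\Omega g_x]$ together with the sup-norm and support of the shift-difference of $f_\bullet$ to bound the double-commutator remainder. The only differences are presentational — you conjugate by $\mathcal W_{\alpha\varphi^{\mathrm{P}}}$ before computing the commutators and track the support constraint $\mathcal N\lesssim\alpha^{2-\kappa}$ slightly more finely, obtaining $O(\alpha^{3\kappa/2-4})$ where the paper settles for $O(\alpha^{2\kappa-4})\,\braket{\Psi,(-\Delta_\Omega+\alpha^{-2}\mathcal N+1)\Psi}$; both are subleading relative to $\alpha^{-(2+\frac{2}{7}-\kappa)}\ln\alpha$ for $\kappa\in(0,\tfrac{2}{7})$.
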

\begin{proof}
According to the IMS identity, see \cite[Theorem A.1]{LS} and \cite[Proposition 6.1]{LNSS}, we have $\pi \mathfrak H_\alpha  \pi+\pi_\perp \mathfrak H_\alpha \pi_\perp =  \mathfrak H_\alpha -\left[\left[ \mathfrak H_\alpha ,\pi\right],\pi\right]-\left[\left[\mathfrak H_\alpha ,\pi_\perp\right],\pi_\perp\right]$. Furthermore, we can evaluate the double commutators as 
\begin{align*}
\braket{\Psi,\left[\left[\mathfrak H_\alpha ,\pi_{ \bullet }\right],\pi_{ \bullet }\right]\Psi}=\mathfrak{Re}\Big\langle \Psi,\gamma_{ \bullet  }^2\left(\alpha^{-1}a(v_x + \varphi^\mathrm{P})-\braket{v_x + \varphi^\mathrm{P},\varphi^\mathrm{P}}\right)\Psi \Big\rangle
\end{align*}
with
\begin{align*}
\gamma_{ \bullet }:=f_{ \bullet  }\!\left(\alpha^{\kappa-2}\left(\mathcal{W}_{\alpha \varphi^{\mathrm{P}}}^{-1}\, \mathcal{N}\mathcal{W}_{\alpha \varphi^{\mathrm{P}}}+ 1  \right)\right)-f_{ \bullet  }\!\left(\alpha^{\kappa-2}\mathcal{W}_{\alpha \varphi^{\mathrm{P}}}^{-1}\, \mathcal{N}\mathcal{W}_{\alpha \varphi^{\mathrm{P}}}\right),
\end{align*}
where $\bullet \in \{ \varnothing , \perp\}$. Using $v_x=  [\nabla_{\Omega} ,  \nabla_\Omega g_x]$ with $g_x(y) = (-\Delta_\Omega)^{-3/2}(x,y)$, see Lemma \ref{Lemma-Regularity_w}, in combination with the Cauchy--Schwarz inequality yields
\begin{align}
\braket{\Psi,\left[\left[ \mathfrak H_\alpha ,\pi_{ \bullet }\right],\pi_{ \bullet }\right]\Psi} & =  \mathfrak{Re} \langle \Psi,\left[\nabla_\Omega ,\gamma_{ \bullet }^2\left(\alpha^{-1}a(\nabla_\Omega g_x )-\braket{\nabla_\Omega g_x  ,\varphi^\mathrm{P}}\right)\right]\Psi \rangle \notag \\
&\quad  -\mathfrak{Re} \langle \Psi,\gamma_{ \bullet}^2\left(\alpha^{-1}a(\varphi^\mathrm{P})-\braket{\varphi^\mathrm{P},\varphi^\mathrm{P}}\right)\Psi \rangle \notag \\
& \leq \alpha^{2\kappa-4}\braket{\Psi,(-\Delta_\Omega ) \Psi}+\tfrac{3}{2}T_{1,\bullet}+\tfrac{3}{2}T_{2,\bullet}+T_{3,\bullet}+T_4 \label{eq:double:commutator}
\end{align}
with 
\begin{align}
T_{1,\bullet} & :=\alpha^{4-2\kappa}\big\|\gamma_{ \bullet }^2\big (\alpha^{-1}a(\nabla_\Omega g_x )-\braket{\nabla_\Omega g_x ,\varphi^\mathrm{P}}\big)\Psi\big\|^2 , \\
T_{2,\bullet} & :=\alpha^{4-2\kappa}\big\| \big(\alpha^{-1}a^*(\nabla_\Omega g_x )-\braket{\nabla_\Omega g_x ,\varphi^\mathrm{P}}\big) \gamma_{ \bullet }^2\Psi\big\|^2 \\ 
T_{3,\bullet} & :=\alpha^{4-2\kappa}\big\|\gamma_{ \bullet }^2 \Psi\big\|^2,\\
T_{4} & :=\alpha^{2\kappa-4}\big\| \big(\alpha^{-1}a(\varphi^\mathrm{P})-\braket{\varphi^\mathrm{P},\varphi^\mathrm{P }}\big)\Psi\big\|^2 .
\end{align}
Using $\|\gamma_{\bullet}\|\leq \alpha^{\kappa-2}\|f_{\bullet}'\|_\infty$, we can estimate $T_{3,\bullet}\leq \alpha^{2\kappa-4}\|f_{\bullet}'\|_\infty^4 $ and
\begin{align*}
T_{1,\bullet}& \leq 2\alpha^{2\kappa -4}\|f_{\bullet}'\|_\infty^4\big(\alpha^{-2}\|a(\nabla_\Omega g_x  )\Psi\|^2+\|\braket{\nabla_\Omega g_x ,\varphi^\mathrm{P}}\Psi\|^2\big)\\[1mm]
&\leq 2\alpha^{2\kappa-4}  \|f_{\bullet}'\|_\infty^4 \sup_{x\in \Omega}\|\nabla_\Omega g_x \|^2_2 \big( \braket{\Psi,\alpha^{-2}\mathcal{N}\Psi}+ \|\varphi^\mathrm{P}\|^2_2 \big) .
\end{align*}
Similarly $T_4 \leq 2\alpha^{2\kappa - 4} \|\varphi^\mathrm{P} \|^2_2  ( \braket{\Psi,\alpha^{-2}\mathcal{N}\Psi}+ \| \varphi^\mathrm{P} \|^2_2 ) $. In order to estimate $T_2$, note the operator inequality
\begin{align}
& \big(\alpha^{-1}a(\nabla_\Omega g_x )-\braket{\nabla_\Omega g_x ,\varphi^\P}\big) \! \big(\alpha^{-1}a^*(\nabla_\Omega g_x )-\braket{\nabla_\Omega g_x ,\varphi^\P }\big) \!   \notag\\[1mm]
& \hspace{6.5cm} \le \! \sup_{x\in \Omega}\|\nabla g_x \|^2_2 \alpha^{-2}\big(\mathcal{W}_{\alpha \varphi^\P}^{-1}\mathcal{N}\mathcal{W}_{\alpha \varphi^\P}+1\big)
\end{align}
and therefore
\begin{align*}
T_{2,\bullet} &\leq \alpha^{4-2\kappa}\sup_{x\in \Omega}\|\nabla_\Omega g_x \|^2_2\braket{\Psi,\gamma_{ \bullet }^2\big(\alpha^{-2}\mathcal{W}_{\alpha \varphi^\P}^{-1}\, \mathcal{N}\mathcal{W}_{\alpha \varphi^\P }+\alpha^{-2}\big)\gamma_{ \bullet }^2\Psi}\\
&=\alpha^{4-2\kappa}\sup_{x\in \Omega}\|\nabla_\Omega g_x \|_2^2\braket{\Psi,g_\bullet\big(\alpha^{-2}\mathcal{W}_{\alpha \varphi^{\mathrm{P}}}^{-1}\, \mathcal{N}\mathcal{W}_{\alpha \varphi^{\mathrm{P}}}\big)\Psi}
\end{align*}
with $g_\bullet(t):= \big(f_{ \bullet } (\alpha^{\kappa } (t+\frac{1}{\alpha^2} ) )-f_{ \bullet }  (\alpha^{\kappa }t ) \big)^4(t+\alpha^{-2})$. Since $|g_\bullet (t)|\leq 2\alpha^{4\kappa-8}\|f_\bullet '\|_\infty^4 $ we obtain $T_{2,\bullet}\leq 2\alpha^{2\kappa-4}\sup_{x\in \Omega}\|\nabla_\Omega g_x \|^2_2 \|f_\bullet '\|_\infty^4$. Combining the above estimates with \eqref{eq:double:commutator}, therefore yields $\braket{\Psi,\left[\left[ \mathfrak H_\alpha ,\pi_{ \bullet }\right],\pi_{ \bullet }\right]\Psi}\leq C \alpha^{2\kappa-4}\braket{\Psi,(-\Delta_\Omega +\alpha^{-2}\mathcal{N}+1)\Psi}$ for a suitable constant $C$. Using Lemma \ref{Lemma-Relative_Bound} and the fact that $\braket{\Psi,\mathfrak H_\alpha  \Psi}\leq \ms E^{(1)}( \alpha) +\frac{D}{\alpha^2}$ is bounded, we obtain for $\delta =2-2\kappa$
\begin{align*}
\braket{\pi \Psi, \mathfrak H_\alpha \pi\Psi}+\braket{\pi_\perp \Psi, \mathfrak H_\alpha \pi_\perp\Psi} & =\braket{\Psi,\big( \mathfrak H_\alpha -\left[\left[\mathfrak H_\alpha ,\pi\right],\pi\right]-\left[\left[\mathfrak H_\alpha ,\pi_\perp\right],\pi_\perp\right]\big)\Psi}\\[1mm]
&\leq \braket{\Psi,\mathfrak H_\alpha \Psi}+D_1 \alpha^{-(2+\delta)}
\end{align*}
and therefore
\begin{align}
\label{Equation-IMS}
\braket{\pi \Psi, \mathfrak H_\alpha  \pi\Psi}+(1-\|\pi\Psi\|^2) \ms E^{(1)}( \alpha ) & \leq \braket{\pi \Psi, \mathfrak H_\alpha \pi\Psi} +\braket{\pi_\perp \Psi, \mathfrak H_\alpha \pi_\perp\Psi} \notag\\[1mm]
& \leq \braket{\Psi, \mathfrak H_\alpha \Psi}+ C  \alpha^{-(2+\delta )}.
\end{align}
As a direct consequence of our assumption regarding the state $\Psi$, we further have $\braket{\Psi, \mathfrak H_\alpha  \Psi}\leq (1-\|\pi\Psi\|^2) ( \ms E^{(1)} ( \alpha ) +\frac{C}{\alpha^2} )+\|\pi\Psi\|^2 \braket{\Psi , \mathfrak H_\alpha \Psi}$. In combination with Eq.~(\ref{Equation-IMS}) this yields
\begin{align*}
\Big\langle\frac{\pi \Psi}{\|\pi \Psi\|}, \mathfrak H_\alpha \frac{\pi \Psi}{\|\pi \Psi\|}\Big\rangle\leq \braket{\Psi, \mathfrak H_\alpha   \Psi}+\frac{C}{\alpha^2} \, \frac{1-\|\pi\Psi\|^2}{\|\pi\Psi\|^2} + C  \frac{\alpha^{-(2+\delta )}}{\|\pi\Psi\|^2}.
\end{align*}
By Lemma \ref{Lemma-Weak_BEC} we know that $0\leq 1-\|\pi\Psi\|^2\leq C \alpha^{\kappa-\frac{2}{7}}\ln \alpha$, which concludes the proof of Lemma \ref{Lemma-Strong_BEC} since $\frac{2}{7}-\kappa \leq \delta =2-2\kappa $ for $\kappa< \frac{2}{7}$.
\end{proof}

Next, we consider a block decomposition of $ {\mathfrak H}_\alpha$ w.r.t. the projections $P$, $Q$ and estimate $\mathfrak H_\alpha - e^\Pek$ by the corresponding Feshbach--Schur complement. To this end, recall the definition of the resolvent $\mf R_\alpha$ from \eqref{eq:full:resolvent}.
\begin{lem}
\label{Lemma-Algebra}
The following expression defines a quadratic form on $Q(\mathfrak H_\alpha)$,
\begin{align}\label{eq:def:FP}
\mathfrak {F}_P :=P\Big( \mathfrak H_\alpha -e^\mathrm{Pek} + ( \mathfrak H_\alpha  -e^\mathrm{Pek}) \mf R_\alpha (\mathfrak H_\alpha  -e^\mathrm{Pek})\Big)P,
\end{align}
and satisfies the estimate $\mathfrak H_\alpha - e^\mathrm{Pek}  \geq \mathfrak {F}_{P}$.
\end{lem}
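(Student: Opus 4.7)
The plan is to prove the lemma by the standard Feshbach--Schur completion of the square, applied to the block decomposition induced by $P + Q = \mathds 1$. Writing $H := \mf H_\alpha - e^\Pek$ for brevity, the key observation is that $\mf R_\alpha$ is, by construction, the bounded inverse of $-QHQ$ on $\mathrm{Ran}\, Q$: indeed, Lemma~\ref{Lemma-Weak_BEC} yields $QHQ \ge \delta Q$ for all sufficiently large $\alpha$, so $\mf R_\alpha$ is bounded and non-positive. With this in hand, I would aim to establish, for every $\Psi$ in the form domain, the quadratic-form identity
\begin{align*}
\langle \Psi, (H - \mf F_P)\Psi\rangle \ = \ \langle \xi, QHQ\, \xi\rangle, \qquad \xi := Q\Psi - \mf R_\alpha H P \Psi,
\end{align*}
from which $\mf H_\alpha - e^\Pek \ge \mf F_P$ follows immediately using $QHQ \ge 0$.

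Before verifying the identity I would settle the definition of $\mf F_P$ as a quadratic form on $Q(\mf H_\alpha) = Q(-\Delta_\Omega + \mathcal N)$. Since $\psi^\P$ is regular enough that $-\Delta_\Omega \psi^\P \in L^2(\Omega)$ (a consequence of the Euler--Lagrange equation together with $(-\Delta_\Omega)^{-1}|\psi^\P|^2 \in L^\infty(\Omega)$), $P$ preserves the form domain and the first summand $\langle P\Psi, HP\Psi\rangle$ of $\mf F_P$ is finite. For the second summand I would use $\mf R_\alpha P = 0$ to rewrite
\begin{align*}
\langle P\Psi, PH\mf R_\alpha H P\Psi\rangle \ = \ -\big\| (-\mf R_\alpha)^{1/2} H P \Psi \big\|^2,
\end{align*}
after which only the pieces $Q(-\Delta_\Omega)P$ and $\alpha^{-1}\phi(v_\cdot)P$ of $HP$ survive on the right. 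The first is bounded; the formally singular second one is tamed precisely by the commutator estimate $\|(-\mf R_\alpha)^{1/2}\phi(v_\cdot) P (\mathcal N + 1)^{-1/2}\| < \infty$ proved in Lemma~\ref{lem:X:phiY:bound}, so finiteness of the form on $Q(\mf H_\alpha)$ follows.

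With well-definedness in hand the proof reduces to an algebraic identity, which I would derive from the three relations $\mf R_\alpha = Q\mf R_\alpha Q$, $QHQ\, \mf R_\alpha = -Q$ on $\mathrm{Ran}\, Q$, and $\mf R_\alpha H Q\Psi = -Q\Psi$ (the last one following from $\mf R_\alpha P = 0$ combined with $\mf R_\alpha QHQ = -Q$). Writing $\psi_P = P\Psi$, $\psi_Q = Q\Psi$ and $a := \langle \psi_P, H\psi_Q\rangle$, the Hamiltonian expectation splits as $\langle \Psi, H\Psi\rangle = \langle \psi_P, H\psi_P\rangle + 2\Re\, a + \langle \psi_Q, H\psi_Q\rangle$, whereas $\langle \Psi, \mf F_P \Psi\rangle = \langle \psi_P, H\psi_P\rangle + \langle H\psi_P, \mf R_\alpha H\psi_P\rangle$. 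Expanding $\langle \xi, QHQ\xi\rangle$ for $\xi = \psi_Q - \mf R_\alpha H\psi_P$ and simplifying with the listed identities (in particular $QHQ\, \mf R_\alpha H\psi_P = -QH\psi_P$ and $\mf R_\alpha H\psi_Q = -\psi_Q$) should produce exactly the difference $2\Re a + \langle \psi_Q, H\psi_Q\rangle - \langle H\psi_P, \mf R_\alpha H\psi_P\rangle$, confirming the identity.

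The main obstacle I anticipate is not the bookkeeping but the rigorous justification of this identity on the full form domain, since several of the operator products (notably $HP\Psi$ and $H\psi_Q$) are a priori defined only in the quadratic-form or distributional sense. I would handle this either by first establishing the identity on the operator domain $D(\mf H_\alpha)$, where every manipulation is pointwise valid, and then extending by density using the form bounds sketched above, or by regularising the resolvent (e.g.~replacing $\mf R_\alpha$ by $\mf R_\alpha (1 + \varepsilon \mathcal N)^{-1}$), performing the algebra for each $\varepsilon > 0$, and passing to the limit $\varepsilon \to 0$ after all expressions have been reduced to finite numbers.
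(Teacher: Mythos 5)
Your proposal is correct and follows essentially the same route as the paper: the identity $\langle\Psi,(H-\mf F_P)\Psi\rangle=\langle\xi,QHQ\,\xi\rangle$ with $\xi=Q\Psi-\mf R_\alpha HP\Psi$ is precisely the paper's operator identity $\mathfrak H_\alpha-e^\Pek=\mathfrak F_P+X^*Q(\mathfrak H_\alpha-e^\Pek)QX$ with $X=Q-\mf R_\alpha Q(\mathfrak H_\alpha-e^\Pek)P$, combined with the positivity $QHQ\ge 0$ from Lemma \ref{Lemma-Weak_BEC}. Your additional care about well-definedness on the form domain (via Lemma \ref{lem:X:phiY:bound}) is a sensible elaboration of a point the paper treats briefly.
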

\begin{proof}
By Lemma \ref{Lemma-Weak_BEC} it is clear that $Q(\mathfrak H_\alpha  - e^\mathrm{Pek})Q$ is invertible as an operator on $Q L^2\!\left(\Omega \right)\otimes \mathcal{F} $ with bounded inverse $ - \mf R_\alpha$, and therefore $\mathfrak {F}_P$ is well-defined. Furthermore, we have the algebraic identity $\mathfrak H_\alpha -e^\mathrm{Pek}=\mathfrak{F}_P+X^* Q( \mathfrak H_\alpha -e^\mathrm{Pek})Q X$ with $X:=Q -  \mf R_\alpha Q(\mathfrak H_\alpha -e^\mathrm{Pek})P$. Since $Q(\mathfrak H_\alpha -e^\mathrm{Pek})Q\geq 0$ by Lemma \ref{Lemma-Weak_BEC} it is clear that $X^* Q(\mathfrak H_\alpha -e^\mathrm{Pek})Q X\geq 0$ as well, which concludes the proof of Lemma \ref{Lemma-Algebra}.
\end{proof}

In the next lemma, we compare the min-max values of $\mf H_\alpha$ with the min-max values of the operator $\pi \mf F_P \pi $ with $\pi$ defined by \eqref{eq:def:pi}.

\begin{lem}
\label{Lemma-Min_Max}
Let $\ms E^\n( \alpha )$ denote the $nth$ min-max value of $\mathfrak H_\alpha$ as defined in \eqref{def:minmax:values} and assume that $\ms E^\n( \alpha ) \le \ms E^{(1)}(\alpha) + D \alpha^{-2}$ for some $D>0$ and all large $\alpha$. Then, there exist $C>0$, $\kappa\in(0,\tfrac{2}{7})$ such that
\begin{align*}
\ms E^{(n)}(\alpha) \geq e^\mathrm{Pek}+\underset{\mathrm{dim} \mathcal{V}=n }{\inf_{\mathcal{V}\subset \pi\mathcal{F}  }}\, \, \,  \sup_{\Psi\in \mathcal{V}:\|\Psi\|=1}\braket{\Psi,\mathfrak{F}_P\Psi}-C \alpha^{-(2+\frac{2}{7}-\kappa) }\ln \alpha,
\end{align*}
where we interpret $\mathfrak {F}_P$ as a quadratic form on $\mathcal{F}$ via $ \textnormal{Ran}P \otimes \mathcal F \cong \mathcal  F$.
\end{lem}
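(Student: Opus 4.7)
The strategy is to bound the $n$-th min-max value of $\mathfrak H_\alpha$ from below by the $n$-th min-max value of $\mathfrak F_P$ on $\pi \mathcal F$. For any $n$-dimensional trial subspace $\mathcal V \subset Q(\mathfrak H_\alpha)$ that is nearly optimal for $\ms E^{(n)}(\alpha)$, I would construct an $n$-dimensional $\tilde{\mathcal V} \subset \pi \mathcal F$ as the image of the linear map $T\Psi := \pi P \Psi$ (using the identification $\mathrm{Ran}\,P \otimes \mathcal F \cong \mathcal F$, and that $[P,\pi]=0$ because $P$ and $\pi$ act on different tensor factors) and then control the $\mathfrak F_P$-expectations on $\tilde{\mathcal V}$ via Lemmas \ref{Lemma-Algebra} and \ref{Lemma-Strong_BEC}.

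First, I would fix $\epsilon > 0$ and choose $\mathcal V$ with $\dim \mathcal V = n$ and $\sup_{\Psi \in \mathcal V,\, \|\Psi\|=1} \langle \Psi, \mathfrak H_\alpha \Psi \rangle \leq \ms E^{(n)}(\alpha) + \epsilon$. By the hypothesis on $\ms E^{(n)}(\alpha)$, every normalized $\Psi \in \mathcal V$ then satisfies the low-energy assumption of Lemmas \ref{Lemma-Weak_BEC} and \ref{Lemma-Strong_BEC}. Orthogonality of $P$ and $Q$ combined with $[P,\pi]=[Q,\pi]=0$ gives
\begin{align*}
\|T\Psi\|^2 = \|\pi\Psi\|^2 - \|\pi Q \Psi\|^2 \geq \|\pi\Psi\|^2 - \|Q\Psi\|^2 \geq 1 - C \alpha^{\kappa - 2/7} \ln \alpha
\end{align*}
by Lemma \ref{Lemma-Weak_BEC}, so for $\alpha$ large $T$ is injective and $\tilde{\mathcal V} := T \mathcal V \subset \pi \mathcal F$ has dimension $n$.

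Next I would combine Lemmas \ref{Lemma-Algebra} and \ref{Lemma-Strong_BEC}. Applying Lemma \ref{Lemma-Algebra} at the vector $\pi \Psi$ yields $\langle \pi \Psi, \mathfrak F_P \pi \Psi \rangle \leq \langle \pi \Psi, (\mathfrak H_\alpha - e^{\Pek}) \pi \Psi \rangle$, where the left-hand side equals $\langle T\Psi, \mathfrak F_P T\Psi \rangle$ (since $\mathfrak F_P = P(\cdots)P$ and $[P,\pi]=0$), while Lemma \ref{Lemma-Strong_BEC} bounds the right-hand side by $\|\pi\Psi\|^2 (\langle \Psi, \mathfrak H_\alpha \Psi\rangle - e^{\Pek}) + C \alpha^{-(2+2/7-\kappa)}\ln\alpha$. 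Dividing by $\|T\Psi\|^2$ and using $\|\pi\Psi\|^2/\|T\Psi\|^2 \leq 1 + C' \alpha^{\kappa-2/7}\ln\alpha$ together with $\langle \Psi, \mathfrak H_\alpha \Psi\rangle - e^{\Pek} = O(\alpha^{-2})$ would give
\begin{align*}
\frac{\langle T\Psi, \mathfrak F_P T\Psi\rangle}{\|T\Psi\|^2} \leq \ms E^{(n)}(\alpha) - e^{\Pek} + \epsilon + C'' \alpha^{-(2+2/7-\kappa)}\ln\alpha.
\end{align*}
Taking the sup over unit $\Phi = T\Psi/\|T\Psi\| \in \tilde{\mathcal V}$, noting that the infimum over $n$-dimensional subspaces of $\pi \mathcal F$ on the right-hand side of the lemma is at most this sup for our specific $\tilde{\mathcal V}$, and sending $\epsilon \to 0$ would conclude the proof.

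The hard part will be dimension preservation of $T$ together with the interplay between the normalization factor $\|T\Psi\|^{-2}$ and the leading $O(\alpha^{-2})$ energy gap: both the weak BEC error from Lemma \ref{Lemma-Weak_BEC} and the IMS commutator error from Lemma \ref{Lemma-Strong_BEC} must combine so that the multiplicative correction $1 + O(\alpha^{\kappa-2/7}\ln\alpha)$ applied to the $O(\alpha^{-2})$ gap still reproduces the target remainder $O(\alpha^{-(2+2/7-\kappa)}\ln\alpha)$, which is precisely where the constraint $\kappa < 2/7$ becomes essential.
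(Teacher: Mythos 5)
Your proposal is correct and follows essentially the same route as the paper: near-optimal trial subspace, Lemma \ref{Lemma-Strong_BEC} to pass to $\pi\Psi$, Lemma \ref{Lemma-Algebra} to replace $\mathfrak H_\alpha-e^{\Pek}$ by $\mathfrak F_P$, control of the normalization $\|P\pi\Psi\|^{-2}$ via Lemma \ref{Lemma-Weak_BEC} combined with the a priori $O(\alpha^{-2})$ bound on the energy gap, and an injectivity argument for $P\pi$ to preserve the dimension. The only cosmetic difference is that the paper absorbs the normalization correction by first showing $\langle P\pi\Psi,\mathfrak F_P P\pi\Psi\rangle\|P\pi\Psi\|^{-2}\le C\alpha^{-2}$ by bootstrapping its own min-max inequality, whereas you bound $\langle\Psi,\mathfrak H_\alpha\Psi\rangle-e^{\Pek}=O(\alpha^{-2})$ directly; these are equivalent.
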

\begin{proof} Given an arbitrary $0 \le \epsilon \le C\alpha^{-2}$, let $\mathcal V  \subseteq L^2(\Omega ) \otimes \mathcal F $ be an $n$ dimensional subspace satisfying $\sup_{\Psi\in \mathcal{V}:\|\Psi\|=1}\braket{\Psi , \mathfrak H_\alpha \Psi}\leq \ms E^{(n)}(\alpha) +\epsilon $.  By our assumption on $\ms E^{(n)}(\alpha)$, any state $\Psi\in \mathcal{V}$ satisfies the assumption of Lemma \ref{Lemma-Strong_BEC}. Consequently,
\begin{align*}
\ms E^{(n)}(\alpha) \geq  \sup_{\Psi\in \mathcal{V}:\|\Psi\|=1}\Big\langle \tfrac{\pi \Psi}{\|\pi \Psi\|}, \mathfrak H_\alpha  \tfrac{\pi \Psi}{\|\pi \Psi\|}\Big\rangle-C \alpha^{-(2+\frac{2}{7}-\kappa )}\ln \alpha-\epsilon.
\end{align*}
In combination with the estimate $\mathfrak H_\alpha \geq e^\mathrm{Pek}+\mathfrak{F}_P $ derived in Lemma \ref{Lemma-Algebra}, and the trivial observation that $\mathfrak{F}_P =P\mathfrak{F}_P P$, this yields
\begin{align}
\label{Equation-Min_Max}
\ms E^{(n)}(\alpha) &\geq  e^\mathrm{Pek}+\sup_{\Psi\in \mathcal{V}:\|\Psi\|=1}\Big\langle \tfrac{P\pi \Psi}{\|\pi \Psi\|},\mathfrak {F}_P \tfrac{P\pi \Psi}{\|\pi \Psi\|}\Big\rangle-C \alpha^{-(2+\frac{2}{7}-\kappa)}\ln \alpha-\epsilon \notag \\ 
&= e^\mathrm{Pek}+ \! \! \sup_{\Psi\in \mathcal{V}:\|\Psi\|=1} \! \! \left(1-\tfrac{\|Q\pi \Psi\|^2}{\|\pi\Psi\|^2}\right)\Big\langle \tfrac{P\pi \Psi}{\|P\pi \Psi\|},\mathfrak {F}_P \tfrac{P\pi \Psi}{\|P\pi \Psi\|}\Big\rangle-C \alpha^{-(2+\frac{2}{7}-\kappa )}\ln \alpha-\epsilon.
\end{align}
By Lemma \ref{Lemma-Weak_BEC} and $0\le \pi \le 1$, we have $1- C  \alpha^{\kappa -\frac{2}{7}}\ln \alpha \leq 1-\tfrac{\|Q\Psi\|^2}{\|\pi\Psi\|^2}\leq 1-\tfrac{\|Q\pi \Psi\|^2}{\|\pi\Psi\|^2}\leq 1$. Furthermore, Eq.~(\ref{Equation-Min_Max}) together with the assumption $\ms E^\n(\alpha) \leq \ms E^{(1)}( \alpha ) +\frac{C}{\alpha^2}\leq e^\mathrm{Pek}+\frac{C}{\alpha^2}$ imply
\begin{align*}
&  \Big\langle \tfrac{P\pi \Psi}{\|P\pi \Psi\|},\mathfrak {F}_P \tfrac{P\pi \Psi}{\|P\pi \Psi\|}\Big\rangle\\ 
& \quad \le \left(1-\tfrac{\|Q\pi \Psi\|^2}{\|\pi\Psi\|^2}\right)^{-1}  \left( \ms E^{(n)}(\alpha) -e^\mathrm{Pek}+\epsilon+C \alpha^{-(2+\frac{2}{7}-\kappa)}\ln \alpha\right)\leq \frac{C}{\alpha^{2}}
\end{align*}
for any $\Psi\in \mathcal{V}$. Consequently, 
\begin{align}
\left(1-\tfrac{\|Q\pi \Psi\|^2}{\|\pi\Psi\|^2}\right)\Big\langle \tfrac{P\pi \Psi}{\|P\pi \Psi\|},\mathfrak{F}_P \tfrac{P\pi \Psi}{\|P\pi \Psi\|}\Big\rangle\geq \Big\langle \tfrac{P\pi \Psi}{\|P\pi \Psi\|},\mathfrak{F}_P \tfrac{P\pi \Psi}{\|P\pi \Psi\|}\Big\rangle- C \alpha^{-(2+\frac{2}{7}-\kappa)}\ln \alpha 
\end{align} 
for any $\Psi\in \mathcal{V}$. Using Eq.~(\ref{Equation-Min_Max}) again, we arrive at
\begin{align}
\ms E^\n ( \alpha)  & \geq  e^\mathrm{Pek}+\sup_{\Psi\in \mathcal{V}:\|\Psi\|=1}\Big\langle \tfrac{P\pi \Psi}{\|P\pi \Psi\|},\mathfrak{F}_P \tfrac{P\pi \Psi}{\|P\pi \Psi\|}\Big\rangle- C \alpha^{-(2+\frac{2}{7}-\kappa )}\ln \alpha \notag \\
&=e^\mathrm{Pek}+\sup_{\Psi\in P\pi\mathcal{V}:\|\Psi\|=1}\Big\langle \Psi ,\mathfrak{F}_P \Psi \Big\rangle- C \alpha^{-(2+\frac{2}{7}-\kappa )}\ln \alpha.
\end{align}
Choosing $\alpha$ large enough such that $\|P\pi \Psi\|\geq 1-\|Q\Psi\|-\|(1-\pi)\Psi\|\geq 1-C \alpha^{\kappa -\frac{2}{7}}\ln \alpha>0$ by Lemma \ref{Lemma-Weak_BEC}, we obtain that $P\pi:\mathcal{V}\longrightarrow P\pi\mathcal{V}$ is injective and consequently $\dim P\pi\mathcal{V}=\dim \mathcal{V}=n  $. Furthermore, $P\pi \mathcal{V}\subset P L^2\!\left(\Omega \right)\otimes \pi\mathcal{F} \cong \pi\mathcal{F} $, and thus the proof is completed by taking the infimum over all $n$ dimensional subspaces of $\pi \mathcal F$.
\end{proof}

Finally, we compare $\mf F_P$ with the Bogoliubov Hamiltonian \eqref{eq:Bog:Hamiltonian}.

\begin{lem}
\label{Lemma-Bogoliubov}
Let $\pi$ and $\mf F_P$ be defined as in \eqref{eq:def:pi} and \eqref{eq:def:FP}. There exist constants $C,\gamma > 0$ such that
\begin{align*}
 \pi   \mf{F}_P    \pi \geq (1-C \alpha^{-\gamma} )\, \pi (  \mathcal W_{\alpha \varphi^{\P}}^{-1}  \, \alpha^{-2} \mb H_0 \mathcal W_{\alpha \varphi^{\P}} )  \pi -C \alpha^{-(2+\gamma)}
\end{align*}
as quadratic forms on $\mathcal F$ for all large $\alpha$.
\end{lem}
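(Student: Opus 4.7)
The plan is to apply the Weyl unitary $\mathcal W := \mathcal W_{\alpha\varphi^\P}$ (which commutes with $P$ and $Q$ as it acts only on the Fock factor) to reduce the claim to an inequality on the transformed frame where Pekar structure is manifest, algebraically identify $\alpha^{-2}\mb H_0$ as the principal part of the Schur complement, and then bound the remainder via a resolvent expansion controlled on the range of the cutoff.

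\emph{Step 1 (Reduction and algebraic simplification).} Writing $\widetilde{\mf H}_\alpha := \mathcal W^{-1}(\mf H_\alpha - e^\Pek)\mathcal W = H_0 + \alpha^{-1}\phi + \alpha^{-2}\mathcal N$ with $\phi := \phi(v_x+\varphi^\P)$, and $\widetilde{\mf R}_\alpha := -Q(Q\widetilde{\mf H}_\alpha Q)^{-1}Q = \mathcal W^{-1}\mf R_\alpha\mathcal W$, the inequality is equivalent (after unitary conjugation, which relates $\pi$ to $\widetilde\pi := f(\alpha^{\kappa-2}\mathcal N)$) to an inequality for the transformed Schur complement $P\widetilde{\mf H}_\alpha P + P\widetilde{\mf H}_\alpha\widetilde{\mf R}_\alpha\widetilde{\mf H}_\alpha P$ sandwiched by $\widetilde\pi$. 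Using $H_0\psi^\P=0$, $[\mathcal N,P]=0$, the Euler--Lagrange relation $\varphi^\P(y) = -\langle\psi^\P, v_\cdot(y)\psi^\P\rangle$ (which gives $P\phi P=0$), and $Q\mathcal N P=0$, one obtains $P\widetilde{\mf H}_\alpha P = \alpha^{-2}\mathcal N$ and $Q\widetilde{\mf H}_\alpha P = \alpha^{-1}Q\phi P$, so the transformed Schur complement equals $\alpha^{-2}\mathcal N + \alpha^{-2}P\phi\widetilde{\mf R}_\alpha\phi P$.

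\emph{Step 2 (Identification of the Bogoliubov piece).} Since $PR=RP=0$ and $[\phi(\varphi^\P),R]=0$, the cross-terms vanish and $P\phi R\phi P = P\phi(v_x)R\phi(v_x)P = \mb H_0 - \mathcal N$. Consequently
\[
\bigl(\alpha^{-2}\mathcal N + \alpha^{-2}P\phi\widetilde{\mf R}_\alpha\phi P\bigr) - \alpha^{-2}\mb H_0 \;=\; \alpha^{-2}P\phi(\widetilde{\mf R}_\alpha - R)\phi P.
\]

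\emph{Step 3 (Resolvent comparison on the cutoff subspace).} Iterating the second resolvent identity gives the symmetric expansion $\widetilde{\mf R}_\alpha - R = RVR + RV\widetilde{\mf R}_\alpha VR$ with $V = \alpha^{-1}\phi + \alpha^{-2}\mathcal N$. The term $P\phi RV\widetilde{\mf R}_\alpha VR\phi P$ is self-adjoint and non-positive (since $\widetilde{\mf R}_\alpha\leq 0$), yielding a one-sided bound for free. The remaining cross term $P\phi RVR\phi P$ splits into $\alpha^{-3}P\phi R\phi R\phi P + \alpha^{-4}P\phi R\mathcal N R\phi P$; each is estimated by Cauchy--Schwarz of the form $\pm(XY^* + YX^*) \leq \varepsilon XX^* + \varepsilon^{-1}YY^*$, after inserting the resolvent square roots $(-R)^{1/2}$ and $(-\widetilde{\mf R}_\alpha)^{1/2}$ so that Lemma~\ref{lem:X:phiY:bound} can control every occurrence of $\phi(v_\cdot)$ by $(\mathcal N+1)^{1/2}$. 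Since $\widetilde\pi$ enforces $\mathcal N \leq \alpha^{2-\kappa}$, choosing the Cauchy--Schwarz parameter $\varepsilon = \alpha^{-\gamma}$ for a suitable $\gamma > 0$ depending on $\kappa$ yields
\[
\pm\,\widetilde\pi\,\alpha^{-2}P\phi(\widetilde{\mf R}_\alpha - R)\phi P\,\widetilde\pi \;\leq\; C\alpha^{-\gamma}\,\alpha^{-2}\widetilde\pi(\mathcal N+1)\widetilde\pi.
\]

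\emph{Step 4 (Assembly).} Lemma~\ref{lem:diagonalization of HBog} combined with $\mf h\geq\tau$ (Lemma~\ref{lem:Hessian:properties}) and the conjugation bounds of Lemma~\ref{lem:number:op:bounds} imply $\mathcal N+1 \leq C(\mb H_0 + C_0)$ as quadratic forms on $\mathcal F$. Substituting this into the error bound of Step~3 converts $\widetilde\pi(\mathcal N+1)\widetilde\pi$ into a fractional loss of $\widetilde\pi\mb H_0\widetilde\pi$ plus a constant remainder of order $\alpha^{-2-\gamma}$. Combined with Step~2, this produces the stated inequality after undoing the Weyl conjugation. The principal technical obstacle is Step~3: because $v_x$ is unbounded, every factor $\phi(v_\cdot)$ must be handled via the commutator method encoded in Lemma~\ref{lem:X:phiY:bound}, and the three-$\phi$ contribution $\alpha^{-3}P\phi R\phi R\phi P$ requires an iterated Cauchy--Schwarz whose parameters simultaneously beat the $\alpha^{2-\kappa}$ growth permitted by $\widetilde\pi$ and leave a net $\alpha^{-\gamma}$ correction.
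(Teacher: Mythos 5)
Your plan follows the paper's proof almost step for step: Weyl conjugation, the algebraic identification $\widetilde{\mf F}_P=\alpha^{-2}\big(\mathcal N+P\phi(v_\cdot)\widetilde{\mf R}_\alpha\phi(v_\cdot)P\big)$, the resolvent comparison of $\widetilde{\mf R}_\alpha$ with $R$ controlled by Lemma \ref{lem:X:phiY:bound}, the cutoff $\widetilde\pi$ to trade powers of $\mathcal N$ for powers of $\alpha$, and finally $\mathcal N\le C(\mb H_0+1)$. Steps 1, 2 and 4 are correct and match the paper.

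The one genuine flaw is in Step 3, in the treatment of the second-order remainder $E_2:=P\phi RV\widetilde{\mf R}_\alpha VR\phi P$. You claim that its non-positivity yields "a one-sided bound for free", but the free side is the wrong one: the lemma asserts a \emph{lower} bound on $\pi\mf F_P\pi$, and $E_2$ enters $\widetilde{\mf F}_P$ with a $+$ sign, so $E_2\le 0$ pushes the quantity you must bound from below in the unfavorable direction. What you actually need is $E_2\ge -C\alpha^{-\gamma}(\mathcal N+1)$ on $\mathrm{Ran}\,\widetilde\pi$, and non-positivity alone gives nothing of the sort. The gap is fixable with the tools you already invoke: write $-E_2=B^*B$ with $B=(-\widetilde{\mf R}_\alpha)^{1/2}VR\phi(v_\cdot)P$, insert $(-R)^{1/2}(-R)^{1/2}$ and use Lemma \ref{lem:X:phiY:bound} together with $\sup_\alpha\|\widetilde{\mf R}_\alpha\|<\infty$ to get $-E_2\le C\big(\alpha^{-2}(\mathcal N+1)^2+\alpha^{-4}(\mathcal N+1)^3\big)$, which the cutoff then converts into $C\alpha^{-\kappa}(\mathcal N+1)$. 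Alternatively, the paper sidesteps the issue entirely by stopping at the exact symmetrized first-order identity $\widetilde{\mf R}_\alpha-R=\tfrac12\big(RV\widetilde{\mf R}_\alpha+\widetilde{\mf R}_\alpha VR\big)$ and estimating that single term in \emph{both} directions by $C\big(\alpha^{-4+\delta}(\mathcal N+1)^3+\alpha^{-\delta}(\mathcal N+1)\big)$ with $\delta=\kappa$; this is slightly more economical than your two-term expansion and avoids any sign bookkeeping.
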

\begin{proof} 
Let us first note the following inequality,
\begin{align}\label{eq:N:Bog:bound}
\mathcal N \le C (\mathbb H_0 + 1)
\end{align}
for some $C>0$, which follows immediately from Lemma \ref{lem:diagonalization of HBog},  $\text{d}\Gamma(\mathfrak h^{1/2}) \ge \tau_1^{1/2} \mathcal N$ with $\tau_1>0$, and $
\mathbb U (\mathcal N+1) \mathbb U^* \le  C (\mathcal N+1)$.

For the proof of the lemma, we start  by conjugating $\mf F_P$ with the Weyl operator $\widetilde {\mf F}_P  := \mathcal W_{\alpha \varphi^{\P}}^{-1}  \mathfrak F_P \mathcal W_{\alpha \varphi^{\P}}$. This gives
\begin{align}
\widetilde {\mf F}_P &  =    P \big( H_0 + \alpha^{-2} \mathcal N + \alpha^{-1} \phi(v_\cdot + \varphi^{\P}) + \alpha^{-2} \langle \psi^{\P} , \phi( v_\cdot )  \widetilde {\mathfrak R} _\alpha   \phi( v_\cdot   ) \psi^{\P} \rangle _{L^2} \big) P \notag\\
& = \alpha^{-2}   \big( \mathcal N + \langle \psi^{\P} , \phi( v_\cdot ) \widetilde {\mathfrak R} _\alpha \phi( v_\cdot   ) \psi^{\P} \rangle _{L^2} \big),
\end{align}
where we omit the trivial tensor product with $P$ in the second line and view it as an operator on $\mathcal F$. We also introduced the Weyl transformed resolvent
\begin{align}
\widetilde{ \mathfrak R}_\alpha : =   \mathcal W_{\alpha \varphi^{\P}}^{-1}  \mf R_\alpha \mathcal W_{\alpha \varphi^{\P}} =  - Q \frac{1}{  H_0 +  \alpha^{-2} \mathcal N + \alpha^{-1} \phi( v_x  + \varphi^\P) }Q.
\end{align}
Recalling \eqref{eq:Bog:Hamiltonian}, we proceed with
\begin{align*}
\widetilde {\mf F}_P &  =  \alpha^{-2} \mathbb H_0   + \alpha^{-2} \langle \psi^{ \P} , \phi(  v_\cdot  ) (  \widetilde{ \mf  R} _\alpha - R  ) \phi( v_\cdot ) \psi^{\P}\rangle_{L^2} \notag\\[1mm]
&  =  \alpha^{-2} \mathbb H_0 +  \alpha^{-2} \Big( \tfrac12 \langle \psi^{\P} , \phi( v_\cdot  )  R ( \alpha^{-2} \mathcal N + \alpha^{-1} \phi(  v_\cdot + \varphi^\P  ) ) \widetilde{ \mf  R}_\alpha \phi( v_\cdot  ) \psi^{\P} \rangle _{L^2} +\text{h.c.} \Big) 
\end{align*}
where we used the second resolvent identity. Using Lemma \ref{lem:X:phiY:bound} and $[\mathcal N, R] = 0$, it is not difficult to show that the second term is bounded by
\begin{align}
&\pm  \Big( \langle \psi^{\P} , \phi( v_\cdot)  R ( \alpha^{-2} \mathcal N + \alpha^{-1} \phi( v_\cdot + \varphi^\P ) ) \mf R_\alpha \phi( v_\cdot) \psi^{\P} \rangle _{L^2} + \text{h.c.} \Big)  \notag\\[1mm]
& \hspace{5cm}  \le C \Big(  \alpha^{-4+\delta} (\mathcal N+1)^3 + \alpha^{-\delta } (\mathcal N+1) \Big).
\end{align}
Now write $\mathcal W_{\alpha \varphi^{\P}}\, ( \pi   \mf{F}_P    \pi ) \mathcal W_{\alpha \varphi^{\P}}^{-1} = \widetilde \pi \, \widetilde {\mf F}_P \widetilde  \pi  $ with $ \widetilde \pi =  \mathcal W_{\alpha \varphi^{\P}}\,  \pi  \mathcal W_{\alpha \varphi^{\P}}^{-1} =  f (\alpha^{\kappa-2}  \mathcal{N} ) $ and  choose $\delta = \kappa$, such that
\begin{align}
\widetilde \pi \, \widetilde {\mf F}_P \widetilde  \pi  & \ge \alpha^{-2}  \widetilde \pi \big(  {\mathbb H}_0 - \big(  \alpha^{-4 + \delta } (\mathcal N+1)^3 + \alpha^{ - \delta} (\mathcal N+1) \big) \widetilde  \pi \notag\\
& \ge \alpha^{-2} \widetilde \pi \big(  {\mathbb H}_0 - C \alpha^{-\kappa }  (\mathcal N+1) \big) \widetilde \pi \ge \alpha^{-2} \widetilde \pi \big(  {\mathbb H}_0 (1-C \alpha^{-\kappa}) - C \alpha^{-\kappa}   \big) \widetilde \pi,
\end{align}
where we used \eqref{eq:N:Bog:bound} in the last step. Conjugating both sides again with the Weyl operator  completes the proof of the lemma.
\end{proof}
We are now prepared to derive the lower bound in Proposition \ref{prop:two:term:expansion}.
\begin{proof}[Proof of Prop.~\ref{prop:two:term:expansion}: Lower bound.]
Let $\widetilde {\mb H}_0 = \mathcal W_{\alpha \varphi^{\P}}^{-1}  \, \mb H_0 \mathcal W_{\alpha \varphi^{\P}}$. Under the assumptions of Lemma \ref{Lemma-Min_Max}, we can apply the latter in combination with Lemma \ref{Lemma-Bogoliubov}. With $\kappa\in (0,\tfrac{2}{7})$ from Lemma \ref{Lemma-Min_Max} and $\gamma>0$ from Lemma \ref{Lemma-Bogoliubov}, this gives for large $\alpha$
\begin{align}
& \ms E^{(n)}(\alpha) - e^{\Pek} \notag\\[2mm]
& \quad \geq   (1-C \alpha^{-\gamma})\alpha^{-2}\underset{\mathrm{dim} \mathcal{V}=n }{\inf_{\mathcal{V}\subset \pi\mathcal{F}:}}\, \, \,  \sup_{\Psi\in \mathcal{V}:\|\Psi\|=1}\braket{\Psi,\widetilde {\mb H}_0  \Psi} \! - \! C\alpha^{-(2+\min\{\gamma,\frac{2}{7}-\kappa\})}\ln \alpha \notag \\
&\quad \geq  (1-C \alpha^{-\gamma})\alpha^{-2}\underset{\mathrm{dim} \mathcal{V}=n }{\inf_{\mathcal{V}\subset  \mathcal{F}  :}}\, \, \, \, \, \,  \sup_{\Psi\in \mathcal{V}:\|\Psi\|=1}\braket{\Psi,\widetilde {\mb H}_0  \Psi}- C \alpha^{-(2+\min\{\gamma,\frac{2}{7}-\kappa\})}\ln \alpha \notag \\[2mm]
\label{Equation-Upper_Bound_EV}
&\quad = (1-C \alpha^{-\gamma})\alpha^{-2}\eB E^\n  -C \alpha^{-(2+\min\{\gamma,\frac{2}{7}-\kappa\})}\ln \alpha,
\end{align}
where we used the min-max principle in the last step. Since for $n=1$ the assumption of Lemma \ref{Lemma-Min_Max} is trivially satisfied, this implies the desired lower bound on $\ms  E^{(1)}(\alpha)$. Now, since $\eB E^\n \le \eB E^{(1)} + 1$, see Section \ref{sec:Bog:Hamiltonian}, we can use the lower bound on $\ms E^{(1)}(\alpha)$ together with the upper bounds derived in Section \ref{sect:upper:bounds}, to see that
\begin{align}
\ms E^\n(\alpha) \le e^\Pek + \alpha^{-2} (\eB E^{(1)} +1) + C\alpha^{-3} \le \ms E^{(1)}( \alpha) + C \alpha^{-2 }
\end{align}
for all $n\ge 2$ and $\alpha$ large. This implies that the assumption of Lemma \ref{Lemma-Min_Max} holds for every min-max value and thus the lower bounds follow from the above argument. This completes the proof of Proposition \ref{prop:two:term:expansion}.
\end{proof}

\section{Asymptotic series for low-energy eigenvalues}\label{sec:Asymptotic:expansion}

The main goal of this section is to derive an inequality of the form \eqref{eq:quasi:eigenstate}. Combined with Proposition \ref{prop:two:term:expansion}, this will be used to prove Theorem \ref{thm:asymptotic:expansion:1}. For the definition of approximate eigenstates, we need to pay attention to the domain of $\mf H_\alpha$. To this end, let us recall the fact that the domain of $\mf H_\alpha$ is related to the domain of $-\Delta_\Omega + \mathcal N$ via an $x$-dependent Weyl operator, called the Gross transformation \cite{FrankS21,GriesemerW18}. In the following sections, we will directly work with the Gross transformed Fr\"ohlich Hamiltonian so that our approximate eigenstates will be elements of the domain of the non-interacting Hamiltonian.

\subsection{Gross transformed Hamiltonian}
\label{sect:Gross Hamiltonian}
The Gross transformation is defined for $\Lambda \ge 0$ as the $x$-dependent Weyl operator 
\begin{align}\label{eq:Gross:trafo}
U_\Lambda := \mathcal W_{ g_x^\Lambda / \alpha } = e^{\frac{1}{\alpha} (  a(g^\Lambda_{x}) - a^*(g^\Lambda_{x})) } 
\end{align}
with $g^\Lambda_x =  (\Pi_\Lambda - 1) g_x $, $g_x(y)=(-\Delta_\Omega)^{-3/2}(x,y)$ and $\Pi_\Lambda = \mathds{1} (-\Delta_\Omega \le \Lambda^2)$. Note that for $\Lambda=\infty$, we have $g_x^\infty \equiv 0$ for all $x\in \Omega$ and thus $U_\infty= \mathds{1}$.

To simplify the notation, let us from now on abbreviate $p:=-i \nabla_\Omega$. A simple computation using the shift properties of Weyl operators shows that
\begin{align}
U_\Lambda p^2 U^*_\Lambda & = p^2 + \tfrac{1}{ \alpha}   ( 2 a^*(p g^\Lambda_x) p + 2 p a(p g^\Lambda _x) + \phi(p^2 g^\Lambda _x) ) +  \tfrac{1}{\alpha^2} \phi ( p g^\Lambda _x)^2  \\[1mm]
U_\Lambda \mathcal N U^*_\Lambda & = \mathcal N + \tfrac{1}{\alpha} \phi(  g^\Lambda _x) + \tfrac{1}{\alpha^2}  \| g^\Lambda _x\|^2_2   \\[1mm]
U_\Lambda \phi ( v_x + \varphi^\P ) U^*_\Lambda & = \phi(v_x + \varphi^\P) + \tfrac{2}{ \alpha} \Re \langle v_x+ \varphi^\P, g^\Lambda _x \rangle.
\end{align}
With these relations, the Gross transformed Fr\"ohlich Hamiltonian (after conjugation with the Weyl operator $\mathcal W_{\alpha \varphi}$; note that $[U_\Lambda, \mathcal W_{\alpha \varphi}]=0$) is found to be
\begin{align}
\mathfrak H_{\alpha,\Lambda}^{{\textnormal{G}}}  & : =  U_\Lambda  \mathcal W_{\alpha \varphi} ( {\mf H}_\alpha - e^\Pek)   \mathcal W_{\alpha \varphi}^{-1} U^*_\Lambda  \notag\\[1mm]
& = U_\Lambda \big( H_0 + \tfrac{1}{\alpha} \phi(v_x + \varphi^\P) +\tfrac{1}{\alpha^2} \mathcal N \big) U^*_\Lambda \notag\\
&  = H_0 +  \frac{1}{ \alpha} \Big( \phi(v_x + \varphi^\P + p^2 g_x^\Lambda) + 2 a^*( p g^\Lambda_{x}) p + 2  p a (p g^\Lambda_{x}) \Big) \notag\\
& \quad  + \frac{1}{\alpha^2} \big( \mathcal N + \phi(p g_x^\Lambda)^2 + 2 \Re \ls v_x + \varphi^\P, g^\Lambda_x \rs \Big)  +  \frac{1}{\alpha^3} \phi(g^\Lambda_x) + \frac{1}{\alpha^4}  \| g^\Lambda_x \|^2_2 \label{eq:Gross:transformed:Froehlich}
\end{align}
where $H_0$ is the Pekar Hamiltonian defined in \eqref{eq:def:pekar:hamiltonian}.
\allowdisplaybreaks
For coefficients $(E_\ell)_{\ell \ge 0} $ we set
\begin{subequations}
\begin{align}
K_{1} & := \phi(v_x + \varphi^\P + p^2 g_x^\Lambda) + 2 a^*( p g_{x}^\Lambda) p + 2  p a (p g^\Lambda_{x}) \label{eq:def:K_1} \\[0.5mm]
K_{2} & :=  \mathcal N + \phi(pg^\Lambda_x)^2 + 2 \Re \ls v_x + \varphi^\P, g^\Lambda_x \rs - E_0 \label{eq:def:K2}\\[0.5mm]
K_{3} & := \phi(g^\Lambda_x) - E_1 \\[0.5mm]
K_{4} & := \| g^\Lambda_x \|^2_2 - E_2 \\[0.5mm]
 K_{\ell} & := - E_{\ell - 2 } \quad \forall \ell \ge 5 \label{eq:def:K:ell+1}.
\end{align}
\end{subequations}
For $\Lambda=\infty$, we have $K_{\ell} = V_\ell$ with $V_\ell $ defined by \eqref{eq:def:V_1}--\eqref{eq:def:V_3}. 

With the above definitions, we can write
\begin{align}\label{eq:H:G:formula:sum:K}
\mathfrak H_{\alpha,\Lambda }^{\textnormal{G}} - \sum_{\ell=0}^{b} \alpha^{-\ell-2} E_\ell = H_0 + K \quad \text{with} \quad K := \sum_{\ell=1}^{b+2} \alpha^{-\ell} K_{\ell} .
\end{align}
The next lemma shows that up to second order, the expansion around $P$ of the Gross transformed Hamiltonian coincides with the expansion of the original Fr\"ohlich Hamiltonian.  We recall that $\mb H_0$ does not depend on $\Lambda$ and note that for $\Lambda = \infty$, the statement of the lemma holds by definition.

\begin{lem} \label{lem:Bogoliubov:identity} Let $P=|\psi^\P \rangle \langle \psi^\P |$, $\mb H_0$ defined by \eqref{eq:Bog:Hamiltonian} and $ E_0 \in \mb R$. For every $\Lambda \ge 0$, the operators $K_1$, $K_2$ defined by \eqref{eq:def:K_1} and \eqref{eq:def:K2}  satisfy 
\begin{align}\label{eq:identity:Bog}
PK_1P + P ( K_2 + K_1 R  K_1 ) P = P\otimes ( \mb H_0 - E_0) .
\end{align}
\end{lem}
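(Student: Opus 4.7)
My strategy is to exploit the unitary equivalence
$\mathfrak{H}^{\textnormal{G}}_{\alpha,\Lambda} = U_\Lambda (H_0 + \alpha^{-1} V_1 + \alpha^{-2}\mathcal N) U_\Lambda^*$ built into \eqref{eq:Gross:transformed:Froehlich}, where $V_1 = \phi(v_x + \varphi^\P)$ and $V_2 = \mathcal N - E_0$ are the $\Lambda = \infty$ representatives of $K_1, K_2$ (for $\Lambda = \infty$ one has $g_x^\infty \equiv 0$ and $U_\infty = 1$), and $U_\Lambda = e^{\alpha^{-1} T_1}$ with generator $T_1 := a(g_x^\Lambda) - a^*(g_x^\Lambda)$. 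Since $T_1$ commutes with every multiplication operator in $x$, and in particular with the potential part of $H_0$, a Baker--Campbell--Hausdorff expansion matched order by order to \eqref{eq:H:G:formula:sum:K} identifies
\begin{align*}
K_1 = V_1 + [T_1, H_0], \qquad K_2 = V_2 + [T_1, V_1] + \tfrac{1}{2}[T_1, [T_1, H_0]].
\end{align*}
Because the right-hand side of \eqref{eq:identity:Bog} is manifestly independent of $\Lambda$, the plan is to verify the identity first at $\Lambda = \infty$ and then to show that the $\Lambda$-dependent corrections to the left-hand side cancel.

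The case $\Lambda = \infty$ is direct: the defining relation $\langle \psi^\P, v_\cdot(y) \psi^\P\rangle = -\varphi^\P(y)$ gives $PV_1 P = 0$; the identities $PR = RP = 0$ together with the fact that $\phi(\varphi^\P)$ acts as a scalar on the $L^2(\Omega)$ factor reduce $PV_1 R V_1 P$ to $\langle \psi^\P, \phi(v_\cdot) R \phi(v_\cdot)\psi^\P\rangle_{L^2}$; and $PV_2 P = (\mathcal N - E_0)\otimes P$. The sum reproduces $P \otimes (\mathbb H_0 - E_0)$ by \eqref{eq:Bog:Hamiltonian}.

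For general $\Lambda$, substituting the commutator formulas into the left-hand side of \eqref{eq:identity:Bog} leaves the $\Lambda$-dependent remainder
\begin{align*}
P[T_1, V_1] P + \tfrac{1}{2} P[T_1, [T_1, H_0]] P + P[T_1, H_0] R V_1 P + P V_1 R [T_1, H_0] P + P[T_1, H_0] R [T_1, H_0] P.
\end{align*}
The identities $H_0 P = P H_0 = 0$ and $H_0 R = R H_0 = -Q$ simplify the three cross-terms to $-PT_1 Q V_1 P$, $PV_1 Q T_1 P$, and $PT_1 H_0 T_1 P$, respectively; a direct commutator expansion combined with $H_0 P = PH_0 = 0$ yields $\tfrac{1}{2} P[T_1, [T_1, H_0]] P = -PT_1 H_0 T_1 P$. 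The two $H_0 T_1$-pieces cancel, and substituting $Q = 1-P$ in the surviving cross-terms collapses the remainder to the commutator $[PT_1 P,\, PV_1 P]$, which vanishes because $PV_1 P = 0$ from the previous step.

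The main obstacle is purely bookkeeping: keeping the signs and non-commutativity straight (notably, that $T_1$ depends on $x$ and therefore does not commute with $p^2$, while the expansion still terminates because, for real $g_x^\Lambda$, $[T_1,\phi(pg_x^\Lambda)]=2\Re\langle g_x^\Lambda, pg_x^\Lambda\rangle = 0$), and recognizing that the outer $P$'s collapse many contributions via $H_0 P = 0$ and $PR = 0$. Once the commutator rewriting of $K_1, K_2$ is in place, the algebraic reduction to $[PT_1 P, PV_1 P]$ makes the cancellation transparent, and the whole identity rests on the single nontrivial input $PV_1 P = 0$.
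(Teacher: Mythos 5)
Your proposal is correct and follows essentially the same route as the paper: the paper likewise identifies the $\Lambda$-dependent parts of $K_1$ and $K_2$ as $[T_1,H_0]$ and $\tfrac12[T_1,[T_1,H_0]]+[T_1,V_1]$ (writing $\phi(pg_x^\Lambda)^2$ as a double commutator and the $p$-linear terms as a single commutator with $H_0$), and then uses $PH_0=H_0P=0$, $H_0R=-Q$ and $PV_1P=0$ to cancel everything except $\langle\psi^\P,\phi(v_\cdot)R\phi(v_\cdot)\psi^\P\rangle+\mathcal N-E_0$. Your packaging via the BCH expansion of the Gross transform and the final reduction of the remainder to $[PT_1P,PV_1P]=0$ is a tidier organization of the same algebra, and all the steps check out (including the trivially vanishing term $P[T_1,H_0]P$, which you silently drop).
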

%

\begin{proof}[Proof of Lemma \ref{lem:Bogoliubov:identity}] First note that $PK_1P=0 $, which is a consequence of $\ls \psi^\P , v_\cdot \psi^\P \rs = - \varphi^\P$ and $2 \ls \psi^\P , ( p g^\Lambda_\cdot)  p \psi^\P \rs  = - \ls \psi^\P , ( p^2 g^\Lambda_\cdot)   \psi^\P \rs$ together with integration by parts and the fact that $\psi^\P$ is real-valued.

For the remaining term, note that
\begin{align}
\phi(p g^\Lambda_{x})^2 
& =  \frac{1}{2}\big[  [ H_0  , a^*(g^\Lambda_{x}) - a(g^\Lambda_{x})] , a^*(g^\Lambda_{x}) - a(g^\Lambda_{x}) \big] 
\end{align}
and thus, using $ H_0 \psi = 0$, 
\begin{align}
\Ls \psi^\P ,  \phi( p g^\Lambda_{x})^2 \psi^\P \Rs 
& = - \Ls \psi^\P , ( a^*(g^\Lambda_{x}) - a(g^\Lambda_{x} ) ) H_0  ( a^*(g^\Lambda_{x}) - a(g^\Lambda_{x} ) ) \psi^\P \Rs_{L^2}  .
\end{align}
On the other hand, we use
\begin{align}
2 p a(p g^\Lambda_{x}) + 2 a^*( g^\Lambda_{ x}) p & = \big\{ p , a(p g^\Lambda_{ x}) + a^*( p g^\Lambda_{ x}) \big\} - \phi(p^2 g^\Lambda_{x}) \notag\\
& = \big[ H_0  , a^*( g^\Lambda_{x}) - a( g^\Lambda_{x}) \big] - \phi(p^2 g^\Lambda_{x})
\end{align}
and again $P H_0 = 0$ in order to find $
P K_1 Q =  P \phi(v_x ) Q  - P ( a^*( g^\Lambda_{x} ) - a(g^\Lambda_{x}))  Q H_0  $
and thus
\begin{align}
 \ls \psi , K_1 R K_1 \psi \rs_{L^2}
& =  \langle \psi^\P, \phi(v_{\cdot } )    R \phi(v_ \cdot   )  \psi^\P \rangle_{L^2}  - \langle \psi^\P, \phi(v_{\cdot}  )  Q ( a^*( g^\Lambda_{ \cdot } ) - a( g^\Lambda_{\cdot }) )   \psi^\P \rangle_{L^2} \notag \\ 
& \quad + \langle \psi^\P ,    ( a^*( g^\Lambda_{\cdot } ) -  a ( g^\Lambda_{\cdot } ) )  Q \phi(v_{ \cdot  }   )   \psi^\P \rangle_{L^2} \notag\\
& \quad +  \langle \psi^\P,    ( a^*( g^\Lambda_{ \cdot }) -  a ( g_{\cdot } ) )  (H_0-\mu)  ( a^*( g^\Lambda_{ \cdot } ) -  a (g^\Lambda_{ \cdot } ) ) \psi^\P \rangle_{L^2} .
\end{align}
Further note that
\begin{align}
& \langle \psi^\P , \phi(v_{\cdot}  )  Q ( a^*( g^\Lambda_{ \cdot } ) - a( g^\Lambda_{\cdot }) )   \psi^\P \rangle_{L^2} - \langle \psi^\P ,    ( a^*( g^\Lambda_{\cdot } ) -  a ( g^\Lambda_{\cdot } ) )  Q \phi(v_{ \cdot  }   )   \psi^\P \rangle_{L^2} \notag\\[1mm]
& \hspace{5.5cm} = \langle \psi^\P, [ \phi(v_{\cdot}+\varphi^\P ) ,   a^*( g^\Lambda_{ \cdot } ) - a( g^\Lambda_{\cdot }) ]  \psi^\P \rangle_{L^2}
\end{align}
since $P \phi (v_\cdot) Q = P \phi (v_\cdot+\varphi^\P) Q =  P \phi (v_\cdot+\varphi^\P) $. 

Taking all the above into account, we arrive at
\begin{align}
\ls \psi^\P, ( K_2 + K_1 R K_1 ) \psi^\P \rs & = \mathcal N +  \langle \psi^\P, \phi(v_{\cdot } )    R \phi(v_ \cdot   )  \psi^\P \rangle   + 2 \ls \psi^\P, \Re \ls v_x + \varphi^\P, g^\Lambda_x \rs \psi^\P \rs - E_0 \notag\\ 
& \quad - \langle \psi^\P, [ \phi(v_{\cdot}+\varphi^\P ) ,   a^*( g^\Lambda_{ \cdot } ) - a( g^\Lambda_{\cdot }) ]  \psi^\P \rangle.
\end{align}
The commutator gives $
[ \phi( v_{x} + \varphi^\P ) ,  a^*( g^\Lambda_x ) - a( g^\Lambda_x) ]  = 2 \Re \ls v_x + \varphi^\P , g^\Lambda_x \rs$
and thus
\begin{align}
\ls \psi^\P, ( K_2 + K_1 R K_1 ) \psi^\P \rs  & = \mathcal N  +  \langle \psi^\P, \phi(v_{\cdot } )    R \phi(v_ \cdot   )  \psi^\P \rangle   - E_0 = \mb H_0 -E_0
\end{align}
as claimed.
\end{proof}

\subsection{Approximate eigenstates}
\label{sec:approximate:eigenstates}

We fix $\eB E^\n$ to be the $nth$ eigenvalue of $\mb H_0$ with spectral projection $\mb P$ and degeneracy $\d=\text{dim}(\text{Ran}\mb P) \ge 1$. For $\d \ge 2$ we enumerate the degenerate eigenvalues as $\eB E^{(n_1)} = \ldots = \eB E^{(n_\d)} (= \eB E^\n)$ and fix an orthonormal basis $\{ \Gamma^{(n_1)}, \ldots, \Gamma^{(n_\d)}\}$ of $\text{Ran}\mb P$.  Also recall the definitions of  $\mb Q=\mathds {1} - \mb P$ and $\mb R = \mb Q( \mb Q( \eB E^\n - \mb H_0) \mb Q )^{-1} \mb Q$.

For $s\in \{1,\ldots,\d\}$ and $b\in \mb N_0$, we consider the coefficients $(E_\ell)_{\ell \in \mb N_0}$ defined by Equation \eqref{eq:limit:formula} in Theorem \ref{thm:recursive:formula:generalcase} and define the state
\begin{align}\label{eq:Phi:n:m:1}
\Psi_b  & =    \bigg( \sum_{i=0}^{2b+3} ( R  K )^i \, \bigg)  \psi^\P \otimes \xi_b \quad \text{with} \quad \xi_b =   \sum_{j=0}^{b}  ( \mb R  \mb V )^j   \ms U \Gamma^{(n_s)} ,
\end{align}
where $\ms U$ is a yet-to-be-determined unitary on $\text{Ran}\mb P$, $K = \sum_{\ell =1}^{b+2} \alpha^{-\ell } K_\ell$ as in \eqref{eq:H:G:formula:sum:K} and
\begin{align}\label{eq:def:mathbb V}
\mb V := \sum_{\ell=0}^b \alpha^{-\ell} \mb V_\ell
\end{align}
with $\mb V_\ell$ defined by \eqref{eq:def:mbV:a}. Note that $\Psi_b$ depends on $(E_\ell)_{\ell \in \mb N_0}$ through the definitions of $K_\ell$ and $\mb V_\ell$ and also on $\Lambda \ge 0$ through the definition of $K_\ell$.  Expanding $\Psi_b$ up to $(RK)^{2b+3}\psi^{\P}\otimes \xi_b$ allows us to compensate for the large $\Lambda$-dependence of $K_1$.

The next statement is the main result of this section. It shows that $\Psi_b$ is an approximate eigenstate of order $b$.

\begin{prop} \label{prop:asym:expansion:HG:1} Let $\eB E^{(n_s)}$ for $s\in \{1,\ldots, \textnormal{d}\}$, $\textnormal{d} \ge 1$, be a $\textnormal{d}$-fold degenerate eigenvalue of $\mb H_0$ with normalized eigenfunction $\Gamma^{(n_s)} \in \mathcal F$. For every $b\in \mb N_0$, the coefficients $(E_\ell)_{\ell \in \mb N_0}$ from Theorem \ref{thm:recursive:formula:generalcase} and the state $\Psi_b$ given by \eqref{eq:Phi:n:m:1} satisfy the following property. There are constants $\alpha(b),C(b)>0$ such that for $\Lambda = \alpha^{2b+2}$ and all $\alpha \ge \alpha(b)$
\begin{align}\label{eq:bound:ev:equation:H_G}
 \bigg\| \bigg(  \mf H_{\alpha,\Lambda}^{\textnormal{G} }  - \sum_{\ell =0}^b \alpha^{-\ell-2 } E_{\ell}  \bigg)   \Psi_b \bigg\| \le \frac{C(b)}{\alpha^{b+3}}.
\end{align}
\end{prop}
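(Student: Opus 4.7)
My plan is to exploit the two-scale structure of the problem by performing a telescoping argument, first on the electron factor $L^2(\Omega)$ and then on the Fock factor $\mathcal{F}$. By \eqref{eq:H:G:formula:sum:K} the quantity to bound is $\| (H_0 + K)\Psi_b \|$. Since $H_0 \psi^\P = 0$ implies $P H_0 = H_0 P = 0$, one has $H_0 R = -Q$, which gives the algebraic telescoping
\begin{align*}
(H_0 + K)\sum_{i=0}^{N}(R K)^i \, = \, H_0 + \sum_{j=0}^{N-1} P K (R K)^j + K(R K)^N.
\end{align*}
Applied to $\psi^\P \otimes \xi_b$ with $N = 2b+3$ and $H_0 \psi^\P = 0$, this produces the clean split $(H_0 + K)\Psi_b = \mathcal{B}_b + \mathcal{T}_b$, where the boundary $\mathcal{B}_b := \sum_{j=0}^{2b+2} P K (R K)^j P \,(\psi^\P \otimes \xi_b)$ is the effective contribution on the $P$-subspace and the tail $\mathcal{T}_b := K(R K)^{2b+3}(\psi^\P \otimes \xi_b)$ collects the higher-order remainders.

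The tail is routine: each of the $2b+4$ factors of $K$ carries at least $\alpha^{-1}$, while the operators $R K_\ell$ are relatively bounded in powers of $\mathcal{N}+1$ with constants polynomial in $\Lambda$ by Lemmas \ref{Lemma-Regularity_w}, \ref{Lemma-Relative_Bound}, and \ref{lem:X:phiY:bound}; the choice $\Lambda = \alpha^{2b+2}$ makes $\|\mathcal{T}_b\| = O(\alpha^{-b-3})$. For the boundary, expanding $K = \sum_{\ell=1}^{b+2}\alpha^{-\ell} K_\ell$ and collecting by total power of $\alpha^{-1}$ yields $\mathcal{B}_b = \sum_{\ell=0}^{b} \alpha^{-\ell-2}\, \mb V_\ell^K\,(\psi^\P \otimes \xi_b) + O(\alpha^{-b-3})$, where $\mb V_\ell^K$ is the operator from \eqref{eq:def:mbV:a} with each $V_\bullet$ replaced by $K_\bullet$. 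Lemma \ref{lem:Bogoliubov:identity} identifies $\mb V_0^K = \mb V_0 = P \otimes (\mb H_0 - \eB E^{(n_s)})$ exactly, and for $\ell \ge 1$ the discrepancies $\| \mb V_\ell^K - \mb V_\ell \|$ on states of bounded particle number are controlled by the small norms $\|g_x^\Lambda\|_2$, $\|(-\Delta_\Omega)^{1/2} g_x^\Lambda\|_2$ and $\|(1-\Pi_\Lambda)\varphi^\P\|_2$ of Lemmas \ref{Lemma-Regularity_w} and \ref{Lemma-Relative_Bound}, which are again negligible for $\Lambda = \alpha^{2b+2}$. Using $\mathrm{Ran}\, P \otimes \mathcal{F} \cong \mathcal{F}$, the task thus reduces to proving $\| \alpha^{-2} \mb V \xi_b \| = O(\alpha^{-b-3})$ with $\mb V$ as in \eqref{eq:def:mathbb V}.

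The final step mirrors the first telescoping, now on the Fock factor. Since $\mb V_0 = \mb H_0 - \eB E^{(n_s)}$ annihilates $\mathrm{Ran}\,\mb P$ and satisfies $\mb V_0 \mb R = -\mb Q$, the same kind of iteration yields
\begin{align*}
\mb V \xi_b \, = \, \widetilde{\mb V} \xi_b + \sum_{k=0}^{b-1} (-1)^{b-k}\, \mb Q\, \widetilde{\mb V}\xi_k, \qquad \widetilde{\mb V} := \sum_{\ell=1}^{b}\alpha^{-\ell}\mb V_\ell.
\end{align*}
Projecting onto $\mathrm{Ran}\,\mb P$ gives $\mb P \mb V \xi_b = \mb P \widetilde{\mb V}\xi_b$, which upon expanding $\xi_b$ to order $\alpha^{-b}$ is exactly the action of the matrix $M^{(b)}$ from \eqref{eq:matrix:definitions} on $\ms U \Gamma^{(n_s)}$. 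By Theorem \ref{thm:recursive:formula:generalcase} together with the choice of $\ms U$ as the unitary that diagonalises $M^{(b)}$ in the basis assigning $\ms U \Gamma^{(n_s)}$ to the $s$-th eigenvalue, this contribution equals $\sum_{\ell=1}^{b} \alpha^{-\ell} E_\ell \cdot \ms U \Gamma^{(n_s)}$ modulo $O(\alpha^{-b-1})$; but these $E_\ell$'s have been absorbed into the definition of $K_\ell$ (as the constants $-E_1$ in $K_3$, $-E_2$ in $K_4$, and $-E_{\ell-2}$ for $\ell \ge 5$), which produces the decisive cancellation with the subtracted terms in \eqref{eq:bound:ev:equation:H_G}. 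The residual $\mb Q$-components are bounded using the spectral gap of $\mb H_0$ at $\eB E^{(n_s)}$ (giving a uniform bound on $\mb R$) together with polynomial-in-$(\mathcal N + 1)$ relative bounds for the $\mb V_\ell$, stable under $\mathbb U$-conjugation by Lemma \ref{lem:number:op:bounds}. The principal obstacle is precisely this last bookkeeping: the construction is self-referential, with $\mb V_\ell$ depending on $E_0,\ldots,E_{\ell-2}$ and the $E_\ell$'s being defined implicitly through $M^{(\ell)}$, so that the required cancellation at every order will call for an induction on $b$ that simultaneously controls both the $\mb P$- and $\mb Q$-components of $\widetilde{\mb V} \xi_k$ for all $k \le b$.
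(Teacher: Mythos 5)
Your overall route is the paper's: a telescoping (Neumann) expansion in the electron variable reducing matters to the effective Fock-space operator $\sum_i PK(RK)^iP$, replacement of $K_\ell$ by $V_\ell$ at cost $O(\sqrt{\ln\Lambda/\Lambda})$, the cutoff choice $\Lambda=\alpha^{2b+2}$, a second telescoping around $\mb H_0-\eB E^{(n_s)}$ on the Fock factor, and the identification of the $\mb P$-block with the matrix $M^{(b)}$ diagonalized by $\ms U$, so that the definition \eqref{eq:limit:formula} of $E_\ell$ together with the analyticity of the eigenvalue branches (Remark \ref{rem:eigenvalue:M}) yields the cancellation up to $O(\alpha^{-b-1})$; no extra induction on $b$ is needed for that last step. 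However, your Fock-space telescoping identity is wrong, and in a way that breaks the order counting. With $\widetilde{\mb V}:=\sum_{\ell=1}^b\alpha^{-\ell}\mb V_\ell$ and $\xi_b=\sum_{j=0}^b(\mb R\widetilde{\mb V})^j\ms U\Gamma^{(n_s)}$ one has $(\mb H_0-\eB E^{(n_s)})\xi_b=-\mb Q\widetilde{\mb V}\xi_{b-1}$ and hence
\begin{align*}
\big(\mb H_0-\eB E^{(n_s)}+\widetilde{\mb V}\big)\xi_b=\mb P\widetilde{\mb V}\xi_b+\mb Q\widetilde{\mb V}\big(\mb R\widetilde{\mb V}\big)^{b}\ms U\Gamma^{(n_s)},
\end{align*}
i.e.\ a \emph{single} $\mb Q$-residual carrying $b+1$ powers of $\alpha^{-1}$, which after the prefactor $\alpha^{-2}$ is $O(\alpha^{-b-3})$. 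Your alternating sum $\sum_{k=0}^{b-1}(-1)^{b-k}\mb Q\widetilde{\mb V}\xi_k$ agrees with this only for $b=1$; for $b\ge 2$ its dominant term is of order $\alpha^{-b}$ (e.g.\ for $b=2$ it reduces to $-\mb Q\widetilde{\mb V}\mb R\widetilde{\mb V}\ms U\Gamma^{(n_s)}$ rather than $\mb Q\widetilde{\mb V}(\mb R\widetilde{\mb V})^2\ms U\Gamma^{(n_s)}$), which would leave you with $O(\alpha^{-b-2})$ instead of the required $O(\alpha^{-b-3})$.

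A second point that must be made precise is the tail estimate. You say the factors $RK_\ell$ are relatively bounded "with constants polynomial in $\Lambda$"; if each of the $2b+3$ resolvent-sandwiched factors carried any positive power of $\Lambda=\alpha^{2b+2}$ the bound would be useless. The correct statement (Lemma \ref{lem:estimates:approx:groundstate:3x}) is that $(-R)^{1/2}K_\ell Y$ and $PK_\ell Y$ are bounded \emph{uniformly} in $\Lambda$, and only the single leftmost $K$ in $K(RK)^{2b+3}$, having no $R$ or $P$ to its left, contributes one factor $1+\sqrt\Lambda$; this is exactly why the expansion is pushed to order $2b+3$, so that $\sqrt\Lambda\,\alpha^{-2b-4}=\alpha^{-b-3}$. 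As written, your arithmetic for $\|\mathcal T_b\|$ does not close; with this single-$\sqrt\Lambda$ bookkeeping (and the corrected $\mb Q$-residual above) the argument coincides with the paper's proof.
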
 

Since the idea of the proof is simple and potentially interesting also beyond the Fr\"ohlich polaron model, let us explain it in more detail.  For notational convenience, let $\d=1$. We first consider the state $\Psi_{\xi,b} = \sum_{i=0}^{2b+3} (  R K )^i \psi^{\P} \otimes \xi$ for some $\xi \in \mathcal F$. This corresponds to a perturbation ansatz for the ground state of $ H_0 + K$; compare with \eqref{eq:H:G:formula:sum:K} and recall that $\psi^\P$ is the unique ground state of $H_0$ with eigenvalue zero. The state $\xi$ remains unspecified at this point, since the unperturbed Hamiltonian $H_0$ acts as the identity on $\mathcal F$. Concretely, one uses $\mathds 1=P+Q$, $PH_0=0$ and $H_0 R = - Q$ to compute 
\begin{align}
&  (H_0 +K) \Psi_{\xi,b} =    \bigg( \sum_{i=0}^{2b+3} PK (  R K  )^i \bigg)   \psi^\P \otimes \xi +  Q  K (  R K  )^{2b+3}   \psi^\P \otimes \xi. \label{eq:heuristic:argument}
\end{align}
Since the operators $(-R)^{1/2}K (- R)^{1/2}$ and $(-R)^{1/2}KP$ each contribute a factor $\alpha^{-1}$ and the operator $QK$ a factor $\alpha^{-1}\sqrt \Lambda$, the second term is of order $O(\alpha^{-2b-4}\sqrt \Lambda ) = O(\alpha^{-b-3})$ if we choose $\Lambda = \alpha^{2b+2}$. The first term, on the other hand, involves a complicated operator acting on $\mathcal F$. It follows from Lemma \ref{lem:Bogoliubov:identity}, the choice $E_0 = \eB E^\n$ and a short computation  that
\begin{align}\label{eq:heuristic:discussion:mbK:1}
\sum_{i=0}^{2b+3} P K  ( RK)^i P &   = \alpha^{-2} \big( \mb H_0 - \eB E^\n  + \mb V  \big) +O\big(  \alpha^{-b-3}\big),
\end{align}
with the precise statement given in Lemma \ref{lem:PKRKP:approximation}. Together with \eqref{eq:heuristic:argument}, this provides an extension of \eqref{eq:heuristics:Bog} (with $V$ replaced by $K$) to higher orders.  This identity is the reason for the definition of $\mb V = \sum_{\ell =1 }^b  \alpha^{-\ell} \mb V_\ell$, which we now treat it as a perturbation of $\mb H_0 - \eB E^\n$, which in turn motivates the choice of $\xi=\xi_b$ from \eqref{eq:Phi:n:m:1}. Invoking $\mathds 1= \mb P + \mb Q $, $(\mb H_0-\eB E^\n) \mb P =0 $, $(\mb H_0 - \eB E^\n)\mb R = - \mb Q$, one finds
\begin{align}
\alpha^{-2}(\mb H_0 - \eB E^\n + \mb V) \xi_b = \alpha^{-2} \ls \Gamma  , \mb V \xi_b \rs \Gamma +\alpha^{-2} \mathbb Q \mathbb V ( \mb R  \mb V)^{b} \Gamma 
\end{align}
where $\Gamma$ denotes the eigenstate associated with $\eB E^\n$. While the second term is of order $O(\alpha^{-b-3})$, the first term is now a multiple of the unperturbed state $\Gamma$ and the coefficients $E_1, \ldots , E_b$ are chosen such that this term is of order $O(\alpha^{-b-3})$. Hence we arrive at $(H_0+K)\Psi_b = O\big(  \alpha^{-b-3}\big)$.

The rigorous version of this argument is based on the next two lemmas. Recall that $K_\ell$, $V_\ell$ and $\mb V= \sum^b_{\ell=1} \alpha^{-\ell} \mb V_\ell$ are all defined in terms of the coefficients $E_\ell$ as given in Theorem \ref{thm:recursive:formula:generalcase}. 
\begin{lem}\label{lem:estimates:approx:groundstate:3x} Let $X_0,Y \in \{ P, (-R)^{1/2}\}$, $X_1=\mathds{1}$, $\mb X_0 = \mb R$ and $\mb X_1 = \mathds{1}$. There exists a constant $C>0$ and for all $\ell , \ell'  \in \mb N_0$ there are $C(\ell), C(\ell,\ell')>0$ such that for all $\Lambda \in [2,\infty]$
\begin{align}
\| X_i K_\ell Y (\mathcal N+1)^{-\min\{1,\frac{\ell}{2} \} } \| & \le C(1+\sqrt \Lambda)^{i}    , \quad i \in \{ 0,1\} \label{eq:XKY:bound}\\[0mm]
\| X_0 (K_\ell - V_\ell) Y (\mathcal N+1)^{-1} \| & \le C (\Lambda^{-1} \ln \Lambda )^{1/2}   \\[0mm]
\| (\mathcal N+1)^{\ell'/2}  \mb X_i \mb V_{\ell} (\mathcal N+1)^{-(\ell' + \ell + 2i ) / 2 }\| & \le C(\ell,\ell')\, \qquad \ \quad i \in \{ 0,1\}. \label{eq:bound:mbKell:2}
\end{align}
\end{lem}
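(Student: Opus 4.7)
The plan is to treat the three inequalities separately. Each reduces, via the explicit formulas \eqref{eq:def:K_1}--\eqref{eq:def:K:ell+1} and \eqref{eq:def:V_1}--\eqref{eq:def:mbV:a}, to estimates that follow by combining standard creation/annihilation bounds with the regularity estimates on $g_x^\Lambda$ from Lemma \ref{Lemma-Regularity_w}, the Lieb--Yamazaki commutator identity already used in Lemma \ref{lem:X:phiY:bound}, the $(\mathcal N+1)^{-1}$-absorbing property of $\mb R$ derived from Lemma \ref{lem:diagonalization of HBog} together with Lemma \ref{lem:number:op:bounds}, and the bounds \eqref{Rnabla:bounds} on $P\nabla_\Omega$ and $(-R)^{1/2}\nabla_\Omega$. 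The main technical obstacle lies in the first bound, where the $\Lambda$-independent estimate for $X_0 K_1 Y$ requires trading the unavoidable $\sqrt{\Lambda}$ growth of $\|\Pi_\Lambda v_x\|_2$ for the $o(1)$ factor $\|\nabla_\Omega g_x^\Lambda\|_2$ via Lieb--Yamazaki, at the cost of one $\nabla_\Omega$ absorbed by $X_0$; this trade is possible only when $X_0$ is projection-type, which is precisely what creates the announced dichotomy between $i=0$ and $i=1$.

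For the first bound, the cases $\ell \ge 5$ are trivial since $K_\ell$ is scalar, and for $\ell \in \{3,4\}$ the estimate follows from $\|g_x^\Lambda\|_2 \le C\Lambda^{-1}$. For $K_2$ the $\mathcal N$ piece is absorbed by $(\mathcal N+1)^{-1}$, the term $\phi(p g_x^\Lambda)^2$ is controlled by $C\|p g_x^\Lambda\|_2^2(\mathcal N+1)$, and the scalar $2\,\mathrm{Re}\,\langle v_x + \varphi^{\mathrm P}, g_x^\Lambda\rangle$ is uniformly bounded in $x$ after rewriting $\langle v_x, g_x^\Lambda\rangle = \langle \delta_x, (\Pi_\Lambda - \mathds 1)(-\Delta_\Omega)^{-2}\delta_x\rangle$ and using that $(-\Delta_\Omega)^{-2}(x,x)$ is bounded (by the same completely monotone argument as in Lemma \ref{Lemma-Regularity_w}). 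The main work concerns $K_1$: since $p^2 g_x^\Lambda = (\Pi_\Lambda - \mathds 1)v_x$, we rewrite $K_1 = \phi(\Pi_\Lambda v_x + \varphi^{\mathrm P}) + 2 a^*(p g_x^\Lambda) p + 2 p a(p g_x^\Lambda)$. The last two summands are $o(1)$ uniformly in $\Lambda$ for both $i$, since $\|p g_x^\Lambda\|_2 = o(1)$ and the electron momentum $p$ is absorbed by $Y$ or, after commuting, by $X_0$ through \eqref{Rnabla:bounds}. For the remaining $\phi(\Pi_\Lambda v_x + \varphi^{\mathrm P})$, the case $X_1 = \mathds 1$ produces the announced $\sqrt\Lambda$ factor via the standard creation/annihilation inequality together with $\|\Pi_\Lambda v_x\|_2 \le C\sqrt\Lambda$; for $X_0 \in \{P, (-R)^{1/2}\}$ we instead split $\phi(\Pi_\Lambda v_x) = \phi(v_x) - \phi((\mathds 1 - \Pi_\Lambda)v_x)$ and apply the Lieb--Yamazaki identity $\phi(v_x) = -\sum_i[\nabla_\Omega^{(i)}, \phi(\nabla_\Omega^{(i)} g_x)]$ and its truncated counterpart, so that the potentially large $\sqrt\Lambda$ is replaced by $\|\nabla_\Omega g_x^\Lambda\|_2 = o(1)$.

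The second bound follows from the explicit cancellations $K_1 - V_1 = \phi(p^2 g_x^\Lambda) + 2 a^*(p g_x^\Lambda) p + 2 p a(p g_x^\Lambda)$, $K_2 - V_2 = \phi(p g_x^\Lambda)^2 + 2\,\mathrm{Re}\,\langle v_x + \varphi^{\mathrm P}, g_x^\Lambda\rangle$, $K_3 - V_3 = \phi(g_x^\Lambda)$, $K_4 - V_4 = \|g_x^\Lambda\|_2^2$, and $K_\ell - V_\ell = 0$ for $\ell \ge 5$: each surviving piece contains at least one factor $g_x^\Lambda$, $p g_x^\Lambda$, or $p^2 g_x^\Lambda = -(\mathds 1 - \Pi_\Lambda) v_x$ (the last handled by Lieb--Yamazaki with $X_0$ absorbing the extra $\nabla_\Omega$), and Lemma \ref{Lemma-Regularity_w} produces the rate $C\sqrt{\Lambda^{-1}\ln\Lambda}$. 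For the third bound, $\mb V_\ell$ is a finite sum, with coefficients depending only on $\ell$, of products $P V_{\varepsilon_1} (R V_{\varepsilon_2}) \cdots (R V_{\varepsilon_i}) P$ with $|\peps| = \ell + 2$; since $[R, \mathcal N] = 0$, Lemma \ref{lem:X:phiY:bound} gives $\|X V_1 Y (\mathcal N+1)^{-1/2}\| \le C$ for $X, Y \in \{P, (-R)^{1/2}\}$, $V_2 = \mathcal N - E_0$ is diagonal in particle number with $|V_2| \le C(\mathcal N+1)$, and $V_{\ge 3}$ is scalar. Each factor $V_{\varepsilon_j}$ therefore contributes at most $\min(\varepsilon_j, 2)/2$ units of $(\mathcal N+1)$-weight, and summing along any $\peps$ with $|\peps| = \ell + 2$ the total weight is at most $(\ell + 2)/2$, which gives the case $i = 1$. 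For $i = 0$ the prefactor $\mb R$ absorbs one further $(\mathcal N+1)^{-1}$: from Lemma \ref{lem:diagonalization of HBog} and $\mf h^{1/2} \ge \tau_1^{1/2}$ one has $\mb U \mb H_0 \mb U^* \ge \tau_1^{1/2} \mathcal N + c$, which combined with Lemma \ref{lem:number:op:bounds} yields $\mb H_0 \ge c'(\mathcal N + 1) - C$; the positive spectral gap of $\mb H_0 - \eB E^{(n)}$ on $\mathrm{Ran}\,\mb Q$ then upgrades this to $\mb Q(\mb H_0 - \eB E^{(n)}) \mb Q \ge c''(\mathcal N + 1)\mb Q$, whence $\|(\mathcal N + 1)\mb R\| \le C$. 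The $\ell'$-dependence is obtained by commuting $(\mathcal N + 1)^{\ell'/2}$ past the at most $\ell + 2$ factors $V_{\varepsilon_j}$ and freely past $R$, with the commutators $[(\mathcal N + 1)^{\ell'/2}, V_{\varepsilon_j}]$ controlled by lower powers of $\mathcal N$, yielding a constant depending only on $\ell$ and $\ell'$.
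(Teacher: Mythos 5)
Your treatment of the second and third bounds, and of the first bound in the case $i=0$, follows the paper's proof essentially verbatim: Lieb--Yamazaki applied to both $\phi(v_x)$ and $\phi((1-\Pi_\Lambda)v_x)=-\phi(p^2g_x^\Lambda)$ with the extra gradients absorbed through \eqref{Rnabla:bounds}, the counting of at most $\min\{\varepsilon_j,2\}$ creation/annihilation operators per factor $V_{\varepsilon_j}$ so that the total weight along $\peps$ is $|\peps|/2=(\ell+2)/2$, and the extra $(\mathcal N+1)^{-1}$ gained from $\mb R$ via $\textnormal{d}\Gamma(\mf h^{1/2})\ge\tau_1^{1/2}\mathcal N$ and Lemma \ref{lem:number:op:bounds}. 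These parts are fine.

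The gap is in the first bound for $i=1$, $\ell=1$. After your rewriting $K_1=\phi(\Pi_\Lambda v_x+\varphi^{\mathrm P})+2a^*(pg_x^\Lambda)p+2pa(pg_x^\Lambda)$, the summand $2pa(pg_x^\Lambda)$ carries its electron momentum on the far left, and for $i=1$ there is nothing there to absorb it: your stated mechanism (``after commuting, by $X_0$'') is vacuous because $X_1=\mathds 1$. Commuting $p$ past the annihilation operator gives $2a(pg_x^\Lambda)p+2a(p^2g_x^\Lambda)$; the first piece is controlled by $\|pg_x^\Lambda\|_2\,\|pY\|$, but $p^2g_x^\Lambda=(\Pi_\Lambda-1)v_x$ is \emph{not} square-integrable (its $L^2$ norm is infinite for every $\Lambda$, since the ultraviolet tail of $v_x$ is not in $L^2$), so the standard bound $\|a(f)\Phi\|\le\|f\|_2\|\mathcal N^{1/2}\Phi\|$ is unavailable and the allowed factor $\sqrt\Lambda$ does not help. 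The paper closes exactly this point with the Coulomb-norm inequality $a^*(f_\cdot)a(f_\cdot)\le\|f\|_{\textnormal{C}}^2\,p^2\mathcal N$ of \cite[Lemma 5.1]{FrankS21} together with $\|p^2g_x^\Lambda\|_{\textnormal{C}}\le\|pg_x^\Lambda\|_2$, which trades the missing $L^2$ norm for the small quantity $\|pg_x^\Lambda\|_2$ at the cost of a $p^2$ appearing on the right, where $Y\in\{P,(-R)^{1/2}\}$ can absorb it. Some device of this kind is indispensable here; as written, your argument does not bound $K_1Y(\mathcal N+1)^{-1/2}$ when $X_1=\mathds 1$.
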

\begin{proof} Recall that $p= - i\nabla_\Omega$. First note that $K$ is at most quadratic in creation and annihilation operators. For $i=0$ one proceeds in close analogy to the proof of Lemma \ref{lem:X:phiY:bound}. That is, we use $\|p P\| + \| p (-R)^{1/2}\| < \infty $ in combination with the estimates from Lemma \ref{Lemma-Regularity_w}, to verify
\begin{align} \label{eq:bound:Kell}
\| X_0 K_\ell Y (\mathcal N+1)^{-\min\{1,\frac{\ell}{2} \} } \| \le C
\end{align}
uniformly in $\Lambda\ge 0$ for all $\ell \ge 1$. For $i=1$, there is no operator on the left of $K_\ell$ that can be used to improve the $\Lambda$-dependence. Here we rely on writing
\begin{align}
K_1 = \phi(v_x + \varphi^\P + p^2 g_x^\Lambda) + 2 a^*( p g_{x}^\Lambda) p + 2  a (p g^\Lambda_{x}) p + 2  a (p^2 g^\Lambda_{x}) 
\end{align}
and using \cite[Lemma 5.1]{FrankS21}, which states that $a^*(f)a( f)\leq \|f\|_{\textnormal{C}}^2\,  p^2 \mathcal{N}$ with the Coulomb norm $\|f\|^2_{\textnormal{C}}:=\frac{1}{4\pi}\int_{\mb R^6} \frac{\overline{f(x)}f(y)}{|x-y|}\mathrm{d}x\mathrm{d}y$. Since $\|p^2 g_x^\Lambda \|_{ \textnormal{C}} \leq \| p g_x^\Lambda \|_2$, this implies
\begin{align}
\|  K_1 Y (\mathcal N +1)^{-1/2} \| \le C   \sup_{x \in \Omega}\Big( \| v_x  + \varphi^\P  +  p^2 g_x^\Lambda  \|_2 + \| p g_x^\Lambda \|_2 \Big).
\end{align}
By definition of $g_x^\Lambda$, we have $p^2 g_x^\Lambda = (\Pi_\Lambda -1 )v_x$ and thus we can apply Lemma \ref{Lemma-Regularity_w} to estimate $\| v_x + \varphi^\P +  p^2 g_x^\Lambda \|_2 + \| pg_x^\Lambda \|_2 \le C(1+\sqrt \Lambda)$.  For $K_\ell$ with $\ell \ge 2$ the bound follows from the usual estimates for creation and annihilation operators together with Lemma \ref{Lemma-Regularity_w}. This completes the bound for $i=1$.

For the second bound, note that by Lemma \ref{Lemma-Regularity_w}, $\| g_x^\Lambda \|_2 + \| p g_x^\Lambda \|_2 \le C (\Lambda^{-1} \ln \Lambda )^{1/2}   $ and 
\begin{align}
| \langle v_x ,  g_x^\Lambda \rangle |    =  \|  p  g_x^\Lambda \|_2^2 \le C \Lambda^{-1} \ln \Lambda 
\end{align} 
for $\Lambda \ge 2 $. This implies that $\| X_0 (K_\ell - V_\ell) Y (\mathcal N+1)^{-\ell} \| \le C  (\Lambda^{-1} \ln \Lambda )^{1/2}   $ for all $\ell \ge 2$. For $\ell =1 $, we use  $v_x + p^2 g_x^\Lambda = \Pi_\Lambda v_x$ and thus
\begin{align}
\| X_0 ( K_{1} - V_1) Y (\mathcal N+1)^{-1/2} \| & \le C  (\Lambda^{-1} \ln \Lambda )^{1/2}   \notag\\
&\hspace{0.5cm} + \| X_0 \phi ( (1-\Pi_\Lambda) v_x ) (\mathcal N+1)^{-1/2} \| .
\end{align}
That $\| X_0 \phi ( (1-\Pi_\Lambda) v_x )Y (\mathcal N+1)^{-1/2} \| \le C  (\Lambda^{-1} \ln \Lambda )^{1/2}   $ follows by the same argument as in the proof of Lemma \ref{lem:X:phiY:bound} with $v_x\mapsto (1-\Pi_\Lambda) v_x$ and $g_x\mapsto -g_x^\Lambda$.

For the third bound, note that $\mb V_{\ell}$ consist of products of operators of the form $X_0 V_k Y$ and that each $\mb V_\ell$ has at most $\ell+2$ creation and annihilation operators.  Since $V_\ell = K_\ell$ for $\Lambda =\infty$, we can apply \eqref{eq:XKY:bound} to infer
\begin{align}\label{eq:Kellbound}
\|(\mathcal N+1)^{\ell'/2} \mb V_{\ell} (\mathcal N+1)^{-(\ell' + \ell + 2 )/2} \| \le C(\ell,\ell').
\end{align}
For $\mb X_0=\mb R$, we use that $ \mb U \mb R \mb U^* $ commutes with $ \mathcal N$ and that $\| \mb U \mb R \mb U^* (\mathcal N+1) \| \le 1 $ by Lemma \ref{lem:diagonalization of HBog}. Invoking  \eqref{eq:Kellbound} and Lemma \ref{lem:number:op:bounds}, we thus have
\begin{align}
& \| (\mathcal N+1)^{\ell'/2} \mb V_\ell \mb R (\mathcal N+1)^{-( \ell' + \ell )/ 2 } \|  \le C(\ell ,\ell' ) \|    (\mathcal N+1)^{(\ell' + \ell + 2 ) /2} \mb R (\mathcal N+1)^{-( \ell' + \ell) / 2 } \|  \notag\\
& \qquad \qquad \le  C(\ell,\ell')  \|    (\mathcal N+1)^{(\ell'+\ell + 2 ) /2} (\mb U \mb R  \mb U^*) \mb U (\mathcal N+1)^{-( \ell' + \ell ) /2 } \|  \notag\\
& \qquad \qquad \le  C(\ell,\ell')  \|    (\mathcal N+1)^{(\ell'+\ell ) /2} \mb U (\mathcal N+1)^{-( \ell' + \ell ) /2 } \| \le C(\ell,\ell').
\end{align}
This completes the proof of the lemma.
\end{proof}
\begin{lem}\label{lem:PKRKP:approximation} Choose $\eB E^{(n_s)}$ as in Proposition \ref{prop:asym:expansion:HG:1} and let $\zeta \in \mathcal F$ satisfy $\mathds{1}(\mathcal N\le m) \mb U \zeta = \mb U \zeta$ with $\mb U$ defined by \eqref{eq: def of U} and some $m\in \mb N$. For every $b\in \mb N_0$ there is a constant $C(b)>0$ such that for all $\Lambda \ge 2 $, $\alpha>0$  
\begin{align}\label{eq:PKRKP:mbV}
\bigg\| \bigg( \sum_{i=0}^{2b+3} P K  ( RK)^i P -  \alpha^{-2} \big( \mb H_0 - \eB E^{(n_s)} + \mb V  \big) \bigg) \zeta  \bigg\| \le C(b) \bigg( \sqrt{\frac{\ln \Lambda}{\Lambda}} \alpha^{-3}  + \alpha^{-b-3} \bigg).
\end{align}
\end{lem}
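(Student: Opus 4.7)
The plan is to expand $K = \sum_{\ell=1}^{b+2}\alpha^{-\ell}K_\ell$ multinomially inside $\sum_{i=0}^{2b+3} PK(RK)^i P$ and match the resulting series against $\alpha^{-2}(\mb H_0 - \eB E^{(n_s)} + \mb V)$ order by order in $\alpha^{-1}$. Regrouping by the total weight $m = |\peps|$, one obtains
\begin{align*}
\sum_{i=0}^{2b+3} PK(RK)^i P = \sum_{m \ge 2}\alpha^{-m}\, S_m, \qquad S_m := \sum_{\substack{1 \le j \le 2b+4 \\ \peps \in \{1,\ldots,b+2\}^j \\ |\peps|=m}} P K_{\varepsilon_1}(RK_{\varepsilon_2})\cdots(RK_{\varepsilon_j}) P,
\end{align*}
the $m=1$ term dropping because $PK_1 P = 0$ (cf.\ the proof of Lemma \ref{lem:Bogoliubov:identity}).

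For the low orders $2 \le m \le b+2$, the constraints $j \le 2b+4$ and $\varepsilon_k \le b+2$ are automatic from $j \le m \le b+2$, so $S_m$ coincides structurally with $\mb V_{m-2}$ from \eqref{eq:def:mbV:a} except that each $V_\ell$ is replaced by its cutoff version $K_\ell$. In particular, Lemma \ref{lem:Bogoliubov:identity} gives $S_2 = PK_2 P + PK_1 R K_1 P = \mb H_0 - \eB E^{(n_s)}$ identically in $\Lambda$, matching the target at order $\alpha^{-2}$. For $3 \le m \le b+2$ I would write $S_m - \mb V_{m-2}$ as a telescoping sum in which one factor $K_{\varepsilon_k}$ is replaced by $K_{\varepsilon_k} - V_{\varepsilon_k}$ at a time; the second bound of Lemma \ref{lem:estimates:approx:groundstate:3x} then contributes a factor $(\Lambda^{-1}\ln\Lambda)^{1/2}$ per term, while the first and third bounds of the same lemma dispose of the remaining operators in the chain.

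The high-order tail $\sum_{m \ge b+3}\alpha^{-m}S_m$ will be handled uniformly: the number of tuples contributing to any fixed $S_m$ is bounded in terms of $b$ alone; each chain is bounded independently of $\Lambda$ via the first estimate of Lemma \ref{lem:estimates:approx:groundstate:3x}, since every chain begins and ends with $P$ and therefore only invokes the $i=0$ case; and the geometric series $\sum_{m \ge b+3}\alpha^{-m}$ sums to $O(\alpha^{-b-3})$. All the polynomial $(\mathcal N+1)^{c(b)}$-weights inherited from these bounds are absorbed using the hypothesis $\mathds{1}(\mathcal N \le m)\mathbb U \zeta = \mathbb U\zeta$ together with Lemma \ref{lem:number:op:bounds}, which yields $\|(\mathcal N+1)^{c(b)}\zeta\| \le C(b,m)\|\zeta\|$.

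The main obstacle will be the careful bookkeeping of the telescoping decomposition of $S_m - \mb V_{m-2}$ and the consistent tracking of the $(\mathcal N+1)^c$-weights, since each application of $K_\ell$ or $V_\ell$ shifts particle number by up to $\ell$. The structural fact that makes the whole argument go through is that both endpoints of every chain in $S_m$ are the projection $P$, which prevents the $O(\sqrt\Lambda)$-blow-up of $K_1$ (the $i=1$ case of Lemma \ref{lem:estimates:approx:groundstate:3x}) from ever appearing uncompensated.
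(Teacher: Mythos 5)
Your proposal is correct and follows essentially the same route as the paper's proof: Lemma \ref{lem:Bogoliubov:identity} handles the order-$\alpha^{-2}$ term, the replacement $K_\ell \to V_\ell$ (costing $(\Lambda^{-1}\ln\Lambda)^{1/2}$ per factor via Lemma \ref{lem:estimates:approx:groundstate:3x}) identifies the orders $\alpha^{-3},\ldots,\alpha^{-b-2}$ with $\alpha^{-2}\mb V$, and the finite high-order tail is absorbed into $O(\alpha^{-b-3})$ using the $\Lambda$-uniform $i=0$ bounds together with Lemma \ref{lem:number:op:bounds} and the particle-number hypothesis on $\zeta$. The paper merely performs the $K\to V$ replacement before regrouping by powers of $\alpha^{-1}$ rather than after, which is an immaterial difference in bookkeeping.
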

\begin{proof} By Lemma \ref{lem:Bogoliubov:identity}, we have $
\sum_{i=0}^{2b+3} P K  ( RK)^i P -  \alpha^{-2} \big( \mb H_0 - \eB E^\n  \big) = \mb W_K $ with
\begin{align}
\mb W_K =  \sum_{\ell =3}^{b+2} \alpha^{-\ell} PK_\ell P + \sum_{\substack { k,\ell =1 \\ k+\ell \ge 3 } }^{b+2} \alpha^{-\ell - k} PK_\ell P K_k P + \sum_{i=2}^{2b+3} P K  ( RK)^i P.
\end{align}
We compare this operator with
\begin{align}
\mb W_V =  \sum_{\ell =3}^{b+2} \alpha^{-\ell} PV_\ell P + \sum_{\substack { k,\ell =1 \\ k+\ell \ge 3 } }^{b+2} \alpha^{-\ell-k} PV_\ell R V_k P + \sum_{i=2}^{2b+3} P V  ( R V )^i P,
\end{align}
where $V= \sum_{\ell=1}^{b+2} \alpha^{-\ell} V_\ell$. By Lemma \ref{lem:estimates:approx:groundstate:3x}, each replacement of $K_\ell$ by $V_\ell$ comes at  the cost of an error proportional to $(\Lambda^{-1} \ln \Lambda)^{1/2}$,
\begin{align}
& \big\| ( \mb W_K -  \mb W_V ) \zeta \big\|  \le C(b) \alpha^{-3} (\Lambda^{-1} \ln \Lambda)^{1/2} \| (\mathcal N+1)^{(2b+4)} \zeta \|.
\end{align}
To proceed we sort all terms in $\mb W_V$ according to powers of $\alpha^{-1}$,
\begin{align}
\sum_{\ell =3}^{b+2} \alpha^{-\ell} PV_\ell P &  = \sum_{\ell = 3}^{b+2} \alpha^{-\ell} \sum_{\substack{ \peps \in \mb N \\ |\peps | = \ell } } P V_{\epsilon_1} P \\ 
\sum_{\substack { k,\ell =1 \\ k+\ell \ge 3 } }^{b+2} \alpha^{-\ell-k} PV_\ell R V_k P & = \sum_{\ell = 3}^{2(b+2)} \alpha^{-\ell} \sum_{\substack{ \peps \in \mb N^2 \\ |\peps | = \ell } } P V_{\epsilon_1} (R V_{\epsilon_2})  P
\end{align}
and similarly for the last term,
\begin{align} 
 \sum_{i=2}^{2b+3} P V  ( R V )^i P & = \sum_{i= 3  }^{2b+4}  \sum_{\ell = 3 }^{(b+2)(2b+4)} \alpha^{-\ell}  \sum_{\substack { \peps \in \mb N^i  \\ |\peps | =\ell }}  P V_{\varepsilon_1} (R V_{\varepsilon_2}) \ldots (R V_{\varepsilon_{i}}) P , 
\end{align}
where we used that all summands with $3 \le \ell \le i-1$ are zero. Adding the three contributions and using \eqref{eq:def:mbV:a}, we obtain 
\begin{align}\label{eq:W:V:identity}
\mb W_V & =  \sum_{\ell=3}^{b+2} \alpha^{-\ell} \mb V_{\ell-2} + \sum_{i=2}^{2b+4} \sum_{\ell = b+3}^{(b+2)(2b+4)} \alpha^{-\ell} \sum_{\substack { \peps \in \mb N^i  \\ |\peps | =\ell }}  P V_{\varepsilon_1} (R V_{\varepsilon_2}) \ldots (R V_{\varepsilon_{i}}) P,
\end{align}
where we also used that $PV_{\varepsilon_1}RV_{\varepsilon_2} R = 0$ for $\varepsilon_1+\varepsilon_2 \ge 3$. With $\alpha^{-2}\mb V =  \sum_{\ell=3}^{b+2} \alpha^{-\ell} \mb V_{\ell-2}$ and since the remainder can be estimated by means of Eq. \eqref{eq:XKY:bound}, we arrive at
\begin{align}
& \big\| ( \mb W_V - \alpha^{-2} \mb V)\zeta  \big\|  \le \frac{C(b)}{\alpha^{b+3}}  \big \| (\mathcal N+1)^{(b+2)(2b+4)/2} \zeta \big\|  .
\end{align}
Finally, we use Lemma \ref{lem:number:op:bounds} and the assumption on $\zeta$ to estimate
\begin{align}
\| (\mathcal N+1)^{(b+2)(2b+4)/2} \zeta \| \le C(b)\| (\mathcal N+1)^{(b+2)(2b+4)/2} \mb U \zeta \| \le C(b).
\end{align}
This concludes the proof of the lemma.
\end{proof}

We are now prepared to prove Proposition \ref{prop:asym:expansion:HG:1}.

\begin{proof}[Proof of Proposition \ref{prop:asym:expansion:HG:1}] The starting point is Eq. \eqref{eq:heuristic:discussion:mbK:1} for $\xi=\xi_b$, 
\begin{align}
&  (H_0 +K) \Psi_{b} =    \bigg( \sum_{i=0}^{2b+3} PK (  R K  )^i \bigg)   \psi^\P \otimes \xi_{b} +  Q  K (  R K  )^{2b+3}   \psi^\P \otimes \xi_b, \label{eq:heuristic:argument:1}
\end{align}
where we apply Lemma  \ref{lem:PKRKP:approximation} to the first term (note that $\xi_b$ satisfies the required condition for some $b$-dependent value of $m$) and Lemma \ref{lem:estimates:approx:groundstate:3x} to the second term,
\begin{align}\label{eq:bound:H_0+K:1}
\| ( H_0 + K) \Psi_b \| & \le \alpha^{-2}  \big \|  (\mb H_0 - \eB E^{(n_s)} + \mb V ) \psi^\P \otimes \xi_b \big\| \notag\\
&  \hspace{2cm} + C(b) \bigg(\, \sqrt{\frac{\ln \Lambda}{\Lambda}} \alpha^{-3} + \alpha^{-b-3} + \frac{\sqrt \Lambda  }{\alpha^{2b+4} }   \bigg) .
\end{align}
Choosing $\Lambda = \alpha^{2b+2}$, the error in the second line is bounded by $ C(b) \alpha^{-b-3}$.

In the first term, we use $\mathds{1} = \mb P + \mb Q$, $\mb P(\mb H_0 - \eB E^{(n_s)}) = 0$ and $\mb R (\mb H_0 - \eB E^{(n_s)}) = - \mb Q$ to obtain
\begin{align}
 (\mb H_0 - \eB E^{(n_s)} + \mb V ) \psi^\P \otimes \xi_b = (P\otimes \mb P) \mb V \psi^\P \otimes \xi_b + \mb Q \mb V (\mb R \mb V)^{b} \psi^\P \otimes \ms U \Gamma^{(n_s)}.
\end{align}
With the aid of Lemmas \ref{lem:estimates:approx:groundstate:3x} and \ref{lem:number:op:bounds} and using that $\ms U \Gamma^{(n_s)} = \mathbb U^* \ms U \gamma^{(n_s)} $ with $\mathds 1(\mathcal N\le  m) \ms U \gamma^{(n_s)}  = \ms U \gamma^{(n_s)} $ for some $m\ge 1$, we can estimate the second term by
\begin{align}\label{eq:bound:H_0+K:2}
\big\|  \mb Q \mb V (\mb R \mb V)^{b} \psi^\P \otimes \ms U \Gamma^{(n_s)} \| \le  \frac{C(b)}{\alpha^{b+1}}  \| (\mathcal N+1)^{ (b+2+b^2)/2} \mb U^* \ms U  \gamma^{(n_s)}  \| \le  \frac{ C(b) }{  \alpha^{b+1}}.
\end{align}

In the remainder, we shall show that $ \| (P \otimes \mb P  ) \mb V \psi^\P \otimes \xi_b \|   \,  \le \, C(b) \alpha^{-b-1} $. We recall that $\xi_b =  \sum_{j=0}^{b}  ( \mb R  \mb V )^j   \ms U \Gamma^{(n_s)}$ for $s\in \{1,\ldots, \d\}$ with $\{ \Gamma^{(n_1)} , \ldots, \Gamma^{(n_\d)}\}$ an orthonormal basis of $\text{Ran}\mb P$ and $\ms U$ a unitary acting on this subspace. We first observe that
\begin{align}
& \| (P \otimes \mb P  ) \mb V \psi^\P \otimes \xi_b \|  \notag \\
&\qquad =  \sum_{r=1}^\d  \bigg|\sum^{b(b+1)}_{\ell=1} \alpha^{- \ell }   \bigg\langle \ms U \Gamma^{(n_r) } ,  \bigg[  \sum_{i=1}^{\min \{ \ell , b+1 \} } \sum_{\substack{ \peps \in \mb N^i \\ |\peps|=\ell } } \,  \mb V_{ \varepsilon_1 } (  \mb R  \mb V_{ \varepsilon_2} ) \ldots ( \mb R \mb V_{ \varepsilon_i})  \bigg] \ms U \Gamma^{(n_s) }    \bigg\rangle \bigg| , \label{eq:R:k+2:identity:b:1}
\end{align}
which follows  from expanding $\mb V \xi_b$ in powers of $\alpha^{-1}$,
\begin{align}
\mb P \mb V   \xi_b   & =    \sum_{\ell=1}^{b(b+1)} \frac{1}{\alpha^\ell} \sum_{i=1}^{\min\{ \ell , b+1 \} } \sum_{\substack{ \peps \in \mb N^i \\ |\peps|=\ell } }  (  \mb P  \mb V_{ \varepsilon_1} )( \mb R \mb V_{ \varepsilon_2}) \ldots ( \mb R \mb V_{ \varepsilon_i})  \ms U \Gamma^{(n_s)} ,
\end{align}
and using $ \mb P = \sum_{r=1}^\d  | \ms U \Gamma^{(n_r)} \rangle \langle 
\ms U \Gamma^{(n_r)}|$. Note that in  \eqref{eq:R:k+2:identity:b:1} we omit the trivial tensor product with $\psi^\P$ on both sides of the scalar product.

By separating the term with $i=1$ in \eqref{eq:R:k+2:identity:b:1} and using $\mb V_\ell = - E_\ell + \widetilde {\mb V}_\ell$, cf. \eqref{eq:def:mbV:b}, we can write
\begin{align}\label{PPVxi-term}
 \| (P \otimes \mb P  ) \mb V \psi^\P \otimes \xi_b \|   & = \sum_{r=1}^\d \bigg|  \Ls \ms  U \Gamma^{(n_r)} , \sum_{\ell = 1}^{b(b+1)} \alpha^{-\ell} (M_\ell - E_\ell ) \ms U \Gamma^{(n_s)} \Rs \bigg| 
\end{align}
with
\begin{align}
M_\ell : = \mb P \bigg[  \widetilde {\mb V}_\ell + \sum_{i=2}^{\min\{ \ell , b+1\} } \sum_{\substack{ \peps \in \mb N^i \\ |\peps|=\ell } } \,  \mb V_{ \varepsilon_1 } (  \mb R  \mb V_{ \varepsilon_2} ) \ldots ( \mb R \mb V_{ \varepsilon_i})  \bigg]  \mb P .
\end{align} 
The terms with $\ell \ge b+1$ in \eqref{PPVxi-term} can be estimated similarly as in \eqref{eq:bound:H_0+K:2}, using Lemmas \ref{lem:estimates:approx:groundstate:3x} and \ref{lem:number:op:bounds}, by 
\begin{align}
\bigg\| \sum_{\ell = b+1}^{b(b+1)} \alpha^{-\ell} ( M_\ell - E_\ell)  \ms U \Gamma^{(n_s)} \bigg\| \le \frac{C(b)}{\alpha^{b+1}}  \| (\mathcal N+1)^{(b+2+b^2)/2} \mb U^* \ms U \gamma^{(n_s)}  \| \le  \frac{C(b)}{ \alpha^{b+1}} .
\end{align}
For the contributions with $1 \le  \ell \le b$, we write $
M_\ell = \sum_{r,t=1}^\d ( M_\ell)_{rt} |\Gamma^{(n_r)} \rangle \langle \Gamma^{(n_t)}|$
with $(M_\ell)_{rt}$ the matrix coefficients defined in \eqref{eq:matrix:definitions}. Moreover, we set $M^{(b)} := \sum_{\ell =1 }^b \alpha^{-\ell }M_\ell$ and choose $\ms U$ such that $\ms U^*  M^{(b)} \ms U = \sum_{s=1}^\d \mu^{(s)}(M^{(b)}) |\Gamma^{(n_s)} \rangle \langle \Gamma^{(n_s)}|$ with eigenvalues $\mu^{(1)}(M^{(b)})\le \ldots \le \mu^{(\d)}(M^{(b)})$. Thus, we obtain 
\begin{align}
 \sum_{r=1}^\d \bigg|  \Ls \ms  U \Gamma^{(n_r)} , \sum_{\ell = 1}^{b} \alpha^{-\ell} (M_\ell - E_\ell ) \ms U \Gamma^{(n_s)} \Rs \bigg|   & =   \bigg| \mu^{(s)}(  M^{(b)}  ) - \sum_{\ell=1}^b \alpha^{-\ell} E_\ell \bigg|.
\end{align}
As explained in Remark \ref{rem:eigenvalue:M}, the eigenvalue $\mu^{(s)}(M^{(b)})$ has a power series expansion that converges for $\alpha \ge \alpha(b)$ for some large enough $\alpha(b)$ with coefficients $\mu^{(s)}_\ell$. Since $\lim_{\alpha \to \infty } \alpha^b ( \mu^{(s)}(M^{(b)}) - \sum_{\ell=1}^{b} \alpha^{-\ell} E_\ell ) = 0$ by definition of $E_b$, we have
\begin{align}
 \mu^{(s)}(M^{(b)}) - \sum_{\ell =1 }^b \alpha^{-\ell}  E_\ell  =  \sum_{\ell =b+1}^\infty \alpha^{-\ell}  \mu^{(s)}_\ell ,
\end{align}
where the remainder is bounded by $C(b) \alpha^{-b-1}$. (For completeness let us note that the remainder bound holds trivially for all $\Lambda$ since $M^{(b)}$ is $\Lambda$-independent by definition). This competes the proof of the lemma.
\end{proof}

\subsection{Proofs of Theorems \ref{thm:asymptotic:expansion:1} and \ref{thm:recursive:formula:generalcase}}

In this section we use Propositions \ref{prop:two:term:expansion} and \ref{prop:asym:expansion:HG:1} to prove Theorems \ref{thm:asymptotic:expansion:1} and \ref{thm:recursive:formula:generalcase}. Since $E_\ell$ is chosen as specified in Eq. \eqref{eq:limit:formula}, it is sufficient to prove Theorem \ref{thm:asymptotic:expansion:1}.

Before we begin, we recall the following simple fact. Consider a self-adjoint operator $A$ on a Hilbert space $\ms H$ with discrete spectrum and count the eigenvalues as $\mu^{(1)}(A) \le \mu^{(2)}(A)\le \ldots$ Assume that a set of normalized states $u_1,\ldots,u_n \in D(A) $ and two positive numbers $\varepsilon,\delta$ satisfy $| \langle u_i , u_j\rangle | \le  \varepsilon$ for all $i\neq j \in \{ 1,\ldots , n\} $ and $\| A u _i \| \le \delta$ for all $i \in \{ 1,\ldots , n\}$. Then, there exists a $\delta$-independent constant $C_\varepsilon>0$ such that $A$ has at least $n$ (possibly equivalent) eigenvalues in the interval $[-C_\varepsilon \delta ,C_\varepsilon \delta]$. Note that $C_\varepsilon =1$ in case $n=1$.

\begin{proof}[Proof of Theorem \ref{thm:asymptotic:expansion:1}] To enhance the clarity of the argument, let us first focus on eigenvalues $\ms E^\n(\alpha)$ such that $\eB E^\n$ is non-degenerate. We introduce $
\varepsilon^{(1)} : =  \eB E^{(2)} - \eB E^{(1)} >0$ and $
\varepsilon^\n  := \min \{ \eB E^{(n+1)} - \eB E^\n ,  \eB E^\n - \eB E^{(n-1)} \} > 0$
for $n\ge 2$. Moreover, we recall the unitary equivalence $\mf H_{\alpha,\Lambda}^{\textnormal{G}} - e^\Pek \cong \mf H_\alpha $ by \eqref{eq:Gross:transformed:Froehlich} and observe that $\| \Psi_b \| \ge 1$. With this at hand, we can apply Proposition \ref{prop:asym:expansion:HG:1} to see that the spectrum $\sigma (\mathfrak H_\alpha - e^\Pek)$ is non-empty around the approximate eigenvalue $\ms E_b^\n (\alpha) :=  \alpha^{-2} \eB E^\n +  \sum_{\ell=1}^b \alpha^{-\ell-2} E_\ell $. More precisely,
\begin{align}
 \sigma \big(\mf H_\alpha-e^\Pek \big)  \cap \Big[ \ms E_b^\n(\alpha)  - \frac{C(b)}{ \alpha^{b+3} } ,  \ms E_b^\n(\alpha)+ \frac{ C(b) }{\alpha^{b+3}} \Big]  
\end{align}
contains at least one eigenvalue.
For $\alpha \ge \alpha(b)$ with large enough $\alpha(b)$, we further have
\begin{align}
\Big[ \ms E_b^\n(\alpha)  - \frac{C(b)}{ \alpha^{b+3} } ,  \ms E_b^\n(\alpha)+ \frac{ C(b) }{\alpha^{b+3}} \Big]   \subset \Big[ \alpha^{-2} \big( \eB E^\n -  \tfrac13 \varepsilon^\n  \big) , \alpha^{-2} \big( \eB E^\n +  \tfrac13 \varepsilon^\n \big) \Big].
\end{align}
By Proposition \ref{prop:two:term:expansion}, on the other hand, we know that the only eigenvalue in the interval on the right side is $\ms E^\n(\alpha) - e^\Pek$. Concretely, we apply Proposition \ref{prop:two:term:expansion} to the neighboring eigenvalues to obtain
\begin{align}
 \ms E^{(n+1)} (\alpha) - e^\Pek & \ge  \alpha^{-2} \eB E^\n +\alpha^{-2} \varepsilon^\n - C \alpha^{-2-\delta} \ge  \alpha^{-2} \big( \eB E^\n  + \tfrac12 \varepsilon^\n \big) \\
 \ms E^{(n-1)}(\alpha) - e^\Pek & \le  \alpha^{-2}   \eB E^\n + \alpha^{-2} \varepsilon^\n  -  C \alpha^{-2-\delta} \le  \alpha^{-2} \big( \eB E^\n - \tfrac12 \varepsilon^\n \big) 
\end{align} 
for all $\alpha \ge \alpha_0$ for suitable $\alpha_0>0$ and $\delta>0$. Thus, we can conclude that $\sigma ( \mf H_\alpha-e^\Pek )  \cap [ \ms E^\n_b (\alpha)    - C(b)  \alpha^{-b-3} ,   \ms E^\n_b (\alpha)  + C(b)  \alpha^{b+3} ] = \{ \ms E^\n(\alpha) -e^\Pek \}$, which completes the proof of Theorem \ref{thm:asymptotic:expansion:1} in the non-degenerate case.

Since the general case can be treated in complete analogy, we shall present the argument for a two-fold degenerate eigenvalue, that is, for $ \eB E^{(n-1)} < \eB E^\n = \eB E^{(n+1)}  <  \eB E^{(n+2)}$.
In this case, we have two sets of coefficients and two approximate eigenstates \eqref{eq:Phi:n:m:1}, which we denote by $E^\n_{\ell}$, $ E^{(n+1)}_{\ell}$ and $\Psi^\n_b$, $\Psi_b^{(n+1)}$, respectively. Using Lemma \ref{lem:estimates:approx:groundstate:3x}, it follows that $ \| \Psi^{(j)}_b
 - \psi^\P \otimes \ms U \Gamma^{(j)} \| \to 0$ as $\alpha \to \infty$ for $j\in \{n,n+1\}$ and thus, $|\langle \Psi^{(n)}_b , \Psi^{(n+1)}_b \rangle |  \le \tfrac12 \| \Psi^{(n)}_b \| \, \| \Psi^{(n+1)}_b \| $ for large $\alpha$, since $\langle \Gamma^{(n)} , \Gamma^{(n+1)}\rangle = 0 $ by assumption and $\| \Psi^{(j)}_b\| \ge 1$. 
 
Now assume that $E_{\ell}^\n = E_{\ell}^{(n+1)}$ for $1\le \ell \le b$ and set $\ms E_b^\n(\alpha) = \sum_{\ell=0}^b  \alpha^{-2-\ell} E^{(n)}_{\ell}  $. As a consequence of Proposition \ref{prop:asym:expansion:HG:1}, we know that there is a constant $C(b)$ such that 
\begin{align}
 \sigma\big(\mf H_\alpha-e^\Pek \big)  \cap  \Big[ \ms E_b^\n(\alpha)  - \frac{C(b)}{ \alpha^{b+3} } ,  \ms E_b^\n(\alpha)+ \frac{ C(b) }{\alpha^{b+3}} \Big]  
\end{align}
contains at least two eigenvalues. In analogy to the non-degenerate case, we know by Proposition \ref{prop:two:term:expansion} that the only possible elements of the spectrum in the above set are the eigenvalues $\ms E^{(n)} (\alpha)-e^\Pek$, $\ms E^{(n+1)} (\alpha)-e^\Pek $. This proves the statement if $E_{\ell}^\n = E_{\ell}^{(n+1)}$ for all $1\le \ell \le b$. 

On the other hand, if there is an integer $1\le \ell_0 \le  b$ with $E^\n_{\ell_0} \neq E_{\ell_0}^{(n+1)}$, then the two intervals 
\begin{align}
I^{(j)}:= \Big[\ms E^{(j)}_b(\alpha) - \frac{C(b)}{ \alpha^{3+b}} , \ms E^{(j)}_b(\alpha) + \frac{ C(b) } { \alpha^{3+b} } \Big] , \quad j\in \{ n , n+1\}
\end{align}
with $\ms E^{(j)}_b(\alpha) =  \sum_{\ell=0}^b  \alpha^{-2-\ell} E^{(j)}_{\ell}  $ are disjoint for large $\alpha$. By Proposition \ref{prop:asym:expansion:HG:1}, we know that each set $\sigma(\mf H_\alpha - e^\Pek) \cap I^{(j)}$, for $j=n$ and $j=n+1$, contains at least one eigenvalue.  By Proposition \ref{prop:two:term:expansion}, we know again that the only possible elements of the spectrum in $\sigma(\mf H_\alpha - e^\Pek) \cap (I^{(n)} \cup I^{(n+1)}) $ are the eigenvalues $\ms E^\n(\alpha)-e^\Pek$, $\ms E^{(n+1)}(\alpha)-e^\Pek$. Since $E^\n_{\ell_0} < E^{(n+1)}_{\ell_0}$ by definition and $\ms E^\n(\alpha) \le \ms E^{(n+1)}(\alpha)$ by assumption, this implies that $\sigma (\mf H_\alpha - e^\Pek) \cap I^{(j)} = \{ \ms E^{(j)}(\alpha) - e^\Pek \}$ for $j\in \{ n,n+1\}$. This  completes the proof of the theorem in the two-fold degenerate case. Extending the argument to higher degeneracy is straightforward.
\end{proof}

\section{Improved remainder estimates}

\label{sec:improved:remainder:estimates}

In this section, we provide the proof of Theorem \ref{thm:asymptotic:expansion:2}. To this end, we optimize the form of the approximate eigenstates and establish better combinatorial estimates than the ones used in Section \ref{sec:approximate:eigenstates}.

For this, we assume that $\eB E^\n$ is a non-degenerate eigenvalue of $\mb H_0$ and $\Gamma$ the corresponding normalized eigenstate. Throughout this section, we define the sequence $(E_\ell)_{\ell \in \mb N_0}$ in the following way. Let $K_\ell$ be defined by \eqref{eq:def:K_1}--\eqref{eq:def:K:ell+1} with yet-to-be-determined coefficients $E_\ell$ and consider the Fock space operators
\begin{align}\label{def:mb:V:k}
\mb K_\ell  = \mb K_\ell \big(  E_0, \ldots, E_{\ell } \big) & :=   \sum_{i=1}^{\ell+2} \sum_{\substack{ \peps \in \mathbb N^i \\ |\peps|  = \ell+2 }}  P  K_{\varepsilon_1}  (R K_{\varepsilon_2} )\ldots  (R K_{\varepsilon_i}) P 
\end{align}
and $\widetilde {\mb K}_\ell = \widetilde{ \mb K}_\ell (E_0 , \ldots, E_{\ell-2})   : = \mb K_\ell + E_{\ell}$ for $\ell\ge 1$. For $b\in \mb N_0$, we then iteratively define $E_0 := \eB E^\n$ and for $\ell \ge 1$
\begin{align}
\label{eq:def:E:Lambda}
E_{\ell} & : = \bigg\langle \psi^\P \otimes \Gamma  , \bigg[  \widetilde {\mb K}_{\ell}  + \sum_{j=2}^{\ell} \sum_{ \substack { \peps \in \mb N^j \\ |\peps| = \ell } }  \mb K_{ \varepsilon_1 } ( \mb R  \mb K_{ \varepsilon_2} ) \ldots ( \mb R \mb K_{ \varepsilon_j})\bigg]  \psi^\P \otimes \Gamma  \bigg\rangle.
\end{align}
We shall derive the statement of Theorem \ref{thm:asymptotic:expansion:2} for this choice of $(E_\ell)_{\ell \in \mb N_0}$. By uniqueness of the asymptotic series, we can conclude a posteriori that the coefficients \eqref{eq:def:E:Lambda} are identical to the ones given in Theorem \ref{thm:non-degnerate-formula}.

For $b\in \mb N_0$, we consider the state
\begin{align}\label{eq:approximate:ground:state}
\Phi_b = \Phi_b (E_0 ,\ldots,E_{b}) = \sum_{\ell = 0 }^{b+2 } \frac{1}{\alpha^\ell } \Upsilon_{\ell} 
\end{align}
with $\Upsilon_0 = \psi^\P \otimes  \Gamma $
and
\begin{align}
\Upsilon_{\ell} & =  \sum_{i,j=0}^{\ell} \underset{ |\peps| + |\pmb k| = \ell}{ \sum_{\peps \in \mathbb N^i  }\sum_{  \pmb k \in \mathbb N^j }  } (R K _{\varepsilon_1})\ldots  ( R K_{\varepsilon_i})   ( \mb R  \mb K_{k_1})\ldots  (  \mb R \mb K_{k_j})  \Upsilon_0, \label{eq:def:Upsilon:ell}
\end{align}
where $\pmb k = (k_1,\ldots ,k_j)$, $|\pmb k| = k_1 + \ldots + k_j$ and similarly for $\peps$. 

For later use, let us also introduce the state
\begin{align}\label{eq:deg:widetilde:Phi}
\widetilde \Phi_b = \widetilde \Phi_b(E_0,\ldots, E_b) =    \bigg( \sum_{i=0}^{b+2} ( R  K )^i \, \bigg)  \psi^\P \otimes \zeta_b \quad \text{with} \quad \zeta_b =   \sum_{j=0}^{b+2}  ( \mb R  \mb K )^j   \Gamma
\end{align}
and
\begin{align}\label{eq:def:K:mbK}
K = \sum_{\ell=1}^{b+2} \alpha^{-\ell} K_\ell, \qquad
\mb K := \sum_{\ell=1}^{b} \alpha^{-\ell} \mb K_\ell.
\end{align}
The state $\Phi_b$ is obtained by expanding $\widetilde \Phi_b$ in powers of $\alpha^{-1}$ and keeping only the terms up to order $O(\alpha^{-b-2})$. Consequently, $ \Phi_b - \widetilde \Phi_b$ has only contributions of order $O(\alpha^{-b-3})$ and higher, a property that will be repeatedly used in the proof.

\begin{prop} \label{prop:asym:expansion:HG} Let $\mf H_{\alpha,\Lambda}^{\textnormal{G}}$ be defined by \eqref{eq:Gross:transformed:Froehlich}. Choose $n\in\mb N$ such that the $nth$ eigenvalue of $\mb H_0$ is non-degenerate with normalized eigenfunction $\Gamma \in \mathcal F$. There exist constants $C,\alpha_0>0$ so that the sequence $(E_\ell)_{\ell\in \mb N_0}$ defined by \eqref{eq:def:E:Lambda} and the state $\Phi_b$ given by \eqref{eq:approximate:ground:state} satisfy
\begin{align}\label{eq:bound:ev:equation:H_G}
 \bigg\| \bigg( \mf H_{\alpha,\Lambda}^{\textnormal{G}} -     \sum_{\ell =1}^b \alpha^{-\ell-2} E_{\ell}  \bigg)   \Psi_b \bigg\| \le \frac{C^{b+1} (b+1)!}{\alpha^{b+3}} \big( 1 + \sqrt \Lambda \big)
\end{align}
for all $\Lambda \ge 2$, $b\in \mb N_0$, $\alpha \ge \alpha_0$.
\end{prop}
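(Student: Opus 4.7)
The strategy is to work first with the auxiliary state $\widetilde \Phi_b$ defined in \eqref{eq:deg:widetilde:Phi}, mirroring the argument behind Proposition \ref{prop:asym:expansion:HG:1}, and then absorb the truncation difference $\Phi_b - \widetilde \Phi_b$ into the final remainder (by construction this difference consists only of terms of order $\alpha^{-\ell}$ with $\ell \geq b+3$, so it fits the target bound by the same kind of estimate applied to its explicit expansion in $\Upsilon_\ell$'s). Using $P H_0 = 0$, $H_0 R = -Q$, and \eqref{eq:H:G:formula:sum:K}, the starting identity is
\begin{equation*}
(H_0 + K)\widetilde \Phi_b \;=\; \sum_{i=0}^{b+2} P K (RK)^i\, \psi^{\mathrm P}\otimes \zeta_b \;+\; QK(RK)^{b+2}\,\psi^{\mathrm P}\otimes \zeta_b .
\end{equation*}
The boundary $Q$-term produces the announced $(1+\sqrt \Lambda)$-factor through a single outermost $K_1$ acting without the help of $P$ or $(-R)^{1/2}$, while each of the other $K$'s gets paired with a resolvent; cf.\ Lemma \ref{lem:estimates:approx:groundstate:3x}.

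Next I would apply a quantitative version of Lemma \ref{lem:PKRKP:approximation} to rewrite $\sum_{i=0}^{b+2} PK(RK)^iP = \alpha^{-2}(\mathbb H_0 - \mathsf E^{(n)} + \mathbb K) + \mathrm{remainder}$, using Lemma \ref{lem:Bogoliubov:identity} for the first two terms and then grouping all higher-order contributions according to total power of $\alpha^{-1}$. Acting on $\zeta_b$ and using $(\mathbb H_0 - \mathsf E^{(n)})\mathbb R = -\mathbb Q$ together with $\mathds 1 = \mathbb P + \mathbb Q$, a telescoping computation then gives
\begin{equation*}
(\mathbb H_0 - \mathsf E^{(n)} + \mathbb K)\zeta_b \;=\; \sum_{j=0}^{b+2}\mathbb P\mathbb K(\mathbb R\mathbb K)^j\Gamma \;+\; \mathbb Q\mathbb K(\mathbb R\mathbb K)^{b+2}\Gamma .
\end{equation*}
Expanding the $\mathbb P$-sum in powers of $\alpha^{-1}$, the coefficient of $\alpha^{-\ell}$ for $1\le \ell \le b$ is precisely the scalar appearing in the defining identity \eqref{eq:def:E:Lambda} for $E_\ell$ (after using $\widetilde{\mathbb K}_\ell = \mathbb K_\ell + E_\ell$ and $PV_{\ell+2}P = -E_\ell P$), so these contributions vanish by construction of the sequence $(E_\ell)_{\ell \ge 0}$. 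Only terms with total order $\ge b+3$ in $\alpha^{-1}$ remain.

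The genuine difficulty lies in upgrading the generic constants $C(b)$ of Section \ref{sec:approximate:eigenstates} to the claimed $C^{b+1}(b+1)!$. My plan is a strong induction on $b$, carried out for the two statements of Theorem \ref{thm:asymptotic:expansion:2} and Proposition \ref{prop:asym:expansion:HG} simultaneously: having $|E_j|\le C^j\sqrt{j!}$ for $j\le b$ available, the coefficients $K_\ell$ (which for $\ell \ge 5$ reduce to the scalar $-E_{\ell-2}$) obey $\ell$-uniform norm bounds $\|X_i K_\ell Y(\mathcal N+1)^{-\min\{1,\ell/2\}}\|\le C^\ell\sqrt{\ell!}\,(1+\sqrt \Lambda)^i$ that refine Lemma \ref{lem:estimates:approx:groundstate:3x}. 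These feed into estimates on the $\mathbb K_\ell$ by (i) the combinatorial identity $\#\{\pmb\varepsilon \in \mathbb N^i : |\pmb\varepsilon|=\ell+2\}=\binom{\ell+1}{i-1}$, whose sum over $i$ is $2^{\ell+1}$, and (ii) tracking the growing power of $(\mathcal N+1)$ through nested products $\mathbb K_{\varepsilon_1}\mathbb R \mathbb K_{\varepsilon_2}\cdots \mathbb R\mathbb K_{\varepsilon_j}$, with each $\mathbb R$ absorbing two powers. The resulting powers are finally applied to $\Gamma = \mathbb U^*\gamma^{(n)}$ with $\gamma^{(n)}$ a fixed Fock eigenstate from \eqref{eq:eigenfunctions:H0}, for which $\|(\mathcal N+1)^{m/2}\mathbb U^*\gamma^{(n)}\|\le c^m\sqrt{m!}$ by Lemma \ref{lem:number:op:bounds}. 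Multiplying through, the dominant terms in the remainder carry a factor $\alpha^{-b-3}\cdot C^{b+1}(b+1)!$, closing the induction on both $|E_{b+1}|\le C^{b+1}\sqrt{(b+1)!}$ (read off from \eqref{eq:def:E:Lambda}) and the bound of Proposition \ref{prop:asym:expansion:HG}.

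The main obstacle I anticipate is precisely this combinatorial-analytic bookkeeping: each resolvent $R$ or $\mathbb R$ produces an inverse factor of $(\mathcal N+1)$ that must be paid back downstream, and a careless splitting costs factorials. The key will be to distribute operators $\mathcal N$-wisely so that the final $(\mathcal N+1)^{m/2}$ landing on $\Gamma$ has $m$ growing linearly in $b$ (not quadratically), so that $\sqrt{m!}$ is compatible with $(b+1)!$ after accounting for the $2^{\ell}$-type combinatorial multiplicities and the inductive factorial from the $E_j$'s.
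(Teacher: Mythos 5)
Your algebraic skeleton is the same as the paper's: expand around $\widetilde\Phi_b$, use $PH_0=0$, $H_0R=-Q$, the identity of Lemma \ref{lem:PKRKP:approximation}/\eqref{eq:PVRV:Bod:identity}, telescope with $(\mathbb H_0-\eB E^{(n)})\mathbb R=-\mathbb Q$, and observe that the coefficients of $\alpha^{-\ell}$, $1\le\ell\le b$, along $P\otimes\mathbb P$ vanish by the very definition \eqref{eq:def:E:Lambda} of $E_\ell$ (this is the content of Lemmas \ref{lem:identity} and \ref{lem:energy:identity}). However, there are two genuine gaps. First, your plan to ``absorb $\Phi_b-\widetilde\Phi_b$ into the remainder by the same kind of estimate'' does not close with a $b$-uniform $\alpha_0$: the expansion of $\widetilde\Phi_b$ contains terms of order $\alpha^{-\ell}$ with $\ell$ as large as $O(b^2)$ (each of the up to $b+2$ factors of $K$, resp.\ $\mathbb K$, carries powers up to $\alpha^{-(b+2)}$), and their coefficients — driven by products of scalars $E_{\varepsilon_i-2}$ — have norms of size $C^{\ell}\sqrt{\ell!}\sim\sqrt{(b^2)!}$, which cannot be traded against $\alpha^{-(\ell-b-3)}$ to yield $C^{b+1}(b+1)!$ unless $\alpha\gtrsim b$. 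The paper avoids this entirely by proving the \emph{exact polynomial identity} of Lemma \ref{lem:identity}: since $\Phi_b$ and $K$ are polynomials of degree $b+2$ in $\alpha^{-1}$, comparing coefficients shows that the residual is exactly $(Q+P\otimes\mathbb Q)\sum_{k+\ell\ge b+3}\alpha^{-k-\ell}K_\ell\Upsilon_k$ with $k,\ell\le b+2$, so only total orders $\le 2b+4$ ever need to be estimated in norm, and $\sqrt{(2b+6)!}\le C^{b+1}(b+1)!$ by Stirling. The auxiliary state $\widetilde\Phi_b$ enters only qualitatively, to identify which coefficients vanish.

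Second, the combinatorial core is underestimated. ``Linear growth of the power of $(\mathcal N+1)$ landing on $\Gamma$'' is necessary but not sufficient: since $\sqrt{(cm)!}\approx C^m(m!)^{c/2}$, the slope must be at most $2$ for the final bound to be compatible with $(b+1)!$. Moreover, the scalars $E_{\varepsilon-2}$ hidden inside $K_\varepsilon$ ($\varepsilon\ge5$) and inside the $\mathbb K_k$ carry their own $\sqrt{\varepsilon!}$ weights, and a naive product of these with a full particle-number factorial overshoots. The paper's proof of Lemma \ref{lem:estimates:approx:groundstate:2} handles this by (i) splitting $K_\ell=V_{\ell,0}+V_{\ell,\ell}$ and organizing products into the operators $\mathbb T_{\ell,\beta}$, where $\beta$ isolates the scalar contributions; (ii) working on the subspaces $\mathcal H_r=\mathbb U^*(\mathcal F_{\le r})$, on which each genuine operator $V_{\ell,0}$ shifts the particle number by at most $2$ and contributes only $\sqrt{(r+1)\cdots(r+\ell)}$; (iii) exploiting the decay of $\mathbb R$ in \eqref{eq:basic_fock_3} to gain back two units; and (iv) the nontrivial convolution inequality $\sum_{\beta_1+\cdots+\beta_q=\beta}\sqrt{\beta_1!\cdots\beta_q!}\le 12^q\sqrt{\beta!}/\sqrt{q!}$, proved by induction. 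None of these devices appears in your sketch, and without them the iteration of Lemma \ref{lem:estimates:approx:groundstate:3x} (whose constants $C(\ell,\ell')$ are not controlled in $\ell,\ell'$) does not produce the claimed $C^{b+1}(b+1)!$.
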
 

With the proposition at hand, we can prove the remainder estimate in Theorem \ref{thm:asymptotic:expansion:2}. The bounds on the coefficients $E_\ell$ are given in Lemma \ref{lem:estimates:approx:groundstate:2} below.

\begin{proof}[Proof of Theorem \ref{thm:asymptotic:expansion:2}] Applying Proposition \ref{prop:asym:expansion:HG} for $\Lambda =2$ and following the same steps as in the proof of Theorem \ref{thm:asymptotic:expansion:1}, one obtains the desired estimate for all $\alpha \ge b /\varepsilon$ with $\varepsilon>0$ sufficiently small. To extend the statement to all $\alpha \ge \alpha_0$, one uses $| \alpha^2 \ms E^\n(\alpha) - \alpha^2 e^\Pek -  \eB E^\n| \le 1$ by Proposition \ref{prop:two:term:expansion} and $|E_{\ell}| \le C^\ell \sqrt{ \ell !} $ by Lemma \ref{lem:estimates:approx:groundstate:2} below, so that 
\begin{align}
 \bigg|  \alpha^2 \ms E^\n(\alpha) - \alpha^2 e^\Pek -  \sum_{\ell=0}^b \alpha^{-\ell} E_{\ell} \bigg|  \le 1 + \sum_{\ell=1}^b \alpha^{-\ell} C^\ell \sqrt{ \ell!}  \le \frac{C^{b+1} (b+1)! }{\alpha^{b+1}}
\end{align}
for $\alpha \le b /\varepsilon$.
\end{proof}

\subsection{Proof of Proposition \ref{prop:asym:expansion:HG}}\label{sec:proof:asym:expansion:HG}

The next lemma provides a formula for the state $(H_0+K)\Phi_b$. For completeness, let us note that the statement holds independently of the specific choice of $(E_\ell)_{\ell \in \mb N_0}$. 

\begin{lem}\label{lem:identity} For $b\in \mb N_0$ let $\Phi_b$ and $K$ be defined by \eqref{eq:approximate:ground:state} and \eqref{eq:def:K:mbK} and consider the two polynomials $\mathfrak p_j :\mb R_+ \to L^2(\Omega) \otimes \mathcal F$ given by $\mathfrak p_1(1/\alpha) = (H_0 +K) \Phi_b$ and 
\begin{align}
\mf p_2(1/\alpha) = ( P\otimes \mathbb P)  K \Phi_b + ( Q +  P \otimes \mb Q )\sum_{\substack{ \ell , k =1 \\ k+l \ge b+3} }^{b+2} \frac{1}{\alpha^{k+l}}  K_\ell  \Upsilon_k.
\end{align}
It holds that $\mf p_1 = \mf p_2$.
\end{lem}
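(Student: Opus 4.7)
The strategy is to re-expand $(H_0 + K)\Phi_b$ using the recursive structure of the states $\Upsilon_\ell$, and to show that everything outside the ``diagonal block'' $P \otimes \mb P$ assembles into the stated high-order remainder. First I would split each state in \eqref{eq:def:Upsilon:ell} according to whether the outermost factor in the chain is of type $R K$ or $\mb R \mb K$, writing $\Upsilon_\ell = \Upsilon_\ell^{\mathrm{in}} + R\sum_{\varepsilon=1}^{\ell} K_\varepsilon \Upsilon_{\ell-\varepsilon}$, where the inner piece $\Upsilon_\ell^{\mathrm{in}}$ collects the $i=0$ contributions and therefore lies in $\mathrm{Ran}(P) \otimes \mathcal F$. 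Because $H_0 \psi^\P = 0$ kills the inner piece and $H_0 R = -Q$ on $L^2(\Omega)$, this yields the pointwise identity $H_0 \Upsilon_\ell = -Q \sum_{\varepsilon=1}^{\ell} K_\varepsilon \Upsilon_{\ell-\varepsilon}$ for $\ell \ge 1$, while $H_0 \Upsilon_0 = 0$.

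Substituting this into $(H_0 + K)\Phi_b = K\Phi_b + H_0\Phi_b$ and reorganising both double sums by the total power of $\alpha^{-1}$, the $Q K_\varepsilon \Upsilon_k$ contributions with $k + \varepsilon \le b+2$ cancel pairwise against $H_0\Phi_b$, leaving the polynomial identity $(H_0 + K)\Phi_b = P K\Phi_b + Q \sum_{\substack{1 \le k,\ell \le b+2 \\ k+\ell \ge b+3}} \alpha^{-k-\ell} K_\ell \Upsilon_k$. Decomposing $P = (P \otimes \mb P) + (P \otimes \mb Q)$ inside $PK\Phi_b$ then reduces the lemma to the single algebraic claim that, for every $m \in \{1,\ldots,b+2\}$,
\[
 (P \otimes \mb Q)\sum_{\varepsilon=1}^{m} K_\varepsilon \Upsilon_{m-\varepsilon}  =  0.
\]

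This vanishing is the main obstacle. I would establish it by iterating the splitting from the first step inside each $\Upsilon_{m-\varepsilon}$, grouping the resulting terms by the number of leading $RK$-blocks and using $\Upsilon_n^{\mathrm{in}} = P\Upsilon_n^{\mathrm{in}}$ to insert a projection $P$ after the last $K_{\varepsilon}$. Recognising the resulting sums of chains $PK_{\varepsilon_1}(RK_{\varepsilon_2})\cdots (RK_{\varepsilon_i}) P$ as the operators $\mb K_j$ of \eqref{def:mb:V:k} (the chain of length one drops out because $PK_1 P = 0$ by Lemma \ref{lem:Bogoliubov:identity}) reduces the claim to $\mb Q \sum_{j=0}^{m-2}\mb K_j \Upsilon_{m-j-2}^{\mathrm{in}} = 0$. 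Splitting off the $j = 0$ summand, Lemma \ref{lem:Bogoliubov:identity} identifies $\mb K_0 = \mb H_0 - E_0$, so $\mb K_0 \Gamma = 0$ by the choice $E_0 = \eB E^{(n)}$, while the defining recursion $\Upsilon_n^{\mathrm{in}} = \mb R \sum_{k=1}^{n}\mb K_k \Upsilon_{n-k}^{\mathrm{in}}$ together with $(\mb H_0 - E_0)\mb R = -\mb Q$ yields $\mb K_0 \Upsilon_n^{\mathrm{in}} = -\mb Q \sum_{k=1}^{n}\mb K_k \Upsilon_{n-k}^{\mathrm{in}}$ for $n \ge 1$. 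After a relabelling $k \leftrightarrow j$, the $j = 0$ contribution cancels the $j \ge 1$ contribution exactly, finishing the proof. The argument is purely algebraic; the only external inputs are $H_0 R = -Q$ on $L^2(\Omega)$, the Bogoliubov-type identity $(\mb H_0 - E_0)\mb R = -\mb Q$, and the Bogoliubov relation $\mb K_0 = \mb H_0 - E_0$ from Lemma \ref{lem:Bogoliubov:identity}.
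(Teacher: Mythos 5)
Your proof is correct. It shares the paper's skeleton — the orthogonal decomposition into the blocks $P\otimes\mb P$, $P\otimes\mb Q$ and $Q$, and the same three algebraic inputs $H_0R=-Q$, $(\mb H_0-E_0)\mb R=-\mb Q$ and Lemma \ref{lem:Bogoliubov:identity} — but the mechanism by which you show that the coefficients of $\alpha^{-m}$ with $m\le b+2$ vanish in the $Q$ and $P\otimes\mb Q$ sectors is organised differently. The paper introduces the resummed auxiliary state $\widetilde\Phi_b$ of \eqref{eq:deg:widetilde:Phi}, observes that $Q(H_0+K)\widetilde\Phi_b$ and $(P\otimes\mb Q)(H_0+K)\widetilde\Phi_b$ telescope into single terms of order $O(\alpha^{-b-3})$ (the latter via \eqref{eq:PVRV:Bod:identity}), notes that $\Phi_b-\widetilde\Phi_b=O(\alpha^{-b-3})$, and then concludes by comparing the coefficients of the two polynomial expressions. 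You instead work directly with the one-step recursions $\Upsilon_\ell=\Upsilon_\ell^{\mathrm{in}}+\sum_{\varepsilon}RK_\varepsilon\Upsilon_{\ell-\varepsilon}$ and $\Upsilon_n^{\mathrm{in}}=\sum_{k}\mb R\mb K_k\Upsilon_{n-k}^{\mathrm{in}}$ implicit in \eqref{eq:def:Upsilon:ell} and exhibit the pairwise cancellation of the low-order coefficients explicitly; this avoids the auxiliary state and any order-counting, at the cost of the combinatorial step of re-assembling the unfolded chains $PK_{\varepsilon_1}(RK_{\varepsilon_2})\cdots(RK_{\varepsilon_i})P$ into the operators $\mb K_j$ of \eqref{def:mb:V:k}. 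If you write this up, make explicit that in this regrouping the constraint $i\le j+2$ in \eqref{def:mb:V:k} is automatic because each $\varepsilon_l\ge 1$, and that the weight-one chain is exactly $PK_1P=0$ (you note the latter); with those remarks the argument is complete.
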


\begin{proof} Write $\mf p_1(1/\alpha) = (P\otimes \mb P + P \otimes \mb Q + Q )(H_0+K)\Phi_b$. Since the three terms are pairwise orthogonal, it is sufficient to show the identity for each term separately. For the term along $P\otimes \mb P$ the identity follows from $P H_0=0$. For the other two terms, we will employ the fact that a Hilbert space valued polynomial $\psi : \mb R_+ \to \ms H$, $\alpha^{-1} \mapsto  \psi(\alpha^{-1}) = \sum_{\ell=1}^n \alpha^{-\ell} \psi_\ell$ vanishes identically if and only if $\psi_\ell=0$ for $\ell=1,\ldots, n$.

Recalling the auxiliary state $\widetilde \Phi_b$ introduced in \eqref{eq:deg:widetilde:Phi}, we compute
\begin{align}
Q(H_0+K) \Phi_b & = Q(H_0 + K )  \widetilde \Phi_b + Q ( H_0 + K ) (\Phi_b - \widetilde \Phi_b)  \notag\\[1mm]
& = QK (RK)^{b+2}  \psi \otimes \zeta_b + Q ( H_0 + K )  (\Phi_b - \widetilde \Phi_b) =: \sum^{m}_{\ell \ge b+3}\alpha^{-\ell} \varphi_\ell
\end{align}
for some integer $m \ge b+3$ and $\alpha$-independent $\varphi_\ell \in L^2(\Omega) \otimes \mathcal F$. The second step followed from $H_0 R = -Q$, while in the last one, we used $K=O(\alpha^{-1})$ and $\Phi_b - \widetilde \Phi_b = O(\alpha^{-b-3})$. This shows that all coefficients of $Q (H_0 +K ) \Phi_b$ proportional to $\alpha^{-\ell}$ for $\ell=0,\ldots ,b+2$ vanish, and thus 
\begin{align}
Q (H_0 +K ) \Phi_b = \sum_{\ell = 0 }^{b+2} \frac{1}{\alpha^\ell}  Q \Big( H_0 + \sum_{k=1}^{b+2} K_k \Big)  \Upsilon _{\ell}  =   \sum_{\substack{ k,l=1 \\ k + \ell  \ge b+3 } }^{b+2}\frac{1}{\alpha^{k+\ell }}  Q K_k  \Upsilon_\ell.
\end{align}
This proves the identity for the contribution along $Q$.

For the last term, we use the identity
\begin{align}\label{eq:PVRV:Bod:identity}
\sum_{i=0}^{b+2}   P K ( R K )^i P   = \alpha^{-2} \big(  \mb H_0 - \eB E^\n + \mb K  \big) + \mb K_{\textnormal{rem}} ,
\end{align}
where $\mb K = \sum_{\ell = 1}^{b} \alpha^{-\ell} \mb K_\ell $ and $\mb K_{\textnormal{rem}}$ is defined as the remainder term in \eqref{eq:W:V:identity} for $\Lambda =\infty$. In particular, $\mb K_{\textnormal{rem}}$ is of order $O(\alpha^{-b-3})$ and higher. Combining this with $PH_0=0$ and using $\widetilde \Phi_b = \sum_{i=0}^{b+1} (RK)^i \psi^\P \otimes \zeta_b$ with $\zeta_b =   \sum_{j=0}^{b+2}  ( \mb R  \mb K )^j  \Gamma $, c.f. \eqref{eq:deg:widetilde:Phi}, one finds
\begin{align}
& ( P\otimes \mb Q ) (H_0 +K ) \Phi_b \notag = ( P\otimes \mb Q ) K  \widetilde \Phi_b + (
P\otimes \mb Q )   K   (\Phi_b - \widetilde \Phi_b ) \notag \\[1mm]
 & \qquad =\psi^\P \otimes \mb Q \big(  \alpha^{-2} ( \mb H_0 - \eB E^\n + \mb K ) +  \mb K_{\textnormal{rem}} \big)  \zeta_b + 
( P\otimes \mb Q ) K (\Phi_b - \widetilde \Phi_b) \notag \\[2mm]
 &\qquad = 
\alpha^{-2} \psi^\P \otimes \mb Q    \mb K  ( \mb R \mb K )^{b+2}  \Gamma  + \psi^\P   \otimes \mb Q  \mb K_{\textnormal{rem}} \zeta_b  + (P\otimes \mb Q )  K (\Phi_b- \widetilde \Phi_b),
\end{align}
where we used $(\mb H_0- \eB E^\n) \mb R = -\mb Q$. Since $\mb K=O(\alpha^{-1})$, $\mb K_{\textnormal {rem}} =O(\alpha^{-b-3})$, $K=O(\alpha^{-1})$ and $\Phi_b-\widetilde \Phi_b=O(\alpha^{-b-3})$, the right-hand side has no contributions of order $\alpha^{-\ell}$ for $\ell =0 ,\ldots,b+2$. A comparison of the coefficients thus yields that
\begin{align}
( P\otimes \mb Q ) (H_0 + K) \Phi_b = (P\otimes \mb  Q )  \sum_{\substack{ k,\ell=1 \\ k+\ell \ge b+3 } }^{b+2} \alpha^{-k-\ell }   K_k \Upsilon_{\ell}. \label{eq:PQVPsi}
\end{align}
This shows the identity for the last term and thus completes the proof of the lemma.
\end{proof}

\begin{lem}\label{lem:energy:identity} For $\ell \ge 1$, the coefficients $E_\ell$ defined by \eqref{eq:def:E:Lambda} satisfy
\begin{align}
E_\ell =  \Ls \Upsilon_0, \widetilde K_{\ell+2} \Upsilon_0 \Rs + \sum_{k=1}^{\ell +1 } \Ls \Upsilon_0, K_k \Upsilon_{\ell+2-k} \Rs
\end{align}
with $\Upsilon_\ell$ as in \eqref{eq:def:Upsilon:ell}, $ \widetilde K_1 = K_1$, $\widetilde K_\ell = K_\ell +E_{\ell+2}$ for $\ell \in \{2,3,4\}$ and $\widetilde K_\ell = 0$ otherwise.
\end{lem}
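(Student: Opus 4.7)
The plan is to reduce the claimed identity to the scalar assertion
\[
\sum_{k=1}^{\ell+2}\langle\Upsilon_0, K_k\Upsilon_{\ell+2-k}\rangle = 0, \qquad (\star\star)
\]
and then to verify $(\star\star)$ by matching it combinatorially with the defining relation for $E_\ell$. The reduction is immediate once one observes that, uniformly in $\ell\ge 0$, the difference $\widetilde K_{\ell+2}-K_{\ell+2}$ equals the scalar $E_\ell$: for $\ell+2\in\{2,3,4\}$ this is the convention in the statement, whereas for $\ell+2\ge 5$ both sides vanish since $K_{\ell+2}=-E_\ell$ by \eqref{eq:def:K:ell+1}. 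Substituting $\widetilde K_{\ell+2}=K_{\ell+2}+E_\ell$ into the claim and cancelling the $E_\ell$ on both sides leaves exactly $(\star\star)$.

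From the defining relation \eqref{eq:def:E:Lambda} I would analogously derive the companion identity
\[
\sum_{j=1}^{\ell}\sum_{\substack{\peps\in\mb N^j\\|\peps|=\ell}}\langle\Upsilon_0,\mb K_{\varepsilon_1}(\mb R\mb K_{\varepsilon_2})\ldots(\mb R\mb K_{\varepsilon_j})\Upsilon_0\rangle = 0, \qquad (\star)
\]
by writing $\widetilde{\mb K}_\ell=\mb K_\ell+E_\ell$ and cancelling $E_\ell$ from both sides. Since $(\star)$ holds by construction of $E_\ell$, the task is to show $(\star)\Leftrightarrow(\star\star)$. To that end I would fully unfold both sums into words in $K_\eta,R,\mb R,P$, expanding every $\Upsilon_m$ via \eqref{eq:def:Upsilon:ell} and every $\mb K_\varepsilon$ via \eqref{def:mb:V:k}.

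A typical term of $(\star\star)$ then consists of a leftmost $K$-string $K_k(RK_{\varepsilon_1'})\ldots(RK_{\varepsilon_i'})$ followed by $j$ unfolded $\mb K$-blocks of the form $\mb R\cdot PK_{\eta_1}(RK)^{i_a-1}P$. Using $\langle\Upsilon_0|=\langle\Upsilon_0|P$, the commutation of $\mb R$ with the $L^2$-projections $P,Q$, and $QP=0$, the insertion of a factor of $P$ between the first $K$-string and the first $\mb R$ does not change the value of the term. When the total $K$-weight $k+|\peps'|$ of that string is at least $3$, the insertion turns it into a summand of $\mb K_{k+|\peps'|-2}$, placing the whole term in a weight-preserving bijection with a summand of $(\star)$ with one additional block (the bookkeeping $k+|\peps'|-2+\sum_a k_a'=\ell$ matches the constraint $|\peps|=\ell$ appearing in $(\star)$).

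It remains to treat the terms of $(\star\star)$ whose first $K$-string has weight $1$ or $2$, which fall outside the range of $\mb K$-blocks. For weight $1$ the string is simply $K_1$, and $PK_1P=0$ from Lemma \ref{lem:Bogoliubov:identity} kills all such contributions. For weight $2$ the two sub-structures $K_2$ and $K_1 R K_1$ must be paired: after the $P$-insertion their sum becomes $P(K_2+K_1RK_1)P=\mb H_0-E_0$, again by Lemma \ref{lem:Bogoliubov:identity}. Composing with the following $\mb R$ through $(\mb H_0-E_0)\mb R=-\mb Q$ and using $\mb Q\Gamma=0$ (or, in the borderline case $\ell=0$ with no subsequent block, the direct identity $(\mb H_0-E_0)\Gamma=0$) annihilates these contributions as well. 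With the low-weight terms disposed of, $(\star\star)$ collapses to $(\star)$ and the lemma follows. The principal subtlety I anticipate is precisely the weight-$2$ case, where the two separate sub-structures $K_2$ and $K_1 R K_1$ must be recombined via the Bogoliubov identity before the eigenvalue equation for $\Gamma$ can be invoked.
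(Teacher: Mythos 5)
Your proposal is correct and follows essentially the same route as the paper: you reduce the claim to $\sum_{k=1}^{\ell+2}\langle\Upsilon_0,K_k\Upsilon_{\ell+2-k}\rangle=0$, dispose of the weight-one and weight-two leading strings via $PK_1P=0$ and $P(K_2+K_1RK_1)P=\mb H_0-E_0$ combined with $(\mb H_0-E_0)\mb R=-\mb Q$ and $\mb Q\Gamma=0$ (Lemma \ref{lem:Bogoliubov:identity}), and resum the weight-$\ge 3$ strings $PK_k(RK)^iP$ into $\mb K$-blocks matched against the defining relation \eqref{eq:def:E:Lambda}. The only difference is presentational: the paper organizes this bookkeeping as a comparison of the $\alpha^{-1}$-coefficients of the two polynomials $\langle\Upsilon_0,K\Phi_b\rangle$ and $\langle\Upsilon_0,K\widetilde\Phi_b\rangle$ using the operator identity \eqref{eq:PVRV:Bod:identity}, whereas you carry out the equivalent term-by-term bijection directly at fixed $\ell$.
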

\begin{proof} We first compute
\begin{align}\label{eq:coefficients:comparison}
\Ls \Upsilon_0, K \Phi_b \Rs & = \sum_{\ell=1}^{b} \alpha^{-\ell-2}  \sum_{k=1}^{\ell+2} \Ls \Upsilon_0, K_k \Upsilon_{\ell+2-k} \Rs + \sum_{\substack{k=1,\ell=0 \\ k+\ell \ge b+3}}^{b+2}    \frac{1}{\alpha^{k+\ell}}  \Big\langle \Upsilon_0 , K_k \Upsilon_\ell \Rs ,
\end{align}
where we used $ \langle \Upsilon_0, K_1 \Upsilon_0 \rangle = 0 = \langle \Upsilon_0, ( K_2 + K_1RK_1) \Upsilon_0 \rangle $ as a consequence of  Lemma \ref{lem:Bogoliubov:identity}. Using $\langle  \Upsilon_0  , K \Phi_b \rangle = \langle  \Upsilon_0  , K \widetilde \Phi_b \rangle + O(\alpha^{-b-3})$ with $\widetilde \Phi_b$ given by \eqref{eq:deg:widetilde:Phi}, we can compare this with
\begin{align}
 \Ls  \Upsilon_0  , K  \widetilde \Phi_b \Rs 
 & =  \Ls  \Upsilon_0 ,\big(  \tfrac{1}{\alpha^2}(  \mb H_0 -  \eB E^\n + \mb K )  + \mb K_{\textnormal{rem}} \big)  \bigg( \sum_{j=0}^{b+2} (\mb R \mb K )^j \bigg) \Upsilon_0 \Rs    \notag\\
 & =  \Ls \Upsilon_0  , \big(  \tfrac{1}{\alpha^2} \mb K +  \mb K_{\textnormal{rem}} \big)   \bigg( \sum_{j=0}^{b+2} (\mb R \mb K )^j \bigg)  \Upsilon_0 \Rs \notag\\
& =   \sum_{\ell =1}^{b} \frac{1}{\alpha^{\ell+2} } \Big\langle \Upsilon_0  , \sum_{j=1}^{\ell } \sum_{\substack{ \peps \in \mb N^j \\ |\peps|= \ell } } \mb K_{ \varepsilon_1} (\mb R \mb K_{ \varepsilon_2} ) \ldots (\mb R \mb K_{ \varepsilon_j}) \Upsilon_0  \Big\rangle + O(\alpha^{-b-3}),
\end{align}
where we used \eqref{eq:PVRV:Bod:identity}, $(\mb H_0 - \eB E^\n) \mb P =0$ and $\mb K_{\textnormal {rem}} = O(\alpha^{-b-3})$, and where $ O(\alpha^{-b-3})$ stands for all terms of order $\alpha^{-b-3}$ and higher. Since the coefficients of $\alpha^{-\ell-2}$ for $\ell\in \{1,\ldots , b\}$ must coincide with the ones from \eqref{eq:coefficients:comparison}, we can use \eqref{eq:def:E:Lambda} together with $\widetilde {\mb K}_\ell = \mb K_\ell + E_\ell $ to obtain
\begin{align}
E_\ell  &= \Big\langle \Upsilon_0   , 
\sum_{j=1}^{\ell } \sum_{\substack{ \peps \in \mb N^j \\ |\peps|= \ell } } \mb K_{ \varepsilon_1} (\mb R \mb K_{ \varepsilon_2} ) \ldots (\mb R \mb K_{ \varepsilon_j})  \Upsilon_0  \Big\rangle  + E_\ell  =  \sum_{k=1}^{\ell+2} \Ls \Upsilon_0, K_k \Upsilon_{\ell+2-k} \Rs +  E_\ell .
\end{align}
This completes the proof of the lemma since by definition $ K_{\ell+2} + E_\ell = \widetilde K_{\ell+2} $.
\end{proof}

The proof of the next lemma is somewhat more technical and thus postponed to Section \ref{sec:proof:combinatorial:estimates}.

\begin{lem}\label{lem:estimates:approx:groundstate:2} For $(E_\ell)_{\ell \in \mb N_0}$ defined as in \eqref{eq:def:E:Lambda}, there exists a constant $C>0$ auch that $|E_\ell | \le C^\ell \sqrt{(\ell +2)!}$ for all $\ell \ge 1$ and $\Lambda \ge 2 $. Moreover, there exist constants $\alpha_0, C>0$ such that $\| K _k  \Upsilon_{\ell+2} \| \le (1+  \sqrt \Lambda\, )C ^{\ell+k} \sqrt{(\ell+k+2)!}$ for all $\alpha\ge \alpha_0$ and $\Lambda \ge 2$.

\end{lem}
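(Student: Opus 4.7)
The plan is to reduce the first bound to the second using Lemma \ref{lem:energy:identity}, which gives
\begin{align*}
 |E_\ell| \le \bigl| \langle \Upsilon_0, \widetilde K_{\ell+2} \Upsilon_0 \rangle\bigr| + \sum_{k=1}^{\ell+1} \bigl| \langle \Upsilon_0, K_k \Upsilon_{\ell+2-k} \rangle\bigr|.
\end{align*}
Since $\Upsilon_0 = \psi^\P\otimes\Gamma \in \textrm{Ran}(P\otimes\mb P)$, each inner product equals $\langle \Upsilon_0, P K_k \Upsilon_{\ell+2-k}\rangle$; by Lemma \ref{lem:estimates:approx:groundstate:3x} the operator $P K_k$ is bounded on suitable $\mathcal N$-weighted spaces by a $\Lambda$-uniform constant, so the $(1+\sqrt\Lambda)$ factor present in the second claim drops out and one obtains the pure $C^\ell\sqrt{(\ell+2)!}$ bound for $|E_\ell|$ once the norm bound is established.

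For the norm estimate I would proceed by induction on $\ell$, proving the stronger particle-number-weighted statement
\begin{align*}
 \bigl\| (\mathcal N+1)^{m/2} K_k \Upsilon_{j} \bigr\| \le (1+\sqrt\Lambda)\, C^{j+k+m-2}\sqrt{(j+k+m)!},
\end{align*}
for all $m\in\mb N_0$ and $j\le\ell+2$. The state $\Upsilon_j$ is expanded via \eqref{eq:def:Upsilon:ell} as a sum over composition tuples $(\peps,\pmb k)$ with $|\peps|+|\pmb k|=j$, and each summand is bounded by iterating Lemma \ref{lem:estimates:approx:groundstate:3x} across each factor $RK_\varepsilon$ and $\mb R\mb K_k$. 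The crucial observation is that every $K$- and $\mb K$-factor occurring \emph{inside} $\Upsilon_j$ is preceded by either $R$, $\mb R$, or, in the leftmost position of each $\mb K_k$, by $P$; in each of these cases Lemma \ref{lem:estimates:approx:groundstate:3x} applies in its $\Lambda$-uniform form. The single $(1+\sqrt\Lambda)$ factor originates solely from the outermost $K_k$ in $K_k\Upsilon_j$, which has no projection on its left. The recursive dependence of $K_\ell$ (for $\ell\ge 5$) and of $\mb K_k$ on the coefficients $E_0,\ldots,E_{\ell-2}$ is controlled by invoking the inductive bound $|E_{\ell'}| \le C^{\ell'}\sqrt{(\ell'+2)!}$ for $\ell'<\ell$, together with the super-multiplicativity $\sqrt{a!}\,\sqrt{b!} \le \sqrt{(a+b)!}$ to recombine nested factorials.

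The square-root factorial $\sqrt{(j+k+m)!}$ is traced to moments of $\mathcal N$ on Fock space. Each $K_\varepsilon$ creates or annihilates at most two bosons, so commuting $(\mathcal N+1)^{m/2}$ through a product of $O(j)$ such factors yields Wick-type estimates $\|a^{\#}(f_1)\cdots a^{\#}(f_p)(\mathcal N+1)^{-p/2}\|\le C^p$, with an additional $\sqrt{p!}$ coming from the reshuffling of $\mathcal N$-power weights. The number of admissible composition tuples in \eqref{eq:def:Upsilon:ell} is bounded by $2^{O(j)}$ and is therefore absorbed into $C^{j+k+m-2}$.

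The main obstacle I expect is the combinatorial and power-counting bookkeeping required to propagate the \emph{square-root} (rather than full) factorial growth through the induction. One must simultaneously track the indices $\ell$, $k$, $m$, the distinct $\mathcal N$-power shifts associated with $K_1,K_2,K_3,K_4$ and $K_\ell$ ($\ell\ge 5$), and the nested use of the inductive hypothesis on the $E$'s via $\mb K_k$. Losing the square root would still yield an asymptotic expansion but would invalidate the $|E_\ell|\le C^\ell\sqrt{\ell!}$ bound on which the Borel summability discussion of Remark \ref{remark:Borel} relies.
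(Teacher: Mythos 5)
Your skeleton coincides with the paper's: reduce the bound on $|E_\ell|$ to the estimate $\sum_{k=1}^{\ell+1}\|(P\otimes\mb P)\, K_k\Upsilon_{\ell+2-k}\|\le C^\ell\sqrt{(\ell+2)!}$ via Lemma \ref{lem:energy:identity}, proceed by induction on $\ell$ so that the already-established bounds on $E_1,\ldots,E_{\ell-1}$ control the scalar insertions, and charge the factor $(1+\sqrt\Lambda)$ only to the outermost $K_k$, every other $K$-factor being flanked by $P$ or $(-R)^{1/2}$. However, what you defer as "bookkeeping" is the actual content of the proof, and as described your scheme does not close. First, iterating Lemma \ref{lem:estimates:approx:groundstate:3x} cannot yield $C^\ell\sqrt{(\ell+2)!}$ because its constants $C(\ell,\ell')$ are unquantified in the indices; one needs bounds uniform in $\ell$ with the particle-number dependence explicit, which the paper obtains by restricting all operators to $\mathcal H_r=\mb U^*(\mathcal F_{\le r})$ and proving $\|(YV_{\ell,0}Z)|_{\mathcal H_r}\|\le D\sqrt{(r+1)\cdots(r+\ell)}$ with $D$ independent of $\ell$ and $r$. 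Second, and decisively, even with such uniform bounds the naive product estimate fails: each block $\mb R\,\mb K_{k_j}$ raises the transformed particle number by up to $k_j+2$, so a product of $p$ blocks with $k_1+\cdots+k_p=\ell$ and $p$ of order $\ell$ accumulates $\sqrt{(s+1)\cdots(s+\ell+2p)}\sim\sqrt{(3\ell)!}$, overshooting the target by a factor of order $\ell!$. The indispensable cancellation is that each reduced resolvent $\mb R$ gains back a full power of $(\mathcal N+1)^{-1}$, exactly offsetting the "$+2$" carried by each $\mb K$-block; this is the content of \eqref{eq:basic_fock_3}, proved via $\mb R^2\le C\,\mb U^*(\mathcal N+1)^{-2}\mb U$ and an orthogonality decomposition over particle-number sectors. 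Your proposal never identifies this cancellation, and without it the induction produces $\sqrt{(c\ell)!}$ with $c>1$ rather than $C^\ell\sqrt{(\ell+2)!}$.

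A second, smaller gap concerns the scalar coefficients $E_{\varepsilon-2}$ hidden inside $K_\varepsilon$ ($\varepsilon\ge3$) and inside the $\mb K_k$. They cannot be pulled out by super-multiplicativity of $\sqrt{\cdot\,!}$ alone: one must sum over all placements of the $E$-factors and all compositions $\beta_1+\cdots+\beta_q=\beta$ of the weight they carry, jointly with the compositions of the remaining operator weight. The paper organizes this with the splitting $K_\ell=V_{\ell,0}+V_{\ell,\ell}$ and the doubly indexed sums $\mb T_{\ell,\beta}$, and proves the dedicated estimate $\sum_{\beta_1+\cdots+\beta_q=\beta}\sqrt{\beta_1!\cdots\beta_q!}\le 12^q\sqrt{\beta!}/\sqrt{q!}$, whose $1/\sqrt{q!}$ keeps the final sum over the number of blocks bounded. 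In short: right strategy and a correct diagnosis of where the difficulty sits, but the two estimates that make the square-root factorial survive — the $\mb R$-gain and the composition combinatorics — are missing rather than merely tedious.
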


The statement of Proposition \ref{prop:asym:expansion:HG} is now a consequence of Lemmas \ref{lem:identity}--\ref{lem:estimates:approx:groundstate:2}.

\begin{proof}[Proof of Proposition \ref{prop:asym:expansion:HG}] Since $\mf H_{\alpha,\Lambda}^{\textnormal{G}} - \sum_{\ell=0}^{b}\alpha^{-\ell-2} E_\ell = H_0 +K $, see \eqref{eq:H:G:formula:sum:K}, we can apply Lemma \ref{lem:identity} to obtain
\begin{align}
\bigg( \mf H_{\alpha,\Lambda}^{\textnormal{G}} - \sum_{\ell=0}^{b}\alpha^{-\ell-2} E_\ell \bigg) \Phi_b =  ( P\otimes \mathbb P)  K \Phi_b + ( Q +  P \otimes \mb Q )\sum_{\substack{ \ell , k =1 \\ k+l \ge b+3} }^{b+2} \frac{1}{\alpha^{k+l}}  K_\ell  \Upsilon_k.
\end{align}
The bound for the second term follows from the second statement of Lemma \ref{lem:estimates:approx:groundstate:2}, that is
\begin{align}\label{eq:remainder:bounds}
\bigg\|\sum_{\substack{ \ell , k =1 \\ k+l \ge b+3} }^{b+2}   K_\ell  \Upsilon_k \bigg\| & \le  (b+2)^2  \max_{ 1\le k,\ell \le b+2 } \| K_k \Upsilon_\ell \| \notag\\
& \le  (1+\sqrt \Lambda) C^{2b+4} \sqrt{ (2b+4)!} \le (1+\sqrt \Lambda )  C^{b+1} (b +1)!
\end{align}
where the last step can be verified with the aid of Stirling's formula.

To show that $\|  ( P \otimes \mb P  ) K  \Phi_b \| \le C^{b+1} (b+1)! \alpha^{-b-3}$, we use \eqref{eq:coefficients:comparison} and bound the second term in \eqref{eq:coefficients:comparison} similarly to the one above by means of Lemma \ref{lem:estimates:approx:groundstate:2},
\begin{align}
\bigg| \sum_{\substack{k=1,\ell=0 \\ k+\ell \ge b+3}}^{b+2}    \frac{1}{\alpha^{k+\ell}}  \Big\langle \Upsilon_0 , K_k \Upsilon_\ell \Rs  \bigg|  \le \frac{C^{b+1} (b +1)!}{\alpha^{b+3}} .
\end{align}
The first term in \eqref{eq:coefficients:comparison} is zero, since for $\ell \in \{1,\ldots,b\}$ and with $K_{\ell+2} = \widetilde K_{\ell+2} -  E_\ell $,
\begin{align}
 \sum_{k=1}^{\ell+2} \Ls \Upsilon_0, K_k \Upsilon_{\ell+2-k} \Rs = -E_\ell +  \Ls \Upsilon_0, \widetilde K_\ell \Upsilon_0 \Rs + \sum_{k=1}^{\ell+1} \Ls \Upsilon_0, K_k \Upsilon_{\ell+2-k} \Rs =0
\end{align}
by Lemma \ref{lem:energy:identity}. This completes the proof of the lemma.

\end{proof}
\subsection{Proof of Lemma \ref{lem:estimates:approx:groundstate:2}}

\label{sec:proof:combinatorial:estimates}

To prove Lemma \ref{lem:estimates:approx:groundstate:2}, we derive the following implication: If there are constants $C_2>0$, $\ell \in \mb N$ such that $|E_k| \le C_2^k \sqrt{(k+2)!}$ for all $k\in \{ 1,\ldots, \ell-1\}$, then
\begin{align}
\label{eq:PKUpsilon:bound}
\sum_{k=1}^{\ell+1} \| P \mb P K_k \Upsilon_{l+2-k} \| \le C_2^\ell \sqrt{ (\ell +2)!}.
\end{align}
Note that for notational convenience, we omit the tensor product in operators of the form $P\otimes \mb P$ throughout this section.  With Lemma \ref{lem:energy:identity}, we can use this implication to arrive at $|E_\ell| \le C_2^\ell \sqrt{(\ell +2 )!} $ for all $\ell \ge 1$ by iteration. Having established the bounds on $E_\ell$, the proof of the second statement of the lemma follows in close analogy to the derivation of \eqref{eq:PKUpsilon:bound}. 

The proof of \eqref{eq:PKUpsilon:bound} requires some preparation. We start by introducing the auxiliary operators
\begin{align}\label{eq:def:V:ell:beta}
 V_{\ell,\beta}:=\begin{cases}
-E_{\ell-2}&\text{ for }\beta=\ell\geq 1,\\
K_\ell+E_{\ell-2}&\text{ for }\beta=0\text{ and } \ell\geq 1,\\
0&\text{ otherwise, } 
 \end{cases}
\end{align}
where we set $E_{-1}=0$. With these at hand, we define
\begin{align}
\mb T_{\ell,\beta} := \sum_{p=1}^{\ell} \sum_{\substack { \ell_1,\ldots, \ell_p \ge 1  \\ \ell_1 + \ldots + \ell_p = \ell }}  \sum_{\substack { \beta_1,\dots ,\beta_p \\  0\leq \beta_i\leq \ell_i \\ \beta_1+\ldots + \beta_p = \beta  } }  V_{\ell_1,\beta_1}\! \left(R V_{\ell_2,\beta_2}\right)\dots \left(R V_{\ell_p,\beta_p}\right)\! P.
\end{align}
Inserting $K_\ell =V_{\ell,0}+  V_{\ell,\ell} $ in \eqref{def:mb:V:k}, it is straightforward to see that
\begin{align}
\mathbb{K}_\ell= P\sum_{\beta=0}^{\ell+2} \mathbb{T}_{\ell+2,\beta}\quad \text{and}\quad \widetilde{\mathbb{K}}_\ell=P\sum_{\beta=0}^{\ell+1} \mathbb{T}_{\ell+2,\beta},
\end{align}
where the second identity follows from the first by $\mb T_{\ell+2,\ell+2} = V_{\ell+2,\ell+2}P = - E_{\ell}P$ and $\widetilde {\mathbb K}_{\ell} = \mb K_\ell  - E_\ell $. While this way of writing $\mb K_\ell$ and $\widetilde{\mb K}_\ell$ involves many summands that are zero by definition, it turns out useful to keep track of the combinatorial factors. 

Furthermore, let us define the spaces $\mathcal{H}_k:=\mathbb U^* \! \left(\mathcal{F}_{\leq k}\right)$, where $\mathbb U$ is the unitary defined in Eq.~(\ref{eq: def of U}) and $\mathcal{F}_{\leq k}\subseteq \mathcal F$ is the subspace of states involving at most $k$ particles. Before we prove Lemma \ref{lem:estimates:approx:groundstate:2}, we need some auxiliary results.

\begin{lem}\label{lem:fock_space_results}
Assume that $|E_k| \le C_2^k \sqrt{(k+2)!}  $ for some $C_2>0$ and all $k\in \{1,\ldots, \ell-1 \}$. Then there exists a constant $D>0$, such that
\begin{align}
\label{eq:basic_fock_1}
&\|(Y V_{\ell,0} Z)|_{\mathcal{H}_r}\|\leq D \sqrt{(r+1)\dots (r+\ell)},\\
\label{eq:basic_fock_2}
&\|\left(V_{\ell,0} Z\right)|_{\mathcal{H}_{r}}\|\leq \left(1+\sqrt{\Lambda}\right)D\sqrt{(r+1)\dots (r+\ell)},\\
\label{eq:basic_fock_3}
&\|(\mathbb{R} P V_{\ell_1,0}R V_{\ell_2,0}R^{\frac{1}{2}})|_{\mathcal{H}_r}\|\leq D\sqrt{(r+1)\dots (r+\ell_1+\ell_2-2)}
\end{align}
with $Y,Z\in \{(-R)^{\frac{1}{2}},P\}$. 
\end{lem}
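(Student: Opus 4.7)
My plan is to reduce the proof to the explicit form of $V_{\ell,0}$ and apply standard creation/annihilation bounds, with a Lieb--Yamazaki commutator trick for (1) and a resolvent-gain argument for (3). First observe that by \eqref{eq:def:V:ell:beta} together with \eqref{eq:def:K_1}--\eqref{eq:def:K:ell+1} and the identity $v_x+p^2 g_x^\Lambda=\Pi_\Lambda v_x$, one has $V_{\ell,0}=0$ for $\ell\ge 5$, while for $\ell\in\{1,2,3,4\}$
\begin{align*}
V_{1,0}&=\phi(\Pi_\Lambda v_x+\varphi^\P)+2a^*(pg_x^\Lambda)p+2pa(pg_x^\Lambda), & V_{3,0}&=\phi(g_x^\Lambda),\\
V_{2,0}&=\mathcal N+\phi(pg_x^\Lambda)^2+2\Re\langle v_x+\varphi^\P,g_x^\Lambda\rangle, & V_{4,0}&=\|g_x^\Lambda\|_2^2.
\end{align*}
In particular $V_{\ell,0}$ is independent of the $E_k$ (so the induction hypothesis plays no role here) and is a polynomial in $a^\#$ of degree at most $\ell$ with $x$-dependent form factors. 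All these form factors have $L^2$-norms bounded uniformly in $x\in\Omega$ and $\Lambda\ge 2$ by Lemma \ref{Lemma-Regularity_w}, except $\Pi_\Lambda v_x$ which satisfies $\|\Pi_\Lambda v_x\|_2\le C\sqrt\Lambda$.

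For (2), I use the identification $\mathcal H_r=\mathbb U^*\mathcal F_{\le r}$: for $\Psi=\mathbb U^*\tilde\Psi$ with $\tilde\Psi\in\mathcal F_{\le r}$, write $V_{\ell,0}Z\Psi=\mathbb U^*(\mathbb U V_{\ell,0}Z\mathbb U^*)\tilde\Psi$. Since $Z\in\{P,(-R)^{1/2}\}$ commutes with $\mathbb U$ (acting on different factors) and $\mathbb U a^\#(f)\mathbb U^*$ remains linear in $a^\#$ with form-factor norms $\le(1+\|\mathfrak B\|)\|f\|$ by \eqref{eq: def of U}, the conjugate $\mathbb U V_{\ell,0}\mathbb U^*$ is still a polynomial in $a^\#$ of degree $\le\ell$ with form factors uniformly $L^2$-bounded in $x$. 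The standard bound $\|a^{\#_1}(f_1)\cdots a^{\#_k}(f_k)\tilde\Psi\|\le\sqrt{(r+1)\cdots(r+k)}\prod_i\|f_i\|\,\|\tilde\Psi\|$ on $\mathcal F_{\le r}$ with $k\le\ell$ yields (2), the $(1+\sqrt\Lambda)$ prefactor arising solely from $\|\Pi_\Lambda v_x\|_2\le C\sqrt\Lambda$ in $V_{1,0}$.

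For (1), the extra operator $Y$ on the left removes the $\sqrt\Lambda$ via the Lieb--Yamazaki-type identity used in Lemma \ref{lem:X:phiY:bound}: since $\Pi_\Lambda v_x=p^2\Pi_\Lambda g_x$, one has $\phi(\Pi_\Lambda v_x)=\sum_{i=1}^3[p^{(i)},\phi(p^{(i)}\Pi_\Lambda g_x)]$, and moving each $p^{(i)}$ past $Y$ using $\|Yp^{(i)}\|<\infty$ from \eqref{Rnabla:bounds} leaves form factors $p^{(i)}\Pi_\Lambda g_x$ with $\|p\Pi_\Lambda g_x\|_2\le\|pg_x\|_2<\infty$, so that $\sqrt\Lambda$ is replaced by a constant. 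The two remaining terms $2a^*(pg_x^\Lambda)p$ and $2pa(pg_x^\Lambda)$ in $V_{1,0}$ carry a stray $p$ adjacent to $Z$, absorbed via $\|pZ\|<\infty$; their form factors $pg_x^\Lambda$ are $\Lambda$-independent in $L^2$. For $\ell\in\{2,3,4\}$ all form factors are already $\Lambda$-independent so no commutator trick is needed.

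Finally, for (3), write $R=-(-R)^{1/2}(-R)^{1/2}$ to decompose
\begin{equation*}
\mathbb R PV_{\ell_1,0}RV_{\ell_2,0}R^{1/2}=-\mathbb R\cdot\bigl[PV_{\ell_1,0}(-R)^{1/2}\bigr]\cdot\bigl[(-R)^{1/2}V_{\ell_2,0}(-R)^{1/2}\bigr].
\end{equation*}
Each bracket is bounded by (1), and since $\mathbb U V_{\ell_i,0}\mathbb U^*$ has polynomial degree $\le\ell_i$ in $a^\#$ it maps $\mathcal H_k\to\mathcal H_{k+\ell_i}$; chaining, the product maps $\mathcal H_r\to\mathcal H_{r+\ell_1+\ell_2}$ with norm $\le D^2\sqrt{(r+1)\cdots(r+\ell_1+\ell_2)}$. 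The remaining two factors are extracted from $\mathbb R$: by Lemma \ref{lem:diagonalization of HBog}, $\mathbb U\mathbb R\mathbb U^*$ is diagonal in the Fock grading, and $d\Gamma(\mathfrak h^{1/2})\ge\tau_1^{1/2}\mathcal N$ combined with the spectral gap at $\mathsf E^{(n)}$ gives $\|\mathbb R|_{\mathcal H_k}\|\le C(1+k)^{-1}$ uniformly in $k\in\mathbb N_0$. The elementary inequality $\sqrt{(r+\ell_1+\ell_2-1)(r+\ell_1+\ell_2)}\le r+\ell_1+\ell_2+1$ then yields the claim. The main technical subtlety is the Lieb--Yamazaki step in (1) that keeps the bound $\Lambda$-independent despite the presence of the UV-divergent factor $\Pi_\Lambda v_x$; in contrast, the degree-preservation $V_{\ell,0}:\mathcal H_k\to\mathcal H_{k+\ell}$ needed for the chaining in (3) is immediate from \eqref{eq: def of U}.
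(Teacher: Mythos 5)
Your treatment of \eqref{eq:basic_fock_1} and \eqref{eq:basic_fock_2} is fine and matches the paper, which simply reduces these to Lemma \ref{lem:estimates:approx:groundstate:3x} together with Lemma \ref{lem:number:op:bounds}; your explicit identification of the $V_{\ell,0}$, the observation that they do not involve the $E_k$, and the Lieb--Yamazaki commutator step for removing the $\sqrt\Lambda$ when an operator $Y\in\{P,(-R)^{1/2}\}$ sits on the left are all correct (note only that the commutator $[p^{(i)},\phi(p^{(i)}\Pi_\Lambda g_x)]$ requires both $\|Yp\|<\infty$ and $\|pZ\|<\infty$, which you do have from \eqref{Rnabla:bounds}).

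The proof of \eqref{eq:basic_fock_3}, however, has a genuine gap. The bound $\|\mathbb R|_{\mathcal H_k}\|\le C(1+k)^{-1}$ is false for $\mathcal H_k=\mathbb U^*(\mathcal F_{\le k})$: this subspace contains all states with \emph{at most} $k$ quasi-particles, in particular the low-lying eigenstates of $\mb H_0$, on which $\mathbb R$ has norm of order one (comparable to the inverse of the spectral gap around $\eB E^{(n)}$), uniformly in $k$. The decay $C(1+k)^{-1}$ holds only on the \emph{exactly}-$k$ sector $\mathcal H_{=k}:=\mathcal H_k\cap\mathcal H_{k-1}^\perp$. Your chaining argument then breaks down: the image of $PV_{\ell_1,0}RV_{\ell_2,0}R^{1/2}|_{\mathcal H_r}$ lies in $\mathcal H_{r+\ell_1+\ell_2}$ but is spread over all sectors $\mathcal H_{=k}$ with $k\le r+\ell_1+\ell_2$, so composing operator norms gives you no $1/(r+\ell_1+\ell_2)$ gain from $\mathbb R$. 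The point that must be made, and that your argument omits, is that a state starting in $\mathcal H_{=s}$ is mapped by the two $V$'s (each shifting the grading by at most $2$) into sectors $\mathcal H_{=s+i}$ with $|i|\le 4$, so that the subsequent $\mathbb R$ gains a factor $\sim (s+1)^{-1}$ which cancels exactly two of the factors in $\sqrt{(s+1)\cdots(s+\ell_1+\ell_2)}$. This is what the paper does: it writes $\mathbb R^2\le C\,\mathbb U^*(\mathcal N+1)^{-2}\mathbb U$, decomposes $V_{\ell,0}=\sum_{i=-2}^{2}W_\ell^i$ into graded components, decomposes the input $\Psi=\sum_s\Psi_s$ with $\Psi_s\in\mathcal H_{=s}$, and uses the mutual orthogonality of the images $Z^{i_1,i_2}_{\ell_1,\ell_2}\mathcal H_{=s}$ to sum the squares sector by sector. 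Without this graded decomposition (or an equivalent almost-orthogonality argument) your estimate for \eqref{eq:basic_fock_3} does not close.
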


\begin{proof}
The bounds \eqref{eq:basic_fock_1} and \eqref{eq:basic_fock_2} follow immediately from Lemma \ref{lem:estimates:approx:groundstate:3x} and Lemma \ref{lem:number:op:bounds}. In order to verify Eq.~(\ref{eq:basic_fock_3}), note that $\mathbb{R}^2\leq C \mathbb{U}^\dagger (\mathcal{N}+1)^{-2} \mathbb{U}$, and therefore
\begin{align}
\|(\mathbb{R} P V_{\ell_1,0}R V_{\ell_2,0}R^{\frac{1}{2}})|_{\mathcal{H}_r}\|\leq \sqrt{C}\|(\mathbb{U}^\dagger (\mathcal{N}+1)^{-1} \mathbb{U} P V_{\ell_1,0}R V_{\ell_2,0}R^{\frac{1}{2}})|_{\mathcal{H}_r}\|.
\end{align}
To process this expression further, let us decompose $V_{\ell,0}|_{\mathcal{H}_r}=\sum_{i=-2}^2 W_{\ell}^i$, where $W_{\ell}^i$ is defined by $W_{\ell}^{-i}= \mathds 1(\ell \le 4)\, \pi_{\mathcal{H}_{r+i}} V_{\ell,0}|_{\mathcal{H}_r}$ and $\pi_{\mathcal{H}_{r}}$ is the orthogonal projection onto $\mathcal{H}_{r}$. Making use of Lemma \ref{lem:estimates:approx:groundstate:3x}, we know that $\|(Y W_\ell^i Z)|_{\mathcal{H}_r}\|\leq D\sqrt{(r+1)\dots (r+\ell)}$ and consequently 
\begin{align}
\label{eq:composed_W_estimate}
\|(P W_{\ell_1}^{i_1}R W_{\ell_2}^{i_2}R^{\frac{1}{2}})|_{\mathcal{H}_r}\|\leq D^2 \sqrt{(r+1)\dots (r+\ell_1+\ell_2)}.
\end{align}
Further note that $Z_{\ell_1,\ell_2}^{i_1,i_2}:=P W_{\ell_1}^{i_1}R W_{\ell_2}^{i_2}R^{\frac{1}{2}}$ satisfies $Z_{\ell_1,\ell_2}^{i_1,i_2} \mathcal{H}_{=r} \subset \mathcal{H}_{=r+i_1+i_2}$, where $\mathcal{H}_{=r}:=\mathcal{H}_{r}\cap \mathcal{H}_{r-1}^\perp$, and consequently $Z_{\ell_1,\ell_2}^{i_1,i_2}\mathcal{H}_{=r}\perp Z_{\ell_1,\ell_2}^{i_1,i_2} \mathcal{H}_{=s} $ for $r\neq s$. Decomposing an arbitrary state $\Psi\in \mathcal{H}_r$ into $\Psi=\sum_{s=0}^r \Psi_s$ such that $\Psi_s\in \mathcal{H}_{=s}$, we obtain
\begin{align}
&\|\mathbb{U}^\dagger (\mathcal{N}+1)^{-1} \mathbb{U}P W_{\ell_1}^{i_1}R W_{\ell_2}^{i_2}R^{\frac{1}{2}}\Psi\|^2=\sum_{s=s_0}^r \|\mathbb{U}^\dagger (\mathcal{N}+1)^{-1} \mathbb{U}P W_{\ell_1}^{i_1}R W_{\ell_2}^{i_2}R^{\frac{1}{2}}\Psi_s\|^2 \notag \\
& \ \ =\sum_{s=s_0}^r \frac{1}{(s+i_1+i_2+1)^2}\|P W_{\ell_1}^{i_1}R W_{\ell_2}^{i_2}R^{\frac{1}{2}}\Psi_s\|^2\leq \sum_{s=s_0}^r D^2\frac{(s+1)\dots (s+\ell_1+\ell_2)}{(s+i_1+i_2+1)^2}\|\Psi_s\|^2
\end{align}
with $s_0:=\max\{0,-i_1-i_2\}$, where we have used Eq.~(\ref{eq:composed_W_estimate}). Since we have the inequality $\frac{(s+1)\dots (s+\ell_1+\ell_2)}{(s+i_1+i_2+1)^2}\leq 7(s+1)\dots (s+\ell_1+\ell_2-2) \leq 7 (r+1)\dots (r+\ell_1+\ell_2-2)$ we obtain $\|(\mathbb{U}^\dagger (\mathcal{N}+1)^{-1} \mathbb{U}P W_{\ell_1}^{i_1}R W_{\ell_2}^{i_2}R^{\frac{1}{2}})|_{\mathcal{H}_r}\|\leq \sqrt{7}D\sqrt{(r+1)\dots (r+\ell_1+\ell_2-2)}$, which concludes the proof. Here we used that $W_{\ell}^{-i} =0$ if $\ell \ge 5$.
\end{proof}

\begin{lem}\label{lem:estimates:approx:groundstate:2_Auxiliary} 
Assume that $|E_k| \le C_2^k \sqrt{(k+2)!}  $ for some $C_2>0$ and all $k\in \{1,\ldots, \ell-1 \}$. Then there exist constants $C_0,C_1>0$, such that 
\begin{align*}
\left\|(X \mb T_{\ell,\beta})|_{\mathcal{H}_s}\right\|\leq C_0 \, C_1^{\ell-\beta}C_2^{\max\{\beta-2,0\}} \sqrt{(s+1)\dots (s+\ell-\beta)}\sqrt{\beta!}
\end{align*}
for all $\beta\leq \ell$ and $X\in \{(-R)^{\frac{1}{2}},P\}$, where we use the convention that $(s+1)\dots (s+\ell-\beta)=1$ for $\beta=\ell$. Furthermore, $\left\|(\mathbb{R} P \mb T_{\ell+2,0})|_{\mathcal{H}_s}\right\|\leq C_1^{\ell} \sqrt{(s+1)\dots (s+\ell)}$.
\end{lem}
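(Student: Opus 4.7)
The strategy is to first handle the pure operator case $\beta=0$ and then reduce the general case to it by factoring out the scalar contributions $-E_{\ell_i-2}$; the tighter bound for $\mb R P\mb T_{\ell+2,0}$ is then obtained via the improved pair estimate \eqref{eq:basic_fock_3}.

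\emph{Step 1 (the case $\beta=0$).} When $\beta=0$, each factor equals $V_{\ell_i,0}=K_{\ell_i}+E_{\ell_i-2}$, which vanishes for $\ell_i\ge 5$ by \eqref{eq:def:K:ell+1}. Hence only compositions with $\ell_i\in\{1,2,3,4\}$ contribute, yielding at most $4^\ell$ terms. Splitting each $R$ as $-(-R)^{1/2}(-R)^{1/2}$, I would write the generic summand as
\[
\pm X V_{\ell_1,0}(-R)^{1/2}\Bigl[\prod_{i=2}^{p-1}(-R)^{1/2}V_{\ell_i,0}(-R)^{1/2}\Bigr](-R)^{1/2}V_{\ell_p,0}P,
\]
and bound each segment via \eqref{eq:basic_fock_1} of Lemma \ref{lem:fock_space_results}. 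Tracking the particle number through the string (each $V_{\ell_i,0}$ raises $\mathcal H_r$ into $\mathcal H_{r+2}$, and $R$ commutes with $\mathcal N$) one obtains a telescoping bound of the form $D^p(s+\ell)^{\ell/2}$, which by $(s+\ell)^\ell\le (\ell^\ell/\ell!)(s+1)\cdots(s+\ell)\le e^\ell(s+1)\cdots(s+\ell)$ can be rewritten as $(De^{1/2})^\ell\sqrt{(s+1)\cdots(s+\ell)}$. Summing over the $\le 4^\ell$ admissible compositions gives $\|(X\mb T_{\ell,0})|_{\mathcal H_s}\|\le C_0 C_1^\ell\sqrt{(s+1)\cdots(s+\ell)}$.

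\emph{Step 2 (general $\beta$).} For a given summand of $\mb T_{\ell,\beta}$, let $S\subseteq\{1,\ldots,p\}$ collect the scalar positions $\beta_i=\ell_i$. By \eqref{eq:def:V:ell:beta} these contribute the overall scalar $\prod_{i\in S}(-E_{\ell_i-2})$, which is nonzero only when each $\ell_i\ge 2$ (since $E_{-1}=0$). Enlarging $C_2$ if necessary so that $|E_0|,|E_1|$ fit the induction hypothesis as well, one has
\[
\prod_{i\in S}|E_{\ell_i-2}|\le\prod_{i\in S}C_2^{\ell_i-2}\sqrt{\ell_i!}= C_2^{\beta-2|S|}\prod_{i\in S}\sqrt{\ell_i!}\le C_2^{\max\{\beta-2,0\}}\sqrt{\beta!},
\]
using the multinomial inequality $\prod_{i\in S}\ell_i!\le\beta!$ and the fact that $|S|\ge 1$ whenever $\beta\ge 2$. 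The remaining operator string $V_{\ell_{i_1},0}R^{n_1}V_{\ell_{i_2},0}R^{n_2}\cdots V_{\ell_{i_r},0}R^{n_r}P$ has total weight $\ell-\beta$, and the extra resolvent powers $R^{n_k-1}$ are harmless because $\|R\|\le C$ uniformly; Step 1 applied to this tail yields $C^r\sqrt{(s+1)\cdots(s+\ell-\beta)}$. Summing over the at most $8^\ell$ configurations of compositions and scalar subsets establishes the first claim.

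\emph{Step 3 and main obstacle.} For the sharper estimate on $\mb R P\mb T_{\ell+2,0}$, I would treat each composition with $p\ge 2$ by pairing off the leftmost operator factors and applying \eqref{eq:basic_fock_3}, which replaces $\sqrt{(r+1)\cdots(r+\ell_1+\ell_2)}$ by $\sqrt{(r+1)\cdots(r+\ell_1+\ell_2-2)}$ and hence produces the advertised saving of two factors; the remaining tail is controlled exactly as in Step 1. The case $p=1$ only arises for $\ell+2\le 4$ and is handled directly via $\|\mb R\|\le C$ and \eqref{eq:basic_fock_1}, padding the product with trivial factors if needed. The delicate point is the Step 2 combinatorics: to propagate $\sqrt{\beta!}$ without polynomial loss one must use the multinomial inequality exactly, and the $\binom{\beta-1}{|S|-1}$ distinct placements of scalars within a given composition, summed with the number of compositions, must all be absorbed into the exponential prefactors $C_1^{\ell-\beta}C_2^{\max\{\beta-2,0\}}$ without mixing the operator and scalar contributions.
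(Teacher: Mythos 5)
Your architecture is the same as the paper's: per-factor bounds on the graded spaces $\mathcal H_r$ from Lemma \ref{lem:fock_space_results}, exact telescoping of the products $\sqrt{(r+1)\cdots(r+\ell_i)}$ along the chain, separation of the scalar factors $-E_{\ell_i-2}$ from the operator factors, and the improved second estimate by peeling off the first two operator factors and invoking \eqref{eq:basic_fock_3}. The gap is precisely the point you flag at the end of Step 3, and it is not cosmetic. In Step 2 you apply the multinomial inequality $\prod_{i\in S}\ell_i!\le\beta!$ pointwise to each configuration and then multiply by the number of configurations, which you bound by $8^\ell$. Writing $8^\ell=8^{\ell-\beta}\,8^{\beta}$, the first factor fits into $C_1^{\ell-\beta}$, but the second does not: even the sharp count (compositions of $\beta$ into scalar parts $\ge 2$, times the interleavings of scalar and operator slots) is of order $c^{\beta}$ for some $c>1$, and when $\beta$ is comparable to $\ell$ the only remaining slack in the claimed bound is $C_2^{\max\{\beta-2,0\}}$ with $C_2$ the constant \emph{given} in the hypothesis, which you may not enlarge. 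This matters because the lemma feeds the induction behind \eqref{eq:PKUpsilon:bound} in Lemma \ref{lem:estimates:approx:groundstate:2}, where a single fixed $C_2$ must reproduce itself exactly; an output of the form $(cC_2)^{\beta-2}$ with $c>1$ would, after the geometric summation over $\beta$ there, yield $|E_\ell|\lesssim (cC_2)^{\ell-1}\sqrt{(\ell+2)!}$ and closing the induction would require $C_2\gtrsim c^{\ell-1}$ for every $\ell$, which is impossible.

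To close the gap you need the paper's two refinements. First, the number of admissible operator compositions compatible with a prescribed scalar pattern is bounded by $2^p2^{\ell-\beta}\chi$, where $\chi$ enforces that the number $p-p'$ of operator slots is at most $\ell-\beta$ (each operator factor has weight at least one); this confines the composition count to the $C_1^{\ell-\beta}$ budget. Second, in place of the pointwise multinomial bound one needs the \emph{summed} estimate $\sum_{\beta_1+\cdots+\beta_q=\beta,\ \beta_i\ge1}\sqrt{\beta_1!\cdots\beta_q!}\le \tfrac{12^q}{\sqrt{q!}}\sqrt{\beta!}$, proved by induction on $q$: the $1/\sqrt{q!}$ is what pays for the number of scalar blocks and makes the remaining sum over $q$, together with the interleaving factor $2^p\binom{p}{p'}\le 4^{p'}4^{\ell-\beta}$, converge to a constant times $4^{\ell-\beta}$. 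Two smaller symptoms of the same absorption problem in your Step 2: the reduction $C_2^{\beta-2|S|}\le C_2^{\max\{\beta-2,0\}}$ presumes $C_2\ge1$, and the factors $|E_0|$ attached to scalar parts with $\ell_i=2$ (not covered by the induction hypothesis, which starts at $k=1$) contribute $|E_0|^{|S|}$ with $|S|$ as large as $\beta/2$; in the paper's scheme such per-block constants are harmless because of the $1/\sqrt{q!}$, whereas in yours they again pile up as an unabsorbable $c^{\beta}$.
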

\begin{proof}[Proof of Lemma \ref{lem:estimates:approx:groundstate:2_Auxiliary}] Using the fact that $V_{\ell,\beta} \mathcal{H}_r \subset \mathcal{H}_{r+\ell-\beta}$ and that there exists a constant $D>0$ such that $\|(Y V_{\ell,\beta} Z)|_{\mathcal{H}_r}\|\leq D^{\ell-\beta} C_2^{\max\{\beta-2,0\}}  \sqrt{(r+1)\dots (r+\ell-\beta)}\sqrt{\beta!}$ for $Y,Z\in \{(-R)^{\frac{1}{2}},P\}$, see Lemma \ref{lem:fock_space_results}, we can bound $\left\|(X \mb T_{\ell,\beta})|_{\mathcal{H}_s}\right\|$ from above by
\begin{align*}
 & D^{\ell-\beta}C_2^{\max\{\beta-2,0\}} \sqrt{(s \! + \! 1)\dots (s \! + \! \ell \! - \! \beta)}\sum_{p=1}^\beta \sum_{\substack{\ell_1,\dots , \ell_p\in \mb N \\ \ell_1 + \ldots + \ell_p = \ell}} \sum_{\substack { \beta_1,\dots ,\beta_p: 0\leq \beta_i\leq \ell_i \\ \beta_1+ \ldots + \beta_p = \beta } } \sqrt{\beta_1!\dots \beta_p!}\\
&\  =D^{\ell-\beta} C_2^{\max\{\beta-2,0\}} \sqrt{(s \! + \! 1)\dots (s \! + \! \ell \! - \! \beta)}\sum_{p=1}^\beta \sum_{\substack{ \beta_1,\dots ,\beta_p\geq 0 \\ \beta_1 + \ldots + \beta_p = \beta }} \sqrt{\beta_1!\dots \beta_p!}
\left( \sum_{\substack{ \ell_1, \dots , \ell_p  \\
\ell_i \ge \max\{\beta_i,1\} \\ \ell_1 + \ldots  +\ell_p = \ell }} 1\right).
\end{align*}
For given $\beta_1,\dots , \beta_p$ satisfying $\beta_1+\dots  +\beta_p=\beta$, let us define the set $I=I(\beta_1,\dots ,\beta_p):=\{j:\beta_j\geq 1\}$, the number $p'=p'(\beta_1,\dots ,\beta_p):=|I|$ and the indicator function $\chi=\chi(\beta_1,\dots ,\beta_p):=\mathds{1}_{p'\geq p-\ell+\beta}$. Note that the existence of a sequence $\ell_1,\dots ,\ell_p$ satisfying $\ell_1+\dots +\ell_p=\ell$ and $\ell_j\geq \max\{\beta_i,1\}$ immediately implies 
\begin{align*}
p-p'=|\{1,\dots,p\}\setminus I|\leq \sum_{i\in \{1,\dots,p\}\setminus I}\ell_i=\sum_{i\in \{1,\dots,p\}\setminus I}(\ell_i-\beta_i)\leq \sum_{i=1}^p (\ell_i-\beta_i)=\ell-\beta,
\end{align*}
and therefore $\chi(\beta_1,\dots,\beta_p)=1$. As a consequence
\begin{align}
\nonumber
\sum_{\substack{ \ell_1,\dots , \ell_p: \ell_i\geq \max\{\beta_i,1\} \\ \ell_1 + \ldots + \ell_p = \ell }}  1 & \leq \sum_{\substack{ \ell_1,\dots , \ell_p: \ell_i\geq \beta_i \\ \ell_1 + \ldots + \ell_p = \ell }} \chi = \sum_{\substack{ \delta_1,\dots , \delta_p: \delta_i\geq 0 \\ \delta_1 + \ldots + \delta_p =  \ell - \beta} } \chi=\sum_{q=1}^p \binom{p}{q} \sum_{\substack{ \delta_1,\dots , \delta_{p} \ge 1  \\ \delta_1 + \ldots + \delta_p = \ell - \beta }} \chi\\
\label{eq: Estimate_Combinatoric_Sum}
&\leq \sum_{q=1}^p \binom{p}{q} 2^{\ell-\beta}\chi\leq 2^p 2^{\ell-\beta}\chi.
\end{align}
With this at hand, we can bound
\begin{align*}
& \sum_{p=1}^\beta \sum_{\substack{ \beta_1,\dots ,\beta_p\geq 0 \\ \beta_1 + \ldots +\beta_p = \beta }} \sqrt{\beta_1!\dots \beta_p!} \left( \sum_{\substack{ \ell_1,\dots , \ell_p: \ell_i \ge \max\{\beta_i,1\} \\ \ell_1 + \ldots + \ell_p = \ell } } 1\right) \notag\\
&  \le 2^{\ell-\beta} \! \sum_{p=1}^\beta 2^p \sum_{\substack{ \beta_1,\dots ,\beta_p\geq 0 \\ \beta_1 + \ldots +  \beta_p = \beta }}  \sqrt{\beta_1!\dots \beta_p!}\, \chi =2^{\ell-\beta}\sum_{p=1}^\beta \sum_{p'=p-\ell+\beta}^p \!  \!  \!  \!  2^p\binom{p}{p'} \sum_{\substack{ \beta_1,\dots ,\beta_{p'}\geq 1 \\ \beta_1 + \ldots + \beta_{p'} =\beta }} \sqrt{\beta_1!\dots \beta_{p'}!}\, .
\end{align*}
In order to estimate this further, we are going to verify 
\begin{align}
\sum_{\substack{ \beta_1,\dots ,\beta_{q}\geq 1 \\ \beta_1+ \ldots + \beta_q = \beta } } \sqrt{\beta_1!\dots \beta_{q}!}\leq \frac{12^q}{\sqrt{q!}}\sqrt{\beta!}
\end{align}
by induction. For $q=1$ the statement is trivial. Regarding the step $q\mapsto q+1$, note that
\allowdisplaybreaks
\begin{align}
&\sum_{\substack{ \beta_1,\dots ,\beta_{q+1}\geq 1 \\ \beta_1 + \ldots + \beta_{q+1} =\beta } } \sqrt{\beta_1!\dots \beta_{q+1}!}=\sum_{t=1}^{\beta-q}\sqrt{t!}\sum_{\substack{ \beta_1,\dots ,\beta_{q}\geq 1 \\ \beta_1 + \ldots + \beta_q = \beta -t }} \sqrt{\beta_1!\dots \beta_{q}!}\leq \frac{12^q}{\sqrt{q!}}\sum_{t=1}^{\beta-1}\sqrt{t!}\sqrt{(\beta-t)!}\notag \\
& \ \ \ \ =\frac{12^q}{\sqrt{q!}}\sqrt{\beta!}\sum_{t=1}^{\beta-1}\binom{\beta}{t}^{-\frac{1}{2}}=\frac{12^q}{\sqrt{q!}}\sqrt{\beta!}\left(\frac{2}{\sqrt{\beta}}+\frac{4}{\sqrt{\beta(\beta-1)}}+\sum_{t=3}^{\beta-3}\binom{\beta}{t}^{-\frac{1}{2}}\right) \notag \\
& \ \ \ \ \leq \frac{12^q}{\sqrt{q!}}\sqrt{\beta!}\left(\frac{2}{\sqrt{\beta}}+\frac{4}{\sqrt{\beta(\beta-1)}}+(\beta-5)\binom{\beta}{3}^{-\frac{1}{2}}\right)\leq \frac{12^{q+1}}{\sqrt{q!}\sqrt{\beta}}\sqrt{\beta!}, 
\end{align}
where we have used $\binom{\beta}{t}^{-\frac{1}{2}}\leq \binom{\beta}{3}^{-\frac{1}{2}}$ for $3\leq t\leq \beta-3$. Since $\sum_{\substack{ \beta_1,\dots ,\beta_{q+1}\geq 1 \\ \beta_1+ \ldots + \beta_{q+1} = \beta }} \sqrt{\beta_1!\dots \beta_{q+1}!}=0$ for $\beta<q+1$, we can assume $\beta\geq q+1$, which concludes the induction. Combining what we have so far allows us to bound $\left\|(X \mb T_{\ell,\beta})|_{\mathcal{H}_s}\right\|$ from above by
\begin{align*}
(2D)^{\ell-\beta} C_2^{\max\{\beta-2,0\}} \sqrt{(s \! + \! 1)\dots (s \! + \! \ell \! - \! \beta)}\sqrt{\beta!}\sum_{p=1}^\beta \sum_{p'=\max\{p-\ell+\beta,0\}}^p \!  \!  \!  \!  2^p\binom{p}{p'} \frac{12^{p'}}{\sqrt{p'!}}.
\end{align*}
Making use of the estimate
\begin{align*}
\sum_{p=1}^\beta \sum_{p'=\max\{p-\ell+\beta,0\}}^p \!  \!  \!  \!  2^p\binom{p}{p'} \frac{12^{p'}}{\sqrt{p'!}}&=\sum_{p'=0}^\beta \frac{12^{p'}}{\sqrt{p'!}}\sum_{p=\max\{1,p'\}}^{\min\{p'+\ell-\beta,\beta\}} \!  \!  \!  \!  2^p\binom{p}{p'}\leq \sum_{p'=0}^\beta \frac{12^{p'}}{\sqrt{p'!}}\sum_{p=1}^{p'+\ell-\beta} \!  \!  \!  \!  4^p\\
& \leq \sum_{p'=0}^\beta \frac{12^{p'}}{\sqrt{p'!}}\frac{4}{3}4^{p'+\ell-\beta}\leq C_0 4^{\ell-\beta}
\end{align*}
with $C_0:=\frac{4}{3}\sum_{p'=0}^\beta \frac{(48)^{p'}}{\sqrt{p'!}}<\infty$ concludes the proof of the first statement of the lemma with $C_1:=8D$.

Regarding the proof of the second statement of the lemma, let us write 
\begin{align}
\mathbb{R} P \mb T_{\ell+2,0}=\sum_{\ell_1,\ell_2=1}^4 \mathbb{R} P V_{\ell_1,0}R V_{\ell_2,0}R^{\frac{1}{2}}R^{\frac{1}{2}} \mb T_{\ell+2-\ell_1-\ell_2,0}.
\end{align}
Using $\|(\mathbb{R} P V_{\ell_1,0}R V_{\ell_2,0}R^{\frac{1}{2}})|_{\mathcal{H}_r}\|\leq D\sqrt{(r+1)\dots (r+\ell_1+\ell_2-2)}$ for a suitable constant $D>0$, see Lemma \ref{lem:fock_space_results}, yields
\begin{align}
\left\|(\mathbb{R} P \mathbb T_{\ell+2,0})|_{\mathcal{H}_s}\right\|\leq D C_0 C_1^{\ell+2-\ell_1-\ell_2}\sqrt{(s+1)\dots (s+\ell)}.
\end{align}
\end{proof}

We are now prepared to prove Lemma \ref{lem:estimates:approx:groundstate:2}.

\begin{proof}[Proof of Lemma \ref{lem:estimates:approx:groundstate:2}] As explained at the beginning of this section, the main step is to derive \eqref{eq:PKUpsilon:bound}. To this end, let us write $P \mb P K_k\Upsilon_{\ell+2} =P \mb P   Z_1+P  \otimes \mb P Z_2$ with $Z_1$ defined as
\begin{align}
\label{eq: representation eigenstate_1}
\sum_{p=1}^\ell \!  \sum_{\beta=0}^{\ell+2p}\! \sum_{\substack{ \ell_1,\dots , \ell_p\geq 1 \\ \ell_1 + \ldots + \ell_p = \ell }}  \sum_{\substack{ \beta_1,\dots ,\beta_p \\ 0\leq \beta_i\leq \ell_i+2 \\ \beta_1+ \ldots + \beta_p = \beta }} \! \! \!  V_{k,0} ( R \mb T_{\ell_1+2,\beta_2} ) ( \mathbb{R} P \mb T_{\ell_2+2,\beta_1} )  \dots \! ( \mathbb{R} \! P \mb T_{\ell_p+2,\beta_p} )  \Upsilon_0 \! \! 
\end{align}
and $Z_2$ defined as
\begin{align}
\label{eq: representation eigenstate_2}
\sum_{p=1}^{ \ell +2 }\!   \sum_{\beta=0}^{\ell+2+2p}\! \sum_{\substack{ \ell_1,\dots , \ell_p\geq 1 \\ \ell_1 + \ldots + \ell_p = \ell +2 }} \sum_{\substack{ \beta_1,\dots ,\beta_p \\ 0\leq \beta_i\leq \ell_i+2 \\ \beta_1 + \ldots + \beta_p = \beta }} \! \!   V_{k,0} ( \mathbb{R}  P \mb T_{\ell_1+2,\beta_1} ) ( \mathbb{R} P \mb T_{\ell_2+2,\beta_1} )   \dots ( \mathbb{R}  P \mb T_{\ell_p+2,\beta_p} ) \Upsilon_0,
\end{align}
where we have used that $P   K_kR=P   V_{k,0}R$ and $ \mb P   K_k\mathbb{R} = \mb P  V_{k,0}\mathbb{R}  $, see \eqref{eq:def:V:ell:beta}.  According to Lemma \ref{lem:estimates:approx:groundstate:2_Auxiliary} we have $\| ( R^{\frac{1}{2}}\mb T_{\ell_1+2,0} ) |_{\mathcal{H}_{3s}}\|\leq (\sqrt{3}C_1)^{\ell_1+2} \sqrt{(s+1)\dots (s+\ell_1+2)}$ as well as for $\beta_i\geq 3$
\begin{align*}
\|(X \mb T_{\ell_i+2,\beta_i})|_{\mathcal{H}_{3s}}\|\leq \frac{C_0}{\tau} (\sqrt{3}C_1)^{\ell_i+2-\beta_i}C_2^{\beta_i-2}\sqrt{(s+1)\dots (s+\ell_i+2-\beta_i)}
\end{align*}
where $X\in \{\mathbb{R} \! P,(-R)^{\frac{1}{2}}\}$ and $\tau:=\min\{\|\mathbb{R} \|^{-1},1\}$. For $\beta_i=0$ we use the estimate
\begin{align*}
\left\|(\mathbb{R} P \mb T_{\ell+2,0})|_{\mathcal{H}_{3s}}\right\|\leq (\sqrt{3}C_1)^{\ell} \sqrt{(s+1)\dots (s+\ell)}
\end{align*}
by Lemma \ref{lem:estimates:approx:groundstate:2_Auxiliary}. Further, $\|\left(P V_{k,0} Y\right)|_{\mathcal{H}_{3s}}\|\leq C_1^{k}\sqrt{(s+1)\dots (s+k)}$ for $Y\in \{(-R)^{\frac{1}{2}},P\}$, see Lemma \ref{lem:fock_space_results}. Note at this point that we only need to consider values of $q$ bounded by $p-1$, since $X \mb T_{\ell_p+2,\beta_p}\Upsilon_0 = 0$ for $X\in \{\mathbb{R} P,(-R)^{\frac{1}{2}}\}$ allowing us to restrict the sum in Eq.~(\ref{eq: representation eigenstate_1}) and Eq.~(\ref{eq: representation eigenstate_2}) to $\beta_{p}=0$. Furthermore, the index $\beta$ is in both Equations bounded by $\beta_*:=\ell+k-(p-q)+2q$, since $\beta=\sum_{j=1}^p \beta_j\leq \sum_{j:\beta_j\neq 0} (\ell_j+2)=\sum_{j:\beta_j\neq 0}\ell_j+2q$ and in the case of Eq.~(\ref{eq: representation eigenstate_1}) we have $\sum_{j:\beta_j\neq 0}\ell_j\leq \ell-(p-q)\leq \ell+k-(p-q)$, while we have $\sum_{j:\beta_j\neq 0}\ell_j\leq \ell+2-(p-q)\leq \ell+k-(p-q)$ in the case of Eq.~(\ref{eq: representation eigenstate_2}). Note that we have used the fact that $PV_{k,0}P=0$ for $k=1$ and therefore we can assume $k\geq 2$ in Eq.~(\ref{eq: representation eigenstate_2}). Making use of $\mb T_{\ell+2,0} \mathcal{H}_{3r}\subset \mathcal{H}_{3(r+\ell)}$ , $\mb T_{\ell+2,\beta} \mathcal{H}_{3r}\subset \mathcal{H}_{3(r+\ell+2-\beta)}$ and $\Upsilon_0\in \mathcal{H}_s$, we obtain for a suitable constant $\widetilde{C}_0$ and both $Z_1$ and $Z_2$ the estimate
\begin{align}
& \|Z_i\| \le  \sum_{p=1}^{\ell+2} \sum_{q=0}^{p-1} \sum_{\beta=0}^{\beta_*}\binom{p}{q}  \tilde{C}_0^q (\sqrt{3}C_1)^{\ell+k+2p-\beta}C_2^{\beta-2q}  \notag \\
\nonumber
& \quad \times \sqrt{(s \! + \! 1)\dots (s \! + \! \ell \! + \! k \! + \! 2(q \! + \! 1) \! - \! \beta)}\, \sum_{\substack{ \beta_1,\dots ,\beta_q \ge  3 \\ \beta_1 +\ldots + \beta_q = \beta  }} \sqrt{\beta_1!\dots \beta_q!} \sum_{\substack{ \ell_1,\dots , \ell_p\geq \max\{1,\beta_i-2\} \\ \ell_1 + \ldots + \ell_p = \ell+2 }} 1\\
\nonumber
\qquad \quad  &  \le C \!  \sum_{p=1}^{\ell+2} \sum_{q=0}^{p-1} \sum_{\beta=0}^{\beta_*}\binom{p}{q}  \tilde{C}_0^q (\sqrt{3(s+1)}C_1)^{\ell+k+2p-\beta}C_2^{\beta-2q}\\
 \nonumber
 &  \ \ \times \sqrt{(\ell+k+2(q+1)-\beta)!}\,  2^p 2^{\ell+2p-\beta}  \sum_{\substack{ \beta_1,\dots ,\beta_q \ge 3 \\ \beta_1 + \ldots \beta_q = \beta }} \, \sqrt{\beta_1!\dots \beta_q!},
\end{align}
see also Eq.~(\ref{eq: Estimate_Combinatoric_Sum}) regarding the last estimate, where we have further used $(s+1)\dots (s+\ell+k+2(q+1)-\beta)\leq (s+1)^{\ell+k+2(q+1)-\beta}(\ell+k+2(q+1)-\beta)!$. We next show by induction with respect to $q$ that the remaining sum over $\beta_1,\ldots ,\beta_q$ is bounded by $\frac{C_*^q}{\sqrt{q!}} \sqrt{(\beta+2-2q)!}$ for a suitable constant $C_*$. The case $q=1$ is clear. Regarding the induction step $q\mapsto q+1$ we estimate, similarly as in the proof of Lemma \ref{lem:estimates:approx:groundstate:2_Auxiliary},
\begin{align*}
&  \sum_{\substack{ \beta_1,\ldots ,\beta_{q+1} \ge 3 \\ \beta_1 + \ldots + \beta_{q+1} = \beta }} \sqrt{\beta_1!\dots \beta_{q+1}!}\leq \frac{C_*^q}{\sqrt{q!}}\sum_{t=3q}^{\beta-3} \sqrt{(t+2-2q)!}\sqrt{(\beta-t)!}\\
& \ \ \leq \frac{C_*^q}{\sqrt{q!}}\sqrt{(\beta \! - \! 2q \! - \! 1)!}\left(2\sqrt{6}+2\sqrt{\frac{24}{\beta \! - \! 2q \! - \! 1}}+(\beta \! - \! 3q \! - \! 2)\sqrt{\frac{120}{(\beta \! - \! 2q \! - \! 2)(\beta \! - \! 2q \! - \! 1)}}\right)\\
& \ \ \leq \frac{C_*^{q+1}}{\sqrt{q!}}\sqrt{(\beta  -  2q  -  1)!}\leq \frac{C_*^{q+1}}{\sqrt{(q+1)!}}\sqrt{(\beta  -  2q )!},
\end{align*}
where we have used that $t+2-2q$ and $\beta-t$ are bounded from above by $\beta-2q-1$. Since $\beta+2-2q$ as well as $\ell+k+2(q+1)-\beta$ are bounded from above by $\ell+k+2$ we further obtain $\sqrt{(\ell+k+2(q+1)-\beta)!}\sqrt{(\beta+2-2q)!}\leq \binom{\ell+k+4}{\ell+k+2}^{-\frac{1}{2}}\sqrt{(\ell+k+4)!}=\sqrt{2}\sqrt{(\ell+k+2)!}$, and therefore we can bound 
\begin{align*}
 &  \|P  \mb P K_k\Upsilon_{\ell+2} \|\! \leq \!  D \! \sum_{p=1}^{\ell+2} \sum_{q=0}^{p-1} \sum_{\beta=0}^{\beta_*} \!2^p\binom{p}{q} \! \!    \left( \! 2\sqrt{3(s \! + \! 1)}C_1 \! \right)^{\ell+k+  2p  -  \beta} \! \!  \!  \! C_2^{\beta \! - \! 2q}  \! \frac{(\tilde{C}_0C_*)^q}{\sqrt{q!}}\sqrt{(\ell \! + \! k \! + \! 2  )!}\\
& \ \ \ \ \leq D\sqrt{(\ell+k+2)!} \sum_{q=0}^{\ell+1} \frac{(\tilde{C}_0C_*)^q}{\sqrt{q!}}\sum_{p=q+1}^{\ell+2} \sum_{\beta=0}^{\beta_*}   \left(8\sqrt{3(s+1)}C_1\right)^{\ell+k+2p-\beta}C_2^{\beta-2q} 
\end{align*}
for a suitable constant $D$, where we have used that $p\leq \frac{1}{2}(\ell+k+2p-\beta)$ and therefore $2^p\binom{p}{q}\leq 4^p\leq 2^{\ell+k+2p-\beta}$. Using the shorthand $C_3 :=4\sqrt{3(s+1)}C_1$, we obtain for $C_2>C_3$
\begin{align*}
\sum_{\beta=0}^{\beta_*}  & \!  \! \left(4\sqrt{3(s  \! + \! 1)}C_1\right)^{\ell+k+2p-\beta} \!  \! C_2^{\beta-2q} \! = \! K^{\ell+k+2p}C_2^{-2q}\frac{\left(\frac{C_2}{K}\right)^{\beta_*}-\frac{K}{C_2}}{1-\frac{K}{C_2}}  \leq \!  \frac{C_2^{\ell+k}}{1-\frac{K}{C_2}} \left(\frac{K^3}{C_2}\right)^{p-q}.
\end{align*}
Furthermore, we have $\sum_{p=q+1}^{\ell+2} \left(\frac{C_3^3}{C_2}\right)^{p-q}C_2^{\ell+k} \leq \frac{1}{1-\frac{C_3^3}{C_2}}C_3^3 C_2^{\ell+k-1}$ for $C_2>C_3^3$, and consequently we obtain for a suitable constant $\widetilde{D}$
\begin{align*}
\|P  \mb P K_k\Upsilon_{\ell+2} \|\leq \frac{D K^3 \sqrt{(\ell+k+2)!} C_2^{\ell+k-1}}{\left(1-\frac{K}{C_2}\right)\left(1-\frac{K^3}{C_2}\right)}\sum_{q=0}^{\ell+1} \frac{(\tilde{C}_0C_*)^q}{\sqrt{q!}}\leq \widetilde{D}\sqrt{(\ell+k+2)!}C_2^{\ell+k-1}.
\end{align*}
Consequently we obtain for $C_2\geq 3\widetilde{D}$
\begin{align*}
\sum_{k=1}^{\ell+1}\big\|   P  \mb P  K _k  \Upsilon_{\ell-k+2}^{(n)} \big\|= \!  \!  \!  \!  \! \sum_{k=1}^{\max\{3,\ell+1\}} \!  \!  \! \big\| P \otimes \mb P K _k  \Upsilon_{\ell-k+2} \big\| \! \leq  \! 3\widetilde{D}\sqrt{(\ell+2)!}C_2^{\ell-1} \! \leq  \! C_2^\ell \sqrt{(\ell+2)!}.
\end{align*}
This completes the derivation of \eqref{eq:PKUpsilon:bound}. As a consequence of Lemma \ref{lem:energy:identity}, we thus obtain the first part of the lemma.

The second statement of the lemma follows in close analogy, by using the established bounds for $|E_\ell|$. The only major difference is to use the bound $\|\left(V_{k,0} Y\right)|_{\mathcal{H}_{3s}}\|\leq (1+\sqrt{\Lambda}\, ) C_1^{k}\sqrt{(s+1)\dots (s+k)}$ instead of the previously used estimate $\|\left(P V_{k,0} Y\right)|_{\mathcal{H}_{3s}}\|\leq C_1^{k}\sqrt{(s+1)\dots (s+k)}$, cf. Lemma \ref{lem:fock_space_results}. This completes the proof of Lemma \ref{lem:estimates:approx:groundstate:2}.
\end{proof}

\bigskip\noindent\textbf{Financial support.} M.B. gratefully acknowledges funding from the ERC Advanced Grant ERC-AdG CLaQS, grant agreement n. 834782.

\end{spacing}

Mathematical Physics, Analysis and Geometry 


\begin{thebibliography}{}

\bibitem{Auberson82} G. Auberson. Borel summability for a nonpolynomial potential. \emph{Comm. Math. Phys.} 84, 531--546. (1982)

\bibitem{BetzP22} V. Betz and S. Polzer. A functional central limit theorem for polaron path measures \emph{Comm. Pure. Appl. Math.} 75, 11, 2345--2392. (2022)

\bibitem{BetzP23} V. Betz and S. Polzer. Effective mass of the polaron: a lower bound. \emph{Comm. Math Phys.} 399, 173--188. (2023)

\bibitem{Bog50} N.N. Bogoliubov. On a new form of the adiabatic theory of disturbances in the problem of interaction
of particles with a quantum field. (Russian) \emph{Ukr. Mat. Zh 2,3-24}. (1950)

\bibitem{Bossmann2019} L. Bo\ss mann, S. Petrat, P. Pickl and A. Soffer. Beyond Bogoliubov dynamics. \emph{Pure Appl. Anal.} 3(4), 677--726. (2021)

\bibitem{BPS2021} L.~Bo{\ss}mann, S.~Petrat and R.~Seiringer. Asymptotic expansion of low-energy excitations for weakly interacting bosons. \emph{Forum of Mathematics, Sigma} 9, E28. (2021)

\bibitem{BS1} M. Brooks and R. Seiringer. The Fr\"ohlich polaron at strong coupling -- part I: The quantum correction to the classical energy. Preprint. \href{https://arxiv.org/abs/2207.03156}{\emph{arXiv:2207.03156}}. (2022)

\bibitem{BS2} M. Brooks and R. Seiringer. The Fr\"ohlich polaron at strong coupling -- part II: Energy-momentum relation and effective mass. Preprint. \href{https://arxiv.org/abs/2211.03353}{\emph{arXiv:2211.03353}}. (2022)

\bibitem{DonskerV83} M. Donsker and S.R.S. Varadhan. Asymptotics for the polaron. \emph{Comm. Pure Appl. Math.} 36, 505--528. (1983)

\bibitem{DybalskiS2020}
W. Dybalski and H. Spohn. Effective mass of the polaron -- revisited. \textit{Ann. Henri Poincar\'{e} 21, 1573--1594}. (2020)

\bibitem{Feliciangeli20} D. Feliciangeli and R. Seiringer. Uniqueness and non-degeneracy of minimizers of the Pekar functional on a ball. \emph{SIAM J. Math. Anal.} 52, 1. (2020)

\bibitem{Feliciangeli21} D. Feliciangeli and R. Seiringer. The strongly coupled polaron on the torus: Quantum corrections to the Pekar asymptotics. \emph{Arch. Rat. Mech. Anal.} 242(3), 1835--1906. (2021)

\bibitem{FrankSchlein14} R. Frank and B. Schlein. Dynamics of a strongly coupled polaron. \emph{Lett. Math. Phys.} 104 (8), 911--929. (2014)

\bibitem{FrankS21} R. Frank and R.~Seiringer.~Quantum corrections to the Pekar asymptotics of a strongly coupled polaron. \emph{Comm.~Pure~Appl.~Math.}~74(3), 544--588. (2021)

\bibitem{Froehlich37} H. Fr\"ohlich. Theory of electrical breakdown in ionic crystals. \emph{Proc. R. Soc. Lond. A} 160, 230--241. (1937)

\bibitem{GGS70} S. Graffi, V. Grecchi and B. Simon. Borel summability: Application to the anharmonic oscillator. \emph{Phys. Lett. B} 32(7), 631--634. (1970)

\bibitem{GrechS13} P. Grech and R. Seiringer. The excitation spectrum for weakly interacting bosons in a trap. \emph{Commun. Math. Phys.} 322(2):559--591. (2013)

\bibitem{GriesemerW18} M. Griesemer and A. W\"unsch. On the domain of the Nelson Hamiltonian. \emph{J. Math. Phys}. 59(4), 042111. (2018)

\bibitem{Gross76} E.P. Gross. Strong coupling polaron theory and translational invariance. \emph{Ann. Phys. 99} 1--29. (1976)

\bibitem{Kato66} T. Kato. Perturbation theory for linear operators. Springer. (1995)

\bibitem{LMM23} J. Lampart, D. Mitrouskas and K. My\'sliwy. On the global minimium of the energy-momentum relation of the polaron. \emph{Math. Phys. Anal. Geom.} 26, 17. (2023)

\bibitem{Lenzmann09} E. Lenzmann. Uniqueness of ground states for pseudorelativistic Hartree equations. \emph{Anal. PDE} 2, 1--27. (2009)

\bibitem{LNSS} M. Lewin, P. Nam, S. Serfaty and J. Solovej. Bogoliubov spectrum of interacting
Bose gases. \emph{Comm.~Pure~Appl.~Math.} 68, 413--471. (2015)

\bibitem{Lieb77} E.H. Lieb. Existence and uniqueness of the minimizing solution of Choquard’s non-linear equation. \emph{Studies in Appl. Math.} 57, 93--105. (1977)

\bibitem{LS} E. Lieb and J. Solovej. Ground state energy of the one-component charged Bose gas. \emph{Comm.~Math.~Phys.} 217, 127--163. (2001)

\bibitem{LiebT97} E.H. Lieb and L.E. Thomas. Exact ground state energy of the strong-coupling polaron. \textit{Comm. Math. Phys. 183(3), 511--519}. (1997)

\bibitem{LiebY58} E.H. Lieb and K. Yamazaki. Ground-state energy and effective mass of the polaron. \emph{Phys. Rev. 111, 728}. (1958)

\bibitem{MMS23} D. Mitrouskas, K. My\' sliwy and R. Seiringer. Optimal parabolic upper bound for the energy-momentum relation of a strongly coupled polaron. \emph{Forum of Mathematics, Sigma}, 11, E49. (2023)

\bibitem{MitS2022} D. Mitrouskas and R. Seiringer. Ubiquity of bound states for the strongly coupled polaron. Preprint. \href{https://arxiv.org/abs/2211.03606}{\emph{arXiv:2211.03606}}. (2022)

\bibitem{Miyake76} S.J. Miyake. The ground state of the optical polaron in the strong-coupling case. \emph{J. Phys. Soc. Jpn.} 41, 747--752. (1976)

\bibitem{Moeller06} J.S. M\o ller, The polaron revisited. \emph{Rev. Math. Phys.} 18, 485--517. (2006)

\bibitem{Mukherjee} C. Mukherjee, S.\- R.\- S. Varadhan. Identification of the Polaron Measure I: Fixed Coupling Regime and the Central Limit Theorem for Large Times \emph{Comm. Pure. Appl. Anal.} 73, 2, 350--383. (2020)

\bibitem{Polzer22} S. Polzer. Renewal approach for the energy-momentum relation of the Fr\"ohlich polaron. \emph{Lett. Math. Phys.} 113, 90. (2023)

\bibitem{Seiringer11} R. Seiringer. The excitation spectrum for weakly interacting bosons. \emph{Commun. Math. Phys.} 306(2):565--578. (2011)

\bibitem{Seiringer21} R. Seiringer. The polaron at strong coupling. \emph{Math. Rev. Phys.} 33, 1. (2021)

\bibitem{Selke} M. Selke. Almost quartic lower bound for the Fr\"ohlich polaron's effective mass via Gaussian domination. Preprint. \href{https://arxiv.org/abs/2212.14023}{\emph{arXiv:2212.14023}}. (2022)
 
\bibitem{Simon70} B. Simon. Borel summability of the ground-state energy in spatially cutoff $(\phi^4)_2$. \emph{Phys. Rev. Lett.} 25, 1583. (1970)

\bibitem{Simon71} B. Simon. 
Determination of eigenvalues by divergent perturbation series. \emph{Adv. Math.} 7, 240--253. (1971)

\bibitem{Simon82} B. Simon. Large orders and summability of eigenvalue perturbation theory: A mathematical overview. \emph{Int. J. Quantum Chem.} 21(1), 3--25. (1982)

\bibitem{Sokal80} A. Sokal. An improvement of Watson’s theorem on Borel summability. \emph{J. Math. Phys.} 21, 261. (1980)

\bibitem{Spohn1988} H. Spohn. The polaron at large total momentum. \emph{J. Phys. A: Math. Gen.} 21, 1199. (1988)

\bibitem{Tjablikow54} S.W. Tjablikow. Adiabatische Form der St\"orungstheorie im Problem der Wechselwirkung eines Teilchens mit einem gequantelten Feld. \emph{Abhandl. Sowj. Phys. 4, 54--68}. (1954)

\end{thebibliography}
\end{document}